\documentclass[a4paper,11pt,abstract,DIV=classic]{scrartcl}
\usepackage{amsmath,amssymb,amsthm}
\usepackage{booktabs}
\usepackage[tt=false,type1=true]{libertine}
\usepackage[libertine]{newtxmath}

\newenvironment{acks}{\section*{Acknowledgments}}{}
\usepackage{amsthm}
\usepackage{color}
\usepackage{csquotes}
\usepackage{mathtools}
\usepackage{xcolor}
\usepackage[textwidth=2cm,textsize=scriptsize,shadow]{todonotes}
\usepackage{xspace}
\usepackage{pgffor}
\usepackage{mathrsfs}
\usepackage{fontawesome}

\usepackage[inline]{enumitem}

\usepackage[normalem]{ulem}

\usepackage{multicol}
\usepackage{subcaption}

\usepackage{bm}

\definecolor{myblue}{HTML}{3D6288}
\definecolor{myred}{HTML}{9F1D2B}
\definecolor{mygreen}{HTML}{5B892D}

\definecolor{Orange}{HTML}{fc8d59}
\definecolor{Green}{HTML}{99d594}
\definecolor{Yellow}{HTML}{ffffbf}

\definecolor{ShadingLight}{HTML}{fef0d9}
\definecolor{ShadingMedium}{HTML}{fdcc8a}
\definecolor{ShadingDark}{HTML}{fc8d59}
\definecolor{ShadingVeryDark}{HTML}{d7301f}

\usepackage{tikz}
\usetikzlibrary{positioning}
\usetikzlibrary{decorations.pathreplacing}
\usetikzlibrary{shapes}
\usetikzlibrary{shapes.misc}
\usetikzlibrary{matrix}
\usetikzlibrary{calligraphy}
\usetikzlibrary{arrows}
\usetikzlibrary{calc}
\usetikzlibrary{cd}
\usetikzlibrary{backgrounds}

\theoremstyle{plain}
\newtheorem{theorem}{Theorem}[section]

\newtheorem{proposition}[theorem]{Proposition}
\newtheorem{observation}[theorem]{Observation}
\newtheorem{lemma}[theorem]{Lemma}
\newtheorem{claim}[theorem]{Claim}
\newtheorem{corollary}[theorem]{Corollary}

\theoremstyle{definition}

\newtheorem{remark}[theorem]{Remark}
\newtheorem{example}[theorem]{Example}
\newtheorem{definition}[theorem]{Definition}

\newenvironment{claimproof}[1][Proof]{%
        \begin{proof}[#1]%
    }{%
        \end{proof}%
    }

\let\epsilon\varepsilon
\let\phi\varphi

\let\from\colon
\newcommand*{\under}{\mathbin{\mid}}

\let\with\colon
\let\tup\bm
\newcommand*{\?}{\mathord{\:\cdot\:}}

\DeclareMathOperator*{\ar}{ar}

\DeclareMathOperator*{\adom}{adom}
\DeclareMathOperator*{\free}{free}
\DeclareMathOperator*{\facts}{facts}

\DeclareMathOperator{\Expectation}{E}

\DeclarePairedDelimiter{\size}{\lvert}{\rvert}
\DeclarePairedDelimiter{\set}{\lbrace}{\rbrace}

\DeclarePairedDelimiter{\ceil}{\lceil}{\rceil}

\newcommand*{\pdb}[1][D]{\csname#1\endcsname}

\newcommand*\edgeindepgraphs{\ensuremath{\mathscr G}}

\newcommand{\worlds}{\mathrm{worlds}}

\newcommand*{\cover}{\mathrm{FC}}
\newcommand*{\mincover}{\mathrm{FC}_{\mathrm{min}}}

\newcommand*{\minedgecover}{\mathrm{EC}_{\mathrm{min}}}

\newcommand*{\Fa}{F_{\mathrm{always}}}
\newcommand*{\Fs}{F_{\mathrm{sometimes}}}
\newcommand*{\Fn}{F_{\mathrm{never}}}

\newcommand*{\NEXT}{N}
\newcommand{\rest}[2]{#1[#2]}

\newcommand*{\classstyle}[1]{\bm{\mathsf{#1}}}

\newcommand*{\PDB}{\ensuremath{\classstyle{PDB}}}
\newcommand*{\TI}{\ensuremath{\classstyle{TI}}}
\newcommand*{\BID}{\ensuremath{\classstyle{BID}}}

\newcommand*{\PDBfin}{\PDB_{\mathsf{fin}}}
\newcommand*{\TIfin}{\TI_{\mathsf{fin}}}
\newcommand*{\BIDfin}{\BID_{\mathsf{fin}}}

\newcommand*{\FO}{\ensuremath{\classstyle{FO}}}
\newcommand*{\CQ}{\ensuremath{\classstyle{CQ}}}
\newcommand*{\sjfCQ}{\ensuremath{\classstyle{sjfCQ}}}
\newcommand*{\UCQ}{\ensuremath{\classstyle{UCQ}}}

\newcommand*{\FOTI}{\ensuremath{\FO(\TI)}}
\newcommand*{\FOTIFO}{\ensuremath{\FO(\TI\under\FO)}}
\newcommand*{\FOBID}{\ensuremath{\FO(\BID)}}

\DeclarePairedDelimiterX{\dbraces}[1]{\lbrace}{\rbrace}{%
	\nbrace{\lbrace}{#1}\delimsize\lbrace\mathopen{}%
	#1%
	\mathclose{}\delimsize\rbrace\nbrace{\rbrace}{#1}%
}
\newcommand{\dummydelim}[2]{$\left#1\vphantom{#2}\right.$}
\newcommand{\nbrace}[2]{\sbox0{\dummydelim{#1}{#2}}\hspace{\the\dimexpr -0.85\wd0 + 2pt\relax}}

\makeatletter
\def\bag{\@ifstar\@bag\@@bag}
\def\@bag#1{\dbraces{\smash{#1}}}
\def\@@bag#1{\dbraces*{#1}}
\newcommand{\Bags}[3][\@nil]{%
	\def\tmp{#1}%
	\ifx\tmp\@@nil%
		\dparens*{\smash{\begin{smallmatrix}#2\\#3\end{smallmatrix}}}%
	\else%
		\dparens*{\begin{smallmatrix}#2\\#3\end{smallmatrix}}%
	\fi%
}
\makeatother

\allowdisplaybreaks

\usepackage{authblk}
\title{Tuple-Independent Representations of Infinite Probabilistic Databases}

\author[1]{Nofar Carmeli}
\author[2]{Martin Grohe}
\author[2]{Peter Lindner}
\author[2]{Christoph Standke}

\affil[1]{snofca@cs.technion.ac.il}
\affil[2]{\{grohe,lindner,standke\}@informatik.cs.rwth-aachen.de}
\affil[1]{Technion---Israel Institute of Technology, Haifa, Israel}
\affil[2]{RWTH Aachen University, Aachen, Germany}

\usepackage{hyperref} %
\definecolor{myblue}{HTML}{3D6288}
\definecolor{myred}{HTML}{9F1D2B}
\definecolor{mygreen}{HTML}{5B892D}
\hypersetup{
    breaklinks,
    pdfencoding=auto,
    colorlinks,
    linkcolor=myred,%
    citecolor=mygreen,%
    urlcolor=myblue,%
}
\urlstyle{rm} %

\usepackage[capitalise,noabbrev]{cleveref}
\Crefname{assumption}{Assumption}{Assumptions}
\Crefname{conjecture}{Conjecture}{Conjectures}
\Crefname{observation}{Observation}{Observations}
\Crefname{claim}{Claim}{Claims}
\Crefname{fact}{Fact}{Facts}
\Crefname{notation}{Notation}{Notations}

\let\cal\mathcal
\let\bb\varmathbb
\let\fr\mathfrak

\makeatletter
\def\stylei#1{\cal{#1}}
\def\styleii#1{\bb{#1}}
\def\styleiii#1{\fr{#1}}
\foreach \x in {A,...,Z}{%
	\expandafter\xdef\csname \x\endcsname{\noexpand\stylei{\x}}%
	\expandafter\xdef\csname \x\x\endcsname{\noexpand\styleii{\x}}%
	\expandafter\xdef\csname \x\x\x\endcsname{\noexpand\styleiii{\x}}%
}
\makeatother

\begin{document}

\maketitle

\begin{abstract}
    Probabilistic databases (PDBs) are probability spaces over database instances. They provide a framework for handling uncertainty in databases, as occurs due to data integration, noisy data, data from unreliable sources or randomized processes. Most of the existing theory literature investigated finite, tuple-independent PDBs (TI-PDBs) where the occurrences of tuples are independent events. Only recently, Grohe and Lindner (PODS '19) introduced independence assumptions for PDBs beyond the finite domain assumption.
    In the finite, a major argument for discussing the theoretical properties of TI-PDBs is that they can be used to represent any finite PDB via views. This is no longer the case once the number of tuples is countably infinite. In this paper, we systematically study the representability of infinite PDBs in terms of TI-PDBs and the related block-independent disjoint PDBs.\par 
    
    The central question is which infinite PDBs are representable as first-order views over tuple-independent PDBs. We give a necessary condition for the representability of PDBs and provide a sufficient criterion for representability in terms of the probability distribution of a PDB. With various examples, we explore the limits of our criteria. In addition, we show that conditioning on first-order properties yields no additional power in terms of expressivity, and we discuss the relation between purely logical and arithmetic reasons for (non\mbox{-})representability.
    Finally, we inspect the relative expressivity of additional representations, where the views are restricted to be fragments of first-order logic (such as conjunctive queries with or without self-joins and unions of conjunctive queries).
\end{abstract}

\section{Introduction}

Probabilistic databases (PDBs)  provide a framework for dealing with uncertainty in databases, as could occur due to data integration, the acquisition of noisy data or data from unreliable sources or as outputs of randomized processes. Often the appropriate probability spaces are infinite. For example, fields in a fact may contain measurements from a noisy sensor, which we model as real numbers with some error distribution, or approximate counters, modeled by some probability distribution over the integers, or text data scraped from unreliable web sources, modeled by a probability distribution over the strings over a suitable finite alphabet.

Formally, PDBs are probability spaces over (relational) database instances. A key issue when working with PDBs in practice is the question of how to represent them. For finite PDBs, this is always possible in principle (ignoring numerical issues)%
by just listing all instances in the PDB and their probabilities. But of course, it is usually infeasible to store the whole sample space of a probability distribution over reasonably sized database instances. Instead, various compact representation systems have been proposed. Arguably the simplest is based on tuple-independent PDBs. In a \emph{tuple-independent PDB ($\TI$-PDB)} the truths of all facts $f$ are regarded as independent events that hold with a probability $p_f\in[0,1]$. The sample space of a $\TI$-PDB with $n$ facts has size $2^n$, but we can represent the $\TI$-PDB by just listing the $n$ marginal probabilities $p_f$ of the facts $f$. Much of the theoretical work on PDBs is concerned with $\TI$-PDBs. A justification for this focus on $\TI$-PDBs is the following nice Theorem~\cite{Suciu+2011}: \emph{every finite PDB can be represented by a first-order view over a (finite) $\TI$-PDB}. In other words: first-order views over $\TI$-PDBs form a complete representation system for finite PDBs.%

In this paper, we investigate similar questions in a broadened scope. Our main focus lies on probabilistic databases with a countably infinite sample space. The idea of modeling uncertainty in databases by viewing them as infinite collections of possible worlds has been around for quite some time in the context of incomplete databases \cite{Imielinski+1984}. While existing PDB systems are already using infinite domains and sample spaces, such as \cite{Deshpande+2004,Singh+2008b,Agrawal+2009,Kennedy+2010,Jampani+2011,Cai+2013}, a formal framework of probabilistic databases with infinite sample spaces has only been introduced recently \cite{Grohe+2019,GroheLindner2022}. Let us emphasize that in an infinite PDB, it is the sample space of the probability distribution that is infinite; every single instance of an infinite PDB is still finite. As a special case, it is natural to consider infinite PDBs of \emph{bounded instance size}, which means that all instances have size at most $b$ for some fixed bound $b$. A simple example that can be modeled as a PDB of bounded size is a table describing a fixed number of samples of the fermentation process when brewing Guinness beer, containing the date of each sample and the number of yeast cells in the sample. This number is known to be Poisson distributed \cite{Boland1984}.

For infinite PDBs, the issue of representing them becomes even more difficult, because clearly, not every infinite PDB has a finite or computable representation. In this paper, we set out to study representations of infinite PDBs by considering representations over infinite $\TI$-PDBs. Infinite $\TI$-PDBs have been considered in \cite{Grohe+2019}, and it has been observed there that it is \emph{not} the case that every countably infinite PDB can be represented by a first-order view over an infinite $\TI$-PDB. The reason for this is relatively simple: it can be shown that the expected instance size in a $\TI$-PDB is always finite, but there exist infinite PDBs with an infinite expected instance size. As first-order views preserve instance size up to a polynomial factor, such infinite PDBs with an infinite expected instance size cannot be represented by first-order views over $\TI$-PDBs.

Let us denote the class of all PDBs that can be represented by a first-order view over a $\TI$-PDB by $\FOTI$. We prove that the class $\FOTI$ is quite robust: somewhat unexpectedly, conditioning a $\TI$-PDB on a first-order constraint prior to applying a first-order view does not yield additional expressive power, i.\,e.\ $\FOTIFO = \FOTI$. This allows us to show that all block-independent disjoint PDBs ($\BID$-PDBs) are in $\FOTI$. A $\BID$-PDB is a PDB where the set of facts is partitioned into blocks, such that facts from different blocks are independent, while facts from the same block are disjoint. $\BID$-PDBs form a practically quite relevant class of PDBs that includes PDBs of the form described in the yeast cell example above, where specific fields of facts in a table store the outcome of a random variable.

In \cite{Grohe+2019}, the authors exploited that $\TI$-PDBs have finite expected instance size to construct a PDB that is not contained in \FOTI{}. We generalize this idea and prove that for every PDB in $\FOTI$, all \emph{size moments}, that is, moments of the random variable that maps each instance to its size, are finite. This imposes a fairly strong restriction on the probability distribution of PDBs in $\FOTI$. In addition, we give an example showing that there are even PDBs that have finite size moments but that are still not in $\FOTI$. Complementing these non-representability results, we prove that all PDBs of bounded instance size are in $\FOTI$. Furthermore, we give a sufficient criterion on the growth rate of the probabilities of PDBs for membership in $\FOTI$.

All the non-representability results mentioned so far are caused by unwieldy probability distributions. We say that the reasons for these non-representability results are \emph{arithmetical}. We asked ourselves if it can also happen that a PDB is not in $\FOTI$ for \emph{logical} reasons, for example, because there are large gaps in the range of instance sizes. We formalize this question by saying that a PDB is not in $\FOTI$ \emph{for logical reasons} if there is no probability distribution that assigns positive probabilities to all instances in the sample space such that the resulting PDB is in $\FOTI$. Thus, being not representable for logical reasons is a property of the sample space and not of the probability measure. Arguably, this notion is closer to the theory of incomplete databases than to probabilistic databases. Surprisingly, we prove that there are no logical reasons for non-representability.

As we have seen that various classes of first-order views over $\TI$- and $\BID$-PDBs coincide with $\FOTI$, one might wonder about the consequences of allowing only fragments of first-order logic in representation. We investigate such classes and show that, while most of these classes collapse in the finite setting, they remain separated in the infinite.

\paragraph*{Related Work}
By now, there exists an abundance of theoretical work on finite PDBs (see the surveys \cite{Suciu+2011,VanDenBroeck+2017} as well as \cite{Dalvi+2009,Suciu2020}). The most prominent theoretical problem regarding finite representation systems is probabilistic query evaluation (PQE) complexity \cite{Dalvi+2011,Dalvi+2012,Fink+2016,Amarilli+2020}, typically subject to independence assumptions. In particular, \cite{Amarilli+2015} considers PQE on structurally restricted $\BID$-PDBs. Among query languages, conjunctive queries and unions of conjunctive queries received the most attention \cite{Dalvi+2012}. On the side of expressiveness, \cite{Benjelloun+2006,Green+2006,Sen+2009,Antova+2009} consider more sophisticated PDB representations built upon independence assumptions.
Probabilistic models for sensor networks typically feature a finite number of facts with continuously distributed attributes \cite{Faradjian+2002,Deshpande+2004,Singh+2008a}. Overviews over representation formalisms for probabilistic databases can be found in \cite{Green+2006,Zhang+2008,Sarkar+2009,Suciu+2011,VanDenBroeck+2017}. 
\par 

Conditioning PDBs has been considered before in order to introduce correlations or dependencies \cite{Re2007,Jha+2012}, for updates of probabilistic data \cite{Koch+2008}, for making queries tractable \cite{Olteanu+2009}, in cleaning problems \cite{Gribkoff+2014a,DeSa+2019} and for introducing ontologies \cite{Jung+2012,Borgwardt+2017}. In \cite{Dalvi+2007}, the authors consider limit probabilities of conjunctive queries over conditioned PDBs.

There already exist PDB systems supporting infinite probability spaces \cite{Deshpande+2004,Singh+2008b,Agrawal+2009,Kennedy+2010,Jampani+2011,Cai+2013}. The investigations in these works are generally directly tied to their representation mechanisms, which complicates an abstract comparison. However, all of the mentioned approaches directly transition to a uncountable setting with continuous distributions, which is different from our setting of countably infinite PDBs. The formal possible worlds semantics have been extended towards infinite probability spaces, allowing for instances of unbounded size in \cite{Grohe+2019,GroheLindner2022}. Allowing for distributions over worlds of unbounded size is strongly motivated by incorporating the open-world assumption into PDBs \cite{Ceylan+2016,Grohe+2019,Borgwardt+2017,Friedman+2019}. Semi-structured models (probabilistic XML \cite{Kimelfeld+2013}) have been extended towards infinite spaces as well \cite{Abiteboul+2011,Benedikt+2010}.

As discussed by \cite{Green+2006}, incomplete databases \cite{Imielinski+1984,Abiteboul+1995,VanDerMeyden1998} are closely related to PDBs. In \cite{DasSarma+2006}, the authors study expressiveness among classes of incomplete databases with strong associations with $\TI$- and $\BID$-PDBs. Recent work on incomplete databases considers (probabilistic) measures of certainty for infinite domains \cite{Libkin2018,Console+2020}.

It has been noticed that there are strong connections between PDBs and probabilistic models in AI research \cite{VanDenBroeck+2017}, in particular probabilistic graphical models \cite{Koller+2009} and weighted model counting (WMC) \cite{Gribkoff+2014b}. There exist well-established modeling formalisms that support infinite spaces, for example \cite{Milch+2005,Singla+2007,Gutmann+2011}. Recently, WMC has also been introduced for infinite universes \cite{Belle2017}.

\paragraph*{Paper Outline}
We review the background of probabilistic databases in \cref{sec:preliminaries}. In \cref{sec:negative}, we explore the limits of tuple-independent representations by first-order views. We extend the aforementioned non-representability result that relies on the expected instance size to apply to all size moments, and we provide a new necessary condition for representability that applies also in the case of finite moments. In \cref{sec:conditions}, we show that $\FOTI$ is closed under conditioning under first-order constraints.     \Cref{sec:power} contains positive results. In \cref{ssec:sizesandprobs} we establish a sufficient criterion on the growth rate of the probabilities and conclude also that PDBs of bounded instances size are in $\FOTI$. In \cref{ssec:bid}, we show the same for $\BID$-PDBs. In \cref{sec:incomplete}, we consider logical reasons. In \Cref{ssec:nonrep-logical}, we demonstrate how the logic alone can be used to show non-representability of PDBs in the infinite. In \Cref{ssec:fo-logical}, we show that, when the instance size is unbounded, any argument regarding representability using $\TI$-PDBs with FO-views must take the instance probabilities into account. In \cref{sec:fragments}, we consider various important fragments of first-order logic and investigate their relative expressive power for representations over PDBs with independence assumptions. We close the paper with concluding remarks in \cref{sec:conclusion}.

\smallskip

This paper is the journal version of the paper \cite{Carmeli+2021} that was presented at PODS 2021. It contains detailed proofs and examples that were excluded from the conference version (\cref{lem:momentsinequality,lemma:view-prob,lem:condition,exa:ti-not-cond,exa:bid-not-cond,lem:no-logical-reason}). Some of the arguments have been streamlined and simplified. In addition to the main body of results from the conference version, this paper contains a new section (\cref{sec:fragments}) that focuses on representations using fragments of $\FO$ and their relative expressiveness. This answers open questions from the original paper. %
\section{Preliminaries}\label{sec:preliminaries}

In this section, we provide definitions and state known results that we use throughout this paper.

We denote the set of non-negative integers by $\NN$, whereas the set of positive integers is denoted $\NN_+$. We write $(0,1)$, $[0,1)$, $(0,1]$ and $[0,1]$ for the open, half-open and closed intervals of real numbers between 0 and 1.

\subsection{Probability Spaces}

A \emph{discrete probability space} is a pair $\pdb[S] = (\SS,P)$ where $\SS$ is a non-empty, countable set, called the \emph{sample space} and $P$ is a probability measure (or \emph{probability distribution}) on $\SS$. That is, $P \from 2^{\SS} \to [0,1]$ with the property that $ P( A ) = \sum_{s \in A} P\big( \set{ s }\big) $ for all $A \subseteq \SS$, and $P(\SS) = 1$. We denote probability spaces with curly letters and their sample spaces with double-struck letters. We denote probability distributions with variants of $P$.
Throughout this paper, all probability spaces are assumed to be discrete. 

We write $S \sim \pdb[S]$ to indicate that $S$ is a random element, drawn from $\SS$ according to distribution $P$. If $P$ is anonymous, or when we want to emphasize this perspective of drawing a random element, we write 
\[
    \Pr_{S \sim \pdb[S]}\big( S \text{ has property } \phi \big) \coloneqq P\big(\set{S \in \SS \with S \text{ has property }\phi} \big)\text.
\]
Subsets of $\SS$ are called \emph{events}. A collection $(A_i)_{i\in I}$ of events in $\pdb[S]$ is called \emph{(mutually) independent} if $P(\bigcap_{i\in J} A_i) = \prod_{i \in J} P(A_i)$ for all finite subsets $J$ of $I$. It is called \emph{(mutually) exclusive} if $P(A_i \cap A_j) = 0$ for all $i \neq j$.

A \emph{random variable} $X$ on $\pdb[S] = (\SS,P)$ is a function $X \from \SS \to \RR$. Its \emph{expectation} is $\Expectation(X) \coloneqq \sum_{s \in \SS} X(s) \cdot P(\set{s})$ and, in general, its \emph{$k$th moment} is $\Expectation(X^k)$. We write $\Expectation_{\pdb[S]}$ for the expectation in $\pdb[S]$ if the probability space is not clear from the context.

If $\pdb[S] = (\SS,P_{\pdb[S]})$ is a probability space, $\TT$ a (countable) set and $f \from \SS \to \TT$ a function, then $f$ introduces a probability distribution on $\TT$ via $P_{\pdb[T]}(\set{t}) \coloneqq P_{\pdb[S]}(\set{s \in \SS\with f(s) = t})$.

\subsection{Relational Databases}

We fix some non-empty set $\UU$ (called the \emph{universe}). A \emph{database schema} $\tau$ is a finite, nonempty set of relation symbols with \emph{arities} $\ar(R) \in \NN$ for all $R\in\tau$. A ($\tau$-)fact is an expression of the shape $R(u_1,\dots,u_{\ar(R)})$ where $R\in\tau$ and $u_i \in \UU$ for all $1\leq i\leq \ar(R)$. A \emph{$\tau$-instance} $D$ is a finite set of $\tau$-facts. The \emph{active domain} of $D$, denoted by $\adom(D)$, is the set of elements of $\UU$ appearing among the facts of $D$. Throughout this paper, we assume that the universe $\UU$ is countably infinite.

In general, a \emph{view} may be any function that maps database instances of an input schema $\tau$ to instances of an output schema $\tau'$. A \emph{query} is a view whose output schema consists of a single relation symbol. Then, a view may also be thought of as a finite collection of queries, one per each relation in the output schema. In this paper, we focus on views and queries that are expressed in first-order logic.

\subsection{First Order Logic, Relational Calculus, and Fragments}

Let $\tau$ be a database schema. A \emph{relational atom} is an expression $R( \tup u )$ where $\tup u$ is a tuple over variables or constants from $\UU$. An \emph{equality atom} is an expression $x = u$ where $x$ is a variable and $u$ is a variable or a constant. A formula of \emph{first-order logic} $\FO$ over $\tau$ is built from atoms using the standard Boolean connectives and existential and universal quantification. When using first-order logic as a database query language, we assume the \emph{active domain semantics} in this paper, see \cite{Abiteboul+1995}.\footnote{We could equivalently restrict ourselves to \emph{domain-independent} formulae and use the standard semantics of first-order logic \cite[pp. 77ff]{Abiteboul+1995}. As we will construct a plethora of formulae in this paper, using the active domain semantics is more convenient though.} That is, when a formula $\Phi$ is evaluated on a $\tau$-instance $D$, the quantifiers, and all valuations are taken to range over $\adom( D, \Phi ) \coloneqq \adom( D ) \cup \adom( \Phi )$, where $\adom(\Phi)$ is the set of constants appearing in $\Phi$.

We denote the free variables of an $\FO$-formula $\Phi$ by $\free( \Phi )$ and write $\Phi = \Phi( \tup x )$ where $\tup x$ consists of pairwise distinct variables to indicate that $\Phi$ has exactly the free variables $\tup x$. A formula $\Phi$ is called a \emph{sentence} or \emph{Boolean} if $\free(\Phi) = \emptyset$. Essentially, we employ first-order logic in the guise of \emph{relational calculus}. That is, if $D$ is a $\tau$-instance, and $\Phi( \tup x )$ is an $\FO$-formula over $\tau$, the result of $\Phi = \Phi( x_1,\dots,x_n) $ on $D$ is given as
\begin{equation}\label{eq:rc}
    \Phi( D ) \coloneqq \set[\big]{ R_{\Phi}(\tup a) \with \tup a = (a_1,\dots,a_n) \in \adom(D,\Phi)^n \text{ such that } D \models \Phi[ \tup x / \tup a ] }\text,
\end{equation}
where $\models$ refers to the active domain semantics, and $\Phi[ \tup x / \tup a ]$ is the sentence that emerges from $\Phi$ by replacing $x_i$ with the constant $a_i$ for all $i =1,\dots,n$. Moreover, $R_{\Phi}$ is a designated relation symbol of the output database schema. Thus, we treat $\Phi(D)$ as a database instance. Of course, \eqref{eq:rc} depends on the order imposed on the free variables. We make this order explicit by writing $\Phi = \Phi( \tup x )$. 

In general, a tuple $\tup x$ may contain repeated variables or constants, like in $\Phi(x,x,y,a,y)$. This is equivalent to the first-order formula $\Phi'(x,x',y,x'',y'')$ that emerges from $\Phi$ by conjoining the formula with $x'=x$, $x''=a$ and $y'$ equals $y$, where $x',x'',y'$ are new variables.

A first-order formula is a \emph{conjunctive query} ($\CQ$) if it is built from atoms by only using $\exists$ and $\wedge$. Every $\CQ$ is equivalent to a formula in the shape $\exists \tup y \with \bigwedge_{ i = 1 }^m \Phi_i$ where the $\Phi_i$ are atomic formulae (hence, the name $\CQ$). A $\CQ$ is called \emph{self-join free} ($\sjfCQ$), if every relation symbol occurs at most once in the formula. Note that we can rewrite every satisfiable conjunctive query $\Phi(\tup x)$ into a conjunctive query $\Phi'(\tup x')$ without equality atoms using substitutions. This might then introduce constants or repeated variables in $\tup x'$ (see \cite[Chapter 4]{Abiteboul+1995}).

A first-order formula is a \emph{union of conjunctive queries} ($\UCQ$) if it is built from atoms by only using $\exists$ and $\wedge$ and $\vee$. Every $\UCQ$ is equivalent to a formula in the shape $\bigvee_{ i = 1 }^n \Phi_i$ where the $\Phi_i$ are $\CQ$s.

For more background on first-order logic, relational calculus and the fragments introduced above, see \cite{Abiteboul+1995,Libkin2004}.

\subsection{Probabilistic Databases}

A probabilistic database is a probability space over a set of database instances.

\begin{definition}\label{def:pdb}
    A \emph{probabilistic database (PDB)} of database schema $\tau$ over $\UU$ is a discrete probability space $\pdb = ( \DD, P )$ where $\DD$ is a set of $\tau$-instances. We denote the class of PDBs by $\PDB$. If $\classstyle{D}$ is a class of PDBs and $\pdb \in \classstyle{D}$, we call $\pdb$ a \emph{$\classstyle{D}$-PDB}.
\end{definition}

Note that while $\size{ \DD }$ may be (countably) infinite, $\DD$ still only contains database instances, that is, \emph{finite} sets of facts. We call a PDB \emph{finite} if $\size{ \DD }$ is finite. The class of finite PDBs is denoted by $\PDBfin$.

\begin{remark}
    As seen in \cref{def:pdb}, all probabilistic databases that are considered in this paper have a sample space of at most countable size. In general, a PDB may also be an uncountable probability space over some uncountable domain~\cite{Grohe+2019,GroheLindner2022}. 
\end{remark}

In a (discrete) probabilistic database $\pdb$, the instances of positive probability are often called its \emph{possible worlds}. We denote the set of possible worlds of the PDB $\pdb$ by $\worlds( \pdb )$.\footnote{In most cases, we can just assume that $\DD$ contains no instances of probability zero. However, the notation we introduce here proves convenient later in the paper.} We let $\facts( \pdb ) \coloneqq \bigcup_{ D \in \worlds(\pdb) } D$ denote the set of all facts appearing among the instances of $\pdb$. For all $f \in \facts( \pdb )$, the \emph{marginal probability} of $f$ is $\Pr_{ D \sim \pdb }\big( f \in D \big)$, and is usually denoted by $p_f$.

\subsubsection{Instance Size}
In every probabilistic database $\pdb = (\DD, P)$, the \emph{instance size} $\size{\?}$ is a random variable $\size{\?} \from \DD \to \NN$ that maps every database instance $D$ to the number $\size{D}$ of facts it contains. Its expectation is given by 
\[
    \Expectation(\size{\?}) = \sum_{D \in \DD} \size{D} \cdot P\big(\set{D})
    \text.
\]

\begin{definition}[Finite Moments Property]
    A PDB $\pdb = ( \DD, P )$ has the \emph{finite moments property} if for all $k \in \NN_+$, the $k$th moment of the instance size is finite, that is, 
    \[
        \Expectation\big( \size{ \? }^k \big) = \sum_{ D \in \DD } \size{D}^k \cdot P\big( \set{D} \big) < \infty\text.
    \]
    A class $\classstyle{D}$ of PDBs has the \emph{finite moments property}, if every $\pdb \in \classstyle{D}$ does.
\end{definition}

\subsubsection{Probabilistic Query Semantics}

Applying a view on a PDB yields a new output PDB: If $\pdb = (\DD,P)$ is a PDB, $\DD'$ a set of database instances and $V \from \DD \to \DD'$ a view, then $V(\pdb) = (\DD',P')$ is a PDB where 
\[
    P'(\set{D'}) = P(\set{D \in \DD \with V(D) = D'})
\] 
for all $D' \in \DD'$.

If $\classstyle V$ is a class of views and $\classstyle D$ is a class of PDBs, then $\classstyle{V}(\classstyle{D})$ denotes the class of images of PDBs of $\classstyle D$ under views of $\classstyle V$. We call $\classstyle{D}$ \emph{closed under} $\classstyle V$ if $\classstyle{V}(\classstyle{D}) = \classstyle{D}$.

An $\FO$-view is a view that consists of an $\FO$-formula for each relation symbol in the target schema. Then, $\FO(\classstyle{D})$ denotes the class of PDBs that are the image of a $\classstyle{D}$-PDB under an $\FO$-view. These notions are defined analogously for $\sjfCQ$, $\CQ$ and $\UCQ$.

\subsubsection{Independence Assumptions}
Even when leaving out probabilities, the number of possible worlds existing over a fixed set of $n$ facts is exponential in $n$. In the finite setting, this motivates the introduction of simplifying structural assumptions that allow for succinct representations of PDBs. While for infinite PDBs such a representation might still be infinite, its description still enjoys the simple structure and thus may have advantages with respect to approximate query answering.

For ease of reading, we denote PDBs that obey independence assumptions by $\pdb[I] = ( \II, P )$ rather than $\pdb = ( \DD, P )$. Instances are then denoted by $I$ accordingly.

\begin{definition}[Tuple-Independent PDBs]\label{def:ti-pdb}
    A PDB $\pdb[I]$ is called \emph{tuple-independent} if for all $k \in \NN$ and all pairwise distinct facts $f_1,\dots,f_k\in \facts(\pdb[I])$ it holds that
    \begin{equation*}
        \Pr_{ I \sim \pdb[I] } \big( f_1 \in I, \dots, f_k \in I \big) 
            =
        \prod_{ i = 1 }^{ k } \Pr_{ I \sim \pdb[I] } \big( f_i \in I \big)\text.
    \end{equation*}
    We let $\TI$ and $\TIfin$ denote the class of tuple-independent PDBs and finite tuple-independent PDBs, respectively.\footnotemark{}
    \footnotetext{%
        Usually, \emph{tuple-independent} PDBs are defined by the property that the events \enquote{$f \in I$} are stochastically independent. This definition only makes sense for countable PDBs, as in every PDB there are at most countably many facts with positive marginal probability (see \cite[Proposition 3.4]{Grohe+2019}). 
        There is an extension of tuple-independent PDBs to uncountable domains \cite{Grohe+new}.
    }
\end{definition}

The following provides a necessary and sufficient criterion for the existence of $\TI$-PDBs in terms of its marginal probabilities.

\begin{theorem}[{\cite[Theorem 4.8]{Grohe+2019}}]\label{thm:well-def-ti}
    Let $F$ be a set of facts over a schema $\tau$ and let $( p_f )_{ f \in F }$ with $p_f \in [0,1]$ for all $f \in F$. The following are equivalent:
    \begin{enumerate}
        \item There exists $\pdb[I] \in \TI$ with $\facts( \pdb[I] ) = F$ and marginal probabilities $\Pr_{ I \sim \pdb[I] }\big( f \in I \big) = p_f$ for all $f \in F$.
        \item It holds that $\sum_{ f \in F } p_f < \infty$.
    \end{enumerate}
\end{theorem}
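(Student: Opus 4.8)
The plan is to prove both directions of the equivalence, with the direction (1)$\Rightarrow$(2) being the easy one and (2)$\Rightarrow$(1) requiring an actual construction of the probability space.

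For (1)$\Rightarrow$(2), I would argue via the expectation of the instance size. Suppose $\pdb[I] \in \TI$ has marginals $p_f$. Since the sample space $\II$ is countable and each instance is finite, the instance size $\size{\?}$ is a well-defined random variable. Writing $\size{I} = \sum_{f \in F} \mathbf{1}[f \in I]$ and using linearity of expectation (which holds for countable sums of non-negative random variables by the monotone convergence / Tonelli argument), I get $\Expectation(\size{\?}) = \sum_{f \in F} \Pr_{I \sim \pdb[I]}(f \in I) = \sum_{f \in F} p_f$. Since every instance in a PDB is finite, $\size{\?}$ takes values in $\NN$, but that alone does not force finite expectation; instead I would note that $\Expectation(\size{\?}) = \sum_{I} \size{I} \cdot P(\{I\})$ is a sum of non-negative terms, and argue that it must be finite — actually the cleanest route is: the sum $\sum_{f\in F} p_f$ either converges or diverges to $+\infty$, and if it diverges then $\Expectation(\size{\?}) = \infty$; but I still need that this is impossible. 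Here I recall that in a discrete probability space, for \emph{any} PDB, one has $\sum_{f} p_f = \Expectation(\size{\?})$, and there do exist PDBs with infinite expected size, so finiteness is not automatic — hence the real content is that \emph{tuple-independence} forces convergence. The argument: if $\sum_{f \in F} p_f = \infty$, pick finitely many facts $f_1, \dots, f_k$ with $\sum_{i=1}^k p_{f_i}$ large; by independence the events $f_i \notin I$ are independent with $\Pr(f_i \notin I) = 1 - p_{f_i}$, so $\Pr(\text{no } f_i \in I) = \prod_{i=1}^k (1 - p_{f_i}) \le \exp(-\sum_{i=1}^k p_{f_i})$, which tends to $0$. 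More carefully, using the second Borel–Cantelli-type estimate on the infinite family, $\Pr(I \text{ contains infinitely many } f \in F) = 1$ whenever $\sum p_f = \infty$ — but instances are finite, a contradiction. So (1)$\Rightarrow$(2) follows from the finiteness of instances plus independence.

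For (2)$\Rightarrow$(1), the task is to construct a genuine discrete probability space. Assume $\sum_{f \in F} p_f = c < \infty$; I may also assume each $p_f \in [0,1)$ (facts with $p_f = 1$ can be placed in every instance and split off; if $F$ is finite the classical product construction works directly). Let $\II$ be the set of all \emph{finite} subsets $I \subseteq F$, and define $P(\{I\}) \coloneqq \prod_{f \in I} p_f \cdot \prod_{f \in F \setminus I} (1 - p_f)$. The subtlety is that the second product is an infinite product: it converges to a positive value precisely because $\sum_{f \in F}\, p_f < \infty$ implies $\sum_{f} (-\log(1-p_f)) < \infty$ (using $p_f$ bounded away from $1$; one should first handle the countably many $f$ with $p_f$ near $1$ — but there are only finitely many with $p_f \ge 1/2$ since the sum converges, so this is fine). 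Thus $q \coloneqq \prod_{f \in F}(1-p_f) > 0$ and $P(\{I\}) = q \cdot \prod_{f \in I} \frac{p_f}{1-p_f}$. The main thing to verify is $\sum_{I \in \II} P(\{I\}) = 1$: this is the combinatorial identity $\sum_{I \subseteq F \text{ finite}} \prod_{f\in I}\frac{p_f}{1-p_f} = \prod_{f \in F}\frac{1}{1-p_f}$, which follows by taking a limit of the finite distributive expansion $\prod_{f \in F_0}(1 + \frac{p_f}{1-p_f}) = \sum_{I \subseteq F_0}\prod_{f \in I}\frac{p_f}{1-p_f}$ over finite $F_0 \uparrow F$, justified by monotone convergence since all terms are non-negative. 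Having a bona fide probability space, I then check $\facts(\pdb[I]) = F$ (every $f$ has $p_f > 0$ hence appears in some positive-probability instance, namely $\{f\}$, provided $p_f > 0$; facts with $p_f=0$ should be excluded from $F$ or handled as never appearing) and that the marginals and independence come out right. For the marginal of $f$: $\Pr(f \in I) = \sum_{I \ni f} P(\{I\})$, and factoring out the $f$-coordinate gives $p_f \cdot (\text{sum over finite subsets of } F\setminus\{f\}) = p_f \cdot 1 = p_f$. Independence of $f_1, \dots, f_k$ follows from the same coordinate-wise factorization of the sum $\sum_{I \supseteq \{f_1,\dots,f_k\}} P(\{I\})$, which splits as $\big(\prod_{i} p_{f_i}\big)$ times the total mass of the analogous space over $F \setminus \{f_1,\dots,f_k\}$, i.e. times $1$.

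The main obstacle is the analytic bookkeeping in (2)$\Rightarrow$(1): ensuring the infinite product $\prod_{f \in F}(1-p_f)$ is positive and that the infinite sum of $P(\{I\})$ equals exactly $1$. Both hinge on the hypothesis $\sum p_f < \infty$ via the elementary facts that $\sum p_f < \infty \Rightarrow$ only finitely many $p_f$ exceed $1/2$, and that for $p_f \le 1/2$ one has $-\log(1-p_f) \le 2 p_f$, so $\sum_f -\log(1-p_f) < \infty$. Once positivity of $q$ is in hand, the normalization identity is a clean monotone-convergence argument. (Everything is automatic when $F$ is finite, so the infinite case is the only one needing care.)
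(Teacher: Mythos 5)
This theorem is not proved in the paper at all: it is imported verbatim from Grohe and Lindner (cited as Theorem 4.8 of \cite{Grohe+2019}), so there is no in-paper proof to compare your argument against. Your proof is correct and follows the natural route one would expect for this statement: for (1)$\Rightarrow$(2) you correctly recognize that finiteness of the expected instance size is not automatic and instead use tuple-independence via the second Borel--Cantelli lemma (divergence of $\sum_{f\in F}p_f$ would force almost every instance to contain infinitely many facts, contradicting finiteness of instances); for (2)$\Rightarrow$(1) you build the product-type measure $P(\{I\})=\prod_{f\in I}p_f\prod_{f\notin I}(1-p_f)$ on the countable set of finite subsets of $F$, and your verification that $\prod_{f\in F}(1-p_f)>0$ and that the total mass is exactly $1$ (monotone limits of the finite distributive identity) is exactly the analytic content that the hypothesis $\sum_f p_f<\infty$ is needed for, as are the marginal and independence computations by coordinate-wise factorization. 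The only loose ends are the boundary cases you already flag: facts with $p_f=1$ (finitely many, added deterministically) and facts with $p_f=0$, which under the paper's definition $\facts(\pdb[I])=\bigcup_{D\in\worlds(\pdb[I])}D$ cannot appear in any possible world, so the requirement $\facts(\pdb[I])=F$ is only literally attainable when all $p_f>0$ --- a blemish of the statement itself rather than of your proof.
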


$\TI$-PDBs are a special case of the following, more general model.

\begin{definition}[Block-Independent Disjoint PDBs]
    A PDB $\pdb[I]$ is called \emph{block-independent disjoint} if there exists a partition $\B$ of $\facts( \pdb[I] )$ into \emph{blocks} such that:
    \begin{enumerate}
        \item for all $k \in \NN$ and all $f_1,\dots,f_k$ from pairwise different blocks,
        \begin{equation*}
            \Pr_{ I \sim \pdb[I] } \big( f_1 \in I, \dots, f_k \in I \big) 
                =
            \prod_{ i=1 }^k \Pr_{ I \sim \pdb[I] } \big( f_i \in I \big)\text.
        \end{equation*}
        \item for all $B \in \B$ and all $f,f' \in B$ with $f \neq f'$ it holds that 
        \begin{math}
            \Pr_{ I \sim \pdb[I] } \big( f \in I \text{ and } f' \in I \big) = 0
        \end{math}.
    \end{enumerate}
    We let $\BID$ and $\BIDfin$ denote the class of block-independent disjoint and finite block-independent disjoint PDBs, respectively.
\end{definition}

A theorem similar to \cref{thm:well-def-ti} exists for $\BID$-PDBs:

\begin{theorem}[{\cite[Theorem 4.15]{Grohe+2019}}]\label{thm:well-def-bid}
    Let $F$ be a set of facts over a schema $\tau$, let $( p_f )_{ f \in F}$ with $p_f \in [0,1]$, and let $\B$ be a partition of $F$ such that $\sum_{f \in B} p_f \leq 1$ for all $B \in \B$. The following are equivalent:
    \begin{enumerate}
        \item There exists $\pdb[I] \in \BID$ with $\facts( \pdb[I] ) = F$, blocks $\B$ and marginal probabilities $\Pr_{ I \sim \pdb[I] }\big( f \in I \big) = p_f$ for all $f \in F$.
        \item It holds that $\sum_{f \in F} p_f = \sum_{B \in \B} \sum_{f \in B} p_f < \infty$.
    \end{enumerate}
\end{theorem}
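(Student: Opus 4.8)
The plan is to prove the equivalence of the two conditions in \cref{thm:well-def-bid} by reducing to the tuple-independent case, \cref{thm:well-def-ti}, as much as possible, and handling the disjointness constraint within each block separately. The key observation is that the second condition is exactly the summability condition of \cref{thm:well-def-ti} together with the per-block hypothesis $\sum_{f \in B} p_f \leq 1$ that is already assumed in the statement; so the ``$\Rightarrow$'' direction is almost immediate.

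\textbf{Direction $(1) \Rightarrow (2)$.} Suppose $\pdb[I] \in \BID$ realizes the data $F$, $\B$, $(p_f)_{f \in F}$. I would argue that $\Expectation(\size{\?}) = \sum_{f \in F} p_f$ by linearity of expectation applied to the indicator random variables $\mathbf{1}[f \in I]$, exactly as one does for $\TI$; this identity holds for any PDB whose fact set is $F$. Since $\Expectation(\size{\?})$ is an expectation of a $\NN$-valued random variable on a discrete probability space, it is well-defined in $[0,\infty]$, but I need it to be finite. For this I would invoke that in a block $B$ the facts are mutually exclusive, so within a single block $\sum_{f \in B} p_f = \Pr(\text{some fact of } B \text{ is in } I) \leq 1$, giving the per-block bound for free; but to get $\sum_{f \in F} p_f < \infty$ I cannot just sum these $\leq 1$ bounds since there may be infinitely many blocks. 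Instead I would note that for \emph{distinct} blocks the events ``$B$ contributes a fact'' are independent, and more directly: consider finitely many facts $f_1, \dots, f_k \in F$, one from each of $k$ distinct blocks; then $\Pr(f_1 \in I, \dots, f_k \in I) = \prod p_{f_i}$, and since this probability is at most $\Pr(\size I \geq k)$... actually the cleanest route is the standard one: the expected instance size equals $\sum_f p_f$, and finiteness of $\Expectation(\size\?)$ is what must be shown. I would show finiteness by picking, for each block $B$ with $\sum_{f \in B} p_f > 0$, a representative, and observing that the ``one fact per block'' sub-collection behaves like a $\TI$-PDB-in-spirit; more carefully, I'd reduce via the natural view that collapses each block to a single fact (sending $I$ to $\{g_B : B \cap I \neq \emptyset\}$ for a fixed relabeling $g_B$), which produces a $\TI$-PDB with marginals $\sum_{f \in B} p_f$, and then apply \cref{thm:well-def-ti} to that image to conclude $\sum_B \sum_{f \in B} p_f < \infty$.

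\textbf{Direction $(2) \Rightarrow (1)$.} Assume $\sum_{f \in F} p_f < \infty$ and $\sum_{f \in B} p_f \leq 1$ for every $B \in \B$. I would construct $\pdb[I]$ as follows: for each block $B = \{f_1^B, f_2^B, \dots\}$, introduce an independent random variable $X_B$ taking value $f_i^B$ with probability $p_{f_i^B}$ and taking the value ``$\bot$'' (no fact) with the remaining probability $1 - \sum_{f \in B} p_f \geq 0$; these are well-defined discrete distributions precisely because of the per-block hypothesis. Declare the $X_B$ to be mutually independent, and set $I = \{X_B : B \in \B,\ X_B \neq \bot\}$. The point to check is that this yields a genuine PDB, i.e.\ that $I$ is almost surely a \emph{finite} set of facts; this is where $\sum_{f \in F} p_f < \infty$ enters, via (the countable-additivity form of) the Borel--Cantelli lemma: $\sum_B \Pr(X_B \neq \bot) = \sum_B \sum_{f \in B} p_f = \sum_{f \in F} p_f < \infty$, so almost surely only finitely many $X_B$ are non-$\bot$. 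Then the induced distribution on finite instances is the desired $\BID$-PDB: the marginals are $\Pr(f_i^B \in I) = \Pr(X_B = f_i^B) = p_{f_i^B}$; facts in distinct blocks are independent because the $X_B$ are; and facts $f \neq f'$ in the same block are mutually exclusive since $X_B$ takes at most one value. I would remark that this is an ``independent copies / product measure'' construction and the only subtlety is the measure-theoretic bookkeeping of the infinite product and the Borel--Cantelli step.

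\textbf{Main obstacle.} The routine algebra is trivial; the real work is the infinitary part. For $(1)\Rightarrow(2)$ the obstacle is that one cannot simply add up the per-block inequalities, so I must genuinely use independence across blocks (via the block-collapsing view plus \cref{thm:well-def-ti}) to get global summability; for $(2)\Rightarrow(1)$ the obstacle is verifying that the product-of-block-variables construction lands inside $\PDB$, i.e.\ that the infinite product measure concentrates on finite fact sets, which is exactly the Borel--Cantelli argument and is where the hypothesis $\sum_{f \in F} p_f < \infty$ is indispensable. I would also make sure the block-collapsing view is well-defined when blocks are infinite and when some $p_f = 0$ (such facts can be harmlessly dropped from $\facts(\pdb[I])$), and note the degenerate case $\sum_{f \in B} p_f = 1$ where $X_B$ is never $\bot$.
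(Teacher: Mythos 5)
The paper itself gives no proof of this statement: it is imported verbatim as \cite[Theorem 4.15]{Grohe+2019}, so there is no internal argument to compare against. Your proposal is a correct self-contained proof and matches the standard construction behind the cited result. For $(2)\Rightarrow(1)$, the per-block random variables $X_B$ with an independent product over blocks, followed by the Borel--Cantelli observation that $\sum_B \Pr(X_B \neq \bot) = \sum_{f\in F} p_f < \infty$ forces almost surely finitely many non-$\bot$ coordinates, is exactly the right argument; the pushforward is then a discrete distribution on finite instances with the required marginals, cross-block independence and within-block exclusivity. For $(1)\Rightarrow(2)$, the route you settle on (collapse each block $B$ to a single fact $g_B$ and apply \cref{thm:well-def-ti} to the image) does work, but you should make explicit the step you left implicit: the image is tuple-independent because, for distinct blocks $B_1,\dots,B_k$, the event $\bigcap_i\set{B_i \cap I \neq \emptyset}$ decomposes into the union over tuples $(f_1,\dots,f_k) \in B_1\times\dots\times B_k$ of $\bigcap_i \set{f_i \in I}$, and these events have pairwise-null intersections by within-block exclusivity, so its probability equals $\sum \prod_i p_{f_i} = \prod_i \sum_{f\in B_i} p_f$; likewise the marginal of $g_B$ equals $\sum_{f\in B}p_f$ because the within-block events are pairwise null-intersecting. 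With that spelled out (and the harmless handling of facts with $p_f=0$ you already mention), the argument is complete.
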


We note that both $\TI$-, and $\BID$-PDBs are, in case of existence, uniquely determined by their marginals (and partition into blocks) \cite[Section 4]{Grohe+new}.

\section{Limitations of Tuple-Independent Representations}\label{sec:negative}

Recall that every finite probabilistic database can be represented as an $\FO$-view over a finite $\TI$-PDB \cite[Proposition 2.16]{Suciu+2011}. A similar statement does \emph{not} hold for infinite PDBs.

\begin{proposition}[{\cite[Proposition 4.9]{Grohe+2019}}]\label{pro:foti-not-everything}
     $\FOTI \subsetneq \PDB$.
\end{proposition}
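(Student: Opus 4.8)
The plan is to exhibit a single countably infinite PDB $\pdb$ that cannot be the image of any $\TI$-PDB under an $\FO$-view, which immediately gives $\FOTI \subsetneq \PDB$ since $\FOTI \subseteq \PDB$ holds trivially. The cleanest lever available is the remark made in the introduction: every $\TI$-PDB has finite expected instance size (this follows from \cref{thm:well-def-ti}, since $\Expectation(\size{\?}) = \sum_{f} p_f < \infty$ for any $\TI$-PDB, using linearity of expectation over the indicator variables of the facts), whereas $\FO$-views can only blow up instance size polynomially. So it suffices to construct a PDB $\pdb$ whose instance size has infinite expectation, i.e.\ one that violates the finite moments property already at $k = 1$.

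The key steps, in order, would be: (1) Fix a schema with a single unary relation $R$ and a countably infinite universe $\UU = \set{u_1, u_2, \dots}$. For $n \in \NN_+$, let $D_n \coloneqq \set{R(u_1), \dots, R(u_n)}$ be the instance consisting of the first $n$ facts, and put $\worlds(\pdb) \coloneqq \set{D_n \with n \in \NN_+}$. (2) Choose a probability distribution $P$ on these worlds with $P(\set{D_n}) \propto 1/n^2$ (normalized by $6/\pi^2$), so that $\sum_n P(\set{D_n}) = 1$ but $\Expectation(\size{\?}) = \sum_n n \cdot P(\set{D_n}) = \frac{6}{\pi^2}\sum_n \tfrac1n = \infty$. (3) Suppose towards a contradiction that $\pdb = V(\pdb[I])$ for some $\TI$-PDB $\pdb[I] = (\II, P_{\pdb[I]})$ and some $\FO$-view $V$; since $V$ is built from finitely many $\FO$-formulas, there is a constant $c$ (depending only on the arities and the number of target relations) and an exponent $d$ (the maximal number of free variables over the defining formulas) such that $\size{V(I)} \leq c \cdot \size{I}^d$ for every input instance $I$. (4) Then $\Expectation_{\pdb}(\size{\?}) = \Expectation_{\pdb[I]}(\size{V(\?)}) \leq c \cdot \Expectation_{\pdb[I]}(\size{\?}^d)$, and since $\pdb[I] \in \TI$ has all size moments finite (again via \cref{thm:well-def-ti}, as $\size{I}$ is a sum of independent indicators with summable means, so it is a.s.\ finite and has finite moments of all orders), the right-hand side is finite --- contradicting step (2).

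The main obstacle is step (3)--(4): making precise the claim that an $\FO$-view can only increase instance size polynomially, and that $\TI$-PDBs have finite higher size moments and not merely finite expectation. For the first point, one uses the active-domain semantics: each output fact over a target relation of arity $a$ is determined by a tuple from $\adom(I, \Phi)^a$, and $\size{\adom(I,\Phi)} \leq \alpha\cdot\size{I} + \beta$ for constants depending on the schema and the constants in $\Phi$; summing over the finitely many target relations gives the polynomial bound with $d$ the maximum target arity. For the second point, one invokes that in a $\TI$-PDB the instance size is $\sum_{f \in \facts(\pdb[I])} \mathbf 1[f \in I]$, a sum of independent Bernoulli variables with $\sum_f p_f < \infty$; such a sum has finite moments of every order (e.g.\ by comparison with a Poisson-type tail bound, or directly by expanding $\Expectation(\size{\?}^d)$ into a finite sum of products of the $p_f$ over the convergent series). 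Both of these are routine but are the substantive content; everything else is bookkeeping. (One could instead invoke the later, stronger \cref{lem:momentsinequality} wholesale, but for this proposition the $k=1$ case suffices and is self-contained.)
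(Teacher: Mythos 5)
Your proposal is correct and follows essentially the same route as the paper: the cited argument from \cite{Grohe+2019} (and its generalization in \cref{sec:negative} via \cref{lem:momentsinequality}, \cref{pro:finite-moments-ti} and \cref{lem:fo-preserves-finite-moments}) is exactly that $\TI$-PDBs have finite size moments, $\FO$-views only blow up instance size polynomially, and hence a PDB with infinite expected instance size (such as your $P(\set{D_n}) \propto 1/n^2$ with $\size{D_n}=n$) cannot lie in $\FOTI$. Your step (4), requiring finite \emph{higher} moments of the $\TI$-PDB rather than just finite expectation, is precisely the point the paper handles with \cref{lem:momentsinequality}, so nothing is missing.
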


In this section, we investigate classes of PDBs for which we can prove that they do not belong to $\FOTI$. The argument of \cite{Grohe+2019} that establishes the proposition above, shows that in any $\FOTI$-PDB the expected instance size is necessarily finite (while there exist well-defined PDBs of infinite expected instance size). In \cref{sec:negative-infinite}, we will first generalize their argument to show that for membership in $\FOTI$ it is even necessary that \emph{all} moments of the instance size are finite, that is, that the PDB satisfies the finite moments property. In \cref{sec:negative-finite}, however, we show that this requirement is not \emph{sufficient} for membership in $\FOTI$ by giving constructions of non-$\FOTI$-PDBs that nevertheless exhibit the finite moments property.

\subsection{Infinite Moments}\label{sec:negative-infinite}

Before considering the class $\FOTI$ directly, we show that all $\TI$-PDBs have the finite moments property. This relies upon the following observation about independent $\set{0,1}$-valued random variables.

\begin{lemma}\label{lem:momentsinequality}
    Suppose $\big( X_i \big)_{ i \in \NN }$ is a family of\/ $\set{0,1}$-valued, independent random variables and let $X \coloneqq \sum_{ i = 1 }^{ \infty } X_i$ with\/ $\Expectation( X ) < \infty$. Then for all $k \in \NN_+$, $k \geq 2$ it holds that
    \begin{equation}\label{eq:moments-ineq}
        \Expectation\big( X^k \big) \leq \Expectation\big( X^{k-1} \big) \cdot \big( k-1 + \Expectation( X ) \big)\text.
    \end{equation}
\end{lemma}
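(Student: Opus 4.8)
The plan is to first establish the inequality in the case where all but finitely many of the $X_i$ vanish, and then recover the general statement by monotone convergence. So I would start by assuming $X = \sum_{i=1}^n X_i$. For $j \in \set{1,\dots,n}$ set $p_j \coloneqq \Expectation(X_j)$ and $X^{(j)} \coloneqq X - X_j = \sum_{i \neq j} X_i$; since $X^{(j)}$ is a function of $(X_i)_{i \neq j}$, it is independent of $X_j$. The key observation is that, because each $X_j$ is $\set{0,1}$-valued, we have the pointwise identities $X^{k-1} X_j = (1+X^{(j)})^{k-1} X_j$ and $X^{k-2} X_j = (1+X^{(j)})^{k-2} X_j$ (on $\set{X_j = 0}$ both sides vanish, and on $\set{X_j = 1}$ one has $X = 1 + X^{(j)}$). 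Expanding $X^k = X^{k-1}\sum_j X_j$ and $X^{k-1} = X^{k-2}\sum_j X_j$, using linearity and independence, I obtain
\[
    \Expectation\big(X^k\big) = \sum_{j=1}^n p_j\,\Expectation\big((1+X^{(j)})^{k-1}\big)
    \qquad\text{and}\qquad
    \Expectation\big(X^{k-1}\big) = \sum_{j=1}^n p_j\,\Expectation\big((1+X^{(j)})^{k-2}\big).
\]

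The crux is then the elementary inequality that for every integer $y \geq 0$,
\[
    (1+y)^{k-1} \leq y^{k-1} + (k-1)(1+y)^{k-2},
\]
which follows from $(1+y)^{k-1} - y^{k-1} = \sum_{m=0}^{k-2}\binom{k-1}{m}y^m$ together with $\binom{k-1}{m} = \tfrac{k-1}{k-1-m}\binom{k-2}{m} \leq (k-1)\binom{k-2}{m}$ for $0 \leq m \leq k-2$ (or, equivalently, by bounding $\int_y^{y+1}(k-1)t^{k-2}\,dt$). Applying it with $y = X^{(j)}$ and using $X^{(j)} \leq X$, hence $(X^{(j)})^{k-1} \leq X^{k-1}$, gives $(1+X^{(j)})^{k-1} \leq X^{k-1} + (k-1)(1+X^{(j)})^{k-2}$ pointwise. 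Taking expectations and substituting into the first displayed formula, together with $\sum_j p_j = \Expectation(X)$ and the second displayed formula, yields
\[
    \Expectation\big(X^k\big) \leq \sum_{j=1}^n p_j\,\Expectation\big(X^{k-1}\big) + (k-1)\sum_{j=1}^n p_j\,\Expectation\big((1+X^{(j)})^{k-2}\big) = \Expectation(X)\,\Expectation\big(X^{k-1}\big) + (k-1)\,\Expectation\big(X^{k-1}\big),
\]
which is exactly \eqref{eq:moments-ineq}.

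For the general case, I would put $X_{\leq n} \coloneqq \sum_{i=1}^n X_i$; these increase pointwise to $X$, so $\Expectation\big((X_{\leq n})^m\big) \to \Expectation(X^m)$ in $[0,\infty]$ for every $m$ by monotone convergence, and $\Expectation(X_{\leq n}) \to \Expectation(X) < \infty$. Applying the finite case to each $X_{\leq n}$ and letting $n \to \infty$ gives the claim (if $\Expectation(X^{k-1}) = \infty$ the right-hand side is infinite and there is nothing to prove). I do not expect a genuine obstacle here: the computation is short, and the only things needing a little care are the bookkeeping with a possibly infinite index set and possibly infinite moments — handled by the truncation above, using that $\Expectation(X) < \infty$ forces $X < \infty$ almost surely — and checking the binomial inequality at the boundary value $k = 2$, where it degenerates to $(1+y) \leq y+1$.
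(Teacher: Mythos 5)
Your proof is correct, but it follows a genuinely different route from the paper's. The paper expands $X^k = \bigl(\sum_i X_i\bigr)^k$ into a sum over multi-indices $(i_1,\dots,i_k)$, interchanges expectation with the infinite sum (citing a theorem on nonnegative series), and then splits the sum over the last index $i_k$ according to whether it repeats one of $i_1,\dots,i_{k-1}$ (using $X_i^2 = X_i$) or not (using independence), which directly yields the factor $k-1+\Expectation(X)$. You instead factor out a single variable via the pointwise identity $X^{k-1}X_j = (1+X^{(j)})^{k-1}X_j$ with $X^{(j)} = X - X_j$, obtaining the size-biased identities $\Expectation(X^k) = \sum_j p_j \Expectation\bigl((1+X^{(j)})^{k-1}\bigr)$ and $\Expectation(X^{k-1}) = \sum_j p_j \Expectation\bigl((1+X^{(j)})^{k-2}\bigr)$, and then conclude with the elementary inequality $(1+y)^{k-1} \leq y^{k-1} + (k-1)(1+y)^{k-2}$ together with $X^{(j)} \leq X$; the infinite family is handled by truncation and monotone convergence rather than by a direct interchange of expectation and infinite multi-index sums. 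Both arguments rest on the same two facts (idempotence of Bernoulli variables and independence), but yours is more self-contained on the measure-theoretic side (only MCT for a monotone truncation is needed), at the cost of an extra elementary binomial inequality, while the paper's multi-index computation is more direct once the interchange of summation and expectation is granted.
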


\begin{proof}
  Let $k \in \NN_+$ be arbitrary with $k \geq 2$. It holds that\footnote{The calculation is not as trivial as it may seem, since we sum up infinitely many random variables. However, the equalities can easily be verified using \cite[Theorem 5.3]{Klenke2014}.}
  \[
    \Expectation\big( X^k \big) 
        =
    \Expectation\bigg( \Big( \sum_{ i = 1 }^{ \infty } X_i \Big)^k \bigg)
        = 
    \Expectation\bigg(
        \sum_{i_1,\dots,i_k}
        X_{i_1} \cdot \dotsc \cdot X_{i_k}
        \bigg)
        =
    \sum_{ i_1, \dots, i_k}
    \Expectation\big(
        X_{i_1} \cdot \dotsc \cdot X_{i_k}
    \big)
  \]
  (with the indices of sums on the right ranging over non-negative integers). Now consider the product $X_{i_1} \cdot \dotsc \cdot X_{i_k}$ for arbitrary $i_1,\dots,i_k \in \NN$. Using that the $X_i$ are $\set{0,1}$-valued and independent, it holds that
  \[
    \Expectation\big( X_{i_1} \cdot \dotsc \cdot X_{i_k} \big)
        =
    \begin{cases}
      \Expectation\big( X_{i_1} \cdot \dotsc \cdot X_{i_{k-1}} \big)                                        & \text{if }i_k \in \set{i_1,\dots,i_{k-1}}\text{ and}\\
      \Expectation\big( X_{i_1} \cdot \dotsc \cdot X_{i_{k-1}} \big) \cdot \Expectation\big( X_{i_k} \big)  & \text{otherwise.}
    \end{cases}
  \]
  Using this to continue our calculation from before, we get
  \begin{align*}
      \Expectation\big( X^k \big)
      &=
      \sum_{ i_1,\dots,i_k } \Expectation\big( X_{i_1} \cdot \dotsc \cdot X_{i_k} \big) \\
      &=
      \sum_{ i_1,\dots,i_{k-1} } \bigg( \sum_{i_k} \Expectation\big( X_{i_1} \cdot \dotsc \cdot X_{i_k} \big) \bigg)\\
      &=
      \sum_{ i_1,\dots,i_{k-1} } \bigg(
        \sum_{ i_k \in \set{ i_1,\dots,i_{k-1} } }\Expectation\big( X_{i_1} \cdot \dotsc \cdot X_{i_{k-1}} \big) +
        \sum_{ i_k \notin \set{i_1,\dots,i_{k-1} } }\Expectation\big( X_{i_1} \cdot \dotsc \cdot X_{i_{k-1}} \big) \cdot \Expectation\big( X_{ i_k } \big)
      \bigg)\\
      &=
      \sum_{ i_1,\dots,i_{k-1} } 
        \Expectation\big( X_{i_1} \cdot \dotsc \cdot X_{i_{k-1}} \big)
        \cdot 
        \bigg(
            \sum_{ i_k \in \set{i_1,\dots,i_{k-1} } } 1 +
            \sum_{ i_k \notin \set{i_1,\dots,i_{k-1} } } \Expectation\big( X_{i_k} \big)
        \bigg)\\
      &\leq
      \bigg(
        \sum_{ i_1,\dots,i_{k-1} } 
        \Expectation\big( X_{i_1} \cdot \dotsc \cdot X_{i_{k-1}} \big)
      \bigg)
        \cdot 
      \bigg(
        k-1 + 
        \sum_{ i_k } \Expectation\big( X_{i_k} \big)
      \bigg)\\
      &= \Expectation\big( X^{k-1} \big) \cdot \big( k-1 + \Expectation( X ) \big)\text.\qedhere
  \end{align*}
\end{proof}

In our application, the random variables $X_i$ from the previous lemma will correspond to the indicator random variables of the facts of a $\TI$-PDB.

\begin{proposition}\label{pro:finite-moments-ti}
    $\TI$ has the finite moments property.
\end{proposition}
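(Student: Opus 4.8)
The plan is to realize the instance size of a $\TI$-PDB as a sum of independent $\set{0,1}$-valued random variables and then feed \cref{lem:momentsinequality} into an induction on the moment order.

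First I would fix an arbitrary $\pdb[I] = ( \II, P ) \in \TI$. Every fact in $\facts( \pdb[I] )$ lies in some possible world and hence has positive marginal probability, so by \cref{thm:well-def-ti} the set $\facts( \pdb[I] )$ is at most countably infinite and $\sum_{ f \in \facts( \pdb[I] ) } p_f < \infty$. I would enumerate $\facts( \pdb[I] ) = \set{ f_1, f_2, \dots }$ (padding with dummy facts of probability $0$ if there are only finitely many), and for each $i$ let $X_i$ be the indicator random variable of the event $\set{ f_i \in I }$. By the defining property of tuple-independence these events are mutually independent, so $( X_i )_i$ is a family of independent $\set{0,1}$-valued random variables with $\Expectation( X_i ) = p_{ f_i }$. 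Since each $I \in \II$ is a finite set of facts, $\size{ I } = \sum_i X_i( I )$ holds pointwise with only finitely many nonzero terms, so the instance size equals $X \coloneqq \sum_{ i = 1 }^{ \infty } X_i$; and by monotone convergence $\Expectation( X ) = \sum_i \Expectation( X_i ) = \sum_f p_f < \infty$. Thus the hypotheses of \cref{lem:momentsinequality} are met.

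It then remains to prove $\Expectation( X^k ) < \infty$ for all $k \in \NN_+$, which I would do by induction on $k$. The base case $k = 1$ is exactly $\Expectation( X ) < \infty$ from the previous paragraph. For the inductive step, if $\Expectation( X^{ k-1 } ) < \infty$ then inequality \eqref{eq:moments-ineq} of \cref{lem:momentsinequality} gives $\Expectation( X^k ) \leq \Expectation( X^{ k-1 } ) \cdot ( k - 1 + \Expectation( X ) ) < \infty$. Hence $\pdb[I]$ has the finite moments property, and since $\pdb[I]$ was arbitrary, so does the class $\TI$.

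I do not expect a genuine obstacle here: the argument is essentially bookkeeping on top of \cref{lem:momentsinequality}. The only points that need a little care are (i) invoking \cref{thm:well-def-ti} to guarantee that the facts can be indexed by a finite or countably infinite index set so that \cref{lem:momentsinequality} is applicable, and (ii) the interchange of expectation with the infinite sum when computing $\Expectation( X )$, which is justified by nonnegativity of the $X_i$ (monotone convergence).
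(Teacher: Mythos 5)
Your proposal is correct and follows essentially the same route as the paper's proof: decompose the instance size into the indicator random variables $X_i$ of the events $\set{f_i \in I}$, use \cref{thm:well-def-ti} to get $\Expectation(\size{\?}) = \sum_f p_f < \infty$, and then induct on $k$ via \cref{lem:momentsinequality}. The extra care you note about countable indexing and interchanging expectation with the infinite sum is consistent with what the paper does (it delegates the latter to the cited result of Klenke), so there is nothing to add.
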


\begin{proof}
    Suppose $\facts(\pdb[I]) = \set{ f_0, f_1, \dots }$ and let $X_i \colon \II \to \set{ 0, 1 }$ be the indicator random variable of the event $\set{ f_i \in I }$ for $I \sim \pdb[I]$. Since the size of any instance $I \in \II$ is exactly the number of facts in $I$, we can express the instance size random variable $\size{ \? }$ using the indicator random variables $X_i$ as follows:
    \[
        \size{ \? } = \sum_{i = 1}^\infty X_i
        \text.
    \]
    For simplicity, we write $\Expectation$ for $\Expectation_{\mathcal{I}}$. Since $X_i$ is $\set{0,1}$-valued, we know that 
    \[
        \Expectation\big( X_i \big) 
            = 
        \Pr_{ I \sim \mathcal{I} } \big( X_i(I) = 1 \big)
            = 
        \Pr_{ I \sim \mathcal{I} } \big( f_i \in I \big) 
            = 
        p_i
    \] 
    for all $i \in \NN$, where $p_i$ is the marginal probability of $f_i$ in $\pdb[I]$. Thus, 
    \[
        \Expectation\big(\size{\?}\big) 
            = 
        \Expectation\Big(\sum_{i = 1}^\infty X_i\Big)
            = 
        \sum_{i = 1}^\infty\Expectation\big( X_i \big) 
            =   
        \sum_{i = 1}^\infty p_i
        \text.
    \] 
    By \cref{thm:well-def-ti} the last sum is finite, so $\Expectation\big( \size{\?} \big) < \infty$. In order to obtain this for higher moments $\size{ \? }^k$ with $k > 1$, we proceed inductively (going from $k-1$ to $k$) using \cref{lem:momentsinequality}. This yields
    \[
        \Expectation\big( \size{ \? }^k \big)
            \leq 
        \Expectation\big( \size{ \? }^{k-1} \big) \cdot \Big( k-1 + \Expectation\big( \size{ \? } \big)\Big)\text.
    \]
    The right-hand side of the above is finite by the induction hypothesis, i.\,e. $\Expectation\big( \size{ \? }^ k \big) < \infty$.
\end{proof}

\begin{remark}
    The same argument can be employed to show that $\BID$ has the finite moments property. For this, $X_i$ needs to be taken as the indicator random variable of the $i$th block of the $\BID$-PDB. Then one can proceed analogously to the proof of \cref{pro:finite-moments-ti}.
\end{remark}

We next show that $\FO$-views over classes with the finite moments property preserve the finite moments property. This is because $\FO$-views can only cause a polynomial blowup in the instance size per individual possible world.

\begin{lemma}\label{lem:fo-preserves-finite-moments}
    If\/ $\pdb$ is a PDB with the finite moments property and $\Phi$ is an $\FO$-view, then $\Phi(\pdb)$ also has the finite moments property.
\end{lemma}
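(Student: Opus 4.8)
The plan is to bound the instance size of $\Phi(\pdb)$ pointwise by a polynomial in the instance size of $\pdb$, and then transfer the finite moments property through this polynomial bound. First I would observe that $\Phi$ consists of finitely many $\FO$-formulas $\Phi_R$, one per relation symbol $R$ in the output schema $\tau'$; let $r$ be the maximum arity over $\tau'$ and let $m = \size{\tau'}$. Under the active domain semantics, evaluating $\Phi_R$ on an instance $D$ produces only facts $R(\tup a)$ with $\tup a \in \adom(D,\Phi_R)^{\ar(R)}$, so $\size{\Phi_R(D)} \leq \size{\adom(D,\Phi_R)}^{r}$. Since $\adom(\Phi_R)$ is a fixed finite set of constants (say of size $c$) and every fact of $D$ contributes at most $\ar(R) \le r'$ elements to $\adom(D)$ for $r'$ the max arity of the input schema, we get $\size{\adom(D,\Phi_R)} \le r' \size{D} + c$. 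Summing over the $m$ relations in $\tau'$ yields a polynomial bound $\size{\Phi(D)} \le g(\size{D})$ where $g$ is a fixed polynomial depending only on $\Phi$ and the two schemas.

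**Transferring the moment bound.**

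Next I would use this pointwise bound to control the moments. Fix $k \in \NN_+$. Since $\Phi(\pdb)$ is the image of $\pdb$ under the view $\Phi$, the instance-size-to-the-$k$ random variable on $\Phi(\pdb)$ has expectation
\[
    \Expectation_{\Phi(\pdb)}\big(\size{\?}^k\big)
        =
    \sum_{D' \in \DD'} \size{D'}^k \cdot P'\big(\set{D'}\big)
        =
    \sum_{D \in \DD} \size{\Phi(D)}^k \cdot P\big(\set{D}\big)
        \le
    \sum_{D \in \DD} g\big(\size{D}\big)^k \cdot P\big(\set{D}\big)\text,
\]
using that $P'(\set{D'}) = \sum_{D : \Phi(D) = D'} P(\set{D})$ and that $\size{\Phi(D)}^k \le g(\size{D})^k$ for every $D$. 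Now $g(t)^k$ is itself a polynomial in $t$, say $g(t)^k = \sum_{j=0}^{d} c_j t^j$ for suitable $d$ and coefficients $c_j$; I would take absolute values of the $c_j$ to get an upper bound with non-negative coefficients. Then the right-hand side is at most $\sum_{j=0}^{d} \size{c_j} \cdot \Expectation_{\pdb}\big(\size{\?}^j\big)$, a finite linear combination of moments of the instance size of $\pdb$, each of which is finite by the finite moments property of $\pdb$ (with $\Expectation_{\pdb}(\size{\?}^0) = 1$). Hence $\Expectation_{\Phi(\pdb)}(\size{\?}^k) < \infty$, and since $k$ was arbitrary, $\Phi(\pdb)$ has the finite moments property.

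**Main obstacle.**

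The mathematical content is routine; the only point requiring a little care is the polynomial bound on $\size{\Phi(D)}$, specifically making sure the active domain semantics is invoked correctly so that the output facts really are confined to tuples over $\adom(D,\Phi)$, and keeping track of the fixed contribution of constants appearing in $\Phi$ (which matters precisely when $D$ is empty or very small). A second, purely bookkeeping subtlety is the interchange of summation when rewriting $\Expectation_{\Phi(\pdb)}(\size{\?}^k)$ as a sum over $D \in \DD$: since all terms are non-negative, Tonelli's theorem (or just the defining property of $P'$ together with non-negativity) justifies it without further hypotheses. Neither of these is a genuine obstacle, so I expect the proof to be short.
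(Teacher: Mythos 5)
Your proposal is correct and follows essentially the same route as the paper's proof: a pointwise polynomial bound $\size{\Phi(D)} \leq g(\size{D})$ obtained from the active domain semantics (the paper writes it as $m\cdot(r_{\mathrm{max}}\size{D}+\size{\adom(\Phi)})^{r}$), followed by expanding the $k$th power of that polynomial and using linearity of expectation to reduce to finitely many moments of $\pdb$, each finite by hypothesis. The only differences are bookkeeping (the paper uses the binomial theorem on the explicit bound where you expand a generic $g^k$), so there is nothing to fix.
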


\begin{proof}
    Let $\pdb$ be a PDB with the finite moments property and let $\Phi$ be an $\FO$-view over the schema of $\pdb$. Suppose that $\Phi$ consists of $m$ first-order formulae $\Phi_1,\dots,\Phi_m$ with arities $r_1,\dots,r_m$, respectively. Observe that due to the active domain semantics, for all $D \sim \pdb$ we have
    \[
        \size[\big]{ \Phi(D) }
            =
        \sum_{ i = 1 }^{ m } \size[\big]{ \Phi_i( D ) }
            \leq
        \sum_{ i = 1 }^{ m } \size[\big]{ \adom( D, \Phi_i ) }^{ r_i }
            \leq
        m \cdot \size[\big]{ \adom(D,\Phi) }^{ r }
            \leq 
        m \cdot \big(  r_{\mathrm{max}} \cdot \size{D} + \size{\adom( \Phi ) } \big)^r
    \]
    where $r = \max_{ i = 1,\dots,m } r_i$ and $r_{\mathrm{max}}$ is the maximum arity of a relation symbol in the schema of $\pdb$. Thus, for all $k \in \NN_+$ it follows that
    \[
        \Expectation_{ \mkern1mu\Phi\mkern-1mu( \pdb )\mkern-1mu}\big( \size{ \? }^k \big)
            \leq
        \Expectation_{ \pdb } \Big( m^k \cdot \big( \size{ \? } r_{\mathrm{max}} + \size{\adom(\Phi)} \big)^{ rk } \Big)
            = 
        m^k \cdot \sum_{ j = 0 }^{ rk } \binom{ rk }{ j } \cdot r_{\mathrm{max}}^j \cdot \size{ \adom(\Phi) }^{ rk - j } \cdot \Expectation_{ \pdb }\big( \size{ \? }^j \big)
        \text,
    \]
    using the binomial formula and linearity of expectation. Since $\pdb$ has the finite moments property, all the $\Expectation_{\pdb}\big( \size{ \? }^j \big)$ with $j = 0, 1, \dots, rk$ are finite. Thus, $\Expectation_{ \mkern1mu\Phi\mkern-1mu( \pdb )\mkern-1mu }\big( \size{\?}^k \big)$ is finite as well.
\end{proof}

We note, however, that the property of having a finite $k$th moment (but not necessarily a finite $(k+1)$st moment) for fixed $k$, is in general not preserved under $\FO$-views.

\begin{example}
    Consider the schema that only contains a single unary relation symbol $R$ and let $D_i = \set[\big]{ R(1), \dots, R(i) }$ for all $i \in \NN_+$. Let $P\big( \set{D_i} \big) = \frac{Z}{i^3}$ where $Z = \sum_{ i = 1 }^{ \infty }\frac{1}{i^3}$ is the normalizing constant. Then $\pdb = (\DD, P)$ is a PDB with expected instance size $\sum_{ i = 1 }^{ \infty } i \cdot \frac{Z}{i^3} = Z \cdot \sum_{i=1}^{\infty}\frac{1}{i^2} < \infty$. Note, however, that the second moment of the instance size of $\pdb$ is infinite.
    
    Now consider the view $\Phi(x,y) = R(x) \vee R(y)$. Then $\size[\big]{ \Phi( D_i ) } = \size{ D_i }^2$. Thus, $\Phi(\pdb)$ has expected instance size $\sum_{ i = 1 }^{ \infty } i^2 \cdot \frac{Z}{i^3} = Z \cdot \sum_{ i = 1 }^{ \infty } \frac{1}{i} = \infty$.
\end{example}

\medskip

With \cref{lem:fo-preserves-finite-moments} the finite moments property of $\TI$ (\cref{pro:finite-moments-ti}) directly extends to $\FOTI$.

\begin{proposition}\label{pro:finite-moments}
    $\FOTI$ has the finite moments property.
\end{proposition}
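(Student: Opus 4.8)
The plan is to simply unfold the definition of $\FOTI$ and chain together the two results that were just established. By definition, a PDB $\pdb$ lies in $\FOTI$ precisely when there is a $\TI$-PDB $\pdb[I]$ and an $\FO$-view $\Phi$ with $\pdb = \Phi(\pdb[I])$. So the argument splits into two independent facts about this decomposition: the source is well-behaved, and the view does not destroy that behavior.

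Concretely, I would proceed as follows. First, fix an arbitrary $\pdb \in \FOTI$ and write $\pdb = \Phi(\pdb[I])$ with $\pdb[I] \in \TI$ and $\Phi$ an $\FO$-view over the schema of $\pdb[I]$. Second, invoke \cref{pro:finite-moments-ti} to conclude that $\pdb[I]$ has the finite moments property. Third, invoke \cref{lem:fo-preserves-finite-moments}, applied to $\pdb[I]$ and $\Phi$, to conclude that $\Phi(\pdb[I]) = \pdb$ also has the finite moments property. Since $\pdb$ was an arbitrary element of $\FOTI$, this shows that every PDB in $\FOTI$ has the finite moments property, which is exactly the assertion.

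There is essentially no obstacle left at this stage: the genuine work has already been done, namely the moment inequality of \cref{lem:momentsinequality} feeding into \cref{pro:finite-moments-ti} (showing $\TI$-PDBs have all moments finite), and the polynomial-blowup estimate of \cref{lem:fo-preserves-finite-moments} (showing $\FO$-views preserve the finite moments property). The only point worth stating carefully is that membership in $\FOTI$ really does furnish a representation $\pdb = \Phi(\pdb[I])$ of the required shape, so that both prior results apply verbatim; this is immediate from the definition of $\FO(\classstyle{D})$ for $\classstyle{D} = \TI$. Hence the proof is a two-line composition of \cref{pro:finite-moments-ti} and \cref{lem:fo-preserves-finite-moments}.
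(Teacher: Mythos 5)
Your proposal is correct and matches the paper's argument exactly: the paper also obtains \cref{pro:finite-moments} by combining \cref{pro:finite-moments-ti} (every $\TI$-PDB has the finite moments property) with \cref{lem:fo-preserves-finite-moments} ($\FO$-views preserve it), applied to a representation $\pdb = \Phi(\pdb[I])$ furnished by the definition of $\FOTI$. Nothing further is needed.
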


This insight yields numerous new examples of PDBs that are not in $\FOTI$.

\begin{example}
    Let $\pdb = (\DD,P)$ be a PDB with $\DD = \set{ D_1, D_2, \dots }$, where $\size{ D_i } = 2^i$ and $P\big( \set{D_i} \big) = \frac{3}{4^i}$. Then,
    \begin{align*}
        \Expectation_{\pdb} \big( \size{\?} \big) 
            &= 
        \sum_{D \in \DD} \size{D} \cdot P\big( \set{D} \big) 
            = 
        3 \cdot \sum_{ i=1 }^{\infty} 2^i \cdot \frac{1}{4^i} 
            = 
        3 \cdot \sum_{ i=1 }^{\infty} \frac{1}{2^i}
            = 
        3\text,
        \shortintertext{but}
        \Expectation_{\pdb} \big( \size{\?}^2 \big) 
            &= 
        \sum_{D \in \DD} \size{ D }^2 \cdot P\big( \set{D} \big) 
            = 
        3 \cdot \sum_{ i=1 }^{ \infty } 2^{2i} \cdot \frac{1}{4^i}
            =
        3 \cdot \sum_{ i=1 }^{ \infty } 1 = \infty
        \text.
    \end{align*}
    That is, $\pdb$ is a PDB of finite expected instance size where the second moment of the instance size is infinite. According to \Cref{pro:finite-moments}, $\pdb\not\in\FOTI$.
\end{example} %
\subsection{Balancing the Marginal Probabilities}\label{sec:negative-finite}

In this subsection, we show that the finite moments property is not a sufficient condition for membership in $\FO(\TI)$. That is, the main result of this subsection is the following.

\begin{theorem}\label{thm:not-all-finite}
    There are PDBs having the finite moments property that are not in $\FOTI$.
\end{theorem}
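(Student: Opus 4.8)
I want to construct a PDB $\pdb$ that has the finite moments property but provably fails to be in $\FOTI$. The idea is to make the failure come from the \emph{arithmetic} of the marginal probabilities rather than from instance size: I will use a sample space in which instance sizes are bounded (so all size moments are trivially finite), but where representability is blocked because any $\TI$-PDB feeding into an $\FO$-view would be forced to have an infinite sum of marginal probabilities, contradicting \cref{thm:well-def-ti}.

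\textbf{Step 1: Choose the target PDB.} I would take a single unary relation $R$ over the universe and, for each $n \in \NN_+$, an instance $D_n = \{R(a_n)\}$ consisting of one fact built from a distinct universe element $a_n$. Thus every possible world has size exactly $1$ and the finite moments property holds automatically. I then assign probabilities $P(\{D_n\}) = p_n$ with $\sum_n p_n = 1$, choosing the $p_n$ to decay \emph{slowly} — e.g.\ $p_n = Z/(n\log^2 n)$ for suitable $Z$ — so that $\sum_n p_n^{1/c} = \infty$ for every constant $c$. (Possibly one also wants a "null" world $\emptyset$ to absorb residual mass, or to keep the worlds pairwise disjoint; the exact bookkeeping is routine.)

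\textbf{Step 2: Analyze what an $\FO$-view over a $\TI$-PDB can do.} Suppose $\pdb = \Phi(\pdb[I])$ for some $\pdb[I] \in \TI$ and $\FO$-view $\Phi$. Each possible world $D_n$ of $\pdb$ is the image under $\Phi$ of some set $\mathcal{W}_n \subseteq \worlds(\pdb[I])$ of worlds of $\pdb[I]$, and the $\mathcal{W}_n$ are pairwise disjoint with $\Pr(\mathcal W_n) = p_n$. The key is a locality / counting argument: because $\Phi$ has bounded quantifier rank and bounded arity, and because $\TI$-worlds are (by \cref{thm:well-def-ti}) typically small, I want to argue that to realize $D_n$ — a world with a fact on a "fresh" domain element $a_n$ — the view must, for all but finitely many $n$, read off a single fact $g_n \in \facts(\pdb[I])$ whose truth is essentially in bijection with $\mathcal W_n$; more precisely, that $p_n \le C \cdot p_{g_n}^{1/c}$ (or $p_n \le C\cdot p_{g_n}$) for constants $C, c$ depending only on $\Phi$, and that the $g_n$ can be taken pairwise distinct. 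The distinctness comes from the fact that the $a_n$ are distinct and $\FO$ with active-domain semantics cannot manufacture a brand-new element; the polynomial relation between $p_n$ and $p_{g_n}$ comes from the fact that the event $\mathcal W_n$ is determined by which facts of $\pdb[I]$ hold on the relevant bounded-size neighborhoods.

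\textbf{Step 3: Derive the contradiction.} Summing over $n$: $\sum_n p_n^{c} \le \sum_n (C \cdot p_{g_n})$ — wait, I should phrase it so that $\sum_n p_{g_n} \ge \mathrm{const}\cdot \sum_n p_n^{c}$ for the appropriate constant, and then choose the $p_n$ in Step 1 so that $\sum_n p_n^{c} = \infty$ while $\sum_n p_n = 1$; this contradicts $\sum_{f} p_f < \infty$ from \cref{thm:well-def-ti}. This forces me to fix $c$ \emph{before} choosing the distribution, which is fine because $c$ depends only on the (unknown but fixed) maximal arity and quantifier rank available — actually it does not, so the cleanest route is to note that there are only countably many $\FO$-views and diagonalize, or better, observe the bound is of the form $p_n \le (\text{polynomially many facts}) \cdot (\text{product of their marginals})$ and choose $p_n$ decaying like $1/(n\log^2 n)$ so that no fixed power makes the sum converge faster than harmonically.

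\textbf{The main obstacle.} The hard part is Step 2: making rigorous the claim that an $\FO$-view cannot "spread" the probability of a single $\TI$-world thinly across many target worlds in a way that evades the $\sum p_f < \infty$ constraint. One has to rule out, for instance, that $D_n$ is realized by a large disjunction of low-probability $\TI$-configurations whose probabilities sum to $p_n$ without any individual fact's marginal being comparable to $p_n$. I expect the resolution to use that each $\mathcal W_n$ is a nonempty event, so it contains at least one world $I_n$ with $\Pr(I_n) > 0$, that $\Pr(I_n)$ is a product of (finitely many) marginals $p_f$ and co-marginals $1-p_f$, and that $p_n = \Pr(\mathcal W_n) \ge \Pr(I_n)$; combined with the observation that realizing the fresh element $a_n$ forces $I_n$ to contain a fact $g_n$ on $a_n$ (or on an element mapped to $a_n$), giving $p_n \ge \Pr(I_n) $ and, crucially in the other direction, that $\sum_n p_n$ being finite is automatic but we need the \emph{reverse}-type estimate. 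So the genuinely delicate point is extracting, from the finiteness of each $\mathcal W_n$ and the disjointness across $n$, a lower bound on $\sum_f p_f$ that blows up — this is where the slow decay of $p_n$ and a careful pigeonhole on which facts can serve as "witnesses" for infinitely many $n$ must be combined.
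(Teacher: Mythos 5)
Your Step 1 already fails: the PDB you propose is provably \emph{in} $\FOTI$, no matter how slowly the $p_n$ decay. Every world of your PDB has size one, and every PDB of bounded instance size belongs to $\FOTI$ --- this is \cref{cor:bounded-size}, obtained from \cref{thm:condition} with $c=1$, since $\sum_n \size{D_n}\cdot p_n^{1/\size{D_n}} = \sum_n p_n = 1 < \infty$. (Alternatively, your PDB is a single-block $\BID$-PDB, so \cref{thm:bid} applies.) This also shows why the reverse estimate you hope for in Steps 2--3 cannot exist: a representation does \emph{not} need a witness fact $g_n$ with $p_{g_n}$ comparable to a fixed root of $p_n$; it can use tag facts with marginals $q_n \approx p_n$ (so $\sum_n q_n \le 1$) and a view that outputs $R(a_n)$ exactly when the $n$-th tag is the unique one present, the conditioning being eliminated via \cref{thm:conditional-views}. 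Note also that your Step 3 is arithmetically vacuous as stated: for any $c \ge 1$ one has $\sum_n p_n^{c} \le \sum_n p_n = 1$, so no choice of slowly decaying $p_n$ makes $\sum_n p_n^{c}$ diverge; divergence would require an exponent strictly below $1$, and that is precisely what bounded instance size forbids.

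The correct quantitative obstruction is the one the paper derives in \cref{lemma:view-prob}: if $D$ has $d$ active-domain elements that are not constants of the view, and $S_D$ is the sum of the marginals of the $\TI$-facts touching those elements, then $\Pr(\Phi(I)=D) \le d\,(r^2 d^{\,r-1} S_D)^{d/r}$. The exponent is $d/r$, so for worlds of bounded size this only yields $S_D \gtrsim p_D^{\,r/d}$ with a fixed exponent $\ge 1$ after adjusting constants, and the resulting series converges automatically --- consistent with representability. To force a contradiction you must let the active-domain sizes $d_n$ grow (so the exponent $d_n/r$ eventually beats every fixed arity $r$) while keeping all size moments finite. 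This is exactly what the paper does: it takes \emph{domain-disjoint} worlds with $d_n = \ceil{\log n}$ and $P(\set{D_n}) = Z/n^2$ (\cref{ex:no-rep-fin-moments}); domain-disjointness gives $\sum_n S_n < \infty$, hence $S_n < a_n/r^2$ for infinitely many $n$ along any divergent series $\sum_n a_n$ (\cref{lemma:disjoint-inequality}), and plugging this into the bound above makes the right-hand side decay like $n^{-\Theta(\log n)}$, contradicting $p_n = Z/n^2$. Your instinct that fresh domain elements must be ``covered'' by facts of the underlying $\TI$-PDB is the right starting point --- it is the heart of \cref{lemma:view-prob} --- but it only has force when instance sizes are unbounded; with size-one worlds it proves nothing, and by \cref{cor:bounded-size} nothing can be proved.
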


The proof of this theorem is quite involved. However, it may be skipped by the reader, as the techniques are not essential for the understanding of the rest of the paper.

To obtain \cref{thm:not-all-finite}, we essentially aim to break a balance in fact probabilities that is required for representations: On the one hand, in $\TI$-PDBs, the sum of fact probabilities must converge, meaning that these probabilities have to approach zero fast enough. On the other hand, fact probabilities in the $\TI$-PDB need to be sufficiently high in order to represent instances of a certain probability with respect to an $\FO$-view on $\Phi$. This is due to the fact that images of $\FO$-views only use domain elements that either appeared in the input instance, or as constants in the view itself. While the first part of this (the convergence of fact probabilities in the $\TI$-PDB) is clear, this section focuses first on formalizing the second part of our statement. In particular, we start by turning this idea into an upper bound for the probabilities of instances in the image of an $\FO$-view on a $\TI$-PDB.

\begin{lemma}\label{lemma:view-prob}
    Let $\pdb[I]$ be a $\TI$-PDB and let $\Phi$ be an $\FO$-view over the schema of $\pdb[I]$. Then for all instances $D$ of $\Phi( I )$ it holds that
    \[
        \Pr_{ I \sim \pdb[I] } \big( \Phi( I ) = D \big)
        \leq
        \size{ A_D^* } \cdot \bigg( r^2 \cdot \size{ A_D^* }^{ r- 1 } \cdot \sum_{ f \in F_D^* } p_f \bigg)^{ \size{ A_D^* }/r }\text,
    \]
    where $A_D^*$ is the set of domain elements appearing in $D$ that are not constants in $\Phi$, $F_D^*$ is the set of facts in $D$ that contain at least one element from $A_D^*$ and $r$ is the maximum arity among the relations of $\pdb[I]$.
\end{lemma}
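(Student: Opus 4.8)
\emph{Proof plan.}
The idea is that the bound is forced purely by the active-domain semantics: if $\Phi(I) = D$, then every element occurring in $D$ lies in $\adom(I) \cup \adom(\Phi)$, so in particular $A_D^* \subseteq \adom(I)$; hence it suffices to bound $\Pr_{ I \sim \pdb[I] }\big( A_D^* \subseteq \adom(I) \big)$. I would tacitly assume $A_D^* \neq \emptyset$ and $r \geq 1$ (otherwise the statement either says nothing or is vacuous) and, after discarding a null event, assume $I \subseteq \facts(\pdb[I])$, a \emph{countable} set. Here $F_D^*$ is to be read as the set of facts of $\facts(\pdb[I])$ whose active domain meets $A_D^*$; note that such a fact may well carry further elements that do not occur in $D$, which is exactly why one must not restrict to facts ``over $\adom(D)$''.

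\emph{Step 1: a covering argument.}
Fix $I$ with $A_D^* \subseteq \adom(I)$ and set $J \coloneqq \set{ f \in I : \adom(f) \cap A_D^* \neq \emptyset }$. Since $A_D^*$ is covered by the active domains of the facts in $J$ and each single fact contributes at most $r$ of its elements, $\size{A_D^*} \leq \sum_{f \in J} \size{\adom(f) \cap A_D^*} \leq r \cdot \size{J}$, so $\size{J} \geq m \coloneqq \ceil{ \size{A_D^*} / r }$. As $J \subseteq I \cap F_D^*$, it follows that $\Phi(I) = D$ implies $\size{ I \cap F_D^* } \geq m$.

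\emph{Step 2: union bound and tuple-independence.}
Therefore
\begin{align*}
    \Pr_{ I \sim \pdb[I] }\big( \Phi(I) = D \big)
    &\leq
    \Pr_{ I \sim \pdb[I] }\big( \size{ I \cap F_D^* } \geq m \big) \\
    &\leq
    \sum_{\substack{ G \subseteq F_D^* \\ \size{G} = m }} \Pr_{ I \sim \pdb[I] }\big( G \subseteq I \big)
    =
    \sum_{\substack{ G \subseteq F_D^* \\ \size{G} = m }} \prod_{f \in G} p_f ,
\end{align*}
where the second inequality is a union bound over the countably many $m$-element subsets of $F_D^*$ and the last step uses tuple-independence. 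The final expression is the $m$-th elementary symmetric polynomial of $(p_f)_{f \in F_D^*}$, hence at most $\tfrac{1}{m!}\big( \sum_{f \in F_D^*} p_f \big)^m$, and the inner sum $S \coloneqq \sum_{f \in F_D^*} p_f$ is finite by \cref{thm:well-def-ti}.

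\emph{Step 3: the arithmetic, and the main obstacle.}
It then remains to check $\tfrac{1}{m!} S^m \leq \size{A_D^*} \cdot \big( r^2 \size{A_D^*}^{\,r-1} S \big)^{\size{A_D^*}/r}$; writing $a \coloneqq \size{A_D^*} \geq 1$, this splits into two cases. If $r^2 a^{r-1} S \geq 1$, the right-hand side is at least $a \geq 1$ and bounds the probability trivially. Otherwise $S < 1$, whence $\tfrac{1}{m!} S^m \leq S^m \leq S^{\ceil{a/r}} \leq S^{a/r}$ (using $m \geq a/r$ and $S < 1$), and since $a \cdot (r^2 a^{r-1})^{a/r} \geq 1$ this is at most $a \cdot (r^2 a^{r-1})^{a/r} S^{a/r}$, as needed. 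The delicate part is Step~1 together with the correct reading of $F_D^*$: a witnessing fact of $I$ may contain elements absent from $D$, so one really must charge \emph{all} facts meeting $A_D^*$, and turning ``$m$ distinct witnessing facts'' into an honest union bound is what forces the (harmless) restriction to the countable set $\facts(\pdb[I])$. Everything else is bookkeeping; alternatively one can bypass the symmetric-polynomial estimate by an iterative union bound that peels off one witnessing fact at a time --- at each of the $\ceil{a/r}$ rounds guessing which $\leq r$ elements of $A_D^*$ the fact anchors and in which positions (at most $r^2 a^{r-1}$-many guesses, up to constants) and paying a factor $S$ --- which reproduces the right-hand side in precisely the stated shape.
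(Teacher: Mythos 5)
Your proof is correct, and it takes a genuinely different and more elementary route than the paper's. The paper decomposes the event $\Phi(I)=D$ according to which \emph{minimal fact cover} of $A_D^*$ is contained in $I$, translates minimal fact covers into minimal edge covers of a hypergraph on $A_D^*$, bounds the contribution of each cover via the AM--GM inequality, and then counts minimal edge covers of each size $k\in\big[\ceil{\size{A_D^*}/r},\size{A_D^*}\big]$ before collapsing the exponent to $\size{A_D^*}/r$. You bypass all of that cover bookkeeping: you only use that $\Phi(I)=D$ forces $A_D^*\subseteq\adom(I)$, hence at least $m=\ceil{\size{A_D^*}/r}$ facts of $F_D^*$ must lie in $I$, then apply a (countable) union bound over $m$-subsets of $F_D^*$ together with tuple-independence, and bound the resulting elementary symmetric polynomial by $\tfrac{1}{m!}\big(\sum_{f\in F_D^*}p_f\big)^m$; the two-case arithmetic comparison then reproduces the stated right-hand side. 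Your intermediate bound is in fact sharper (essentially $S^{\ceil{a/r}}/\ceil{a/r}!$ with $S=\sum_{f\in F_D^*}p_f$ when $r^2\size{A_D^*}^{r-1}S<1$), and your argument makes explicit two points the paper handles only implicitly: that $F_D^*$ must be read as the facts of $\facts(\pdb[I])$ meeting $A_D^*$ (the statement's phrase \enquote{facts in $D$} is a slip that the paper's own proof silently corrects, exactly as you do), and that restricting to $I\subseteq\facts(\pdb[I])$ discards only a null event. The only caveat, which applies equally to the paper's proof, is the degenerate case $A_D^*=\emptyset$, where the right-hand side is $0$ and the inequality can fail; both arguments tacitly assume $A_D^*\neq\emptyset$, which is harmless for the downstream use in \cref{lemma:disjoint-inequality}, and you at least flag this assumption explicitly.
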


The idea behind this auxiliary result connects to the previous discussion as follows: If $D$ is an instance in the image of $\Phi$ on $\pdb[I]$, then the active domain elements of $D$ must appear either in the active domain of an input instance from $\pdb[I]$ or as a constant in $\Phi$. Summing over all the possible ways to \enquote{cover} the active domain of $D$ (minus the constants of $\Phi$) with facts from the $\pdb[I]$ can thus be turned into an upper bound for the probability of $D$ being the output of the view.

\begin{proof}
    Let $\pdb[I] = ( \II, P )$ be a $\TI$-PDB and let $\Phi$ be an $\FO$-view over the schema of $\pdb[I]$. We let $\AA \coloneqq \bigcup_{ I \in \II } \adom( I )$ be the set of domain elements appearing among the instances of $\pdb[I]$ and $\FF \coloneqq \facts(\pdb[I])$. We say that a set $F \subseteq \FF$ \emph{covers} a finite set $A \subseteq \AA$ if for all $a \in A$ there exists a fact $f \in F$ such that $a$ appears in $f$. In that case, we call $F$ a \emph{fact cover} of $A$. A fact cover $F \subseteq \FF$ of $A$ is called \emph{minimal} if no proper subset of $F$ is a fact cover of $A$. Note that every fact cover contains a minimal fact cover. The set of all fact covers of $A$ is denoted by $\cover(A)$ and the set of all minimal fact covers of $A$ is denoted by $\mincover(A)$. 
    
    We start by bounding the probability of representing an instance in terms of minimal fact covers.
    
    \begin{claim}
        For all instances $D$ of\/ $\Phi( \pdb[I] )$ it holds that
        \begin{equation}\label{eq:mincover}
            \Pr_{ I \sim \pdb[I] }\big( \Phi(I) = D \big) 
                \leq 
            \sum_{ F \in \mincover( A_D^* ) }
                \prod_{ f \in F } p_f\text,
        \end{equation}
        where $A_D^* = \adom(D) \setminus \adom(\Phi)$.
    \end{claim}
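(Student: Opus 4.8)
The plan is to translate the event ``$\Phi(I) = D$'' into a covering condition on the facts of $I$, and then combine the union bound with tuple-independence.

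First I would invoke the active domain semantics: whenever $\Phi(I) = D$, every element of $\adom(D)$ either occurs in $\adom(I)$ or is a constant of $\Phi$. In particular, every element of $A_D^* = \adom(D) \setminus \adom(\Phi)$ must occur in some fact of $I$; phrased in the terminology introduced above, the set of facts of $I$ is a fact cover of $A_D^*$. Since $A_D^*$ is finite, every fact cover of $A_D^*$ contains a minimal one (trim any cover down to at most $\size{A_D^*}$ facts by keeping one fact per element, then pass to a minimal subset of this finite cover). Hence $\Phi(I) = D$ forces the existence of some $F \in \mincover(A_D^*)$ with $F \subseteq I$, i.e.
\[
    \set{ \Phi(I) = D } \subseteq \bigcup_{ F \in \mincover(A_D^*) } \set{ F \subseteq I }\text.
\]

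Next I would note that $\mincover(A_D^*)$ is countable, since each of its elements has size at most $\size{A_D^*}$ and $\facts(\pdb[I])$ is countable; this makes countable subadditivity applicable. Together with the inclusion above and the fact that, by tuple-independence, $\Pr_{ I \sim \pdb[I] }(F \subseteq I) = \prod_{f\in F} p_f$ for any finite set $F$ of facts, I obtain
\[
    \Pr_{ I \sim \pdb[I] }\big( \Phi(I) = D \big)
        \leq
    \sum_{ F \in \mincover(A_D^*) } \Pr_{ I \sim \pdb[I] }\big( F \subseteq I \big)
        =
    \sum_{ F \in \mincover(A_D^*) } \prod_{ f \in F } p_f\text,
\]
which is exactly \eqref{eq:mincover}. (If the right-hand side is $+\infty$, there is nothing to prove.)

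I do not expect a serious obstacle here: the only points requiring care are the reduction from arbitrary fact covers to \emph{minimal} ones (so that the index set of the union is countable and subadditivity is legitimate) and the explicit appeal to active domain semantics to guarantee that $A_D^*$ is covered by $I$; the rest is the union bound and the definition of tuple-independence. The genuinely technical work lies in the \emph{next} step of the overall proof of \cref{lemma:view-prob}, namely bounding $\sum_{F \in \mincover(A_D^*)} \prod_{f\in F} p_f$ by the stated closed-form expression, rather than in this claim.
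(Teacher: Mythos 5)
Your proof is correct and takes essentially the same approach as the paper: both arguments reduce the event $\Phi(I) = D$ to the containment in $I$ of some minimal fact cover of $A_D^* = \adom(D)\setminus\adom(\Phi)$ (via the active domain semantics), and then bound the probability by the sum over all minimal covers $F$ of $\prod_{f\in F} p_f$. The only difference is bookkeeping: you get $\Pr_{I\sim\pdb[I]}(F \subseteq I) = \prod_{f\in F}p_f$ directly from the definition of tuple-independence together with a countable union bound, whereas the paper derives the same quantity by writing each covering instance as $F' \cup F''$ and marginalizing over $F'' \subseteq \FF\setminus F'$ using restrictions of the $\TI$-PDB.
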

    
    \begin{claimproof}
        Let $D$ be an arbitrary instance of $\Phi( \pdb[I] )$. For all instances $I \in \II$ with $\Phi(I) = D$ we have $\adom( D ) \subseteq \adom( I ) \cup \adom( \Phi )$, that is, $I$ is a fact cover of $A_D^*$. Therefore,
        \[
            \Pr_{ I \sim \pdb[I] } \big( \Phi(I) = D \big) 
                \leq
            \Pr_{ I \sim \pdb[I] } \Big( I \in \cover\big( A_D^* \big)\Big)
                =
            \sum_{ F \in \cover( A_D^* ) } 
                \Pr_{ I \sim \pdb[I] } \big( I = F \big)
                \leq
            \sum_{ F' \in \mincover( A_D^* ) }
                \sum_{ F \subseteq \FF \setminus F' }
                    \Pr_{ I \sim \pdb[I] }\big( I = F' \cup F'' \big)\text.
        \]
        The last step above is not an equality, because any fact cover can contain multiple minimal fact covers, so we might be double counting on the right-hand side.
        
        Now, for any set $F$ of facts, we let $\pdb[I][F]$ denote the restriction of the $\TI$-PDB $\pdb[I]$ to the fact set $F$. This is again a $\TI$-PDB that retains the marginal probabilities of the facts of $F$ (and has marginal probability $0$ for all other facts). By splitting instances into a minimal fact cover and the remaining facts, we then get
        \begin{align*}
            \sum_{ F' \in \mincover( A_D^* ) }
                \sum_{ F'' \subseteq \FF \setminus F' }
                    \Pr_{ I \sim \pdb[I] }\big( I = F' \cup F'' \big)
                &=
            \sum_{ F' \in \mincover( A_D^* ) }
                \sum_{ F'' \subseteq \FF \setminus F' }
                    \Pr_{ I \sim \pdb[I][F'] }\big( I = F' \big) 
                \cdot
                    \Pr_{ I \sim \pdb[I][\FF \setminus F'] }\big( I = F'' \big)\\
                &=
            \sum_{ F' \in \mincover( A_D^* ) }
                \Pr_{ I \sim \pdb[I][F'] }\big( I = F' \big) 
                    \cdot
                    \underbrace{\sum_{ F'' \subseteq \FF \setminus F' }
                    \Pr_{ I \sim \pdb[I][\FF \setminus F'] }\big( I = F'' \big)}_{ = 1 }\\
                &= 
            \sum_{ F' \in \mincover( A_D^* ) }
                \Pr_{ I \sim \pdb[I][F'] }\big( I = F' \big)\\
                &=
            \sum_{ F' \in \mincover( A_D^* ) }
                \prod_{ f \in F' } p_f\text.\qedhere
        \end{align*}
    \end{claimproof}
    
    From now on, let $D$ be an arbitrary but fixed instance of $\Phi(\pdb[I])$. We let $F_D^*$ denote the set of facts of $\pdb[I]$ that contain at least one element from $A_D^* = \adom(D) \setminus \adom(\Phi)$ and define a function $s_D^* \from F_D^* \to 2^{ A_D^* }$ as follows:
    \[
        s_D^*\big( f \big) \coloneqq \set{ a \in A_D^* \with a \text{ occurs in } f }\text.
    \]
    Essentially, $s_D^*$ turns $f$ into the set of the occurring domain elements (by forgetting the order and multiplicities) and restricts this set to $A_D^*$. The function $s_D^*$ is lifted to sets of facts $F$ by letting $s_D^*( F ) = \set[\big]{ s_D^*(f) \with f \in F } \subseteq 2^{A_D^*}$.
    
    Let $H$ be the hypergraph with vertex set $A_D^*$ and (hyper)edge set 
    \[
        s_D^*\big( F_D^* \big) 
            = 
        \set[\big]{ e \subseteq A_D^* \with e = s_D^*( f ) \text{ for some } f \in F_D^* }
        \text.
    \]
    A set of hyperedges $C$ is an \emph{edge cover} of $H$ if every vertex appears in at least one of the hyperedges from $C$. Such an edge cover $C$ is called \emph{minimal} if no proper subset of $C$ covers all vertices. We let $\minedgecover( H )$ denote the set of minimal edge covers of $H$. The connection between minimal fact covers of $A_D^*$ and minimal edge covers of $H$ is a follows.
    
    \begin{claim}
        A set of facts $F\subseteq F_D^*$ is a minimal fact cover of $A_D^*$, if and only if the set $s_D^*( F )$ is a minimal edge cover of $H$ with $\size{F} = \size{ s_D^*( F ) }$.
    \end{claim}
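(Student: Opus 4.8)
The plan is to unwind the definitions on both sides and then reduce the whole statement to one observation about minimal fact covers. First I would record the easy, minimality-free correspondence: a set $F \subseteq F_D^*$ covers $A_D^*$ if and only if $s_D^*( F )$ is an edge cover of $H$. This is immediate from the definitions, since a fact $f$ contains an element $a \in A_D^*$ exactly when $a \in s_D^*( f )$, so the single identity $\bigcup_{ f \in F } s_D^*( f ) = \bigcup s_D^*( F ) = A_D^*$ is simultaneously the statement that $F$ is a fact cover of $A_D^*$ and that $s_D^*( F )$ is an edge cover of $H$. Hence all the content of the claim lies in matching up the two notions of \emph{minimality}, and this is where the side condition $\size{ F } = \size{ s_D^*( F ) }$ enters.

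The observation I would isolate is that $s_D^*$ is injective on every minimal fact cover. Indeed, if $F$ is a minimal fact cover of $A_D^*$ and $f \neq f'$ are facts in $F$ with $s_D^*( f ) = s_D^*( f' )$, then every element of $A_D^*$ occurring in $f'$ already occurs in $f$, so $F \setminus \set{ f' }$ still covers $A_D^*$, contradicting minimality. In particular, for a minimal fact cover $F$ the restriction of $s_D^*$ to $F$ is a bijection onto $s_D^*( F )$, so $\size{ F } = \size{ s_D^*( F ) }$; conversely, since $s_D^*$ maps $F$ onto $s_D^*(F)$ by definition, the equality $\size{ F } = \size{ s_D^*( F ) }$ for finite sets forces this restriction to be injective.

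With this in hand both directions are short. For the forward direction, let $F$ be a minimal fact cover. By the observation, $\size{ F } = \size{ s_D^*( F ) }$ and $s_D^*$ is injective on $F$, and by the first step $s_D^*( F )$ is an edge cover of $H$. If it were not minimal, there would be an edge cover $C \subsetneq s_D^*( F )$; setting $F' \coloneqq \set{ f \in F \with s_D^*( f ) \in C }$ gives $s_D^*( F' ) = C$ and, by injectivity, $\size{ F' } = \size{ C } < \size{ s_D^*( F ) } = \size{ F }$, so $F' \subsetneq F$ is a fact cover of $A_D^*$, contradicting minimality of $F$. For the converse, suppose $s_D^*( F )$ is a minimal edge cover of $H$ with $\size{ F } = \size{ s_D^*( F ) }$; the latter makes $s_D^*$ injective on $F$, and by the first step $F$ covers $A_D^*$. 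If some $F' \subsetneq F$ also covered $A_D^*$, then $s_D^*( F' )$ would be an edge cover of $H$ with $\size{ s_D^*( F' ) } = \size{ F' } < \size{ F } = \size{ s_D^*( F ) }$, hence $s_D^*( F' ) \subsetneq s_D^*( F )$, contradicting minimality of $s_D^*( F )$.

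I do not expect a genuine obstacle here: the argument is essentially bookkeeping. The one point that needs care is that a \emph{proper} subset on one side corresponds to a \emph{proper} subset on the other, which is exactly what the injectivity of $s_D^*$ on $F$ — equivalently, the cardinality condition — provides. It is also worth noting in passing why this condition cannot be dropped from the "if" direction: if two distinct facts of $F$ happen to map to the same hyperedge, then $F$ fails to be a minimal fact cover even though $s_D^*( F )$ may still be a minimal edge cover.
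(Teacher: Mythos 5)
Your proof is correct and follows essentially the same route as the paper's: both hinge on the observation that $s_D^*$ is injective on any minimal fact cover (equivalently, that the cardinality condition encodes this injectivity) and on the elementary correspondence between fact covers of $A_D^*$ and edge covers of $H$. The only differences are cosmetic — you transfer minimality via arbitrary proper subsets and the injectivity bookkeeping, while the paper removes a single redundant edge in the forward direction and invokes the forward direction on a minimal subcover in the converse — so no gap to report.
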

    
    \begin{claimproof}
        First, let $F$ be a minimal fact cover of $A_D^*$. Since $F$ is a fact cover of $A_D^*$, the hyperedges $s_D^*( F )$ are an edge cover of $H$. Note that $s_D^*$ is injective on $F$: if there were $f\neq f'$ with $s_D^*( f ) = s_D^*( f' )$, then $F \setminus \set{ f' }$ would be a fact cover of $A_D^*$, contradicting the minimality of $F$. Thus, $\size{ s_D^*(F) } = \size{ F }$. We still have to show that $s_D^*( F )$ is a \emph{minimal} edge cover. If $s_D^*( F )$ is not minimal, then there exists $e \in s_D^*( F )$ such that $s_D^*( F ) \setminus \set{ e }$ is still an edge cover of $H$. Then there exists $f \in F$ with $s_D^*(f) = e$ such that the domain elements occurring in $f$ are already covered by other facts from $F$. But this means that $F \setminus \set{f}$ is a fact cover of $A_D^*$ in contradiction to $F$ being minimal. Thus, $s_D^*( F )$ is a minimal edge cover of $H$.
        
        For the other direction, suppose that $F \subseteq F_D^*$ is a set of facts such that $s_D^*(F)$ is a minimal edge cover of $H$ with $\size{F} = \size{s_D^*(F)}$. First note that $F$ is a fact cover of $A_D^*$. Suppose $F$ is not minimal. Then there exists $F' \subsetneq F$ such that $F'$ is a minimal fact cover of $A_D^*$. But then $s_D^*( F' ) \subseteq s_D^*( F )$ is a minimal edge cover of $H$ by the first part of this proof. This contradicts $\size{ s_D^*( F' ) } = \size{ F' } < \size{ F } = \size{ s_D^*( F ) }$.
    \end{claimproof}
    
    From the above, it follows that
    \[
        \mincover\big( A_D^* \big) = \bigcup_{ C \in \minedgecover( H ) } \set{ F \subseteq F_D^* \with s_D^*( F ) = C \text{ and } \size{C} = \size{F}}
        \text.
    \]
    We emphasize that the union on the right-hand side is disjoint, since $s_D^*$ is a function. Next, we transform the bound from \eqref{eq:mincover} into a bound in terms of minimal edge covers.
    
    \begin{claim}
        The probability of representing $D$ is bounded as follows:
        \[
            \Pr_{ I \sim \pdb[I] }\big( \Phi(I) = D \big)
                \leq
            \sum_{ C \in \minedgecover( H ) }
                \bigg( \frac{1}{\size{C}} \sum_{ f \in F_D^* } p_f \bigg)^{ \size{C} }\text.
        \]
    \end{claim}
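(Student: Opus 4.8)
The plan is to start from the bound \eqref{eq:mincover} established in the first claim, namely $\Pr_{I\sim\pdb[I]}\big(\Phi(I)=D\big) \leq \sum_{F\in\mincover(A_D^*)}\prod_{f\in F}p_f$, and to regroup the right-hand side according to the disjoint decomposition
\[
    \mincover\big(A_D^*\big) = \bigcup_{C\in\minedgecover(H)}\set[\big]{F\subseteq F_D^* \with s_D^*(F)=C \text{ and } \size{C}=\size{F}}
\]
that we just derived. So it suffices to bound, for each fixed $C\in\minedgecover(H)$, the quantity $\sum_{F}\prod_{f\in F}p_f$ where $F$ ranges over the fact sets in the $C$-block of this decomposition, and then to sum over $C$.

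The key observation for a single block is that $s_D^*$ restricts to a bijection from any such $F$ onto $C$, so a fact set $F\subseteq F_D^*$ with $s_D^*(F)=C$ and $\size F=\size C$ is the same thing as a choice, for every hyperedge $e\in C$, of exactly one fact $f_e\in F_D^*$ with $s_D^*(f_e)=e$; since distinct hyperedges force distinct facts, the cardinality constraint is automatic and the choices for different $e$ are independent. Hence the sum over the block factorizes:
\[
    \sum_{F}\;\prod_{f\in F}p_f \;=\; \prod_{e\in C}\;\biggl(\,\sum_{\substack{f\in F_D^*\\ s_D^*(f)=e}}p_f\biggr)\text.
\]
Writing $q_e$ for the inner sum and noting that the fibers $\set{f\in F_D^*\with s_D^*(f)=e}$ over the hyperedges $e$ of $H$ partition $F_D^*$ (every $f\in F_D^*$ has $s_D^*(f)$ a nonempty subset of $A_D^*$, hence an edge of $H$), we get $\sum_{e\in C}q_e\le\sum_{f\in F_D^*}p_f$. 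Now the arithmetic--geometric mean inequality gives $\prod_{e\in C}q_e\le\bigl(\tfrac1{\size C}\sum_{e\in C}q_e\bigr)^{\size C}\le\bigl(\tfrac1{\size C}\sum_{f\in F_D^*}p_f\bigr)^{\size C}$, using monotonicity of $t\mapsto t^{\size C}$ on $[0,\infty)$ in the last step, and summing this over all $C\in\minedgecover(H)$ yields the claimed bound.

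I expect the only delicate point to be the bookkeeping in the factorization step: one has to verify that the map sending a system of representatives $(f_e)_{e\in C}$ to the set $\set{f_e\with e\in C}$ is a bijection onto the $C$-block. Injectivity holds because from such a set one recovers each $f_e$ as its unique element with $s_D^*$-image $e$, and the image really lands inside $\mincover(A_D^*)$ by the preceding claim. Once this is pinned down, the remainder is a one-line application of AM--GM.
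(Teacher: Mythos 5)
Your proposal is correct and follows essentially the same route as the paper's proof: regroup the bound from \eqref{eq:mincover} according to the disjoint decomposition of $\mincover(A_D^*)$ by minimal edge covers $C$, factorize the sum over each $C$-block into $\prod_{e\in C}\sum_{f\with s_D^*(f)=e}p_f$ via the representative-choice bijection, and finish with AM--GM together with the observation that the fibers of $s_D^*$ partition $F_D^*$. Your explicit verification of the bijection and of $\sum_{e\in C}q_e\le\sum_{f\in F_D^*}p_f$ just makes precise steps the paper carries out implicitly.
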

    
    \begin{claimproof}
        We continue our calculation of an upper bound for $\Pr_{ I \sim \pdb[I] }\big( \Phi(I) = D \big)$ from \eqref{eq:mincover} as follows:
        \[
            \Pr_{ I \sim \pdb[I] }\big( \Phi(I) = D \big)
                \mathrel{\overset{\eqref{eq:mincover}}{\leq}}
            \sum_{ F \in \mincover( A_D^* ) } \prod_{ f \in F } p_f
                =
            \sum_{ C \in \minedgecover( H ) }
                \sum_{ \substack{ F \in \mincover( A_D^* ) \\ s_D^*( F ) = C } }
                    \prod_{ f \in F } p_f
                =
            \sum_{ C \in \minedgecover( H ) }
                \sum_{ \substack{ F \subseteq F_D^* \\ s_D^*(F) = C  \\ \size{F} = \size{C}}}
                    \prod_{f \in F} p_f
            \text.
        \]
        Suppose now that $C = \set{ e_1, \dots, e_k }$ is a minimal edge cover of size $k$. Then
        \[
            \sum_{ \substack{ F \subseteq F_D^* \\ s_D^*(F) = C  \\ \size{F} = \size{C}}}
                \prod_{f \in F} p_f
                =
            \mkern-4mu
                \sum_{ \substack{ f_1 \in F_D^* \\ s_D^*( f_1 ) = e_1 } }
                \mkern-6mu\cdots\mkern-6mu
                \sum_{ \substack{ f_k \in F_D^* \\ s_D^*( f_k ) = e_k } }
                \mkern-4mu
                p_{f_1} \cdots p_{f_k}
                =
            \Bigg( \mkern-2mu\sum_{ \substack{ f_1 \in F_D^* \\ s_D^*( f_1 ) = e_1 } } \mkern-2mu p_{f_1} \Bigg)
                \cdots
                    \Bigg( \mkern-2mu\sum_{ \substack{ f_k \in F_D^* \\ s_D^*( f_k ) = e_k } } \mkern-2mu p_{f_k} \Bigg)
                =
            \prod_{ e \in C }
                \Bigg( \mkern-2mu \sum_{ \substack{ f \in F_D^* \\ s_D^*( f ) = e } } p_f \Bigg)
            \text.
        \]
        Applying the inequality of arithmetic and geometric means, we get that
        \[
            \prod_{ e \in C }\Bigg( \sum_{ \substack{ f \in F_D^* \\ s_D^*( f ) = e } } p_f \Bigg)
                \leq 
            \Bigg( \frac{1}{\size{ C }} \sum_{ e \in C } \sum_{ \substack{ f \in F_D^* \\ s_D^*( f ) = e } } p_f \Bigg)^{ \size{ C } }
                \leq
            \sum_{ C \in \minedgecover( H ) }
            \bigg( \frac{1}{\size{C}} \sum_{ f \in F_D^* } p_f \bigg)^{ \size{C} }
            \text.\qedhere
        \]
    \end{claimproof}

    Finally, we are ready to complete the proof of \cref{lemma:view-prob}. Let $r$ denote the maximum arity of any relation in the schema of $\pdb[I]$. Then each hyperedge in $H$ contains at most $r$ elements. Thus, the size of any minimal edge cover of $A_D^*$ is at least $\lceil\size{A_D^*}/r\rceil \geq \size{ A_D^* }/r$ and (trivially) at most $\size{A_D^*}$.
    
    The number of hyperedges in $H$ is at most $\sum_{ i = 1 }^r \binom{ \size{A_D^*} }{ i } \leq \sum_{ i = 1 }^{r } \size{A_D^*}^i \leq r \cdot \size{A_D^*}^r$. In particular, there are at most $\binom{ r \cdot \size{ A_D^* }^r }{ k } \leq \big( r \cdot \size{A_D^*}^r \big)^k$ minimal edge covers of size $k$ in $H$. Therefore,
    \begin{align*}
        \Pr_{ I \sim \pdb[I] }\big( \Phi(I) = D \big)
            \leq
        \sum_{ C \in \minedgecover( A_D^* ) }
        \bigg( \frac{1}{\size{C}} \sum_{ f \in F_D^* } p_f \bigg)^{ \size{C} }
            &=
        \sum_{ k = \lceil\size{A_D^*}/r\rceil }^{ \size{A_D^*} }
        \sum_{ \substack{ F \in \mincover( A_D^*) \\ \size{F} = k } }
            \bigg( \frac1k \sum_{ f \in F_D^* } p_f \bigg)^k\\
            &\leq 
        \sum_{ k = \lceil\size{A_D^*}/r\rceil }^{ \size{A_D^*} }
        \big( r \cdot \size{ A_D^* }^r \big)^k \cdot \bigg( \frac1k \sum_{ f \in F_D^* } p_f \bigg)^k\\
            &\leq 
        \sum_{ k = \lceil\size{A_D^*}/r\rceil }^{ \size{A_D^*} }
        \bigg( r^2 \cdot \size{ A_D^* }^{r-1} \sum_{ f \in F_D^* } p_f \bigg)^k
    \end{align*}
    The inequality remains intact if we decrease the value of the exponent on the right-hand side to a smaller non-negative number: if the value inside the parentheses is larger than $1$, the complete right-hand side will still be larger than $1$ as well. If the expression in the parentheses is at most one, then making the exponent smaller can only increase the value. Thus, we can infer that
    \[
        \Pr\big( \Phi(I) = D \big)
            \leq 
        \sum_{ k = \lceil\size{A_D^*}/r\rceil }^{ \size{A_D^*} }
        \bigg( r^2 \cdot \size{ A_D^* }^{r-1} \sum_{ f \in F_D^* } p_f \bigg)^{ \size{ A_D^* }/r}
            \leq
            \size{ A_D^* }\cdot \bigg( r^2 \cdot \size{A_D^*}^{r-1} \sum_{ f \in F_D^* } p_f \bigg)^{ \size{A_D^*}/ r }\text. 
    \]
    for all instances $D$ of $\Phi( \pdb[I] )$, concluding the proof of \cref{lemma:view-prob}.
\end{proof}

We can think of \cref{lemma:view-prob} as another necessary condition for representability in $\FOTI$. However, it is tied to a concrete representation $\Phi( \pdb[I] )$. In the next step, we turn this into a general necessary condition for the represented probabilistic database that holds regardless of the choice of representation. For this, we also include the convergence requirement of fact probabilities in any $\TI$-PDB that serves as a base for the representation. To cast this into requirement over the probabilities of the probabilities of possible world, we consider the class of domain disjoint PDBs: A probabilistic database $\pdb = (\DD,P)$ is called \emph{domain disjoint} if $\adom( D ) \cap \adom( D' ) = \emptyset$ for all $D, D' \in \DD$ with $D \neq D'$. Investigating this class allows us to identify a connection between the probabilities of possible worlds and the size of their active domains.

\begin{lemma}\label{lemma:disjoint-inequality}
    Let $\pdb = ( \DD, P ) \in \FOTI$ be domain-disjoint with instances $\DD = \set{ D_1, D_2, \dots }$. Let $d_n \coloneqq \size{ \adom(D_n) }$. Then there exists a constant $r \in \NN_+$ such that for every divergent series $\sum_{ n = 1 }^{ \infty } a_n = \infty$ of non-negative integers $a_n$ there are infinitely many $n \in \NN_+$ with
    \[
        \Pr_{ D \sim \pdb }\big( D = D_n \big) 
            <
        d_n \big( a_n \cdot d_n^{ r-1 } \big)^{d_n / r}
        \text.
    \]
\end{lemma}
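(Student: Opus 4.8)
The plan is to unwind the definition of $\FOTI$ and feed the instances $D_n$ into the probability bound of \cref{lemma:view-prob}, exploiting domain-disjointness to control the quantities $\sum_{f \in F_D^*} p_f$ across all possible worlds simultaneously. So I would fix a $\TI$-PDB $\pdb[I]$ and an $\FO$-view $\Phi$ with $\Phi(\pdb[I]) = \pdb$, and let $r$ be the maximum arity of a relation of $\pdb[I]$ (which we may take to be at least $1$, the all-nullary case being degenerate); this is the constant claimed in the statement. Put $c \coloneqq \size{\adom(\Phi)}$, and for each $n$ write $A_n \coloneqq \adom(D_n) \setminus \adom(\Phi)$, $m_n \coloneqq \size{A_n}$, and $q_n \coloneqq \sum_{f \in F_{D_n}^*} p_f$ in the notation of \cref{lemma:view-prob}, so that $d_n - c \le m_n \le d_n$ and $\Pr_{D \sim \pdb}(D = D_n) \le m_n\bigl(r^2 m_n^{r-1} q_n\bigr)^{m_n/r}$ for every $n$.

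The crucial step --- and the only place the domain-disjointness hypothesis is used --- is a counting argument bounding the total mass $\sum_n q_n$. Since $\pdb$ is domain-disjoint, the sets $A_1, A_2, \dots$ are pairwise disjoint; as every fact of $\pdb[I]$ contains at most $r$ domain elements, it meets at most $r$ of the $A_n$ and therefore lies in $F_{D_n}^*$ for at most $r$ indices $n$. Summing over $n$ then gives $\sum_{n=1}^{\infty} q_n \le r \sum_{f \in \facts(\pdb[I])} p_f$, which is finite by \cref{thm:well-def-ti}; in particular $q_n \to 0$. I expect this to be the heart of the matter: without disjoint active domains a single low-probability fact could help ``cover'' the domains of unboundedly many worlds, so $\sum_n q_n$ would have no reason to converge and the argument would break down.

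To conclude, I would fix a divergent series $\sum_n a_n = \infty$ of non-negative integers. Divergence forces $a_n \ge 1$ for infinitely many $n$, and among those indices all but finitely many additionally satisfy $d_n \ge 1$ (the exceptions being the finitely many instances with empty active domain), $m_n \ge 1$ (by disjointness only finitely many instances have $\adom(D_n) \subseteq \adom(\Phi)$), and $q_n < 1/r^2$ (since $q_n \to 0$). For each such $n$ I would verify $\Pr_{D \sim \pdb}(D = D_n) < d_n\bigl(a_n d_n^{r-1}\bigr)^{d_n/r}$: if $q_n = 0$ the left-hand side is $0$ and the right-hand side is at least $1$; and if $q_n > 0$, then $r^2 q_n < 1 \le a_n$ together with $m_n \le d_n$ gives $0 < r^2 m_n^{r-1} q_n < a_n d_n^{r-1}$, so, since the base $a_n d_n^{r-1}$ is at least $1$ and $m_n/r \le d_n/r$, monotonicity of powers yields $\bigl(r^2 m_n^{r-1} q_n\bigr)^{m_n/r} < \bigl(a_n d_n^{r-1}\bigr)^{d_n/r}$; multiplying by the prefactor $m_n \le d_n$ and invoking \cref{lemma:view-prob} closes the estimate. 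Since there are infinitely many admissible $n$, the lemma follows. Beyond the counting step, the only care needed is in tracking strictness of the inequalities (hence splitting off the case $q_n = 0$, and why $q_n \to 0$, not merely boundedness of $q_n$, is what the argument uses) and in checking that the exceptional index sets in the last step really are finite.
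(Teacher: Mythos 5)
Your proposal is correct and takes essentially the same route as the paper's proof: extract $r$ from a representation $\pdb = \Phi(\pdb[I])$, apply \cref{lemma:view-prob} to each $D_n$, and use domain-disjointness of the sets $\adom(D_n)\setminus\adom(\Phi)$ to get $\sum_{n}\sum_{f \in F_n} p_f \leq r \cdot \sum_{f} p_f < \infty$, which yields infinitely many indices where the bound beats the claimed right-hand side. The only (harmless) deviations are that you exploit integrality of the $a_n$ (namely $a_n \geq 1$ infinitely often combined with $\sum_{f\in F_n}p_f \to 0$) where the paper argues by contradiction with a divergent subseries, and that you keep $\size{A_n} \leq d_n$ with a base-at-least-one monotonicity argument where the paper simply notes that $A_n = \adom(D_n)$ for all sufficiently large $n$.
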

\begin{proof}
    Suppose $\pdb = \Phi( \pdb[I] )$ where $\pdb[I]$ is a $\TI$-PDB and $\Phi$ is an $\FO$-view over the schema of $\pdb[I]$. Let $r$ be the maximum arity among the relations of the schema $\pdb[I]$. For all $n \in \NN_+$, we define $A_n \coloneqq \adom( D_n ) \setminus \adom( \Phi )$ and let $F_n \subseteq \FF = \facts(\pdb[I])$ denote the set of facts in $\pdb[I]$ that contain at least one domain element from $A_n$. Note that the $A_n$ are pairwise disjoint. 
    
    In the following, we abuse notation and write $a \in f$ if $a$ is a domain element appearing in the fact $f$. We know that for every $a \in \bigcup_{ I \in \worlds( \pdb[I] ) } \adom( I ) = \AA$ there exists at most one $n \in \NN_+$ such that $a \in A_n$. Thus, it holds that
    \begin{equation}\label{eq:sumFnfinite}
        \sum_{ n = 1 }^{ \infty } \sum_{ f \in F_n } p_f
            \leq
        \sum_{ n = 1 }^{ \infty } \sum_{ a \in A_n } \sum_{ \substack{ f \in \FF \\ a \in f} } p_f
            \leq 
        \sum_{ a \in \AA } \sum_{ \substack{ f \in \FF \\ a \in f } } p_f
            =
        \sum_{ f \in \FF } \sum_{ \substack{ a \in \AA \\ a \in f } } p_f
            \leq
        \sum_{ f \in \FF } r \cdot p_f
            =
        r \cdot \sum_{ f \in \FF } p_f
            <
        \infty
        \text,
    \end{equation}
    since $\pdb[I]$ is a well-defined $\TI$-PDB. We claim that for infinitely many $n \in \NN_+$ we have $\sum_{ f \in F_n } p_f < \frac{a_n}{r^2}$. Assume otherwise. Then for all but finitely many $n$ it holds that $\sum_{ f \in F_n } p_f \geq \frac{a_n}{r^2}$. Let $N_0$ denote the set of these $n$. Of course, $\sum_{ n \in N_0 } a_n$ still diverges. But then $\sum_{ n \in N_0 } \sum_{ f \in F_n } p_f \geq \frac{1}{r^2} \cdot \sum_{ n \in N_0 } a_n = \infty$, which means that $\sum_{ n = 1 }^{ \infty } \sum_{ f \in F_n } p_f$ contains a divergent subseries in contradiction to \eqref{eq:sumFnfinite}.
    
    Recall that the active domains of distinct possible worlds of $\pdb$ are disjoint. This means that for all large enough $n$ it holds that $\adom( D_n ) \cap \adom( \Phi ) = \emptyset$ and, in particular, $A_n = \adom( D_n )$ and $d_n = \size{ A_n }$. Let $N$ be the (infinite) set of indices $n$ for which $\sum_{ f \in F_n } p_f < \frac{ a_n }{ r^2 }$. Then, with \cref{lemma:view-prob}, for all $n \in N$ it holds that
    \[
        \Pr_{ D \sim \pdb }\big( D = D_n \big) 
            =
        \Pr_{ I \sim \pdb[I] }\big( \Phi( I ) = D_n \big)
            \leq
        \size{ A_n }\cdot \Big( r^2 \cdot \size{ A_n }^{ r-1 } \cdot \sum_{ f \in F_n } p_f \Big)^{ \size{ A_n } / r }
            <
        d_n \cdot \big( a_n \cdot d_n^{r-1} \big)^{ d_n / r }
        \text.\qedhere
    \]
\end{proof}

\begin{remark}
    The argument from the proof of \cref{lemma:disjoint-inequality} remains valid also in the case where the active domains of the instances $D_n$ are not disjoint, but there exists a constant bound on the number of worlds $D_n$ that contain a domain element $a \in \AA$.
\end{remark}

The condition from \cref{lemma:disjoint-inequality} is a necessary condition for the representability of domain-disjoint PDBs in $\FOTI$. We can now construct a domain-disjoint PDB that violates this condition, despite having the finite moments property.

\begin{example}\label{ex:no-rep-fin-moments}
    Let $\pdb = ( \DD, P )$ be a domain-disjoint PDB such that $\DD = \set{ D_1, D_2 \dots }$ with $d_n \coloneqq \size[\big]{ \adom(D_n) } = \ceil[\big]{ \log(n) }$ and $P\big( \set{D_n} \big) = \frac{Z}{n^2}$ for all $n \in \NN_+$, where $Z = \frac{6}{\pi^2}$ is the normalizing constant that makes $P$ is a probability distribution. Note that $d_n = \ceil[\big]{ \log(n) } \in o( n^\epsilon )$ for all $\epsilon > 0$. Thus, for all $\epsilon > 0$ there exists $n(\epsilon) \in \NN$ such that for all $n > n(\epsilon)$ it holds that $d_n \leq n^{\epsilon}$.
    
    First, note that this PDB has the finite moments property. To see this, let $k \in \NN_+$ be arbitrary but fixed and let $\epsilon \coloneqq \frac1{2k}$. Then
    \[
        \Expectation_{\pdb}\big( \size{\?}^k \big)
            =
        \sum_{ n = 1 }^{ \infty } \frac{ Z \cdot d_n^k }{ n^2 }
            \leq
        \sum_{ n = 1 }^{ n(\epsilon) } \frac{ Z \cdot d_n^k }{ n^2 } + \sum_{ n > n(\epsilon) } \frac{ Z \cdot n^{\epsilon k} }{ n^2 }
            =
        \sum_{ n = 1 }^{ n(\epsilon) } \frac{ Z \cdot d_n^k }{ n^2 } + Z \cdot \sum_{ n > n(\epsilon) } n^{-3/2}\text,
    \]
    which is finite, since the first sum has only finitely many terms and the second sum is well-known to converge.
    
    \medskip
    
    Now towards a contradiction, assume that $\pdb \in \FO( \TI )$. As $\sum_{ n = 1 }^{ \infty } \frac1n$ diverges, by \cref{lemma:disjoint-inequality} there exists a constant $r \in \NN_+$ such that infinitely many $n \in \NN_+$ satisfy
    \begin{equation}
        \frac{Z}{n^2} 
            =
        \Pr_{ D \sim \pdb }\big( D = D_n \big)
            <
        d_n \cdot \Big( \frac{d_n^{r-1}}{n} \Big)^{ d_n / r }
        \text.\label{eq:Zn2}
    \end{equation}
    
    For all $n > n\big( \frac{1}{r} \big)$, it holds that $d_n \leq n^{1/r}$. We fix such an $n$ that additionally satisfies $n > \frac{1}{Z}$ and $d_n \geq 3r^2+r$.
    Then it holds that
    \[
        d_n \cdot \Big( \frac{ d_n^{ r-1 } }{ n } \Big)^{ d_n / r }
            \leq
        n^{1/r} \big( (n^{1/r})^{r-1} \cdot n^{-1} \big)^{ d_n / r }
            =
        n^{1/r} \big( n^{-1/r} \big)^{ d_n / r }
            = 
        \big( n^{1/r} \big)^{ 1 - \frac{d_n}{r} }
            =
        n^{ (r - d_n)/r^2 }
            \leq
        n^{-3}
            <
        \frac{Z}{n^2}\text,
    \]
    contradicting \eqref{eq:Zn2}.
\end{example}

The PDB from \cref{ex:no-rep-fin-moments} is a PDB with the finite moments property that has no representation in $\FO(\TI)$. In particular, it proves the main theorem (\cref{thm:not-all-finite}) of this section.
\section{Conditional Views}\label{sec:conditions}

In this section, we define conditional representations: representations of PDBs as views over $\TI$-PDBs conditioned on an $\FO$-sentence. The possible worlds are restricted to those satisfying the condition, and the probability mass of the valid instances is scaled to add up to one. As the next sections demonstrate, when constructing a representation for a given PDB, conditional representations are often simpler to identify and explain compared to unconditional ones. We show the equivalence between the PDBs that admit an $\FO$-representation and those that admit a conditional $\FO$-representation. As a consequence, we obtain a tool for showing that a PDB has a representation as an $\FO$-view over a $\TI$-PDB: it is enough to identify such a representation that is conditioned on an $\FO$-sentence. 
\medskip

We start by defining conditional views.
Given a PDB $\pdb = (\DD, P)$ and an $\FO$-sentence $\phi$ such that $\Pr_{D\sim\pdb}\big( D\models\phi \big)>0$,
we denote by $\pdb \under \phi$ the PDB $(\DD_{\phi}, P')$ where 
\begin{align*}
    \DD_{\phi} 
    &\coloneqq \set[\big]{D \in \DD \mid D \models \phi}
    \shortintertext{and for all \( D \in \DD_{\phi} \), }
    P'\big(\set{D}\big) 
    &\coloneqq P\big( \set{D} \under \DD_{\phi}\big) 
    = \frac{ P\big( \set{D} \big) }{ P\big(\DD_{\phi}\big) }
    \text.
\end{align*}
Given a class $\classstyle{D}$ of PDBs, we denote by $\classstyle{D}\under\FO$ the class of all PDBs obtained by conditioning a PDB of $\classstyle{D}$ on an $\FO$-sentence. That is,
\begin{equation*}
    \classstyle{D}\under\FO 
    \coloneqq 
    \set[\big]{ 
        ( \pdb \under \phi ) \with
        \pdb\in\classstyle{D}\text{ and }\phi \in \FO \text{ sentence with }\Pr_{D\sim\pdb}(D\models\phi)>0
    }\text.
\end{equation*}

The following is the main result of this section, stating that the class of PDBs that can be represented as $\FO$-views over \emph{$\FO$-conditioned} $\TI$-PDBs coincides with the class of PDBs that can be represented by an $\FO$-view of a $\TI$-PDB alone.

\begin{theorem}\label{thm:conditional-views}
    $\FOTIFO = \FOTI$.
\end{theorem}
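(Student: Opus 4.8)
First, the inclusion $\FOTI\subseteq\FOTIFO$ is immediate: if $\pdb=\Phi(\pdb[I])$ with $\pdb[I]\in\TI$, then also $\pdb=\Phi(\pdb[I]\under\top)$ for a fixed always-true sentence $\top$ (for instance $\forall x\,(x=x)$, which has probability $1$), so $\pdb\in\FO(\TI\under\FO)$. For the converse, I would first peel off the outer view: since $\FO$-views compose to $\FO$-views (substitute the defining formulae, using the active-domain semantics), $\FOTI$ is closed under $\FO$-views, i.e.\ $\FO(\FOTI)=\FOTI$; hence $\FOTIFO=\FO(\TI\under\FO)\subseteq\FO(\FOTI)=\FOTI$ would follow from the key claim
\[
    \TI\under\FO\ \subseteq\ \FOTI ,
\]
that is: for every $\TI$-PDB $\pdb[I]$ and every $\FO$-sentence $\phi$ with $p\coloneqq\Pr_{I\sim\pdb[I]}(I\models\phi)>0$, the conditioned PDB $\pdb[I]\under\phi$ must be exhibited as an $\FO$-view over a $\TI$-PDB. (Consistently, $\pdb[I]\under\phi$ inherits the finite moments property from $\pdb[I]$, so this is not excluded by \cref{pro:finite-moments}.)

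The guiding intuition is \emph{rejection sampling carried out within a single draw}: morally one takes independent copies $\pdb[I]_1,\pdb[I]_2,\dots$ of $\pdb[I]$, keeps the first one satisfying $\phi$, and outputs it translated back onto the domain of $\pdb[I]$; since each copy satisfies $\phi$ with probability $p$, the kept copy is distributed as $\pdb[I]\under\phi$ and the total failure probability $\prod_k(1-p)$ vanishes. So I would build a single $\TI$-PDB $\pdb[J]$ over a fixed finite schema extending $\tau$, consisting of countably many pairwise domain-disjoint ``trial'' blocks (the $k$-th carrying a retagged copy of the facts of $\pdb[I]$) plus auxiliary facts, together with one $\FO$-view $\Psi$ that (i) decides, for each block, whether its retagged $\tau$-reduct satisfies $\phi$ (a relativization of $\phi$ to that block), (ii) singles out one distinguished successful block, and (iii) outputs the untagged instance of that block. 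It is convenient to also split off, via a \emph{finite} convex combination — which $\FOTI$ admits via a hard-coded finite selector gadget over finitely many constants — the cases according to a locality normal form of $\phi$, reducing the task to ``simple'' conditions on single blocks; composition with the outer view and $\FO(\FOTI)=\FOTI$ then finish the theorem.

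The scheme has two genuine obstacles, which together form the technical heart. First, a $\TI$-PDB cannot literally contain countably many \emph{exact} copies of $\pdb[I]$: by \cref{thm:well-def-ti} the fact-marginals must be summable, whereas $\aleph_0$ exact copies contribute infinite marginal mass unless $\pdb[I]$ is degenerate (and $\FOTI$ has finite moments, so it cannot even represent infinitely many independent copies indirectly). The trials must therefore be \emph{attenuated} — the $k$-th block's marginals scaled by some $c_k$ with $\sum_k c_k<\infty$, which restores well-definedness and, by Borel--Cantelli, makes only finitely many blocks non-trivial per draw — and $\Psi$ must then \emph{compensate} for this attenuation so that the output is exactly $\pdb[I]\under\phi$ rather than a distribution skewed toward small instances. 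Second, the selection in step (ii) must be \emph{blind}, depending on a block's index but not on its content, lest the conditional distribution be distorted; yet a $\TI$-PDB offers no infinite ordered scaffold on which to base a notion of ``first'', since summability forces every infinite family of facts to be almost always absent, so there is no global order or successor relation on the blocks to read off. Hence the construction must arrange the retaggings and auxiliary facts so that, from any single finite draw, an $\FO$-formula can \emph{reconstruct} the relative order of exactly those blocks active in that draw, while keeping $\pdb[J]$ tuple-independent and keeping this order-information stochastically independent of the blocks' $\tau$-contents. Once such an attenuated, $\FO$-orderable trial gadget is available, verifying $\Psi(\pdb[J])=\pdb[I]\under\phi$ reduces to the geometric-series bookkeeping indicated above, which I expect to be routine.
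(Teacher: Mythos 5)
Your reduction to the key claim $\TI\under\FO\subseteq\FOTI$ (via $\FO(\FOTI)=\FOTI$) matches the paper, and rejection sampling is indeed the right intuition. But the construction you sketch has a genuine gap, and you essentially admit it: both of the ``obstacles'' you name are left unresolved, and your proposed fix for the first one does not work as stated. If you attenuate the $k$-th copy's marginals by $c_k$ to make the infinite family of trial blocks summable (as \cref{thm:well-def-ti} forces), then the conditional law of that block given that it satisfies the condition is \emph{not} $\pdb[I]\under\phi$ any more: an instance $I$ now has weight $\prod_{f\in I}c_kp_f\prod_{f\notin I}(1-c_kp_f)$, which distorts the distribution in a way depending on $I$ itself. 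A view is a deterministic map on instances; it can only merge preimages, never reweight them, so ``$\Psi$ compensates for the attenuation'' would require extra probabilistic gadgetry whose probabilities depend on the content of each block --- you give no such gadget, and since the needed correction factors exceed $1$ for large instances they cannot be realized by intersecting with auxiliary independent events. The second obstacle (an $\FO$-definable ``first successful block'' over infinitely many content-independent blocks) is likewise only postulated, not constructed. So the ``geometric-series bookkeeping'' you defer is precisely the part that is missing.

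The paper's proof avoids both problems with one idea you did not find: use only \emph{finitely} many copies. Fix a possible world $I_0$ of $\pdb[I]\under\phi$ with probability $p_0>0$, definable by an $\FO$-sentence $\phi_0$, set $\psi=\phi\wedge\neg\phi_0$, and choose $k$ with $(1-p_\psi)^k<p_0$. Build a $\TI$-PDB consisting of $k$ tagged independent copies of $\pdb[I]$ (well-definedness is trivial, no attenuation, so each copy conditioned on $\psi$ has exactly the right law, uniformly scaled by $1/(1-p_0)$), let the least index $i$ with $J[i]\models\psi$ determine the represented instance (trivially $\FO$-definable since $k$ is a fixed finite constant), and dump the entire failure event onto $I_0$; a single extra dummy fact $f_\bot$ with a calibrated marginal probability then tops up the mass of $I_0$ to exactly $p_0$. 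This finite-trials-plus-switch construction is the heart of the theorem, and without it (or a worked-out substitute for your attenuation/ordering scheme) your argument does not go through.
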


\begin{remark}
    Note that $\FOTIFO \subseteq \FOTI$ is by no means trivial. In particular, we cannot simply merge the condition given as an $\FO$-sentence into the $\FO$-view. Composing them cannot work, as the condition is a sentence, and so the composition can only result in two outcomes and cannot represent PDBs with more than two possible worlds. Intersecting them will not work either. The conditional view removes the possible worlds that do not meet the condition from the sample space and scales the probability mass of the remaining possible worlds up to one, while intersecting the $\FO$-view with the condition keeps the probability mass of the invalid possible worlds but renders these worlds empty.
\end{remark}

\begin{proof}[Proof (\cref{thm:conditional-views})]
    The direction $\FOTI \subseteq \FOTIFO$ is immediate. Instead of showing $\FOTIFO \subseteq \FOTI$, we prove the equivalent statement $\TI\under\FO \subseteq \FOTI$ (this is equivalent due to $\TI\under\FO \subseteq \FOTIFO$ and $\FOTI = \FO(\FOTI)$). Thus, let $\pdb[I] = \big( \II, P_{\pdb[I]} \big)$ be a $\TI$-PDB and let $\phi$ be an $\FO$-sentence for which $\pdb[I] \under \phi$ is well-defined (that is, $P_{\pdb[I]}( \phi ) > 0$). Our goal is to show that $\pdb[I]_{\phi} \coloneqq \pdb[I] \under \phi \in \FOTI$. 
    If $P_{\pdb[I]}( \phi ) = 1$, then we are done, since then $\pdb[I]_\phi = \pdb[I] \in \TI \subseteq \FOTI$. Therefore, for the remainder of the proof, we assume that 
    \[
        p_{\phi} \coloneqq P_{\pdb[I]}( \phi ) \in (0,1)
        \text.
    \]
    We fix some instance $I_0 \in \II$ of positive probability in $\pdb[I]_{\phi}$. If $\Pr_{ I \sim \pdb[I]_{\phi} }( I = I_0 ) = 1$, then $I_0$ is the only instance of positive probability in $\pdb[I]_{\phi}$. This means that $\pdb[I]_{\phi}$ is a $\TI$-PDB where any fact $R(\tup t)$ has marginal probability $1$ if $R(\tup t) \in I_0$, and marginal probability $0$ otherwise. Consequentially, $\pdb[I]_{\phi} \in \TI \subseteq \FOTI$ and we are done. Thus, from now on, assume that 
    \[
        p_0 \coloneqq \Pr_{ I \sim \pdb[I]_\phi }\big( I = I_0 ) \in (0,1)
        \text.
    \]
    
    Before we continue we want to convey the main idea of the proof. We are going to construct a new $\TI$-PDB consisting of multiple independent copies of the $\TI$-PDB $\I$. We define how instances of the new PDB represent instances of the PDB $\pdb[I]_{\phi}$ other than $I_0$. This representation will change instance probabilities by a constant factor greater than $1$. All remaining instances of the new PDB are taken as the representation of $I_0$. Since the other probabilities are scaled up, the probability mass of these instances is smaller than the probability of $I_0$ in $\pdb[I]_{\phi}$. Therefore, we install a gadget that acts like a switch between the two cases and carefully configure its probability to reconstruct the probability distribution of $\pdb[I]_{\phi}$. Finally, we show that the original instances can be reobtained from our representation by the means of an $\FO$-view.\par\bigskip
    
    Assume that $I_0 = \bigcup_{ n = 1 }^{ N } \set{ R_n( \tup a_{ni} ) \with i = 1,\dots, r_n }$ where $N$ is the number of relations ($R_1,\dots,R_N$) in the schema of $\pdb[I]$ and for each $n=1,\dots, N$, the number $r_n$ is the arity of the relation symbol $R_n$. Then $I_0$ can be exactly characterized by an $\FO$-sentence $\phi_0$ as follows:
    \[
        \phi_0 \coloneqq \bigwedge_{ n = 1 }^{ N } \bigg( \forall \tup x \colon R_n( \tup x ) \leftrightarrow \bigvee_{ i = 1 }^{ r_n } \tup x = \tup a_{ni} \bigg)
        \text.
    \]
    That is, for all $I \in \II$ it holds that $I \models \phi_0$ if and only if $I = I_0$. We let
    \begin{align*}
        \psi &\coloneqq \phi \wedge \neg \phi_0
        \shortintertext{and}
        p_{\psi} &\coloneqq P_{\pdb[I]}( \psi )
        \text.
    \end{align*}

    \begin{claim}[Separating $I_0$]\label{cla:psi-probs}
        It holds that
        \begin{align}
            p_{\psi}        &= \big( 1- p_0 \big) \cdot p_{\phi} \in (0,1)
            \shortintertext{and for all \(I \in \II\setminus \set{ I_0 }\) it holds that}
            P_{\pdb[I]}\big( \set{ I } \under \psi \big)
                            &= \frac{ P_{\pdb[I]_{\phi}}\big( \set{I} \big) }{ 1 - p_0 }\label{eq:I-under-psi}
            \text.
        \end{align}
    \end{claim}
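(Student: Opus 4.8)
Proof proposal. The plan is to unwind every conditional probability back to an ordinary probability in the underlying $\TI$-PDB $\pdb[I]$ and then to substitute the definitions of $p_\phi$ and $p_0$; the claim then reduces to two short arithmetic identities and no genuinely hard step is involved.

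Concretely, I would first record that, by the defining property of $\phi_0$, an instance $I \in \II$ satisfies $\psi = \phi \wedge \neg\phi_0$ if and only if $I \models \phi$ and $I \neq I_0$. Writing $\II_\phi \coloneqq \set{I \in \II \with I \models \phi}$ — so that $p_\phi = P_{\pdb[I]}(\II_\phi)$ and $P_{\pdb[I]_\phi}(\set{I}) = P_{\pdb[I]}(\set{I})/p_\phi$ for every $I \in \II$ — this says $\set{I \in \II \with I \models \psi} = \II_\phi \setminus \set{I_0}$. Since $I_0$ has positive probability in $\pdb[I]_\phi$, we have $I_0 \in \II_\phi$, and hence
\[
    p_\psi = P_{\pdb[I]}(\II_\phi) - P_{\pdb[I]}(\set{I_0}) = p_\phi - P_{\pdb[I]}(\set{I_0})\text.
\]
By definition $p_0 = P_{\pdb[I]}(\set{I_0})/p_\phi$, i.e.\ $P_{\pdb[I]}(\set{I_0}) = p_0\,p_\phi$, so $p_\psi = (1 - p_0)\,p_\phi$; and $(1-p_0)\,p_\phi \in (0,1)$ because both $p_0$ and $p_\phi$ lie in $(0,1)$ by the standing assumptions made just before the claim. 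This establishes the first equation.

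For the second equation I would fix $I \in \II \setminus \set{I_0}$ and split into two cases. If $I \not\models \phi$, then $I \notin \II_\phi$, so $\set{I}$ is disjoint from $\set{I' \in \II \with I' \models \psi}$, making the left-hand side $0$, while $P_{\pdb[I]_\phi}(\set{I}) = 0$ makes the right-hand side $0$ as well. If $I \models \phi$, then $I$ satisfies $\psi$, so using the value of $p_\psi$ just obtained,
\[
    P_{\pdb[I]}\big(\set{I} \under \psi\big) = \frac{P_{\pdb[I]}(\set{I})}{p_\psi} = \frac{P_{\pdb[I]}(\set{I})}{(1-p_0)\,p_\phi} = \frac{P_{\pdb[I]_\phi}(\set{I})}{1 - p_0}\text,
\]
where the last step uses $P_{\pdb[I]_\phi}(\set{I}) = P_{\pdb[I]}(\set{I})/p_\phi$.

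I do not expect a real obstacle: the entire claim is bookkeeping around the definition of conditioning. The only point requiring a moment's attention is that all conditioning events occurring here ($\phi$, $\psi$, and "$I = I_0$" under $\pdb[I]_\phi$) have positive probability, which is exactly what the assumptions $p_\phi \in (0,1)$ and $p_0 \in (0,1)$ guarantee — note in particular that $p_\psi = (1-p_0)\,p_\phi > 0$, so $\pdb[I] \under \psi$ is well-defined, which is what will be needed in the continuation of the proof of \cref{thm:conditional-views}.
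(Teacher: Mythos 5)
Your proof is correct and takes essentially the same route as the paper: both arguments simply unfold the definition of conditioning, the paper via the intermediate identity $P_{\pdb[I]}\big(\psi \under \phi\big) = 1 - p_0$ and subsequent conditional-probability manipulations, you by subtracting the point mass of $I_0$ from $p_\phi$ directly — an immaterial difference. Your explicit check of the case $I \not\models \phi$ (both sides vanish) and of $I_0 \models \phi$ via $p_0 > 0$ covers points the paper leaves implicit, and both are handled correctly.
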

    
    \begin{claimproof}
        We start by calculating the probability of $\psi$ being true in $\pdb[I]$, given that $\phi$ holds:
        \[
            P_{\pdb[I]}\big( \psi \under \phi \big) 
                =
            P_{\pdb[I]}\big( \neg\phi_0 \under \phi \big)
                =
            1 - P_{\pdb[I]}\big( \phi_0 \under \phi \big)
                =
            1 - p_0
            \text.
        \]
        Moreover, for all $I \in \II\setminus \set{I_0}$ it holds that
        \begin{equation}
            P_{ \pdb[I] }\big( \set{ I } \under \psi \big)
                =
            \frac{ P_{\pdb[I]}\big( \set{I}, \psi \big) }{ P_{\pdb[I]}(\psi) }
                =
            \frac{ P_{\pdb[I]}\big( \set{I}, \phi \big) }{ P_{\pdb[I]}(\psi,\phi) }
                =
            \frac{ P_{\pdb[I]}\big( \set{I} \under \phi \big) }{ P_{\pdb[I]}\big( \psi \under \phi \big) }
                =
            \frac{ P_{\pdb[I]_{\phi}}\big( \set{I} \big) }{ 1 - p_0 }
            \text.
        \end{equation}
        Therein, for the second equality we used that $\psi \equiv \psi \wedge \phi$ in the denominator. For the numerator, we used the fact that if $I \not\models \phi_0$ (which is the case since $I\neq I_0$), then $I \models \psi$ if and only if $I \models \phi$.
        The probability of the condition $\psi$ being satisfied in $\pdb[I]$ is thus calculated as follows:
        \[
            p_{ \psi }
                \coloneqq
            P_{ \pdb[I] }\big( \psi \big) 
                =
            P_{ \pdb[I] }\big( \psi \under \phi \big) \cdot P_{\pdb[I]}\big( \phi \big)
                =
            \big( 1 - p_0 \big) \cdot p_{ \phi }
            \text.\qedhere
        \]
    \end{claimproof}

    Note that since $p_0,p_{\phi} \in (0,1)$, it follows that $p_{\psi} = \big( 1- p_0 \big) \cdot p_{\phi}\in (0,1)$. Thus, we can choose $k \in \NN$ large enough such that
    \[
        \big( 1 - p_{\psi} \big)^k 
            = 
        \big( 1 - P_{\pdb[I]}\big( \psi \big) \big)^k 
            < 
        p_0\text.
    \]
    We fix such a number $k$ and construct a new $\TI$-PDB $\pdb[J] = ( \JJ, P_{\pdb[J]} )$ with schema and marginal probabilities as follows:
    \newcommand*{\LEQ}{\mathrm{LEQ}}
    \begin{enumerate}
        \item \textbf{Schema of $\pdb[J]$:}
            \begin{itemize}
                \item Per relation symbol $R$ of arity $r$ of the schema of $\pdb[I]$, the schema of $\pdb[J]$ contains a distinguished relation symbol $R'$ of arity $r+1$.
                \item We add the set $\set{1,\dots,k}$ to the domain of $\pdb[J]$ (assuming, without loss of generality, that these are \emph{new} domain elements that do not exist in $\pdb[I]$.
            \end{itemize}
        \item \textbf{Marginal probabilities in $\pdb[J]$:}
            \begin{itemize}
                \item For every fact $R( \tup a )$ appearing in $\pdb[I]$ with marginal probability $p$, $\pdb[J]$ contains the facts $R'(1,\tup a), \dots, R'(k,\tup a)$, each with marginal probability $p$.
                \item All other facts have marginal probability $0$.
            \end{itemize}
    \end{enumerate}
    Since $\pdb[I]$ is well-defined, so is $\pdb[J]$, as the sum of all fact probabilities in $\pdb[J]$ is bounded from above by $k\cdot \Expectation_{\pdb[I]}\big( \size{\?} \big) < \infty$.
    The $\TI$-PDB $\pdb[J]$ can be thought of as consisting of $k$ independent copies of the original $\TI$-PDB $\pdb[I]$. The copies in $\pdb[J]$ can be distinguished by the identifier $i \in \set{1,\dots,k}$ in its facts. 
    
    If $J$ is an instance of $\pdb[J]$, and $i \in \set{1,\dots,k}$, we let
    \[
        J[i] \coloneqq \set[\big]{ R( \tup a ) \with R'(i ,\tup a) \in J }
        \text.
    \]
    This is then an instance in $\II$. Intuitively, $J[i]$ is obtained from $J$ by selecting the facts with copy identifier $i$, and projecting the identifier out. We introduce the following terminology:
    \begin{itemize}
        \item We call $J$ a \emph{representation} if there exists some $i \in \set{1,\dots,k}$ such that $J[i] \models \psi$.
        \item We say $J$ \emph{represents} $I \in \II \setminus \set{ I_0 }$ if $J$ is a representation such that for $i^* = \min \set{ i \with J[i] \models \psi }$ we have $J[i^*] = I$.
    \end{itemize}
    Note that we excluded the special instance $I_0$ from the definition above, and that the set of instances $I$ we can represent in $\pdb[J]$ is exactly the set of instances in $\pdb[I]$ that satisfy $\phi$ (minus the instance $I_0$).
    The reason for this is that we tie representations to the sentence $\psi = \phi \wedge \neg \phi_0$, and that $J[i] \models \psi$ entails that $J[i] \neq I_0$. Observe that for all $i \in \set{1,\dots,k}$ it holds that 
    \begin{align}
        \Pr_{ J \sim \pdb[J] }\big( J[i] \models \psi \big)
            = {} 
        & \Pr_{ I \sim \pdb[I] }\big( I \models \psi \big)
            = {}
        p_{\psi}
        \label{eq:Ji-models-psi}
    \shortintertext{and}
        \Pr_{ J \sim \pdb[J] }\big( J[i] = I \under J[i] \models \psi \big) 
            = {}
        & P_{ \pdb[I] }\big( \set{ I } \under \psi \big)
            \mathrel{\overset{\text{\eqref{eq:I-under-psi}}}{=}} {}
            \frac{ P_{ \pdb[I]_{\phi} }\big( \set{ I } \big) }{ 1 - p_0 }
        \label{eq:Ji-rep-I-under-Ji-models-psi}
    \end{align}
    for all $I \in \II \setminus \set{ I_0 }$. The probability that an instance $J \sim \pdb[J]$ is a representation is given by
    \newcommand*{\rep}{\mathrm{rep}}%
    \begin{align}
        p_{\rep}
            \coloneqq 
        \Pr_{ J \sim \pdb[J] }\big( J \text{ is a representation} \big)
            ={}&
        1 - \Pr_{ J \sim \pdb[J] }\big( J[i] \not\models \psi \text{ for all } i = 1, \dots, k \big)\notag{}\\
            ={}&
        1 - \big( 1-p_{\psi} \big)^k
            >
        1 - p_0\text,
        \label{eq:J-representation}
    \end{align}
    by the choice of $k$. Note that $p_{\rep} < 1$, as $p_{\psi} < 1$.

    \begin{claim}[Representations in {$\pdb[J]$}]
        For all $I \in \II \setminus \set{ I_0 }$ it holds that
        \[
            \Pr_{ J \sim \pdb[J] }\big( J \text{\textup{ represents }} I \big) 
                = 
            \frac{ P_{ \pdb[I]_{\phi} }\big(\set{ I }\big) }{ 1 - p_0 } \cdot p_{\rep} \in \big( P_{\pdb[I]_{\phi}}(\set{I}),1\big)
        \]
    \end{claim}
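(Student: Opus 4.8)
The plan is to stratify the event ``$J$ represents $I$'' according to the minimal copy index $i^*$ that witnesses the condition, and then to exploit the independence of the $k$ copies making up $\pdb[J]$ together with the two identities \eqref{eq:Ji-models-psi} and \eqref{eq:Ji-rep-I-under-Ji-models-psi} already established above. Fix $I \in \II \setminus \set{ I_0 }$ with $P_{\pdb[I]_{\phi}}\big( \set{I} \big) > 0$ (if $I$ has probability $0$ in $\pdb[I]_{\phi}$, both sides of the asserted equation vanish); since positive probability in $\pdb[I]_{\phi}$ forces $I \models \phi$ and $I \neq I_0$, we have $I \models \psi$. By definition, $J$ represents $I$ exactly when $J[i^*] = I$ for $i^* = \min \set{ i \with J[i] \models \psi }$, and because $J[i^*] = I \models \psi$ this exhibits the event ``$J$ represents $I$'' as the disjoint union, over $i^* = 1, \dots, k$, of the events that $J[1], \dots, J[i^*-1]$ all fail $\psi$ while $J[i^*] = I$.

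First I would evaluate each term of this union. For a fixed $i^*$, independence of the copies in $\pdb[J]$ gives
\[
    \Pr_{ J \sim \pdb[J] }\big( J[1] \not\models \psi, \dots, J[i^*-1] \not\models \psi, J[i^*] = I \big)
        =
    \big( 1 - p_{\psi} \big)^{ i^* - 1 } \cdot \Pr_{ J \sim \pdb[J] }\big( J[i^*] = I \big)\text.
\]
Since $I \models \psi$, the last factor equals $\Pr_{ J \sim \pdb[J] }\big( J[i^*] = I,\, J[i^*] \models \psi \big)$, which by \eqref{eq:Ji-models-psi} and \eqref{eq:Ji-rep-I-under-Ji-models-psi} equals $p_{\psi} \cdot \tfrac{ P_{ \pdb[I]_{\phi} }( \set{I} ) }{ 1 - p_0 }$. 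Summing over $i^* = 1, \dots, k$ and using the geometric-series identity $\sum_{ i^* = 1 }^{ k } ( 1 - p_{\psi} )^{ i^* - 1 } = \tfrac{ 1 - ( 1 - p_{\psi} )^k }{ p_{\psi} }$ cancels the factor $p_{\psi}$ and leaves
\[
    \Pr_{ J \sim \pdb[J] }\big( J \text{ represents } I \big)
        =
    \frac{ P_{ \pdb[I]_{\phi} }\big( \set{I} \big) }{ 1 - p_0 } \cdot \big( 1 - ( 1 - p_{\psi} )^k \big)
        =
    \frac{ P_{ \pdb[I]_{\phi} }\big( \set{I} \big) }{ 1 - p_0 } \cdot p_{\rep}\text,
\]
which is the claimed value. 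It then remains to locate this number in $\big( P_{ \pdb[I]_{\phi} }( \set{I} ), 1 \big)$. For the lower bound, \eqref{eq:J-representation} gives $p_{\rep} > 1 - p_0$, hence $\tfrac{ p_{\rep} }{ 1 - p_0 } > 1$, and multiplying by $P_{ \pdb[I]_{\phi} }( \set{I} ) > 0$ yields a value strictly exceeding $P_{ \pdb[I]_{\phi} }( \set{I} )$. For the upper bound, $I \neq I_0$ forces $P_{ \pdb[I]_{\phi} }( \set{I} ) \leq 1 - p_0$, so $\tfrac{ P_{ \pdb[I]_{\phi} }( \set{I} ) }{ 1 - p_0 } \leq 1$, and together with $p_{\rep} < 1$ this keeps the product strictly below $1$.

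I do not expect a genuine obstacle: the computation is essentially a single geometric sum layered on top of facts proved earlier. The one point that deserves care is the disjointness of the stratification by the minimal index $i^*$, which rests on the observation that ``$J[i^*] = I$'' already entails ``$J[i^*] \models \psi$'' (because $I \models \psi$), so that the conditional identity \eqref{eq:Ji-rep-I-under-Ji-models-psi} applies verbatim and no instance is double counted.
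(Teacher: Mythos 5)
Your proof is correct and takes essentially the same route as the paper: both decompose the event by the minimal copy index $i^*$ witnessing $\psi$, use the independence of the $k$ copies, and plug in \eqref{eq:Ji-models-psi} and \eqref{eq:Ji-rep-I-under-Ji-models-psi}. The only difference is cosmetic: you sum the unconditional probabilities via an explicit geometric series and then invoke \eqref{eq:J-representation} to recognize $1-(1-p_{\psi})^{k}$ as $p_{\mathrm{rep}}$, whereas the paper first conditions on the event that $J$ is a representation and multiplies by $p_{\mathrm{rep}}$ afterwards.
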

    
    \begin{claimproof}
    Let $I \in \II\setminus \set{ I_0 }$, and first consider the probability of the event \enquote{$J \text{ represents } I$}, conditioned on $J$ being a representation:
    \begin{align*}
        &
        \Pr_{ J \sim \pdb[J] }\big( J \text{ represents } I \under J \text{ is a representation} \big)\\
            = {}
        & \sum_{ i^* = 1 }^{ k } \begin{multlined}[t]
            \Pr_{ J \sim \pdb[J] }\big( J[i^*] = I \under J[i^*] \models \psi \text{\textbf{ and }} J[i] \not\models \psi \text{ for all } i < i^* \big)\\
            {}\cdot{}\Pr_{ J \sim \pdb[J] }\big( J[i^*] \models \psi \text{\textbf{ and }} J[i] \not\models \psi \text{ for all } i < i^* \under J \text{ is a representation}\big)
        \end{multlined}\\
            = {}
        & \begin{multlined}[t]
            \sum_{ i^* = 1 }^{ k } \Pr_{ J \sim \pdb[J] }\big( J[i^*] = I \under J[i^*] \models \psi \big)\\
                \cdot \Pr_{ J \sim \pdb[J] }\big( J[i^*] \models \psi \text{\textbf{ and }} J[i] \not\models \psi \text{ for all } i < i^* \under J \text{ is a representation}\big)
        \end{multlined}\\
            \mathrel{\overset{\makebox[0pt]{\scriptsize\eqref{eq:Ji-rep-I-under-Ji-models-psi}}}{=}}{}
        & \frac{ P_{ \pdb[I]_\phi }\big( \set{ I } \big) }{ 1 - p_0 } \cdot \sum_{ J \sim \pdb[J] }\big( J[i^*] \models \psi \text{\textbf{ and }} J[i] \not\models \psi \text{ for all }i < i^* \under J \text{ is a representation}\big)
        \text.
    \end{align*}
    In the above, in the first step, we used that the events $\big(  J[i^*] \models \psi \text{\textbf{ and }} J[i] \not\models \psi \text{ for all } i<i^* \big)_{i^*}$ form a partition of $I$. In the second step, in addition to \eqref{eq:Ji-rep-I-under-Ji-models-psi}, we exploited that statements about $J[i]$ are independent of statements about the $J[i']$ with $i' \neq i$. 
    Furthermore, it follows from $\sum_{ J \sim \pdb[J] }\big( J[i^*] \models \psi \text{\textbf{ and }} J[i] \not\models \psi \text{ for all }i < i^* \under J \text{ is a representation}\big) = 1$ that
    \[
        \Pr_{ J \sim \pdb[J] }\big( J \text{ represents } I \under J \text{ is a representation} \big)
        =
        \frac{ P_{ \pdb[I]_{\phi} } \big( \set{ I } \big) }{ 1 - p_0 }
        \text.
    \]
    Combining this with \eqref{eq:J-representation}, we conclude that for $I \in \II \setminus \set{ I_0 }$ it holds that
    \begin{align}
        \Pr_{ J \sim \pdb[J] }\big( J \text{ represents } I\big)
            = {}
        & \Pr_{ J \sim \pdb[J] }\big( J \text{ represents } I \under J \text{ is a representation} \big) \cdot \Pr_{ J \sim \pdb[J] }\big( J \text{ is a representation} \big) 
        \notag{}\\
            = {}
        & \frac{ P_{ \pdb[I]_\phi }\big( \set{ I } \big) }{ 1 - p_0 } \cdot p_{\rep} \in \big( P_{\pdb[I]_{\phi}}(\set{I}),1\big)
        \text.
        \qedhere
    \end{align}%
    \end{claimproof}

    Starting from $\pdb[J]$, we construct yet another $\TI$-PDB $\pdb[J]_{\bot}$ that emerges from $\pdb[J]$. We add a new, independent dummy fact $f_\bot = R_{\bot}(\bot)$ to $\pdb[J]$ (where $R_{\bot}$ is a new unary relation symbol and $\bot$ is a new domain element), with marginal probability
    \[
        p_{\bot} \coloneqq \frac{ p_{ \rep } - ( 1 - p_0 ) }{ p_{\rep} }\text,
    \]
    leaving all other facts and their marginal probabilities unchanged. Note that it follows from \eqref{eq:J-representation} that $0 < p_{\bot} < 1$. Note that for all $J \sim \pdb[J]_{\bot}$ it holds that $J \setminus \set{f_\bot}$ is an instance of $\pdb[J]$. With the new $\TI$-PDB $\pdb[J]_{\bot}$, we now incorporate the representation of the special instance $I_0$ as follows: we say that $J \sim \pdb[J]_{\bot}$
    \begin{itemize}
        \item \emph{represents} $I_0$ if either $J \setminus \set{ f_\bot }$ is \emph{not} a representation in $\pdb[J]$ \textbf{or} $f_\bot \in J$ and $J \setminus \set{ f_\bot } $ \emph{is} a representation in $\pdb[J]$,
        \item \emph{represents} $I \neq I_0$ if $J$ does not represent $I_0$ \textbf{and} $J \setminus\set{ f_\bot }$ represents $I$ in $\pdb[J]$.
    \end{itemize}
    Observe that every instance $J$ of $\pdb[J]_{\bot}$ represents exactly one instance of $\pdb[I]$. A sketch of the overall situation that we have so far is shown in \cref{fig:independent-copies-sketch}.
    We now show that in $\pdb[J]_{\bot}$, the probability of the set of instances representing $I \in \II$ corresponds to the probability of $I$ in $\pdb[I]_{\phi}$:

    \begin{claim}[Representations in {$\pdb[J]_{\bot}$}]\label{cla:J-bot-probs}
        For all $I \in \II$ it holds that
        \[
            \Pr_{ J \sim \pdb[J]_{ \bot } }\big( J \text{\normalfont{} represents } I \big) 
                = 
            P_{\pdb[I]_{\phi}}\big( \set{ I } \big)
                =
            P_{ \pdb[I] }\big( \set{I} \under \phi \big)
            \text.
        \]
    \end{claim}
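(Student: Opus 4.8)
Here is the plan. The idea is to handle the two cases $I \neq I_0$ and $I = I_0$ separately, in each case unfolding the definition of ``represents'' in $\pdb[J]_{\bot}$ and using that the dummy fact $f_{\bot}$ is independent of all other facts of $\pdb[J]_{\bot}$, so that events about $f_{\bot} \in J$ factor out cleanly. The inputs I would rely on are: the preceding claim, which gives $\Pr_{J \sim \pdb[J]}\big(J \text{ represents } I\big) = \frac{P_{\pdb[I]_{\phi}}(\set{I})}{1 - p_0} \cdot p_{\mathrm{rep}}$ for $I \in \II \setminus \set{I_0}$; the identity $p_{\mathrm{rep}} = 1 - (1 - p_{\psi})^k$ from \eqref{eq:J-representation}; and the definition $p_{\bot} = \frac{p_{\mathrm{rep}} - (1 - p_0)}{p_{\mathrm{rep}}}$, which rearranges to the key identity $1 - p_{\bot} = \frac{1 - p_0}{p_{\mathrm{rep}}}$. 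I would also use that the pushforward of $P_{\pdb[J]_{\bot}}$ along $J \mapsto J \setminus \set{f_{\bot}}$ is exactly $P_{\pdb[J]}$, which holds because $f_{\bot}$ is an independent fact.

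For $I \in \II \setminus \set{I_0}$, the first step is to rewrite the defining condition. By definition, $J$ represents $I_0$ iff $J \setminus \set{f_{\bot}}$ is not a representation in $\pdb[J]$, or $f_{\bot} \in J$ and $J \setminus \set{f_{\bot}}$ is a representation in $\pdb[J]$; hence $J$ does \emph{not} represent $I_0$ precisely when $f_{\bot} \notin J$ \emph{and} $J \setminus \set{f_{\bot}}$ is a representation in $\pdb[J]$. Therefore $J$ represents $I$ iff $f_{\bot} \notin J$ and $J \setminus \set{f_{\bot}}$ represents $I$ in $\pdb[J]$ (the latter already forcing $J \setminus \set{f_{\bot}}$ to be a representation). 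Since $f_{\bot} \in J$ is independent of the event ``$J \setminus \set{f_{\bot}}$ represents $I$'', which depends only on $J \setminus \set{f_{\bot}}$, this gives
\[
    \Pr_{J \sim \pdb[J]_{\bot}}\big( J \text{ represents } I \big)
        =
    ( 1 - p_{\bot} ) \cdot \Pr_{J' \sim \pdb[J]}\big( J' \text{ represents } I \big)
        =
    \frac{1 - p_0}{p_{\mathrm{rep}}} \cdot \frac{P_{\pdb[I]_{\phi}}(\set{I})}{1 - p_0} \cdot p_{\mathrm{rep}}
        =
    P_{\pdb[I]_{\phi}}\big( \set{I} \big)\text.
\]

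For $I = I_0$, I would argue by complementation. Every instance $J$ of $\pdb[J]_{\bot}$ represents exactly one instance of $\pdb[I]$, and for $I \neq I_0$ the event ``$J$ represents $I$'' can only occur when $I \models \phi$ (since $J \setminus \set{f_{\bot}}$ representing $I$ forces $J[i^*] = I \models \psi$, and $\psi$ implies $\phi$). Summing the equality just proved over all $I \in \II \setminus \set{I_0}$ and using that $P_{\pdb[I]_{\phi}}$ is a probability distribution on $\set{I \in \II \with I \models \phi}$ yields $\Pr_{J \sim \pdb[J]_{\bot}}\big( J \text{ represents some } I \neq I_0 \big) = 1 - p_0$, whence $\Pr_{J \sim \pdb[J]_{\bot}}\big( J \text{ represents } I_0 \big) = p_0 = P_{\pdb[I]_{\phi}}(\set{I_0})$. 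Equivalently, one may compute directly: the two disjoint events defining ``$J$ represents $I_0$'' have probabilities $1 - p_{\mathrm{rep}}$ and $p_{\bot} \cdot p_{\mathrm{rep}}$ (by independence of $f_{\bot}$), summing to $1 - p_{\mathrm{rep}}(1 - p_{\bot}) = 1 - (1 - p_0) = p_0$. The rightmost equality in the claim statement is just the definition of conditioning.

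The routine part is the algebra; the step requiring the most care is putting the definition of ``$J$ represents $I_0$'' into the clean contrapositive form ``$f_{\bot} \notin J$ and $J \setminus \set{f_{\bot}}$ is a representation in $\pdb[J]$'', since the defining disjunction mixes a statement about $f_{\bot}$ with a statement about $J \setminus \set{f_{\bot}}$ and one must check the two disjuncts are mutually exclusive — they are, because one demands $J \setminus \set{f_{\bot}}$ be a representation and the other demands it not be. The remaining subtlety is the independence bookkeeping for $f_{\bot}$, which is immediate once one notes that ``represents $I$'' is a property of $J \setminus \set{f_{\bot}}$ alone.
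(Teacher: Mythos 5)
Your proposal is correct and matches the paper's proof in essentially every respect: the same case split on $I = I_0$ versus $I \neq I_0$, the same use of independence of $f_\bot$ to factor out $(1-p_\bot)$ against the probability from the preceding claim, and the same direct computation $(1-p_{\mathrm{rep}}) + p_\bot p_{\mathrm{rep}} = p_0$ for $I_0$ (your complementation alternative is a harmless extra). Your explicit rewriting of ``$J$ represents $I \neq I_0$'' as ``$f_\bot \notin J$ and $J\setminus\set{f_\bot}$ represents $I$'' is exactly what the paper uses implicitly, so there is no gap.
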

    
    \begin{claimproof}
        This is easily verified by direct calculation. It holds that
        \begin{align*}
            & \Pr_{ J \sim \pdb[J]_{\bot} }\big( J \text{ represents } I_0 \text{ (in }\pdb[J]_{\bot} \text)\big) \\
                = {}
            & \Pr_{ J \sim \pdb[J]_{\bot} }\big( J \setminus \set{ f_\bot } \text{ is no representation in } \pdb[J] \big) +
            \Pr_{ J \sim \pdb[J]_{\bot} }\big( f_\bot \in J \text{\textbf{ and }} J \setminus \set{ f_\bot } \text{ is a representation in } \pdb[J] \big)\\
                = {}
            & \big( 1 - p_{\rep} \big) + \big( p_{\bot} \cdot p_{ \rep } \big)
                =
            \big( 1 - p_{\rep} \big) + \big( p_{\rep} - ( 1 - p_0 ) \big)
                = {}
            p_0
                = {}
            P_{ \pdb[I]_{\phi} }\big( \set{ I_0 } \big)\text.
        \intertext{Moreover, for all $I \in \II \setminus \set{ I_0 }$ it holds that}
            & \Pr_{ J \sim \pdb[J]_{ f_\bot} }\big( J \text{ represents } I \text{ (in }\pdb[J]_{\bot}\text) \big) \\
                = {}
            & \Pr_{ J \sim \pdb[J]_{ \bot} }\big( f_\bot \notin J \text{\textbf{ and }} J\setminus \set{ f_\bot } \text{ represents } I \text{ (in }\pdb[J]\text) \big)\\
                = {}
            & \big( 1 - p_{\bot} \big) \cdot \frac{ P_{ \pdb[I]_{\phi} }\big( \set{ I }\big) }{ 1 - p_0 } \cdot p_{\rep}
                = {}
            \frac{ (1-p_0) - p_{\rep} }{ p_{\rep} } \cdot \frac{ P_{ \pdb[I]_{\phi} }\big( \set{ I }\big) }{ 1 - p_0 } \cdot p_{\rep}
                = {}
            P_{\pdb[I]_{\phi}}\big( \set{ I } \big)\text.\qedhere
        \end{align*}
    \end{claimproof}
    \begin{figure}[h]
        \centering
        \begin{tikzpicture}[scale=0.9, transform shape]
        
            \def\pdbshape#1{plot [smooth cycle,tension=.7] coordinates { 
                ($#1 + (-0.2,0)$)
                ($#1 + (0.1,1.4)$)
                ($#1 + (-0.7,1.2)$)
                ($#1 + (-1.9,1)$)
                ($#1 + (-1.7,0)$)
                ($#1 + (-1.8,-1.3)$)
                ($#1 + (-0.6,-1.2)$)
                ($#1 + (0.1,-1)$)
            }}
            
            \def\notphi#1{
                #1 ellipse (1.5cm and 0.5cm)
            }
        
            \tikzset{pdb/.style={draw=none,ellipse,minimum width=2cm,minimum height=3.2cm,inner sep=0}}
            \tikzset{dot/.style={fill=black,draw=black,minimum width=3pt, inner sep=0pt,circle}}
            \node[pdb,label={below:$\pdb[I]$}] (I) at (0,0) {};
            \node[pdb,label={below:copy $1$},right=2cm of I] (J1) {};
            \node[right=0cm of J1.east] (left) {$\times$};
            \node[pdb,label={below:copy $2$},anchor=west,right=0cm of left] (J2) {};
            \node[right=0cm of J2.east] (middle) {$\times\quad\dotsm\quad\times $};
            \node[pdb,label={below:copy $k$},anchor=west,right=0cm of middle.east] (Jk) {};
            \node[right=0cm of Jk.east] (right) {$\times$};
            \node[rectangle,anchor=west,right=0cm of right] (bot) {$\set[\big]{ \overset{\text{\faCheck}}{\emptyset}, \overset{\text{\textcolor{ShadingDark}{\faClose}}}{\set{f_{\bot}}} }$};

            \begin{scope}
                \clip \pdbshape{(I.east)};
                \fill[ShadingLight] \pdbshape{(I.east)};
                \clip \notphi{(I.north east)};
                \fill[ShadingDark] \notphi{(I.north east)};
            \end{scope}
            \node[dot,fill=ShadingDark,shift={(-.2,.4)},label={[font=\small]right:$I_0$}] (I0) at (I) {};
            \draw[semithick] \pdbshape{(I.east)};
            \begin{scope}
                \clip \pdbshape{(I.east)};
                \draw[semithick] \notphi{(I.north east)};
            \end{scope}
            \node[xshift=-.4cm,yshift=-.25cm] at (I.north east) (phi) {$\neg\phi$};
            \node[xshift=-.2cm,yshift=-.2cm] (phi0) at (I.west) {$\phi_0$};
            \draw[-stealth',shorten >=2pt,semithick] (phi0) to[bend left=10] (I0) {};
            \node[above=.5cm of I.south,font=\small,xshift=.1cm] {$\psi = \phi \wedge \neg \phi_0$};            
            \node[dot,semithick,shift={(.4,-.3)},label={[font=\small]left:$I$}] (Iinst) at (I) {};

            \begin{scope}
                \clip \pdbshape{(J1.east)};
                \fill[ShadingLight] \pdbshape{(J1.east)};
                \clip \notphi{(J1.north east)};
                \fill[ShadingDark] \notphi{(J1.north east)};
            \end{scope}
            \draw[semithick] \pdbshape{(J1.east)};
            \node[dot,semithick,fill=ShadingDark,shift={(-.2,.4)}] (I0) at (J1) {};
            \begin{scope}
                \clip \pdbshape{(J1.east)};
                \draw[semithick] \notphi{(J1.north east)};
            \end{scope}
            \node[dot,semithick,shift={(.5,.9)}] (Jinst1) at (J1) {};

            \begin{scope}
                \clip \pdbshape{(J2.east)};
                \fill[ShadingLight] \pdbshape{(J2.east)};
                \clip \notphi{(J2.north east)};
                \fill[ShadingDark] \notphi{(J2.north east)};
            \end{scope}
            \draw[semithick] \pdbshape{(J2.east)};
            \node[dot,semithick,fill=ShadingDark,shift={(-.2,.4)}] (I0) at (J2) {};
            \begin{scope}
                \clip \pdbshape{(J2.east)};
                \draw[semithick] \notphi{(J2.north east)};
            \end{scope}
            \node[dot,semithick,shift={(.4,-.3)}] (Jinst2) at (J2) {};

            \begin{scope}
                \clip \pdbshape{(Jk.east)};
                \fill[ShadingLight] \pdbshape{(Jk.east)};
                \clip \notphi{(Jk.north east)};
                \fill[ShadingDark] \notphi{(Jk.north east)};
            \end{scope}
            \draw[semithick] \pdbshape{(Jk.east)};
            \node[dot,semithick,fill=ShadingDark,shift={(-.2,.4)}] (I0) at (Jk) {};
            \begin{scope}
                \clip \pdbshape{(Jk.east)};
                \draw[semithick] \notphi{(Jk.north east)};
            \end{scope}
            \node[dot,semithick,shift={(-.2,-.6)}] (Jinstk) at (Jk) {};
            
            \node[circle,fill=none,draw=black,dotted,semithick,inner sep=0pt,minimum width=10pt] (Jinst1') at (Jinst1) {};
            \node[circle,fill=none,draw=black,dotted,semithick,inner sep=0pt,minimum width=10pt] (Jinst2') at (Jinst2) {};
            \node[circle,fill=none,draw=black,dotted,semithick,inner sep=0pt,minimum width=10pt] (Jinstk') at (Jinstk) {};
            \node[below=0cm of bot,xshift=-10pt,dot,semithick] (botinst) {};
            \node[circle,fill=none,draw=black,dotted,semithick,inner sep=0pt,minimum width=10pt] (botinst') at (botinst) {};
            
            \draw[dotted,semithick] (Jinst1') to[bend right=10] (Jinst2');
            \draw[dotted,semithick] (Jinst2') to[bend right=20] 
                node[circle,fill=white,inner sep=4pt]{ $\mkern2mu\cdots$ } (Jinstk');
            \draw[dotted,semithick] (Jinstk') to[bend right=20] (botinst);
            
            \draw[|-stealth',semithick,shorten <= 2pt, shorten >= 2pt] (Jinst2') to[bend left=20] (Iinst);
        \end{tikzpicture}
        \caption{The above figure shows a sketch of the situation. On the left-hand side it shows the original PDB $\pdb[I]$ with its restriction to $\phi$ and the special instance $I_0$. On the right-hand side it shows the structure of the PDB $\pdb[J]_{\bot}$ we constructed. Every instance $J \sim \pdb[J]_{\bot}$ is essentially a tuple of $k$ instances followed by a flag. The instance $J$ represents $I_0$ if either all of its parts are taken from the darker part of the copies, or it contains the flag fact $f_{\bot}$. Otherwise, $J$ represents the instance corresponding to the first part residing in the lighter shaded part of the copies.}
        \label{fig:independent-copies-sketch}
    \end{figure}

    At this point, all that remains to show is that there exists an $\FO$-view $\Phi$ that maps each $J \sim \pdb[J]_{\bot}$ to the instance $I \in \II$ (with $I \models \phi$) it represents. If $\Phi$ has this property, then $\Phi( \pdb[J]_{\bot} ) = \pdb[I] \under \phi$, proving our claim. Suppose that the relations of $\pdb[I]$ are $R_1,\dots,R_N$. Let $\tilde\phi$ be any $\FO$-formula over the schema of $\pdb[I]$. We assume that $\tilde\phi[i]$ is rewritten so that it contains neither $\vee$, nor universal quantification (this can be done in an equivalence preserving and domain-independent way, cf. \cite[p.~82]{Abiteboul+1995}). Then, for all $i = 1,\dots,k$, the formula $\tilde\phi[i]$ is defined from $\tilde\phi$ by structural induction as follows:
    \[
        \tilde\phi[i] \coloneqq
        \begin{cases}
            R_n'( i, \tup u )                           & \text{if }\tilde\phi = R_n( \tup u )\\
            x = a                                       & \text{if }\tilde\phi = ( x = a )\\
            (x = y) \wedge \big( x \neq \bot \wedge \bigwedge_{ j = 1 }^{ k } x \neq j \big)
                                                        & \text{if }\tilde\phi = ( x = y )\\
            \neg\tilde\phi'[i] \wedge \bigwedge_{ x \in \free(\phi') } \big( x \neq \bot \wedge \bigwedge_{ j = 1 }^{ k } x \neq j \big)
                                                        & \text{if }\tilde\phi = \neg\tilde\phi'\\
            \tilde\phi_1[i] \wedge \tilde\phi_2[i]      & \text{if }\tilde\phi = \tilde\phi_1 \wedge \tilde\phi_2\\
            \exists x \with \tilde\phi'[i]              & \text{if }\tilde\phi = \exists x \with \tilde\phi'\text.
        \end{cases}
    \]
    That is, in essence we change the atoms $R_n( \tup u )$ into $R_n'( i, \tup u )$ and guard all free variables in negated subformulae and in equalities between variables. Then, for all $i = 1,\dots,k$ and all $J \sim \pdb[J]_{\bot}$ it holds that $\tilde\phi( J[i] ) = \tilde\phi[i]( J )$. This can easily be verified over the structure of $\tilde\phi$.
    
    Applying this construction to the $\FO$-sentence $\psi$ leaves us with $\FO$-sentences $\psi[i]$ over the schema of $\pdb[J]_{\bot}$ such that for all $J \sim \pdb[J]_{\bot}$ and all $i = 1,\dots,k$ it holds that
    \[
        J[i] \models \psi
            \Leftrightarrow
        J \models \psi[i]
        \text.
    \]
    Now for all $n = 1, \dots, N$ we define the query
    \[
        \Phi_n( \tup x ) \coloneqq 
            \Bigg\lbrack
                \bigg\lparen f_\bot \vee \bigwedge_{ i^* = 1 }^{ k } \neg \psi[i^*]\bigg\rparen 
                    \wedge 
                \bigg\lparen \bigvee_{R_n( \tup a ) \in I_0 } \tup x = \tup a \bigg\rparen
            \Bigg\rbrack
                \wedge 
            \Bigg\lbrack 
                \bigg\lparen 
                    \bigvee_{ i^* = 1 }^{ k } \neg f_\bot \wedge \psi[i^*] \wedge \bigwedge_{ i = 1 }^{ i^* - 1 } \neg\psi[i^*] 
                \bigg\rparen 
                    \wedge 
                \bigg\lparen 
                    R_n'( i^*, \tup x ) 
                \bigg\rparen 
            \Bigg\rbrack\text.
    \]
    Then for all $n = 1,\dots,N$ and all $J \sim \pdb[J]_{ \bot }$, it holds that $\Phi_n( J )$ is equal to the restriction of $I$ to the relation $R_n$ where $I$ is the instance from $\pdb[I]$ that is represented by $J$. By \cref{cla:J-bot-probs}, it follows that $\Phi( \pdb[J]_{\bot} ) = \pdb[I] \under \phi$ where $\Phi = \set{ \Phi_1,\dots,\Phi_N }$. That is, $\pdb[I] \under \phi \in \FO(\TI)$, concluding the proof of \cref{thm:conditional-views}.
\end{proof}

\begin{remark}\label{remark:fo-everywhere}
An implication of \Cref{thm:conditional-views} is that $\FOTI$ is closed under $\FO$-conditioning, regardless of whether the conditioning is done before or after applying the view. This holds since
$\FOTI \subseteq \FOTI\under\FO \subseteq \FOTIFO \subseteq \FOTI$.
To prove the middle containment, an $\FO$-sentence over the view's output schema can be translated to an $\FO$-sentence over the input schema by replacing every relation with the $\FO$-view defining it.
\end{remark}

\section{Power of Tuple-Independent Representations}\label{sec:power}

In this section, we establish some positive results regarding representability. In \cref{ssec:sizesandprobs}, we prove a sufficient criterion on the growth rate of the probabilities and conclude also that PDBs of bounded instances size are in $\FOTI$. \Cref{ssec:bid} is devoted to $\BID$-PDBs.

\subsection{A Condition on Sizes and Probabilities}\label{ssec:sizesandprobs}

Towards characterizing representability of PDBs using $\FO$-views over $\TI$-PDBs, so far we have from \cref{pro:finite-moments} that the finite moments property is a \emph{necessary} condition. In this subsection, we present a \emph{sufficient} condition for membership in $\FOTI$, taking the detour over membership in $\FOTIFO$. At the end of this subsection, we discuss the implications of this condition.
 
\begin{lemma}\label{lem:condition}
    Let $\pdb = (\DD,P)$ be a PDB. If there exists some constant $c \in \NN_+$ such that
    \begin{equation}\label{eq:nameless}
        \sum_{D \in \DD\setminus\set{\emptyset}} \size{D} \cdot P\big(\set{D}\big)^{\frac{c}{\size{D}}} < \infty\text,
        \tag{$\dagger$}
    \end{equation}
    then $\pdb \in \FO(\TI\under\FO)$.
\end{lemma}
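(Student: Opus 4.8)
The plan is to witness $\pdb\in\FOTIFO$ directly: given a PDB $\pdb=(\DD,P)$ satisfying \eqref{eq:nameless}, I would construct a $\TI$-PDB $\pdb[I]$, an $\FO$-sentence $\phi$ with $\Pr_{I\sim\pdb[I]}(I\models\phi)>0$, and an $\FO$-view $\Phi$ such that $\Phi(\pdb[I]\under\phi)=\pdb$ (so that, combined with \cref{thm:conditional-views}, one even gets $\pdb\in\FOTI$). Enumerate the nonempty possible worlds as $D_1,D_2,\dots$, put $p_n\coloneqq P(\set{D_n})$ and $s_n\coloneqq\size{D_n}$, and assume for simplicity $\emptyset\in\worlds(\pdb)$ with $p_\emptyset\coloneqq P(\set{\emptyset})>0$ (the opposite case only needs dropping one clause everywhere below). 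The core idea is to represent each $D_n$ by a \emph{block} of $k_n\coloneqq\max\set{1,\lfloor s_n/c\rfloor}$ ``token'' facts, obtained by partitioning the facts of $D_n$ into $k_n$ groups of at most $2c$ facts each (possible since $\lceil s_n/k_n\rceil\le 2c$), all tokens of block $n$ getting the \emph{same} marginal probability $q_n$ with $q_n/(1-q_n)=(p_n/p_\emptyset)^{1/k_n}$. Conditioning $\pdb[I]$ on a sentence $\phi$ expressing ``the present tokens form exactly one complete block (or the instance is empty)'' makes the complementary factor $\prod_{m}(1-q_m)^{k_m}$, common to all these events, cancel, so the conditional probability of the instance coding $D_n$ becomes exactly $\big(q_n/(1-q_n)\big)^{k_n}\big/\big(1+\sum_m(q_m/(1-q_m))^{k_m}\big)=(p_n/p_\emptyset)\big/(1/p_\emptyset)=p_n$ --- the same kind of normalization as for the switching gadget in the proof of \cref{thm:conditional-views} --- while the view $\Phi$ merely decodes the selected block.

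To make this precise I would fix an injection $n\mapsto\ell_n\in\UU$ (possible as $\UU$ is countably infinite) and, for each $n$, pick pairwise distinct ``slot'' elements $s_{n,1},\dots,s_{n,k_n}\in\UU$ arranged cyclically by a successor map $\nu$. Since $c$, the number $N$ of relations of $\pdb$ and the maximum arity $r$ are all constant, the groups can only take finitely many ``shapes'' $\sigma$ (how many facts, which relation each is, which argument positions coincide). I would use one relation symbol $\mathrm{Tok}_\sigma$ per shape, of arity at most $3+2c\,r$, with a token fact $\mathrm{Tok}_\sigma(\ell_n,s_{n,j},s_{n,\nu(j)},\dots)$ storing the label, its own slot and its successor slot in the first three coordinates and the domain elements of the facts of group $j$ in the rest; shape-indexed relations avoid encoding relation identifiers as (possibly non-fresh) universe elements. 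The $\TI$-PDB $\pdb[I]$ consists of exactly these token facts, $\mathrm{Tok}_\sigma(\ell_n,s_{n,j},\dots)$ carrying marginal $q_n$. Distinct $j$ give distinct second coordinates and distinct $n$ give distinct first coordinates, so the token facts are pairwise distinct; hence ``exactly the $k_n$ tokens of block $n$ are present'' has probability $C\cdot\big(q_n/(1-q_n)\big)^{k_n}$ with $C\coloneqq\prod_m(1-q_m)^{k_m}$, and the empty instance has probability $C$.

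The sentence $\phi$ is the conjunction of: (a) all present $\mathrm{Tok}$-facts agree on their first coordinate, and (b) whenever a $\mathrm{Tok}$-fact with first coordinate $\ell$ and successor-slot $s'$ is present, some present $\mathrm{Tok}$-fact has first coordinate $\ell$ and own-slot $s'$ --- a finite disjunction over shapes, hence $\FO$, satisfied vacuously by $\emptyset$. A short argument shows $\phi$ does what is claimed: by (a) a nonempty set of present facts all carry some $\ell_n$ (a label that is no $\ell_m$ has no facts), so they lie in block $n$; by (b) that set is $\nu$-closed, and since the slots of block $n$ form a single $\nu$-cycle its only nonempty $\nu$-closed subset is all of block $n$; conversely ``exactly block $n$'' clearly satisfies $\phi$. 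Thus the $\phi$-worlds of $\pdb[I]$ are $\emptyset$ together with one instance per $D_n$, $\Pr_{I\sim\pdb[I]}(I\models\phi)=C\big(1+\sum_n p_n/p_\emptyset\big)=C/p_\emptyset>0$, and the conditional probabilities are as above. Finally $\Phi=\set{\Phi_1,\dots,\Phi_N}$ is defined per relation: $\Phi_R(\tup x)$ is the disjunction, over all shapes $\sigma$ and all fact-templates of relation $R$ occurring in $\sigma$, of a formula existentially projecting a $\mathrm{Tok}_\sigma$-atom onto the coordinates carrying that template's arguments; on a complete block this outputs exactly the union of its groups $=D_n$, and on $\emptyset$ it outputs $\emptyset$ (no re-check of $\phi$ is needed since $\pdb[I]\under\phi$ contains only $\phi$-worlds). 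Hence $\Phi(\pdb[I]\under\phi)=\pdb$.

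The one place where \eqref{eq:nameless} enters --- and, besides the routine but fiddly verification that $\phi$ and $\Phi$ behave as described, the step I expect to require the most care --- is checking that $\pdb[I]$ is a well-defined $\TI$-PDB, i.e.\ $\sum_n k_nq_n<\infty$ by \cref{thm:well-def-ti}. Since $q_n<q_n/(1-q_n)=(p_n/p_\emptyset)^{1/k_n}\le p_\emptyset^{-1}p_n^{1/k_n}$, it suffices to bound $\sum_n k_np_n^{1/k_n}$: the worlds with $s_n<c$ have $k_n=1$ and contribute $\sum_n p_n\le 1$, while for $s_n\ge c$ one has $k_n=\lfloor s_n/c\rfloor\le s_n/c\le s_n$ and $1/k_n\ge c/s_n$, so $k_np_n^{1/k_n}\le s_n\,p_n^{c/s_n}$ and the sum of these terms is finite by \eqref{eq:nameless}. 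The main conceptual hurdle is therefore the $\FO$-definability of ``the present tokens form exactly one complete block'' despite an \emph{unbounded} number of tokens per block, which the successor-pointer/cycle encoding resolves; the rounding in $k_n$ and the $\emptyset\notin\worlds(\pdb)$ case are the only remaining nuisances.
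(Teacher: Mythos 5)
Your proposal is correct and follows essentially the same route as the paper's proof of \cref{lem:condition}: chop each world $D_n$ into $\Theta(s_n/c)$ constant-size segments, encode the segments as tagged facts of a $\TI$-PDB whose marginals are $k_n$-th roots tuned to reproduce $P$ after conditioning on an $\FO$-sentence that recognizes complete blocks (your cyclic successor pointers versus the paper's start-segment/next-segment chain), decode with an $\FO$-view, and use \eqref{eq:nameless} exactly where you use it, namely to verify well-definedness via \cref{thm:well-def-ti}. The remaining differences (conditioning on ``exactly one block or empty'' normalized by $p_\emptyset$ instead of the paper's ``contains exactly one complete block'' with the $p_i/(1+p_i)$ normalization, shape-indexed relations instead of the single-relation simplification, and $\lfloor s_n/c\rfloor$-grouping absorbing the paper's separate equivalence of \eqref{eq:nameless} with its ceiling variant) are cosmetic.
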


Note that \eqref{eq:nameless} looks quite similar to the requirement that the expected instance size of $\pdb$ be finite. However, instead of weighting with the original probabilities of the possible worlds, in \eqref{eq:nameless} the probability of a world $D$ is raised to the power of $\frac{ c }{ \size{ D } }$. This makes \eqref{eq:nameless} more restrictive than the finite moments property.

The statement of the lemma was chosen to involve $\FOTIFO$ rather than the equivalent $\FOTI$, since in $\FOTIFO$ we can also use conditioning which facilitates finding a representation. Before starting with the proof, let us introduce the main idea. So suppose that $\pdb = ( \DD, P )$ is the probabilistic database we are given. We want to show that if there exists a constant $c$ such that \eqref{eq:nameless} holds, then $\pdb \in \FO(\TI\under\FO)$. For the moment, assume that $c = 1$, so that the condition \eqref{eq:nameless} becomes $\sum_{ D \in \DD \setminus \set{ \emptyset } } \size{D} \cdot P( \set{ D } )^{ 1 / \size{D} } < \infty$. We first want to construct a $\TI$-PDB that serves as a base for the $\FOTIFO$-representation. For this, whenever $D \in \DD$ and $f \in D$, then we create a new fact that is a copy of $f$, tagged with an identifier for the instance $D$. In particular, this creates a copy of $f$ for every instance in which it appears. We say that a subset of the constructed facts is a representation if there exists exactly one instance for which all tagged facts appear. We construct the $\FO$-condition in such a way that it filters out the possible worlds that are representations. Our $\FO$-view then reconstructs the original instance from the tagged facts. The probabilities of the new facts have to be tuned in such a way that every possible world of the representation has the correct probability. The facts we constructed, together with their probabilities, have to adhere to the convergence condition from \cref{thm:well-def-ti}. Considering the representation mechanism, we end up with the condition \eqref{eq:nameless} (for $c=1$). The condition can be relaxed to the general form with a constant $c$ by encoding $c$ original facts into a single new fact.

\begin{proof}[{Proof of \cref{lem:condition}}]

    Let $\pdb = (\DD,P)$ be a PDB with instances $\DD = \set[\big]{D_0,D_1, D_2, \dots}$ where $D_0 = \emptyset$ is the empty instance.
    Let $s_i \coloneqq \size{D_i}$ denote the size of $D_i$ and $p_i \coloneqq P\big(\set{D_i}\big)$ its probability in $\pdb$ for all $i \in \NN$. Without loss of generality,
    we assume that $p_i > 0$ for all $i \in \NN$. (If $p_0 = 0$, we just need to consider $i \in \NN_+$. The proof then works the same way without the special treatment of $i=0$.)
    For simplicity, we assume that the schema of $\pdb$ consists of a single $r$-ary relation symbol $R$. The proof can easily be generalized to arbitrary schemas.
    
    Let $c \in \NN_+$ be some fixed constant (we will determine the value of $c$ later). We now construct the $\TI$-PDB $\pdb[I]$ used in our representation. The schema of $\pdb[I]$ consists of a single relation symbol $\widehat{R}$ of arity $3 + c \cdot r$. We let $\widehat{ \UU } \coloneqq \UU \cup \set{ \bot }$ be the underlying universe of $\pdb[I]$, which is the universe of $\pdb$, (disjointly) augmented with a new dummy symbol $\bot$.
    
    For all $i \in \NN$, suppose that $D_i = \set[\big]{ R( \tup a_{i,1} ), \dots, R( \tup a_{i,s_i} ) }$ where $\tup a_{i,j} \in \UU^r$ for all $1 \leq j \leq s_i$. For $j > s_i$, we set $\tup a_{i,j} \coloneqq \big( \bot, \dots, \bot \big)$. We define
    \[
        \NEXT_{i,j} \coloneqq 
        \begin{cases}
            j+1     & \text{if } (j+1)\cdot c < s_i\text{ and}\\
            \bot    & \text{if } (j+1)\cdot c \geq s_i\text.
        \end{cases}
    \]
    Using this notation we define a set $F$ of facts over the new schema $\set[\big]{ \widehat{R} }$ and the new universe $\widehat{\UU}$ by letting
    \[
        F \coloneqq \set[\Big]{ 
            \widehat{R}\big( i, j, N_{i,j}, \tup a_{i,jc+1}, \dots, \tup a_{i,jc+c} \big)\with
            i,j \in \NN \text{ and } j \leq \max\big( \big\lceil \tfrac{s_i}{c} \big\rceil - 1 , 0 \big)
        }
    \]
    Essentially, we chop up every instance $D_i$ into segments containing $c$ facts each (where the last segment may contain $<c$ facts). Every individual fact of $F$ corresponds to such a segment of (up to) $c$ facts in a database instance of $\DD$. The structure of these facts is shown in \cref{fig:segmentation}. The first entry of such a fact is the index of the original instance in which the encoded facts appear (\emph{instance identifier}). The second entry contains the index of the segment (\emph{segment identifier}) whereas the third entry contains the index of the next segment (\emph{next segment identifier}). The $r\cdot c$ remaining entries contain the $c$ facts of the segment from the original instance. If there are not enough facts left to fill the $c$ fact slots, they are padded with dummy entries $(\bot, \dots, \bot)$.
    
    \begin{figure}[htb]
        \centering
        \begin{tikzpicture}
            \matrix [nodes={draw,thick,minimum size=7mm}, column sep=-0.8pt]%
            {
                \node[fill=ShadingLight] (i) {$i$}; &
                \node[fill=ShadingMedium] (j) {$j$}; & 
                \node[fill=ShadingMedium,inner ysep=0pt] (N) {$\NEXT_{i,j}$}; & 
                \node[fill=ShadingDark] (t) [minimum width=2cm]{$\tup a_{i,jc+1}$}; & 
                \node[fill=ShadingDark] (dots) [minimum width=2cm]{$\cdots$}; & 
                \node[fill=ShadingDark] (t') [minimum width=2cm]{ $\tup a_{i,jc+ c}$ }; \\
            };
            \draw[decoration={calligraphic brace,amplitude=5pt}, decorate, line width=1.25pt] ([yshift=-2pt]t'.south east) to node[below,yshift=-10pt] (segment-label) {segment of $\leq c$ facts from $D_i$} ([yshift=-2pt]t.south west);
            
            \node[above=26pt of dots.center,anchor=base] (nsp-label) {next segment identifier};
            \node[left=4.5cm of segment-label.base,anchor=base] (si-label) {segment identifier};
            \node[left=4.2cm of nsp-label.base,anchor=base] (ii-label) {instance identifier};

            \draw[-stealth',shorten >= 6pt, shorten <= 2pt] (nsp-label.west) to[bend right=5] (N.55);
            \draw[-stealth',shorten >= 6pt, shorten <= 2pt] (si-label.120) to[bend right=5] (j.-100);
            \draw[-stealth',shorten >= 6pt, shorten <= 2pt] (ii-label.-100) to[bend right=5] (i.80);
            \end{tikzpicture}
        \caption{Structure of the new facts, and meaning of their individual components.}
        \label{fig:segmentation}
    \end{figure}
    
    We now define the $\TI$-PDB $\pdb[I] = \big(\II, P_{\pdb[I]} \big)$ such that $\facts(\pdb[I]) = F$, giving every fact from $F$ a positive probability (that we specify later) so that $\pdb[I]$ is well-defined. Let us first finalize the representation mechanism. For all $i \in \NN$ we let $\widehat{D}_i \in \II$ be the instance from $\pdb[I]$ that contains precisely the facts that have instance identifier $i$. We say that an instance $I \in \II$ represents $D_i \in \DD$ if \begin{enumerate*} \item it contains all the facts from $\widehat{D}_i$ (that is, $\widehat{D}_i \subseteq I$), and \item it does not contain all facts of some other $\widehat{D}_j$ (that is, $\widehat{D}_j \not\subseteq I$ for all $j \neq i$).\end{enumerate*} In general, we call an instance $I\in\II$ a \emph{representation}, if there exists some $i \in \NN$ such that $I$ represents $D_i$.
    
    Next, we define the marginal probabilities for $\pdb[I]$. For this, first note that for $i \in \NN$, the instances $\widehat{D}_i$ are pairwise disjoint as their facts have different instance identifiers. Thus, by the construction of $F$, for all $f \in F$, there exists a unique $i = i(f)$ such that $f \in \widehat{D}_i$. For the sizes $\widehat{s}_i = \size[\big]{ \widehat{D}_i }$ of these instances, it holds that $\widehat{s}_i = \big\lceil \frac{s_i}{c} \big\rceil \geq 1$ for all $i > 0$ and $\widehat{s}_0 = 1$. Recall that $p_i$ is the probability of $D_i$ in $\pdb$  for all $i \in \NN$. For every fact $f \in F$, we define the marginal probability $q_f$ of $f$ by setting
    \[
        q_f \coloneqq \Big( \frac{ p_{i(t)} }{ 1 + p_{i(f)} } \Big)^{ 1 / \widehat{s}_{i(f)} }\text.
    \]
    Recall that everything we have done so far depends on the constant $c$.
    
    \begin{claim}\label{cla:uglyconvergence}
        If $c$ fulfills the condition that
        \begin{equation}\label{eq:namelessugly}
            \sum_{ i = 1 }^{ \infty } \ceil[\big]{\frac{s_i}{c}} \cdot p_i^{ \frac{ 1 }{ \ceil{ s_i/c } } } < \infty
            \tag{$\ddagger$}
        \end{equation}
        then $\pdb[I]$ is a well-defined $\TI$-PDB.
    \end{claim}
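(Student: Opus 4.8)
The plan is to verify the two hypotheses of \cref{thm:well-def-ti} for the fact set $F$ with the marginals $(q_f)_{f \in F}$: namely, that $q_f \in [0,1]$ for every $f \in F$, and that $\sum_{f \in F} q_f < \infty$. Once both are established, \cref{thm:well-def-ti} immediately yields that $\pdb[I]$ is a well-defined $\TI$-PDB with exactly these facts and these marginals.

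For the first point, I would observe that since each $p_i = P(\set{D_i})$ is a probability and we assumed $p_i > 0$, we have $p_i \in (0,1]$, so $\frac{p_{i(f)}}{1 + p_{i(f)}} \in (0, \tfrac12]$ for every $f \in F$. As $\widehat s_{i(f)} \geq 1$, the exponent $1/\widehat s_{i(f)}$ lies in $(0,1]$, whence $q_f = \bigl( \frac{p_{i(f)}}{1+p_{i(f)}} \bigr)^{1/\widehat s_{i(f)}} \in (0,1)$. In particular all marginals are genuine probabilities, and they are positive, as the construction requires.

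For the convergence of $\sum_{f \in F} q_f$, I would partition $F$ according to the instance identifier: for each $i \in \NN$, let $\widehat D_i$ be the set of facts of $F$ with first coordinate $i$, so $\size{\widehat D_i} = \widehat s_i$, where $\widehat s_i = \ceil{s_i/c} \geq 1$ for $i \geq 1$ and $\widehat s_0 = 1$. Using $1 + p_i \geq 1$ we get $\frac{p_i}{1+p_i} \leq p_i$, hence $q_f \leq p_i^{1/\widehat s_i}$ for every $f \in \widehat D_i$. Summing over the $\widehat s_i$ facts of each block,
\[
  \sum_{f \in F} q_f
    = \sum_{i \in \NN} \sum_{f \in \widehat D_i} q_f
    \leq \frac{p_0}{1+p_0} + \sum_{i=1}^\infty \widehat s_i \cdot p_i^{1/\widehat s_i}
    \leq \frac12 + \sum_{i=1}^\infty \ceil[\big]{\tfrac{s_i}{c}} \cdot p_i^{1/\ceil{s_i/c}},
\]
and the last sum is finite precisely by hypothesis \eqref{eq:namelessugly}. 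Together with the first point, \cref{thm:well-def-ti} then gives the claim.

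The argument is essentially bookkeeping; the only points needing care are keeping the empty-instance block $\widehat D_0$ separated out (its single fact contributes at most $\tfrac12$, an additive constant that is harmless) and the elementary inequality $\frac{p_i}{1+p_i} \leq p_i$ that links the chosen marginals to the quantity appearing in \eqref{eq:namelessugly}. I do not expect a genuine obstacle here; the substance of the lemma lies in the subsequent steps, where the constant $c$ must be chosen so that \eqref{eq:namelessugly} follows from \eqref{eq:nameless}, and where the $\FO$-view and the $\FO$-condition reconstructing $\pdb$ are built.
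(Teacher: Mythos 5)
Your proposal is correct and follows essentially the same route as the paper: bound each marginal via $\frac{p_i}{1+p_i}\leq p_i$, sum blockwise, compare with \eqref{eq:namelessugly}, and invoke \cref{thm:well-def-ti}. If anything, your accounting is slightly more careful than the paper's displayed computation, since you explicitly carry the factor $\widehat s_i$ for the $\widehat s_i$ facts sharing instance identifier $i$ (the paper absorbs it only in its final inequality via $\ceil{s_i/c}\geq 1$), and you also spell out the check that $q_f\in[0,1]$, which the paper leaves implicit.
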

    
    \begin{claimproof}
        Suppose that \eqref{eq:namelessugly} holds for $c$. It holds that
        \[
            \sum_{ f \in F } q_f 
                =
            \frac{ p_0 }{ 1 + p_0 } + \sum_{ i = 1 }^{ \infty }\Big( \frac{ p_i }{ 1 + p_i } \Big)^{ \frac{ 1 }{ \ceil{ s_i / c } } }
                \leq
            1 + \sum_{ i = 1 }^{ \infty } p_i^{ \frac{ 1 }{ \ceil{ s_i / c } } }
                \leq
            1 + \sum_{ i = 1 }^{ \infty } \ceil[\big]{ \frac{s_i}{c} } \cdot p_i^{ \frac{ 1 }{ \ceil{ s_i / c } } }
                <
            \infty\text.
        \]
        By \cref{thm:well-def-ti}, the $\TI$-PDB $\pdb[I]$ is well-defined.
    \end{claimproof}
    
    Although \eqref{eq:namelessugly} differs slightly from \eqref{eq:nameless}, we can work with \eqref{eq:namelessugly} instead of \eqref{eq:nameless} by the following claim.
    
    \begin{claim}\label{cla:namelessvsugly}
        The following statements are equivalent:
        \begin{enumerate}
            \item\label{itm:ceiling} There exists $c \in \NN_+$ such that \eqref{eq:namelessugly} holds.
            \item\label{itm:original} There exists $c \in \NN_+$ such that \eqref{eq:nameless} holds.
        \end{enumerate}
    \end{claim}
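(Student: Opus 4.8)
The plan is to prove both implications by direct estimates on the series in \eqref{eq:nameless} and \eqref{eq:namelessugly}, exploiting that $\ceil{s_i/c}$ and $s_i/c$ differ by at most $1$ (hence, once $s_i \geq 2c$, by at most a factor of $2$) and that, correspondingly, the exponents $1/\ceil{s_i/c}$ and $c/s_i$ are comparable. Two facts are used repeatedly: since $p_i = P(\set{D_i}) \in (0,1]$ by assumption, the map $a \mapsto p_i^a$ is non-increasing, so enlarging an exponent can only decrease $p_i^a$; and $\sum_{i \geq 1} p_i \leq 1$ because the events ``$D = D_i$'' are mutually exclusive. Recall also that $s_i = \size{D_i} \geq 1$ for all $i \geq 1$, since $D_0 = \emptyset$ is the only empty instance and is excluded from both sums.

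For the implication $\ref{itm:ceiling}\Rightarrow\ref{itm:original}$ I would keep the same constant $c$. From $\ceil{s_i/c} \geq s_i/c$ we get $1/\ceil{s_i/c} \leq c/s_i$ and hence $p_i^{1/\ceil{s_i/c}} \geq p_i^{c/s_i}$; multiplying by $\ceil{s_i/c} \geq s_i/c > 0$ gives $\ceil{s_i/c}\, p_i^{1/\ceil{s_i/c}} \geq \tfrac{s_i}{c}\, p_i^{c/s_i}$ for every $i \geq 1$. Summing over $i$ yields $\sum_{i \geq 1} s_i\, p_i^{c/s_i} \leq c \cdot \sum_{i \geq 1} \ceil{s_i/c}\, p_i^{1/\ceil{s_i/c}} < \infty$, which is exactly \eqref{eq:nameless} for this $c$.

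For the converse $\ref{itm:original}\Rightarrow\ref{itm:ceiling}$ I would take the new constant to be $c' \coloneqq 2c$ and split the sum in \eqref{eq:namelessugly} at $s_i = c'$. For indices with $1 \leq s_i < c'$ we have $\ceil{s_i/c'} = 1$, so the corresponding term equals $p_i$, and $\sum_{i : s_i < c'} p_i \leq 1$. For indices with $s_i \geq c' = 2c$ we have $s_i/(2c) \geq 1$, hence $\ceil{s_i/(2c)} \leq s_i/(2c) + 1 \leq s_i/c$, so $1/\ceil{s_i/(2c)} \geq c/s_i$ and $p_i^{1/\ceil{s_i/(2c)}} \leq p_i^{c/s_i}$; multiplying by $\ceil{s_i/(2c)} \leq s_i/c$ gives $\ceil{s_i/(2c)}\, p_i^{1/\ceil{s_i/(2c)}} \leq \tfrac{s_i}{c}\, p_i^{c/s_i}$. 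Summing, $\sum_{i : s_i \geq c'} \ceil{s_i/c'}\, p_i^{1/\ceil{s_i/c'}} \leq \tfrac1c \sum_{i \geq 1} s_i\, p_i^{c/s_i} < \infty$ by hypothesis, and adding the bounded part yields \eqref{eq:namelessugly} for $c' = 2c$.

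No genuine obstacle arises; this is essentially a bookkeeping argument. The two points that require a little care are (i) consistently using that $p_i^a$ shrinks as $a$ grows (so the inequalities point the right way), and (ii) treating the terms with small $s_i$ separately, where the ceiling collapses to $1$ and a single uniform estimate would fail — this is precisely why one must allow the constant to change (from $c$ to $2c$) between the two formulations.
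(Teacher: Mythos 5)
Your proposal is correct and follows essentially the same argument as the paper: keeping the constant $c$ and using $1/\ceil{s_i/c} \leq c/s_i$ for one direction, and passing to $2c$ with a split of the sum at instance size about $2c$ (where the ceiling collapses to $1$ and $\sum_i p_i \leq 1$ handles the small terms) for the other. The only difference is the immaterial boundary convention $s_i \geq 2c$ versus $s_i > 2c$ in the case split.
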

    
    \begin{claimproof}
        We start with the direction (\labelcref{itm:ceiling}) $\Rightarrow$ (\labelcref{itm:original}). Suppose that $c$ is a constant for which \eqref{eq:namelessugly} holds.  For all $D_i$ with $i > 0$ it holds that $\frac{ s_i }{ c } \leq \ceil[\big]{ \frac{ s_i }{ c } }$. In particular, $s_i \leq c \cdot \ceil[\big]{ \frac{ s_i }{c} }$ and $\frac{ c }{ s_i } \geq 1 / \ceil[\big]{ \frac{ s_i }{ c } }$. As $p_i \in [0,1]$, it follows from the latter inequality that 
        \(
            p_i^{ c / s_i } \leq p_i^{ 1 / \ceil{ s_i / c } }
            \text.
        \)
        Hence,
        \[
            \sum_{ i = 1 }^{ \infty } s_i \cdot p_i^{ \frac{ c }{ s_i } }
                \leq 
            c \cdot \sum_{ i = 1 }^{ \infty } 
                \ceil[\Big]{ \frac{ s_i }{ c } } \cdot p_i^{ \frac{1}{ \ceil{ s_i / c} } } < \infty\text,
        \]
        so \eqref{eq:nameless} also holds for $c$.
        
        For the other direction, suppose \eqref{eq:nameless} holds for the constant $c$. For all $D_i$ with $s_i > 2c$ we have
        \[
            \ceil[\big]{ \frac{ s_i }{2c} } 
                < 
            \frac{ s_i }{2c} + 1 
                =
            \frac{2c+s_i}{2c}
                <
            \frac{2s_i}{2c}
                =
            \frac{s_i}{c}\text.
        \]
        Thus, $1 / \ceil[\big]{ \frac{ s_i}{ 2c } } \geq \frac{c}{ s_i }$ and therefore $p_i^{1 / \ceil{ s_i / 2c } } \leq p_i^{ c / s_i }$ whenever $s_i > 2c$. For the instances $D_i$ with $0 < s_i \leq 2c$, it holds that $\ceil[\big]{ \frac{ s_i }{ 2c }} = 1$ which yields
        \[
            \sum_{ i = 1 }^{ \infty } \ceil[\big]{ \frac{ s_i }{ 2c } } \cdot p_i^{ \frac{1}{ \ceil{ s_i / 2c } } }
                \leq 
            \sum_{ \substack{ i > 0\with\\s_i \leq 2c } } p_i
                +
            \sum_{ \substack{ i > 0\with\\s_i > 2c } } \tfrac{ s_i }{ c } p_i^{ \frac{c}{ s_i } }
                \leq
            1 + \tfrac{ 1 }{ c } \cdot \sum_{ i = 1 }^{ \infty } s_i \cdot p_i^{ \frac{c}{ s_i } }
                < \infty\text,
        \]
        so \eqref{eq:namelessugly} holds for $2c$.
    \end{claimproof}

    Given \cref{cla:namelessvsugly}, we fix the value of $c$ such that \eqref{eq:namelessugly} holds. Recall from \cref{cla:uglyconvergence} that $\pdb[I]$ is then well-defined. 
    
    Let $q_i \coloneqq \frac{p_i}{1+p_i}$ and observe that for all $i > 0$ we have
    \[
        \Pr_{I \sim \pdb[I]}\big( \widehat{D}_i \subseteq I \big) 
            = 
        \prod_{f \in \widehat{D}_i} q_f 
            = 
        \big( q_i^{ 1 / \widehat{s}_i } \big)^{ \widehat{s}_i } = q_i
    \]
    as well as 
    \[
        \Pr_{I \sim \pdb[I]}\big( \widehat{D}_0 \subseteq I \big) = q_{f_0} = q_0
    \]
    where $f_0$ is the unique fact in $\widehat{D}_0$. Moreover, note that $0 < q_i < 1$ for all $i \in \NN$. Since the sum over all $q_i$ is finite because of $\sum_{i=0}^{\infty} q_i \leq \sum_{i=0}^{\infty} p_i < \infty$, it holds that
    \[
        Z \coloneqq \prod_{i = 0}^{\infty} (1-q_i) \in (0,1]\text.
    \]
    Then for all $i \in \NN$ we have
    \begin{align*}
        \Pr_{I \sim \pdb[I]}\big( I \text{ represents } D_i ) 
            &= 
        q_i \cdot \prod_{ j \neq i }( 1 - q_j ) 
            = 
        Z \cdot \frac{q_i}{1-q_i} 
            =
        Z \cdot p_i\text,
    \shortintertext{which yields}
        \Pr_{I \sim \pdb[I]}\big( I \text{ is a representation}) 
            &=
        \sum_{i = 0}^{\infty} Z\cdot p_i 
            = 
            Z\text,
    \end{align*}
    and, moreover,
    \begin{equation}\label{eq:namelessrep}
        \Pr_{I \sim \pdb[I]}\big( I \text{ represents } D_i \under I \text{ is a representation})
            = 
        \frac{\Pr_{I \sim \pdb[I]}\big( I \text{ represents } D_i \big)}{\Pr_{I \sim \pdb[I]}\big( I \text{ is a representation}\big)} = p_i \text.
    \end{equation}
    
    In order to establish $\pdb \in \FOTIFO$, it now suffices to prove the following claim.
    
    \begin{claim}\label{cla:namelessviews}
        \begin{enumerate}
            \item There exists an $\FO$-sentence $\phi$ such that for all $I \in \II$ it holds that $I \models \phi$ if and only if $I$ is a representation.
            \item There exists an $\FO$-view $\Phi$ that, for all $I \in \II$ that are representations, maps $I$ to the instance of $\DD$ it represents.
        \end{enumerate}
    \end{claim}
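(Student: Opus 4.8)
The plan is to exploit the ``linked-list'' structure that the construction builds into $F$: the facts of each $\widehat{D}_i$ are chained together by their next-segment-identifier component, and this will let a single first-order formula certify that \emph{all} of $\widehat{D}_i$ is present in a given instance, even though the number $\widehat{s}_i$ of its segments is unbounded.

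First I would introduce the auxiliary formula
\[
  \mathrm{seg}(x,y) \coloneqq \exists n\, \exists \tup z\; \widehat{R}(x,y,n,\tup z),
\]
(with $\tup z$ abbreviating the block of $c\cdot r$ data variables) expressing that the instance contains a fact with instance identifier $x$ and segment identifier $y$, and then set
\[
  \mathrm{chain}(x) \coloneqq \mathrm{seg}(x,0) \wedge \forall y\, \forall n\, \forall \tup z\bigl(\widehat{R}(x,y,n,\tup z) \wedge n \neq \bot \rightarrow \mathrm{seg}(x,n)\bigr),
\]
where $0$ and $\bot$ occur as constants so that they always lie in the active domain. The crucial lemma to establish is that for every $I \in \II$ and every $i \in \NN$ we have $I \models \mathrm{chain}(i)$ if and only if $\widehat{D}_i \subseteq I$. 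The direction \enquote{$\Leftarrow$} is immediate from the construction of $F$: the chain of $\widehat{D}_i$ starts at segment $0$, every non-final segment fact of $\widehat{D}_i$ carries $j+1$ as its next identifier, and the final one carries $\bot$. For \enquote{$\Rightarrow$} I would argue by induction along the chain: $\mathrm{seg}(i,0)$ puts segment $0$ of $\widehat{D}_i$ into $I$; if its next identifier is $\bot$ then $\widehat{s}_i = 1$ and we are done, and otherwise it equals $1$, so the second conjunct of $\mathrm{chain}(i)$ forces segment $1$ into $I$, and so on --- so that all of segments $0,\dots,\widehat{s}_i-1$, i.e.\ all of $\widehat{D}_i$, lie in $I$. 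Here one uses that the only facts of $F$ (hence of any $I \in \II$) with instance identifier $i$ are those of $\widehat{D}_i$, and that $F$ contains exactly one fact per pair of (instance identifier, segment identifier).

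Given the lemma, the first part follows by taking
\[
  \phi \coloneqq \bigl(\exists x\; \mathrm{chain}(x)\bigr) \wedge \forall x\, \forall y\bigl(\mathrm{chain}(x) \wedge \mathrm{chain}(y) \rightarrow x = y\bigr).
\]
Under the active-domain semantics, $\mathrm{chain}(x)$ holds in $I$ of exactly those $a$ that are instance identifiers $i$ with $\widehat{D}_i \subseteq I$ --- and any such $i$ does occur in $\adom(I)$, namely in the first position of the segment-$0$ fact of $\widehat{D}_i$ --- so $I \models \phi$ iff there is exactly one $i$ with $\widehat{D}_i \subseteq I$, which is precisely the definition of $I$ being a representation. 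For the second part, using the standing assumption that the schema of $\pdb$ is a single $r$-ary relation $R$, I would take the single view formula
\[
  \Phi_R(\tup y) \coloneqq \exists x\Bigl(\mathrm{chain}(x) \wedge \exists y'\, \exists n\, \exists \tup z\; \bigl(\widehat{R}(x,y',n,\tup z) \wedge \bigvee_{\ell=1}^{c} (\tup y = \tup z_\ell \wedge \bot \notin \tup z_\ell)\bigr)\Bigr),
\]
where $\tup z = (\tup z_1,\dots,\tup z_c)$ and \enquote{$\bot \notin \tup z_\ell$} abbreviates the first-order statement that no coordinate of $\tup z_\ell$ equals $\bot$. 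On a representation $I$ of $D_i$ the subformula $\mathrm{chain}(x)$ pins $x$ down to $i$ (by the lemma and the definition of \enquote{represents}, no other identifier qualifies), and the rest then collects exactly those data tuples occurring in the segment facts of $\widehat{D}_i$ that contain no $\bot$; by the construction of $F$ these are precisely $\tup a_{i,1},\dots,\tup a_{i,s_i}$ (the padding tuples $(\bot,\dots,\bot)$ are filtered out, and a genuine data tuple never contains $\bot$ since $\bot \notin \UU$). Hence $\Phi_R(I) = \set{R(\tup a_{i,1}),\dots,R(\tup a_{i,s_i})} = D_i$; in particular $\Phi_R(I) = \emptyset = D_0$ when $I$ represents $D_0$. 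An arbitrary schema is handled by running the same construction for each relation symbol separately.

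The hard part will be the \enquote{$\Rightarrow$} direction of the chain lemma, i.e.\ making the meta-level induction precise: that \enquote{the initial segment is present, and every present segment points to a present segment} already forces the whole unbounded chain into $I$. Everything else is a routine check against the active-domain semantics --- chiefly keeping track of the constants $0$ and $\bot$, and verifying that $\Phi_R$ produces no spurious output tuples.
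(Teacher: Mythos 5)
Your proposal is correct and follows essentially the same route as the paper's (sketchy) proof: part (1) is verified by exactly the same two chain conditions (a segment-$0$ fact with identifier $i$ is present, and presence of a segment fact with next-segment identifier $\neq\bot$ forces the next segment), and part (2) recovers $D_i$ by projecting the up-to-$c$ encoded tuples while discarding the $\bot$-padding. You merely spell things out more explicitly than the paper does --- in particular by guarding the projection with $\mathrm{chain}(x)$ so that only the facts of the uniquely represented instance are decoded, a detail the paper's one-line argument for (2) leaves implicit --- and your finite induction along the chain is exactly the intended justification.
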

    
    \begin{claimproof}
        \begin{enumerate}
            \item In order to check whether $I$ is a representation, we need to check whether there exists some $i$ such that $I$ represents $i$. Now $I$ represents $i$ if and only if $I$ contains all facts of $\widehat{D}_i$ but does not contain all facts of $\widehat{D}_j$ for all $j \neq i$. One can check if $\widehat{D}_i \subseteq I$ (and thus if $\widehat{D}_j \not\subseteq I$ for $j\neq i$) by checking the following:
            \begin{itemize}
                \item $I$ needs to contain a fact starting with instance identifier $i$ and segment identifier $0$; and
                \item whenever $I$ contains a fact with instance identifier $i$, segment identifier $j$ and next segment identifier $j' \neq \bot$, then $I$ contains a fact with instance identifier $i$ and segment identifier $j'$ ($j'$ will be $j+1$ by the definition of $F$).
            \end{itemize}
            \item Recall that the facts of $F$ contain up to $c$ original facts in succession. These can be recovered by a union of $c$ projections, under omitting the $\bot$ entries.\qedhere
        \end{enumerate}
    \end{claimproof}
    
    Letting $\Phi$ and $\phi$ be as in \cref{cla:namelessviews}, the equality \cref{eq:namelessrep} implies that
    \[
        \Pr_{I \sim \pdb[I]}\big( \Phi(I) = D_i \under I \models \phi \big) = p_i = P\big(\set{D_i}\big)
    \]
    for all $i \in \NN$ and hence, $\pdb \in \FOTIFO$. This concludes the proof of \cref{lem:condition}.
\end{proof}

As already mentioned, due to \cref{thm:conditional-views} we directly obtain the following from \cref{lem:condition}.

\begin{theorem}\label{thm:condition}
    Let $\pdb = (\DD,P)$ be a PDB. If there exists $c \in \NN_+$ such that \eqref{eq:nameless} holds, then $\pdb \in \FOTI$.
\end{theorem}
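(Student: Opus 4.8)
The plan is to derive this theorem as an immediate consequence of the two results already established in the preceding sections: the sufficient criterion \cref{lem:condition}, which shows that the convergence condition \eqref{eq:nameless} alone forces membership in $\FO(\TI\under\FO) = \FOTIFO$, and the collapse \cref{thm:conditional-views}, which asserts $\FOTIFO = \FOTI$. Composing the two leaves essentially nothing further to do.

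Concretely, given the constant $c \in \NN_+$ that witnesses \eqref{eq:nameless}, I would first apply \cref{lem:condition} to obtain $\pdb \in \FOTIFO$, and then invoke \cref{thm:conditional-views} to replace $\FOTIFO$ by $\FOTI$, concluding $\pdb \in \FOTI$. That is the whole argument; it is two lines once the cited results are in place.

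Accordingly, there is no real obstacle at this step, since all the difficulty has already been absorbed into the two cited results. It lives in \cref{lem:condition} (constructing the $\TI$-PDB whose facts encode the instances $D_i$ in segments of $c$ facts each, tuning each marginal probability to $q_f = \big(p_{i(f)}/(1+p_{i(f)})\big)^{1/\widehat s_{i(f)}}$ so that the event ``$I$ represents $D_i$'' carries probability $Z \cdot p_i$, and noting that \eqref{eq:nameless} is exactly what is needed for the well-definedness sum $\sum_{f} q_f$ to converge so that \cref{thm:well-def-ti} applies), and in \cref{thm:conditional-views} (the copy-and-switch gadget that eliminates the $\FO$-conditioning). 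One could alternatively fuse the two constructions and prove \cref{thm:condition} directly, without routing through $\FOTIFO$, but treating \cref{thm:conditional-views} as a black box is the cleaner route and the one I would take.
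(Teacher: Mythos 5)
Your proposal is correct and matches the paper exactly: the paper derives \cref{thm:condition} directly by combining \cref{lem:condition} (which gives membership in $\FOTIFO$ from \eqref{eq:nameless}) with the collapse $\FOTIFO = \FOTI$ of \cref{thm:conditional-views}. Nothing further is needed.
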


\Cref{thm:condition} easily yields a representability result for the class of PDBs of bounded instance size: We say that a PDB is of \emph{bounded instance size} if there exists some fixed bound $c$ such that all possible worlds have size at most $c$. (Note that carrying out the construction from the proof of \cref{lem:condition}, every instance of the size-bounded PDB can be encoded by a single fact.)

\begin{corollary}\label{cor:bounded-size}
    Every PDB of bounded instance size is in $\FOTI$.
\end{corollary}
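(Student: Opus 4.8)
The plan is to derive the corollary directly from \cref{thm:condition}: it suffices to check that a PDB of bounded instance size satisfies the convergence condition~\eqref{eq:nameless} for a suitable constant $c \in \NN_+$.

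First I would fix a PDB $\pdb = (\DD,P)$ together with a bound $b \in \NN_+$ such that $\size{D} \leq b$ for all possible worlds $D$. The natural guess is to take $c \coloneqq b$. For every $D \in \DD \setminus \set{\emptyset}$ we then have $1 \leq \size{D} \leq b$, so the exponent $\frac{c}{\size{D}} = \frac{b}{\size{D}}$ is at least $1$; since $P(\set{D}) \in [0,1]$, raising it to an exponent $\geq 1$ only decreases it, giving $P(\set{D})^{c/\size{D}} \leq P(\set{D})$. Combining this with $\size{D} \leq b$ yields
\[
    \sum_{D \in \DD \setminus \set{\emptyset}} \size{D} \cdot P\big(\set{D}\big)^{c/\size{D}}
        \;\leq\;
    b \cdot \sum_{D \in \DD \setminus \set{\emptyset}} P\big(\set{D}\big)
        \;\leq\;
    b \;<\; \infty\text,
\]
so~\eqref{eq:nameless} holds for $c = b$, and \cref{thm:condition} gives $\pdb \in \FOTI$.

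I do not anticipate a genuine obstacle here: once \cref{thm:condition} is available, the argument is a one-line estimate. The only minor points worth a remark are that the empty instance is already excluded from the sum in~\eqref{eq:nameless}, and the degenerate case $b = 0$, where $\DD \subseteq \set{\emptyset}$ and the PDB is trivially tuple-independent and hence already in $\FOTI$. Intuitively, with $c = b$ the construction in the proof of \cref{lem:condition} encodes each possible world in a single fact of the base $\TI$-PDB, matching the parenthetical remark preceding the corollary.
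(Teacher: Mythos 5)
Your argument is correct and matches the paper's own proof essentially verbatim: take $c$ to be the size bound, observe that the exponent $c/\size{D}\geq 1$ makes $P(\set{D})^{c/\size{D}}\leq P(\set{D})$, bound the sum by $c$, and invoke \cref{thm:condition}. The extra remarks on the empty instance and the degenerate case $b=0$ are harmless but not needed.
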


\begin{proof}
    Let $\pdb = (\DD, P)$ be a PDB and take $c$ to be the bound on its instance sizes. Then, by \cref{thm:condition}, $\pdb \in \FOTI$ since
    \[
        \sum_{D \in \DD \setminus \set{\emptyset}} \size{D} \cdot P\big(\set{D}\big)^\frac{c}{\size{D}}
            \leq 
        c \cdot \sum_{D \in \DD} P\big(\set{D}\big) 
            = 
        c 
            < 
        \infty\text.\qedhere
    \]
\end{proof}

\begin{remark}
    PDBs of bounded instance size are not to be confused with finite PDBs. In particular, they can have infinite domains. As an example, consider the PDB $\pdb = (\DD,P)$ over a schema consisting of a single unary relation symbol $R$ with $\DD = \set{D_1,D_2,\dots}$, where $D_n$ contains a single fact $R(n)$ and $P\left(\set{D_n}\right) =\frac{6}{n^2\pi^2}$. In this example, the instance size is bounded by $1$ although the domain and, in particular, the number of possible worlds is infinite.
\end{remark}

The following example shows that the condition from \cref{thm:condition} can also be applicable to some PDBs with unbounded instance size.

\begin{example}
    Let $x \coloneqq \sum_{i=1}^{\infty} 2^{-i^2}$. Then $0 < x < \sum_{i=1}^{\infty} 2^{-i} = 1$. Now define $\pdb = (\DD,P)$ with $\DD = \set{ D_1, D_2, \dots }$ with $\size{D_i} = i$ and let $P\big(\set{D_i}\big) = \frac{1}{x}\cdot 2^{-i^2}$. Since $\sum_{D\in\DD} P\big(\set{D}\big) = 1$, $\pdb$ is a PDB. Note that $\big(\frac{1}{x}\big)^{\alpha} \leq \frac{1}{x}$ for all $\alpha \in (0,1]$ because $\frac{1}{x} > 1$. Then
    \begin{equation*}
        \sum_{D \in \DD} \size{D} \cdot P\big(\set{D}\big)^{\frac{1}{\size{D}}}
        = \sum_{i=1}^{\infty} i \cdot \big( \tfrac{1}{x} \cdot 2^{-i^2} \big)^{\frac{1}{i}}
        \leq \tfrac{1}{x} \sum_{i=1}^{\infty} i \cdot 2^{-i} = \tfrac{2}{x} < \infty\text.
    \end{equation*}
    Therefore, $\pdb$ satisfies the condition of \cref{thm:condition} for $c = 1$, so $\pdb \in \FOTI$. Yet, $\pdb$ is of unbounded instance size.
\end{example}
Unfortunately, we have no full characterization of $\FOTI$ yet. We know that the condition from \cref{thm:condition} implies membership in $\FOTI$ and that membership in $\FOTI$ implies the finite moments property (\cref{pro:finite-moments-ti}). As we have seen in \cref{ex:no-rep-fin-moments}, the finite moments property does not imply membership in $\FOTI$. Our next example shows that membership in $\FOTI$ does not imply that the condition from \cref{thm:condition} is satisfied. That is, the converse of \cref{thm:condition} does not hold. In fact, even though every $\TI$-PDB is trivially in $\FOTI$, some $\TI$-PDBs violate the condition of the theorem:

\begin{example}\label{exa:ti-not-cond}
    Consider the $\TI$-PDB $\pdb[I] = (\II,P)$ with fact set $\facts( \pdb[I] ) = \set[\big]{ R(i) \with i \in \NN_+ }$ and marginal probabilities 
    \[
        p_i = \Pr_{ I \sim \pdb[I] }\big( R(i) \in I \big) = \frac{1}{i^2+1} \in (0,1)\text. 
    \]
    First, note that $\pdb[I]$ is indeed well-defined as $\sum_{ i = 1 }^{ \infty } \frac{1}{i^2+1} < \infty$. Let $Z \coloneqq \prod_{ i = 1 }^{ \infty } \big( 1 - p_i \big)$. Since $\sum_{ i = 1 }^{ \infty } p_i < \infty$, it holds that $0 < Z < 1$. By the definition of tuple-independence, for all $I \in \II$ it holds that
    \[
        P\big( \set{ I } \big) 
            =
        \prod_{ R(i) \in I } p_i \cdot \prod_{ R(i) \notin I } \big( 1 - p_i \big) 
            =
        Z \cdot \prod_{ R(i) \in I } \frac{ p_i }{ 1 - p_i }\text.
    \]
    Let $I \neq \emptyset$. Recall that the geometric mean of a finite set of numbers is at least the minimum of the numbers. Thus,
    \[
        P\big( \set{ I } \big)^{ \frac{1}{\size{I} } }
            =
        Z^{ \frac{1}{\size{I}} } \cdot \bigg( \prod_{ R(i) \in I } \frac{ p_i }{ 1 - p_i } \bigg)^{ \frac{1}{\size{I}} }
            \geq
        Z^{ \frac{1}{\size{I}} } \cdot \min_{ R(i) \in I } \frac{ p_i }{ 1 - p_i }
    \]
    As our sequence $(p_i)_{ i \in \NN_+ }$ is monotonically decreasing, it holds that $\min_{ R(i) \in I } p_i = p_{i(I)}$ where $i(I) = \max\set[\big]{ i \with R(i) \in I }$. Because the function $x \mapsto \frac{x}{1-x}$ is monotonically increasing on $(0,1)$, we get
    \[
        \min_{ R(i) \in I }\frac{ p_i }{ 1-p_i } 
            =
        \frac{ \min_{ R(i) \in I} p_i }{ 1 - \min_{ R(i) \in I } p_i }
            =
        \frac{ p_{i(I)} }{ 1 - p_{i(I)} }\text.
    \]
    Using $Z^{ \frac{ 1 }{ \size{I} } } \geq \min\set{1,Z}$, we conclude
    \[
        P\big( \set{I} \big)^{\frac{1}{\size{I}}} 
            \geq 
        Z^{\frac{1}{\size{I}}} \min_{R(i) \in I} \frac{p_i}{1 - p_i}
            \geq
        \min\set{1,Z} \cdot \frac{p_{i(I)}}{1-p_{i(I)}}.
    \]
    Partitioning the instances according to their value of $i(I)$, we get the following for all $c \in \NN_+$:
    \begin{align*}
        \sum_{ I \neq \emptyset } \size{ I } \cdot P\big( \set{ I } \big)^{ \frac{ c }{ \size{ I } } }
            &\geq
        \sum_{ I \neq \emptyset } P\big( \set{ I } \big)^{ \frac{c}{\size{I}} }\\
            &=
        \sum_{ i = 1 }^{ \infty } \mkern4mu \sum_{ I \with\mkern-2mu i(I) = i } \Big( P\big( \set{ I } \big)^{ \frac{ 1}{\size{I}}} \Big)^c\\
            &\geq
        \sum_{ i = 1 }^{ \infty } \mkern4mu \sum_{ I \with\mkern-2mu i(I) = i } \Big( \min\set{1,Z} \cdot \frac{ p_i }{ 1-p_i } \Big)^c\\
            &=
        \min\set{1,Z}^c \cdot \sum_{ i = 1 }^{ \infty } \Big( \frac{ p_i }{ 1 - p_i } \Big)^{ c } \cdot 2^{i-1}\text,
    \end{align*}
    because $\size[\big]{ I \in \II \with i(I) = i } = 2^{i-1}$. Note that $\frac{ p_i }{ 1-p_i } = \frac{1}{i^2}$, since $p_i = \frac{1}{i^2+1}$. Thus,
    \[
        \sum_{ I \neq \emptyset } \size{ I } \cdot P\big( \set{ I } \big)^{ \frac{ c }{ \size{ I } } }
            \geq
        \min\set{ 1, Z }^c \cdot \sum_{ i = 1 }^{ \infty } \Big( \frac{ p_i }{ 1 - p_i } \Big)^c \cdot 2^{i-1}
            =
        \frac{ \min\set{ 1, Z }^c }{ 2 } \cdot \sum_{ i = 1 }^{ \infty } \frac{ 2^i }{ i^{2c} }
            = \infty\text,
    \]
    since $2^i$ grows faster than any polynomial. This shows that for $\pdb[I]$ there exists no $c \in \NN_+$ for which \eqref{eq:nameless} holds, so \cref{thm:condition} is not applicable. Nevertheless, $\pdb[I]$ is in $\TI$, so in particular it is in $\FOTI$.
\end{example}

Thus, we still exhibit a proper gap between our conditions for containment in $\FOTI$.

\subsection{Block-Independent Disjoint Databases}\label{ssec:bid}

In this section, we prove that every $\BID$-PDB can be represented as an $\FO$ view over a $\TI$-PDB. First, note that this does not follow from the previous section, as there are $\BID$-PDBs that are not naturally tuple-independent and such that \cref{thm:condition} does not apply to them.

\begin{example}\label{exa:bid-not-cond}
    Consider the $\BID$-PDB $\pdb[I] = (\II,P)$ with blocks $B_1, B_2, \dots$ such that for all $i$, the block $B_i$ contains exactly the two facts $R_1(i)$ and $R_2(i)$, both with marginal probability $p_i = \frac{ 1 }{ 2(i^2+1) }$. As $\pdb[I]$ is $\BID$-PDB, it is also contained in $\FO( \BID )$. However, we now show that $\pdb[I]$ does not satisfy the condition from \cref{thm:condition}.
    
    Let $\pdb[J]=(\JJ,P')$ be the $\TI$-PDB from \cref{exa:ti-not-cond}. Recall that the facts of $\pdb[J]$ were of the shape $R(i)$ with $i \in \NN_+$. We say an instance $J$ of $\pdb[J]$ and an instance $I$ of $\pdb[I]$ are \emph{similar}, denoted $J \simeq I$ if for all $i \in \NN$ it holds that
    \[
        R(i) \in J
        \quad\Leftrightarrow\quad
        I \cap B_i = I \cap \set[\big]{ R_1(i), R_2(i) } \neq \emptyset\text.
    \]
    For every instance $J \in \JJ$, there are exactly $2^{ \size{ J } }$ instances $I \in \II$ with $J \simeq I$. Each such instance $I$ has probability $P\big( \set{ I } \big) = 2^{ - \size{ J } }P'\big( \set{ J } \big)$. Then it holds that
    \begin{align*}
        \sum_{ I \neq \emptyset } \size{ I } \cdot P\big( \set{ I } \big)^{ \frac{ c }{ \size{ I } } }
            &=
        \sum_{ J \neq \emptyset } \sum_{ I \with J \simeq I } \size{ J } \cdot P'\big( \set{ J } \big)^{ \frac{ c }{ \size{ J } } }\\
            &=
        \sum_{ J \neq \emptyset } \sum_{ I \with J \simeq I } \size{ J } \cdot \Big( 2^{ -\size{J} } \cdot P'\big( \set{ J } \big) \Big)^{ \frac{ c }{ \size{ J } } }\\
            &=
        \sum_{ J \neq \emptyset } 2^{ \size{ J } } \cdot \size{ J } \cdot 2^{ -c } \cdot P'\big( \set{ J } \big)^{ \frac{ c }{ \size{ J } } }\\
            &\geq
        2^{-c} \cdot \sum_{ J \neq \emptyset } \size{ J } \cdot P'\big( \set{ J } \big)^{ \frac{ c }{ \size{ J } } }\text.
        \end{align*}
    It follows from \cref{exa:ti-not-cond} that this sum diverges for all $c \in \NN_+$. Therefore, $\pdb[I]$ is a $\BID$-PDB for which there exists no $c \in \NN_+$ such that \eqref{eq:nameless} holds.
\end{example}

Again, we take the detour via conditional representations and show that every $\BID$-PDB can be represented as an $\FO$-view of an $\FO$-conditioned $\TI$-PDB.

\begin{lemma}\label{lem:bid}
     $\BID \subseteq \FO(\TI\under\FO)$.
\end{lemma}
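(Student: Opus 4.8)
The plan is to show $\pdb[I] \in \FO(\TI\under\FO)$ for a given $\BID$-PDB $\pdb[I]$ by exhibiting a $\TI$-PDB $\pdb[J]$, an $\FO$-sentence $\phi$ with $\Pr_{J \sim \pdb[J]}(J \models \phi) > 0$, and an $\FO$-view $\Phi$ such that $\Phi(\pdb[J] \under \phi) = \pdb[I]$. Since $\facts(\pdb[I])$ is countable, fix an enumeration $B_1, B_2, \dots$ of the blocks and, inside each $B_n$, an enumeration $f_{n,1}, f_{n,2}, \dots$ of the facts with positive marginal, with marginals $p_{n,j}$; put $\pi_n \coloneqq \sum_j p_{n,j} \leq 1$, so that $\sum_n \pi_n < \infty$ by \cref{thm:well-def-bid}. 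The $\TI$-PDB $\pdb[J]$ will contain, for every original fact $f_{n,j} = R_m(\tup a)$, a ``tagged copy'' $\widehat R_m(n, \tup a)$, where $\tup a$ is padded with a fresh symbol $\bot$ to the maximum arity $r$ of $\pdb[I]$; this tagging is unambiguous because the blocks partition $\facts(\pdb[I])$. The sentence $\phi$ will say that at most one tagged copy per block index is present, and $\Phi$ will recover $\pdb[I]$ by stripping the block tag.

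Next I would calibrate the marginals. In $\pdb[J]\under\phi$ the events ``the copy of $f_{n,j}$ is present'' become mutually exclusive within each block, and distinct blocks stay independent because $\phi$ is a conjunction of conditions, each referring only to one block index. A direct computation then shows that giving the copy of $f_{n,j}$ the marginal $q_{n,j} \coloneqq \frac{p_{n,j}}{1 - \pi_n + p_{n,j}}$ makes the conditioned per-block distribution assign probability $p_{n,j}$ to ``exactly the copy of $f_{n,j}$ is present'' and probability $1-\pi_n$ to ``no copy of a fact of $B_n$ is present'' --- i.e.\ exactly the law of block $B_n$ in $\pdb[I]$. For $\pdb[J]$ to be well-defined we need $\sum_{n,j} q_{n,j} < \infty$ (\cref{thm:well-def-ti}). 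Since $\sum_n \pi_n < \infty$ forces $\pi_n \to 0$, all but finitely many blocks have $\pi_n \leq \tfrac12$ and hence contribute $\sum_j q_{n,j} \leq 2\pi_n$, so those terms sum to a finite value; the finitely many blocks with $\tfrac12 < \pi_n < 1$ contribute the finite amount $\tfrac{\pi_n}{1-\pi_n}$ each. The only problematic case is $\pi_n = 1$, where the formula would force $q_{n,j} = 1$ or a divergent per-block sum. There are finitely many such ``full'' blocks, and for each of them I would instead single out one fact $f_{n,1}$ as the default, omit its copy from $\pdb[J]$, set $q_{n,j} \coloneqq \frac{p_{n,j}}{p_{n,1} + p_{n,j}}$ for $j \geq 2$ (so $\sum_{j\geq2} q_{n,j} \leq \tfrac{1-p_{n,1}}{p_{n,1}} < \infty$), and let ``no copy of a fact of $B_n$ present'' encode ``the block chose $f_{n,1}$''.

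It remains to write down $\phi$ and $\Phi$ and to verify the equality of PDBs. The sentence $\phi$ --- ``for each first coordinate $n$, at most one tagged fact has first coordinate $n$'' --- is a finite conjunction over pairs of relation symbols of first-order sentences, hence in $\FO$; and $\Phi$ recovers $R_m$ by $\Phi_m(\tup y) \coloneqq \exists n\, \widehat R_m(n, \tup y, \bot, \dots, \bot)$, together with finitely many hardcoded disjuncts that output the default facts of the full blocks whenever no tagged fact of the corresponding block is present, so $\Phi$ is an $\FO$-view. One then checks that $\Pr_{J\sim\pdb[J]}(J \models \phi) = \prod_n \Pr(J \models \phi_n) > 0$, because $\Pr(J \not\models \phi_n) \leq \tfrac12\bigl(\sum_j q_{n,j}\bigr)^2$ and $\sum_n \bigl(\sum_j q_{n,j}\bigr)^2 < \infty$ (as $\sum_{n,j} q_{n,j} < \infty$ makes $\sum_j q_{n,j} \to 0$); that $\Phi$ maps the positive-probability instances of $\pdb[J]\under\phi$ onto $\worlds(\pdb[I])$; and, finally, that since both $\pdb[I]$ and $\pdb[J]\under\phi$ factor as independent products over the blocks with matching per-block laws (by the calibration above), $\Phi(\pdb[J]\under\phi) = \pdb[I]$. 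Therefore $\pdb[I] \in \FO(\TI\under\FO)$.

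I expect the calibration of the marginals around ``full'' blocks ($\pi_n = 1$) to be the main obstacle: there the naive translation forces a marginal of $1$ or a divergent per-block probability sum, which would violate the convergence requirement of \cref{thm:well-def-ti}; the resolution rests on the observation that $\sum_n \pi_n < \infty$ leaves only finitely many such blocks, each of which can be handled by reserving a designated default outcome.
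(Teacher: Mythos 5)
Your proposal is correct and follows essentially the same route as the paper's proof: tag each fact with its block identifier, give the tagged copy the marginal $p/(r+p)$ where $r$ is the block's residual probability, condition on the $\FO$-sentence enforcing the intended block structure, and let the view project the identifier away, with the same per-block calibration argument. The only (harmless) deviation is your treatment of full blocks: you drop a designated default fact and re-normalize within the block, whereas the paper keeps all facts of a full block with marginals $p/(1+p)$ and strengthens the condition to \enquote{exactly one fact present} for the finitely many full blocks --- both calibrations yield the same conditioned per-block law.
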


The basic idea of the proof is as follows. We start with a $\BID$-PDB and forget about its block structure. To compensate, every fact is equipped with an identifier indicating to which block it belongs. The resulting PDB is then treated as a $\TI$-PDB. We define the condition to reject the instances that violate the intended block structure (using the block identifiers to find violations), and then the view can simply project out the block identifiers. The marginal probabilities are carefully chosen to guarantee that this process results in the same probability distribution as in the original PDB. 

\begin{proof}
    Let $\pdb[I] = (\II, P_{\pdb[I]})$ be a $\BID$-PDB with blocks $\set{B_1, B_2, \dots}$ such that block $B_i$ consists of the facts $\set{f_{i,1},f_{i,2},\dots}$. To simplify notation, assume that $B_i$ is countably infinite for all $i \in \NN_+$. This is without loss of generality, since we can add infinitely many artificial facts of marginal probability $0$ to every block of $\pdb$. For all $i,j \in \NN_+$, let $p_{i,j} \coloneqq \Pr_{ I \sim \pdb[I] }( f_{i,j}\in I)$. By \cref{thm:well-def-bid}, $\sum_{j=1}^{\infty}\sum_{i=1}^{\infty} p_{i,j}  < \infty$. 
    
    The probability to choose no fact from block $B_i$ is $r_i = 1-\sum_{j=1}^{\infty} p_{i,j}$, called the \emph{residual} (probability mass) in $B_i$. for all $\epsilon \in (0,1)$, there are only finitely many residuals $r_i$ with $r_i < \epsilon$ (otherwise, the sum over all fact probabilities in $\pdb[I]$ would diverge in contradiction to \cref{thm:well-def-bid}). We assume that the blocks of $\pdb[I]$ are indexed in increasing order of the $r_i$. That is, $r_i \leq r_j$ whenever $j \geq i$. Let $m$ be the non-negative integer with $r_i = 0$ if and only if $1\leq i \leq m$.
    
    We construct a conditional, tuple-independent representation of $\pdb[I]$ by altering every relation in the schema of $\pdb[I]$ to contain an additional \emph{block identifier} attribute. The facts of our new $\TI$-PDB $\pdb[J] = (\JJ, P_{\J})$ are the facts from $\pdb[I]$, augmented by the number of their block in the additional attribute. That is, for every fact $f = R(\tup a)$ from a block $B_i$, we construct a fact $\tilde f = R'( \tup a, i )$. Note that for $f \neq f'$ we have $\tilde f \neq \tilde f'$. We let $\widetilde{B}_i = \set{ \tilde f_{i,j} \with j \in \NN_+ }$. Then adding, respectively removing the block identifier $i$ in the facts yields a one-to-one correspondence between $B_i$ and $\widetilde{B}_i$.
    
    The marginal probabilities $q_{i,j} = \Pr_{ J \sim \pdb[J] }( f_{i,j} \in J)$ of $\tilde f_{i,j}$ in $\pdb[J]$ are defined in the following way:
    \[
        q_{i,j} \coloneqq
        \begin{dcases}
            \frac{p_{i,j}}{1+p_{i,j}}   & \text{if }r_i = 0\text{ and}\\
            \frac{p_{i,j}}{r_i+p_{i,j}} & \text{if }r_i > 0\text.
        \end{dcases}
    \]
    We show that these marginal probabilities indeed span a well-defined $\TI$-PDB. Consider $\tilde f_{i,j}$ with $i$ and $j$ arbitrary. Then,
    \begin{itemize}
        \item if $r_i > 0$, it holds that 
            \(\displaystyle q_{i,j} =  \frac{p_{i,j}}{r_i+p_{i,j}} \leq  \frac{p_{i,j}}{r_{i}} \leq  \frac{p_{i,j}}{r_{m+1}} \), and
        \item if $r_i = 0$, it holds that
            \(\displaystyle q_{i,j} =  \frac{p_{i,j}}{1+p_{i,j}} \leq  \frac{p_{i,j}}{1} \leq  \frac{p_{i,j}}{r_{m+1}} \).
    \end{itemize}
    Since
    \[
        \sum_{ i = 1 }^{ \infty } \sum_{ j = 1 }^{ \infty }{ q_{i,j} } 
            \leq 
        \sum_{ i = 1 }^{ \infty } \sum_{ j = 1 }^{ \infty }{\frac{p_{i,j}}{r_{m+1}}}
            =
        \frac{1}{r_{m+1}} \cdot \sum_{ i = 1 }^{ \infty } \sum_{ j = 1 }^{ \infty } p_{i,j}
            < 
        \infty
        \text,
    \]
    it follows that $\pdb[J]$ is well-defined. The instances of interest in $\pdb[J]$ are those that obey the block structure of $\pdb[I]$. The following claim asserts that this property is $\FO$-definable.
    
    \begin{claim}
        There exists an $\FO$-sentence $\phi$ such that for all $J \in \JJ$ it holds that $J \models \phi$ if and only if \begin{enumerate*} \item $J$ contains at most one fact with block identifier $i$ for all $i \in \NN_+$; and \item $J$ contains exactly one fact with block identifier $i$ for all $i \leq m$. \end{enumerate*}
    \end{claim}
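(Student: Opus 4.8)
The plan is to construct $\phi$ as a conjunction of two parts: one that enforces \enquote{at most one fact per block identifier} uniformly over all $i \in \NN_+$, and one that enforces \enquote{at least one fact} for each of the finitely many mandatory block identifiers $1,\dots,m$; together the two parts yield \enquote{exactly one} for $i \leq m$. Write $\tau'$ for the schema of $\pdb[J]$, so that every fact of an instance $J \in \JJ$ has the form $R'(\tup x, z)$ with $R' \in \tau'$ and $z$ the block identifier in the final coordinate.

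First I would take
\[
  \phi_{\leq 1} \coloneqq
    \bigwedge_{R' \in \tau'} \forall \tup x\,\forall \tup y\,\forall z\,\big( R'(\tup x, z) \wedge R'(\tup y, z) \to \tup x = \tup y \big)
    \;\wedge\;
    \bigwedge_{\substack{R', S' \in \tau'\\ R' \neq S'}} \forall \tup x\,\forall \tup y\,\forall z\,\neg\big( R'(\tup x, z) \wedge S'(\tup y, z) \big),
\]
where $\tup x = \tup y$ abbreviates the componentwise conjunction of equalities (with $\tup x,\tup y$ of the arity appropriate to each relation). Since $\tau$ is finite, this is an $\FO$-sentence, and $J \models \phi_{\leq 1}$ holds exactly when no two distinct facts of $J$ agree on their block identifier, i.e.\ when condition~(1) holds. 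The cross-relation clauses are included because a single block of $\pdb[I]$ may contain facts over several different relation symbols.

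Next, because $m$ is a fixed nonnegative integer, I can use $1,\dots,m$ as constants and set, for each $i \in \set{1,\dots,m}$,
\[
  \mathrm{occ}_i \coloneqq \bigvee_{R' \in \tau'} \exists \tup x\, R'(\tup x, i),
\]
which asserts that some fact of $J$ has block identifier $i$. Then $\phi \coloneqq \phi_{\leq 1} \wedge \bigwedge_{i=1}^{m} \mathrm{occ}_i$ (the empty conjunction being $\top$ when $m=0$). Correctness is a one-line check: $J \models \phi_{\leq 1}$ iff~(1) holds, and $J \models \mathrm{occ}_i$ iff at least one fact of $J$ has block identifier $i$, so in the presence of $\phi_{\leq 1}$ this means exactly one such fact, which is~(2). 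I would also briefly note that the active-domain semantics is harmless here: each constant $i \leq m$ lies in $\adom(\phi)$, hence in $\adom(J,\phi)$, so $\mathrm{occ}_i$ ranges over enough of the domain to detect the relevant facts, and the universally quantified clauses of $\phi_{\leq 1}$ coincide with their standard-semantics readings because every element occurring in a fact of $J$ already lies in $\adom(J)$.

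This statement is essentially bookkeeping, so no serious obstacle is anticipated. The two points that require a moment's care are that one must range over \emph{pairs} of distinct relation symbols of $\tau'$ (blocks need not be homogeneous in their relation symbol), and that \enquote{exactly one} can only be captured by naming the mandatory block identifiers as constants, which is legitimate precisely because there are only finitely many of them, namely those with zero residual.
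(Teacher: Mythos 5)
Your construction is correct and is essentially the paper's own proof: the paper also conjoins a universally quantified ``at most one fact per block identifier'' sentence with ``exactly one'' assertions for the finitely many identifiers $1,\dots,m$ used as constants (written there via $\exists^{\leq 1}$ and $\exists^{=1}$, which it notes are plain-$\FO$ expressible), stating the single-unary-relation case and remarking that it generalizes to arbitrary schemas. You merely spell out that generalization (the cross-relation clauses) and the harmlessness of active-domain semantics explicitly, which is fine.
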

    
    \begin{claimproof}
        For simplicity, assume that the schema of $\pdb[I]$ consists of a single, unary relation symbol $R$. Let
        \[
            \phi \coloneqq 
            \big( \forall\mkern-1mu j\mkern2mu \exists^{\leq 1} x \colon R'(x,j) \big) \wedge\mkern2mu
            \bigwedge_{ i = 1 }^{ m } \exists^{ = 1 }x \with R'(x,i)
        \]
        where $R'$ is the augmented version of $R$. Note that the $\exists^{\leq 1}$ and $\exists^{=1}$ quantifiers are expressible in plain $\FO$. The formula above can easily be generalized to arbitrary schemas.
    \end{claimproof}
    
    Let $\Phi$ be the view that projects the block identifier out of the facts of $\JJ$. The PDB $\pdb[J]$ together with the condition $\phi$ and the $\FO$-view $\Phi$ is our representation of $\pdb[I] = (\II, P_{\pdb[I]})$. It is left to show that the probability distribution we obtain this way is the same as in the original PDB. That is, we need to show that for all $I \in \II$ it holds that
    \[
    	\Pr_{ J \sim \pdb[J]} \big( \Phi(J) = I \under J \models \phi\big)
    	    =
    	P_{\pdb[I]}\big(\set{I}\big)\text.
    \]
    
    We first show that it suffices to check every block independently. Given an instance $I$ of $\pdb[I]$, we denote by $\rest{I}{i}$ the restriction of $I$ to $B_i$. For the instances $J$ of $\pdb[J]$, we similarly let $\rest{J}{i}$ denote the restriction of $J$ to $\widetilde{B}_i$. Since the blocks are independent in $\pdb[I]$, for all $I \in \II$ it holds that
    \[
        P_{\pdb[I]}\big(\set{I}\big) 
            = 
        \Pr_{I' \sim \pdb[I]} \big(\rest{I'}{i} = \rest{I}{i}\text{ for all }i \in \NN_+\big)
            = 
        \prod_{ i = 1 }^{ \infty } \Pr_{I' \sim \pdb[I]}\big( \rest{I'}{i} = \rest{I}{i} \big)
        \text.
    \]
    Let us now inspect the conditional representation. We denote by $\phi_{i}$ the condition that if $r_i > 0$, there is at most one element of $B_i$, and if $r_i = 0$, there is exactly one element of $B_i$. It holds that $J \sim \pdb[J]$ satisfies $\phi$ if and only if $J \models \phi_i$ for all $i \in \NN_+$.\footnote{Although $\phi_i$ is expressible as an $\FO$-sentence, this is not required here. We use the name $\phi_i$ and the notation $\models$ for convenience.}
    
    Let $I \in \II$ be an instance with positive probability in $\pdb[I]$. Note that for all $J \in \JJ$ with $J \not\models \phi$, the instance $\Phi(J)$ can not be an instance of positive probability in $\pdb[I]$ since it either contains two distinct facts from the same block, or does not contain a fact from one of the blocks with residual $0$. Thus, if $\Phi( J ) = I$ for any instance $I$ of positive probability in $\pdb[I]$, then $J \models \phi$. Since the facts and therefore also the blocks are independent in $\pdb[J]$, it holds that
    \begin{align*}
    	\Pr_{ J \sim \pdb[J] }\big(\Phi(J) = I \under J \models \phi\big)
    	    &=
    	\frac{\Pr_{ J \sim \pdb[J] }( \Phi(J) = I \text{ and } J \models \phi )}{\Pr_{J \sim \pdb[J]}(J \models \phi)}\\
            &=
        \frac{\Pr_{ J \sim \pdb[J] }(\rest{\Phi(J)}{i} = \rest{I}{i} \text{ for all }i \in \NN_+)}{\Pr_{J \sim \pdb[J]}(\rest{J}{i} \models \phi_{i} \text{ for all }i \in \NN_+)}\\
            &=
        \frac{ 
            \prod_{ i = 1 }^{ \infty }
            \Pr_{J \sim \pdb[J]}(\rest{\Phi(J)}{i} = \rest{I}{i})
        }{
            \prod_{ i = 1 }^{ \infty }
            \Pr_{J \sim \pdb[J]}(\rest{J}{i} \models \phi_{i})
        }
        \text.
    \end{align*}
    Therefore, in order to show that the probability distribution over the conditional representation is the same as the original PDB, it is enough to show that for all $I \in \II$ and every block $B_i$ it holds that
    \begin{equation}\label{eq:bid-probs}
        \frac{
            \Pr_{J \sim \pdb[J]}( \rest{\Phi(J)}{i} = \rest{I}{i} )
        }{
            \Pr_{J \sim \pdb[J]}( \rest{J}{i} \models \phi_{i} )
        } 
            =
        \Pr_{I' \sim \pdb[I]}(\rest{I'}{i} = \rest{I}{i})\text.
    \end{equation}
    
    Thus, fix an arbitrary instance $I \in \II$ with positive probability in $\pdb[I]$ and let $i \geq 1$. We denote $Z_i \coloneqq \prod_{ f_{i,j} \in B_i }\big( 1-q_{i,j}\big)$. Recall that $\widetilde{B}_i$ is identical to $B_i$ apart from the block identifier $i$ that has been appended to the facts. Note that in $\pdb[J]$, the probability of selecting only $\tilde f_{i,j}$ among the facts of $\widetilde{B}_i$ is 
    \[
        q_{ i,j } \cdot 
        \prod_{ \substack{ f_{i,k}\in B_i \\ k\neq j } }
        \big( 1-q_{i,k} \big)
            =
        \frac{q_{i,j}}{1-q_{i,j}}{Z_i}\text.
    \]
    We distinguish cases according to whether $B_i$ has a positive residual and, if so, whether $I$ contains a fact from $B_i$.
    
    \begin{enumerate}
        \item If $r_i=0$, then the probability of having only fact $\tilde f_{i,j}$ from $\widetilde{B}_i$ is
        \[
            \frac{q_{i,j}}{1-q_{i,j}} \cdot Z_i
                =
            \frac{p_{i,j}}{1+p_{i,j}} \cdot \Big(1-\frac{p_{i,j}}{1+p_{i,j}}\Big)^{-1} \cdot Z_i
                =
            p_{i,j}\cdot Z_i
            \text.
        \]
        Suppose that $f_{i,k}$ is the unique fact from $B_i$ in $I$. Then
        \[
            \frac{
                \Pr_{J \sim \pdb[J]}(\rest{\Phi(J)}{i} = \rest{I}{i})
            }{ 
                \Pr_{J \sim \pdb[J]}(\rest{J}{i} \models \phi_{i})
            } 
                =
            \frac{
                p_{i,k} \cdot Z_i
            }{ 
                \sum_{ j = 1 }^{ \infty }p_{i,j} \cdot Z_i
            }
                =
            \frac{p_{i,k}}{\sum_{j = 1}^{ \infty } p_{i,j}}
                =
            p_{i,k}
                =
            \Pr_{I' \sim \pdb[I]}\big(\rest{I'}{i} = \rest{I}{i}\big)\text.
        \]
        \item If $r_i > 0$, then the probability of having no fact from $\widetilde{B}_i$ in $\pdb[J]$ is $Z_i$, and the probability of having only fact $\tilde f_{i,j}$ from $\widetilde{B}_i$ is
        \[
            \frac{ q_{i,j} }{ 1-q_{i,j} } \cdot Z_i 
                =
            \frac{ p_{i,j} }{ r_i + p_{i,j} } \cdot \Big( 1 - \frac{p_{i,j}}{r_i+p_{i,j}} \Big)^{-1} \cdot Z_i 
                = 
            \frac{p_{i,j}}{r_i}\cdot Z_i\text.
        \]
        Therefore, the probability of satisfying the condition $\phi_i$ is:
        \[
            \Pr_{J \sim \pdb[J]}\big(\rest{J}{i} \models \phi_{i}\big)
                =
            Z_i + \sum_{ j = 1 }^{ \infty } \frac{ p_{i,j} }{ r_i } \cdot Z_i
                = 
            \frac{Z_i}{r_i} \cdot \Big(r_i+\sum_{j = 1}^{\infty} p_{i,j}\Big) 
                = 
            \frac{Z_i}{r_i}\text.
        \]
        We distinguish two subcases.
        \begin{enumerate}
            \item In case there exists a fact $f_{i,k}$ from $B_i$ in $I$, it holds that
                \[
                    \frac{\Pr_{J \sim \pdb[J]}(\rest{\Phi(J)}{i} = \rest{I}{i})}{\Pr_{J \sim \pdb[J]}(\rest{J}{i} \models \phi_{i})} 
                    =
                    \frac{\frac{p_{i,k}}{r_i}\cdot Z_i}{\frac{Z_i}{r_i}}
                    = 
                    p_{i,k} 
                    = 
                    \Pr_{I' \sim \pdb[I]}\big(\rest{I'}{i} = \rest{I}{i}\big)\text.
                \]
            \item In case no fact from $B_i$ appears in $I$, it holds that
                \[
                    \frac{\Pr_{J \sim \pdb[J]}(\rest{\Phi(J)}{i} = \rest{I}{i})}{\Pr_{J \sim \pdb[J]}(\rest{J}{i} \models \phi_{i})} 
                    =
                    \frac{Z_i}{\frac{Z_i}{r_i}}
                    =
                    r_i 
                    =
                    \Pr_{I' \sim \pdb[I]}\big(\rest{I'}{i} = \rest{I}{i}\big)\text.
                \]
            \end{enumerate}
        \end{enumerate}
        Since $i$ is arbitrary, this holds for all $i$. Together, we have verified \cref{eq:bid-probs}. Thus, $\pdb[I]$ is an $\FO$-view of an $\FO$-conditioned $\TI$-PDB, witnessed by $\pdb[J]$, $\Phi$ and $\phi$, concluding the proof of \cref{lem:bid}.
\end{proof}

From \cref{lem:bid} and \cref{thm:conditional-views} we get the desired representability result for $\BID$-PDBs.

\begin{theorem}\label{thm:bid}
    $\BID \subseteq \FOTI$.
\end{theorem}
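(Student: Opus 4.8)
The plan is to derive \cref{thm:bid} directly by composing the two results just proved: \cref{lem:bid} gives $\BID \subseteq \FO(\TI\under\FO)$, and \cref{thm:conditional-views} gives $\FO(\TI\under\FO) = \FOTI$. Chaining these yields $\BID \subseteq \FO(\TI\under\FO) = \FOTI$, which is exactly the claim, so no new construction is needed at this point; the work has been front-loaded into the two preceding statements.

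It is worth recalling where the actual effort sits, since the theorem itself is a one-line corollary. The first ingredient, \cref{lem:bid}, is proved by discarding the block partition of a $\BID$-PDB and regarding its facts---each tagged with a block identifier---as a $\TI$-PDB; an $\FO$-condition $\phi$ then rejects the instances that violate the intended block structure (two facts sharing an identifier, or a missing fact from a block of residual $0$), and the $\FO$-view projects the identifiers away. The delicate part there is calibrating the new marginals, using the residual masses $r_i$, so that conditioning on $\phi$ reproduces the block-wise distribution of the original PDB, together with a well-definedness check via \cref{thm:well-def-ti}. The second ingredient, \cref{thm:conditional-views}, is the genuinely hard one: it absorbs the $\FO$-conditioning into an unconditional $\FO$-view over a $\TI$-PDB by taking several independent copies of the $\TI$-PDB and installing a switching gadget to restore the probabilities.

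The only obstacle specific to \cref{thm:bid} is that this reduction really does require \cref{thm:conditional-views} and cannot be short-circuited: one cannot simply fold the condition into the view, since a sentence has only two outcomes (so composing it with the view collapses the sample space), and intersecting the view with the condition would empty the rejected worlds while retaining their probability mass. Hence the proof must invoke the nontrivial equality $\FO(\TI\under\FO) = \FOTI$ rather than attempt a more direct argument; everything else is bookkeeping.
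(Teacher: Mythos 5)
Your proposal is correct and coincides with the paper's own argument: the paper derives \cref{thm:bid} exactly by chaining \cref{lem:bid} ($\BID \subseteq \FO(\TI\under\FO)$) with \cref{thm:conditional-views} ($\FO(\TI\under\FO) = \FOTI$). Your account of where the real work lies (the block-identifier construction in \cref{lem:bid} and the copy-plus-switch gadget in \cref{thm:conditional-views}) also matches the paper.
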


\begin{remark}\label{rem:fobid}
    Since 
    \[
        \FO(\BID) \subseteq \FO(\FO(\TI)) = \FO(\TI) \subseteq \FO(\BID)\text, 
    \]
    this  yields $\FO(\BID) = \FO(\TI)$. This also entails that $\FO(\BID)$ (and $\BID$ in particular) inherits the finite moments property from $\FOTI$.

    Similarly to \cref{remark:fo-everywhere}, $\FO(\BID)$ is also closed under $\FO$-conditioning, regardless of whether the conditioning is done before or after applying the view. So in particular it holds that 
    \[
        \FO(\BID\under\FO)=\FO(\BID)\under\FO=\FO(\BID)\text.
    \]
    The first equality above stems from the fact that \[\FO(\BID) \subseteq \FO(\BID\under\FO) \subseteq \FO\big( \FO(\TI)\under\FO \big) = \FOTI = \FO(\BID)\text.\]
\end{remark}
\section{Seeking Logical Reasons}\label{sec:incomplete}

In the previous sections, we mainly investigated the question \emph{whether} a given PDB is representable in a particular way without much focus on the \enquote{\emph{why}}. The arguments concerned the probability distributions of probabilistic databases and, in particular, on convergence issues regarding the marginal probabilities of facts. For example, in \cref{sec:negative} we showed probabilistic databases violating the finite moments property can not be in $\FOTI$. In essence, this is an arithmetical reason for non-representability that only takes the probabilities of possible worlds into account. There could, however, be purely logical reasons for non-representability in $\FOTI$ that emerge when we forget about the probabilities and are just interested in the mere possible worlds. In this section, we aim to separate \enquote{arithmetical} arguments from \enquote{purely logical} ones. 

Discarding the probabilities, we are left with sets of possible worlds such as they occur in the study of \emph{incomplete databases} \cite{Imielinski+1984,Abiteboul+1991}. Given a probabilistic database $\pdb = ( \DD, P )$, we focus on $\worlds( \pdb )$, the set of instances $D \in \DD$ with positive probability $P\big( \set{D} \big)$. If $\classstyle{D}$ is a class of PDBs, then $\worlds(\classstyle{D}) \coloneqq \set[\big]{ \worlds(\pdb) \mid \pdb\in \classstyle{D}}$. We now investigate what we can conclude about the possible representations of a probabilistic database based solely on its set of possible worlds. In \cref{ssec:nonrep-logical}, we show how this point of view can be used to establish the non-representability of infinite PDBs. In \cref{ssec:fo-logical}, we discuss the implications for the class $\FOTI$.

\subsection{Proving Non-Representability}\label{ssec:nonrep-logical}

A tool that falls into the category of logical arguments is the monotonicity of certain classes of views: A view $V\colon \DD_1 \to \DD_2$ is called \emph{monotone} if, for any instances $D, D'$ in $\DD_1$, we have that $D \subseteq D'$ implies $V(D) \subseteq V(D')$. For example, $\UCQ$- and Datalog views are monotone~\cite{Abiteboul+1995}. In \cite[Proposition 2.17]{Suciu+2011} the monotonicity of $\UCQ$-views is exploited to show that some finite probabilistic databases can not be represented by a $\UCQ$-view over a finite $\TI$-PDB. This argument only needs to consider the sets of possible worlds and is oblivious to the concrete probability distributions. More precisely, they use the existence of a unique (inclusion-wise) maximal possible world in any finite $\TI$-PDB and infer that a $\UCQ$-view over such will also contain a maximal possible world:

\begin{proposition}[{\cite[proof of Proposition 2.17]{Suciu+2011}}]\label{pro:monotone-max-world}
    Let $\classstyle{V}$ be any class of monotone views. Then for every PDB $\pdb \in \classstyle{V}(\TIfin)$ there exists $D_{\mathrm{max}} \in \worlds(\pdb)$ such that $D \subseteq D_{\mathrm{max}}$ for all $D \in \worlds(\pdb)$.
\end{proposition}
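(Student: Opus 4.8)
The plan is to reproduce the argument of \cite[proof of Proposition 2.17]{Suciu+2011}: locate the inclusion-wise maximal possible world of the underlying finite $\TI$-PDB and transport it through the monotone view. So let $\pdb = V(\pdb[I])$ with $\pdb[I] \in \TIfin$ and $V \in \classstyle{V}$, and write $F \coloneqq \facts(\pdb[I])$. Since $\pdb[I]$ is finite, $F$ is a finite set of facts, and by the definition of $\facts(\?)$ every possible world of $\pdb[I]$ is a subset of $F$. I would take $D_{\mathrm{max}} \coloneqq V(F)$ as the candidate maximum world of $\pdb$.

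First I would check that $F \in \worlds(\pdb[I])$. Every fact $f \in F$ has positive marginal probability $p_f$ (as it occurs in some possible world), so by tuple-independence and finiteness of $F$,
\[
    \Pr_{ I \sim \pdb[I] }\big( F \subseteq I \big) = \prod_{ f \in F } p_f > 0 \text.
\]
Since this probability is positive, the event $\set{ I \with F \subseteq I }$ contains a possible world $I_0$; and as every possible world of $\pdb[I]$ is contained in $F$, we get $F \subseteq I_0 \subseteq F$, i.e.\ $I_0 = F$, so $F \in \worlds(\pdb[I])$. Consequently $D_{\mathrm{max}} = V(F)$ satisfies
\[
    P_{\pdb}\big( \set{ D_{\mathrm{max}} } \big)
        =
    P_{\pdb[I]}\big( \set{ I \with V(I) = D_{\mathrm{max}} } \big)
        \geq
    P_{\pdb[I]}\big( \set{ F } \big)
        >
    0 \text,
\]
so $D_{\mathrm{max}} \in \worlds(\pdb)$. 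For the domination property, let $D \in \worlds(\pdb)$ be arbitrary. Then $P_{\pdb[I]}\big( \set{ I \with V(I) = D } \big) = P_{\pdb}\big( \set{D} \big) > 0$, so there is a possible world $I$ of $\pdb[I]$ with $V(I) = D$; since $I \subseteq F$ and $V$ is monotone, we obtain $D = V(I) \subseteq V(F) = D_{\mathrm{max}}$, as required.

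There is essentially no obstacle in this argument: it rests only on the observation that a finite $\TI$-PDB has a (unique) inclusion-maximal possible world --- namely the set of all of its facts --- together with the definitional property that a view sends a positive-probability instance to a positive-probability instance. The only point requiring a little care is confirming that $\facts(\pdb[I])$ is itself a possible world, which follows immediately from tuple-independence and the finiteness of the fact set; everything else is routine.
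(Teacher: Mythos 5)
Your proof is correct and follows essentially the same route as the argument the paper relies on (via the cited proof in Suciu et al.): the unique maximal possible world of a finite $\TI$-PDB is its full fact set $\facts(\pdb[I])$, which has positive probability by tuple-independence, and monotonicity of the view transports this maximality to the image PDB. The verification that $\facts(\pdb[I])$ is itself a possible world and that preimages of possible worlds contain possible worlds is exactly the care the argument needs, and you handle both correctly.
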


It is then easy to construct probabilistic databases without a unique maximal possible world, and it directly follows that such PDBs cannot belong to $\UCQ\big( \TIfin \big)$. This proof does not directly carry over to the infinite setting though, as infinite $\TI$-PDBs in general may not have maximal possible worlds at all. However, a simple argument that works in both the finite and the infinite setting demonstrates how monotonicity can be used with respect to the set of possible worlds in order to conclude non-representability in general. For this, we start with some easy observations regarding the structure of the possible worlds of $\TI$-PDBs.

\begin{observation}\label{obs:inc+ti}
    Let $\pdb[I]$ be a $\TI$-PDB, and let $p_f$ denote the marginal probability of fact $f$. Then $\facts(I)$ is partitioned into the sets $\Fa \coloneqq \set{ f \with p_f = 1 }$, $\Fs \coloneqq \set{ f \with p_f \in (0,1) }$ and $\Fn \coloneqq \set{ f \with p_f = 0 }$ and it holds that $\worlds( \pdb[I] ) = \set[\big]{ \Fa \cup F \with F \subseteq \Fs \text{ and } \size{ F } < \infty }$.
\end{observation}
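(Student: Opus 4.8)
The plan is to treat the two assertions separately. That $\facts(\pdb[I])$ is partitioned into $\Fa$, $\Fs$, and $\Fn$ is immediate: every fact $f$ has a marginal probability $p_f\in[0,1]=\set{0}\cup(0,1)\cup\set{1}$, so these three sets partition $\facts(\pdb[I])$ by definition. The substantial part is the description of $\worlds(\pdb[I])$, and the key preliminary observation is that $\Fa$ is \emph{finite}: by \cref{thm:well-def-ti} we have $\sum_{f\in\facts(\pdb[I])}p_f<\infty$, and each $f\in\Fa$ contributes $1$ to this sum, so $\Fa$ is finite. Hence for every finite $F\subseteq\Fs$ the set $\Fa\cup F$ is a finite set of facts, i.e.\ a genuine instance, so the right-hand side of the claimed identity indeed consists of instances.

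For the inclusion $\worlds(\pdb[I])\subseteq\set[\big]{\Fa\cup F\with F\subseteq\Fs,\ \size{F}<\infty}$, I would take an arbitrary $D$ with $P(\set{D})>0$ and compare $P(\set{D})$ with single-fact marginals. If some $f\in D$ had $p_f=0$, then $P(\set{D})\leq\Pr_{I\sim\pdb[I]}(f\in I)=0$, a contradiction, so $D\cap\Fn=\emptyset$. If some $f\in\Fa$ were not in $D$, then $P(\set{D})\leq\Pr_{I\sim\pdb[I]}(f\notin I)=1-p_f=0$, again a contradiction, so $\Fa\subseteq D$. Setting $F\coloneqq D\setminus\Fa$, these two observations give $F\subseteq\Fs$, and $F$ is finite because $D$ is; thus $D=\Fa\cup F$ has the required form.

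For the reverse inclusion, I would fix a finite $F\subseteq\Fs$, put $W\coloneqq\Fa\cup F$, and show $P(\set{W})>0$. Enumerating $\facts(\pdb[I])\setminus W=\set{g_1,g_2,\dots}$, let $A_n\coloneqq\bigcap_{f\in W}\set{f\in I}\cap\bigcap_{i=1}^{n}\set{g_i\notin I}$. Since complements of independent events are independent, the mutual independence of the events $\set{f\in I}$ gives $P(A_n)=\prod_{f\in W}p_f\cdot\prod_{i=1}^{n}(1-p_{g_i})$; the $A_n$ decrease to an event that coincides with $\set{I=W}$ outside a $P$-null set (namely the instances using facts not in $\facts(\pdb[I])$), so continuity of $P$ from above yields $P(\set{W})=\prod_{f\in W}p_f\cdot\prod_{i\geq1}(1-p_{g_i})$. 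It remains to check positivity: $\prod_{f\in W}p_f>0$ because every fact of $W$ has positive marginal (those in $\Fa$ have $p_f=1$, those in $F$ have $p_f\in(0,1)$); and $\prod_{i\geq1}(1-p_{g_i})>0$ because no $g_i$ lies in $\Fa$ (else $g_i\in W$), so every $p_{g_i}<1$, while $\sum_i p_{g_i}\leq\sum_{f\in\facts(\pdb[I])}p_f<\infty$ by \cref{thm:well-def-ti}, and an infinite product $\prod_i(1-x_i)$ with all $x_i\in[0,1)$ converges to a positive limit exactly when $\sum_i x_i<\infty$. Hence $W\in\worlds(\pdb[I])$. The only genuinely delicate step is this last computation of $P(\set{W})$ as an infinite product together with the verification of its positivity, which is precisely where the convergence guarantee of \cref{thm:well-def-ti} enters; everything else is a routine comparison of events.
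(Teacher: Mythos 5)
Your proof is correct. The paper states this as an unproven observation, so there is no proof to compare against; your argument is the natural one, and the product formula $P\big(\set{W}\big) = \prod_{f\in W} p_f \cdot \prod_{g \in \facts(\pdb[I])\setminus W}(1-p_g)$ that you derive via independence of complements and continuity from above is precisely what the paper takes for granted elsewhere (e.g.\ in \cref{exa:ti-not-cond} and in the proof of \cref{lem:condition}, where positivity of such infinite products is likewise justified by $\sum_f p_f<\infty$). The two points that genuinely need care---finiteness of $\Fa$ and positivity of $\prod_i(1-p_{g_i})$, both resting on \cref{thm:well-def-ti}---are handled correctly, as is the null-set issue for instances containing facts outside $\facts(\pdb[I])$, which also implicitly guarantees that $\Fa\cup F$ really occurs as a positive-probability element of the sample space.
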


We next inspect the effect of applying views over PDBs and observe that stripping PDBs of their probabilities commutes with applying views. That is, the diagram in \cref{fig:cd1} commutes.

\begin{figure}[hb]
    \centering%
    \hfill%
    \begin{subfigure}[t]{.45\textwidth}\centering%
        \begin{tikzcd}
            \pdb \arrow[r,"V"] \arrow[d,"\worlds"'] & V(\pdb) \arrow[d,"\worlds"]\\
            \worlds(\pdb) \arrow[r,"V"] & \DD'
        \end{tikzcd}
        \caption{Discarding probabilities in PDBs commutes with applying views.}\label{fig:cd1}
    \end{subfigure}
    \hfill%
    \begin{subfigure}[t]{.45\textwidth}\centering%
        \begin{tikzcd}
            \classstyle{D} = \set[\big]{ \pdb_i^{} }_i \arrow[r,"\classstyle{V}"] \arrow[d,"\worlds"'] & \classstyle{V}(\classstyle{D}) \arrow[d,"\worlds"]\\
            \worlds(\classstyle{D}) \arrow[r,"\classstyle{V}"] & \set[\big]{ \DD_i' }_{i}
        \end{tikzcd}
        \caption{Discarding probabilities in classes of PDBs commutes with a applying views from a class.}\label{fig:cd2}
    \end{subfigure}
    \hfill\mbox{}%
    \caption{Commutative diagrams illustrating \cref{obs:commutes} and \cref{pro:prob_to_incomplete}.}
\end{figure}

\begin{observation}\label{obs:commutes}
    Let $\pdb = ( \DD_1, P )$ be a PDB, let $\DD_2$ be a set of database instances, and let $V$ be a view $V \from \DD_1 \to \DD_2$. Then, it holds that
    $V\big( \worlds( \pdb ) \big) = \worlds\big( V( \pdb ) \big)$.
\end{observation}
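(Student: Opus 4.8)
The plan is to unwind the definitions on both sides of the claimed equality and observe that the statement reduces to the fact that a countable sum of non-negative reals is positive if and only if at least one summand is positive. Recall that $\worlds(\pdb)$ consists of those $D \in \DD_1$ with $P(\set{D}) > 0$, that $V(\pdb) = (\DD_2, P')$ with $P'(\set{D'}) = P(\set{D \in \DD_1 \with V(D) = D'})$, and hence that $\worlds(V(\pdb))$ consists of those $D' \in \DD_2$ with $P(\set{D \in \DD_1 \with V(D) = D'}) > 0$. So I want to show $\set{V(D) \with D \in \DD_1, P(\set{D}) > 0} = \set{D' \in \DD_2 \with P(\set{D \in \DD_1 \with V(D) = D'}) > 0}$.

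First I would prove the inclusion $V(\worlds(\pdb)) \subseteq \worlds(V(\pdb))$: if $D' = V(D)$ with $P(\set{D}) > 0$, then $D$ lies in the event $\set{D'' \in \DD_1 \with V(D'') = D'}$, so by monotonicity of $P$ this event has probability at least $P(\set{D}) > 0$, which means $D' \in \worlds(V(\pdb))$. This step uses nothing beyond monotonicity of a measure.

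For the converse inclusion $\worlds(V(\pdb)) \subseteq V(\worlds(\pdb))$, I would use that $\pdb$ is a discrete probability space: if $D' \in \worlds(V(\pdb))$, then $0 < P(\set{D \in \DD_1 \with V(D) = D'}) = \sum_{D \in \DD_1 \with V(D) = D'} P(\set{D})$, so some summand must be positive, i.e.\ there is $D \in \DD_1$ with $V(D) = D'$ and $P(\set{D}) > 0$; then $D \in \worlds(\pdb)$ and $D' = V(D) \in V(\worlds(\pdb))$. Combining the two inclusions gives the claimed equality. There is no real obstacle here — the only thing one has to be slightly careful about is invoking discreteness (countability of $\DD_1$) so that the event $\set{D \in \DD_1 \with V(D) = D'}$ is a countable disjoint union of singletons whose measure is the corresponding sum; this is exactly the defining property of a discrete probability space recalled in \cref{sec:preliminaries}.
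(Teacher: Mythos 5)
Your proof is correct and follows essentially the same two-inclusion argument as the paper: the forward direction by monotonicity of the measure applied to the preimage event, and the backward direction by noting that a positive probability of the preimage forces some singleton summand to be positive. The only difference is that you spell out the countable-sum/discreteness step that the paper leaves implicit, which is a fine level of detail but not a different approach.
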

\begin{proof}
    If $D_2 \in V\big( \worlds( \pdb ) \big)$, then there exists $D_1 \in \worlds( \pdb )$ such that $V( D_1 ) = D_2$. As $P\big( \set{ D_1 } \big) > 0$, this implies that $P_{ V(\pdb) }\big( \set{ D_2 } \big) > 0$. Thus, $V\big( \worlds(\pdb) \big) \subseteq \worlds\big( V( \pdb) \big)$.
    
    If $D_2 \in \worlds\big( V( \pdb ) \big)$, then $P_{\pdb}\big( V^{-1}\big( \set{ D_2 } \big) \big) = P_{V(\pdb)}\big( \set{D_2} \big) > 0$. Thus, there exists an instance $D_1 \in V^{-1}\big( \set{D_2} \big)$ with $P_{\pdb}( D_1 ) > 0$. In particular, $D_2 \in V\big( \worlds( \pdb ) \big)$, which shows $\worlds\big( V( \pdb ) \big) \subseteq V\big( \worlds( \pdb ) \big)$.
\end{proof}

\Cref{obs:commutes} directly implies the following.

\begin{proposition}\label{pro:prob_to_incomplete}
    Let $\classstyle{D}$ be a class of PDBs and let $\classstyle{V}$ be a class of views. Then, it holds that $\classstyle{V}\big( \worlds( \classstyle{D} ) \big) = \worlds\big( \classstyle{V}( \classstyle{D} ) \big)$. In particular, $\worlds( \pdb ) \in \classstyle{V}\big( \worlds(\classstyle{D}) \big)$ holds for all $ \pdb \in \classstyle{V}( \classstyle{D} )$.
\end{proposition}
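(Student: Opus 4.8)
The plan is simply to unfold the definitions of the two classes appearing in the claimed equality and to invoke \cref{obs:commutes} pointwise. By definition, $\classstyle{V}\big(\worlds(\classstyle{D})\big)$ is the collection of all sets of instances of the form $V\big(\worlds(\pdb)\big)$, where $V$ ranges over $\classstyle{V}$ and $\pdb$ over $\classstyle{D}$; likewise, $\worlds\big(\classstyle{V}(\classstyle{D})\big)$ is the collection of all sets $\worlds\big(V(\pdb)\big)$ for $V \in \classstyle{V}$ and $\pdb \in \classstyle{D}$. Now \cref{obs:commutes} asserts that $V\big(\worlds(\pdb)\big) = \worlds\big(V(\pdb)\big)$ for every individual pair $(V,\pdb)$, so the two collections consist of exactly the same sets, which is the desired equality.

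For the \enquote{in particular} part, I would take an arbitrary $\pdb \in \classstyle{V}(\classstyle{D})$ and write it as $\pdb = V(\pdb')$ for some $V \in \classstyle{V}$ and some $\pdb' \in \classstyle{D}$. Applying \cref{obs:commutes} to this pair gives $\worlds(\pdb) = \worlds\big(V(\pdb')\big) = V\big(\worlds(\pdb')\big)$, and since $\worlds(\pdb') \in \worlds(\classstyle{D})$ by definition, this exhibits $\worlds(\pdb)$ as the image under a view from $\classstyle{V}$ of a member of $\worlds(\classstyle{D})$, i.e.\ $\worlds(\pdb) \in \classstyle{V}\big(\worlds(\classstyle{D})\big)$. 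Alternatively, this is immediate from the set equality already established.

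There is no real obstacle here: the entire substance is contained in \cref{obs:commutes}, and what remains is pure bookkeeping. The only point requiring a moment's care is to keep the two \enquote{layers} of the construction straight, namely the distinction between applying the class $\classstyle{V}$ of views to the class $\worlds(\classstyle{D})$ of incomplete databases on the one hand, and applying it to the class $\classstyle{D}$ of probabilistic databases (and only afterwards discarding probabilities) on the other.
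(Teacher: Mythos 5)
Your proposal is correct and matches the paper's own reasoning: the paper states that \cref{obs:commutes} directly implies \cref{pro:prob_to_incomplete}, which is exactly your pointwise application of the observation followed by the same definitional bookkeeping for both the class-level equality and the \enquote{in particular} part.
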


The first part of \cref{pro:prob_to_incomplete} states that the diagram in \cref{fig:cd2} commutes. The latter part of the proposition can be understood as a purely logical necessary condition for representability: if $\worlds(\pdb) \not\in \classstyle{V}\big(\worlds(\classstyle{D})\big)$, then it follows immediately that $\pdb \not\in \classstyle{V}(\classstyle{D})$. We now build on \cref{pro:prob_to_incomplete} and \cref{obs:inc+ti}, and demonstrate how monotonicity can be used to conclude non-representability in the setting of infinite PDBs. 

\begin{proposition}\label{pro:mon+ti+disjoint}
    Let $\classstyle{V}$ be a class of monotone views, and let $\pdb$ be a PDB such that there exist two facts $f_1$ and $f_2$ of positive marginal probability that are mutually exclusive (that is, $\Pr_{ D \sim \pdb }( f_1, f_2 \in D ) = 0$). Then $\pdb \notin \classstyle{V}(\TI)$.
\end{proposition}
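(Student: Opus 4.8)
The plan is to argue by contradiction. Suppose $\pdb \in \classstyle{V}(\TI)$, say $\pdb = V(\pdb[I])$ for some monotone view $V \in \classstyle{V}$ and some $\TI$-PDB $\pdb[I]$. By \cref{pro:prob_to_incomplete} (or directly \cref{obs:commutes}), we then have $\worlds(\pdb) = V\big(\worlds(\pdb[I])\big)$, so the entire argument can be carried out at the level of possible worlds, ignoring the actual probability values.

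First I would use the hypothesis that $f_1$ and $f_2$ have positive marginal probability in $\pdb$ to obtain possible worlds $D_1, D_2 \in \worlds(\pdb)$ with $f_1 \in D_1$ and $f_2 \in D_2$. Pulling these back along $V$, there exist $I_1, I_2 \in \worlds(\pdb[I])$ with $V(I_1) = D_1$ and $V(I_2) = D_2$. The key structural input is \cref{obs:inc+ti}: every possible world of a $\TI$-PDB has the form $\Fa \cup F$ with $F \subseteq \Fs$ finite, so $\worlds(\pdb[I])$ is closed under finite unions; in particular $I_1 \cup I_2 \in \worlds(\pdb[I])$.

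Now set $D^* \coloneqq V(I_1 \cup I_2)$. Since $V$ is monotone and $I_1, I_2 \subseteq I_1 \cup I_2$, we get $D_1 = V(I_1) \subseteq D^*$ and $D_2 = V(I_2) \subseteq D^*$, hence $f_1 \in D^*$ and $f_2 \in D^*$. Moreover $D^* \in V\big(\worlds(\pdb[I])\big) = \worlds(\pdb)$, so $D^*$ has positive probability in $\pdb$. Therefore $\Pr_{D \sim \pdb}\big( f_1 \in D, f_2 \in D \big) \geq \Pr_{D\sim\pdb}(D = D^*) > 0$, contradicting the assumed mutual exclusivity of $f_1$ and $f_2$.

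I do not expect a genuine obstacle here; the proof is a short assembly of existing tools. The only points that need a little care are the appeal to \cref{pro:prob_to_incomplete} to see that passing to possible worlds commutes with applying $V$, and the use of \cref{obs:inc+ti} to observe that the union of two possible worlds of a $\TI$-PDB is again a possible world — this is precisely where tuple-independence (and not merely monotonicity of the view) enters the argument.
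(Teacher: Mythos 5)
Your proof is correct and follows essentially the same route as the paper: pass to possible worlds via \cref{pro:prob_to_incomplete}, pull back $D_1,D_2$ to preimages $I_1,I_2$, use \cref{obs:inc+ti} to see that $I_1\cup I_2$ is again a possible world of the $\TI$-PDB, and apply monotonicity of $V$ to obtain a positive-probability world containing both $f_1$ and $f_2$, contradicting mutual exclusivity. The only (harmless) difference is that you spell out explicitly that this world's positive probability forces $\Pr_{D\sim\pdb}(f_1,f_2\in D)>0$, which the paper leaves implicit.
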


\begin{proof}
    Assume by contradiction that $\pdb \in \classstyle{V}( \TI )$ and let $D_1, D_2 \in \worlds( \pdb )$ with $f_1 \in D_1$ and $f_2 \in D_2$. By \cref{pro:prob_to_incomplete}, we know that $\worlds( \pdb ) \in \classstyle{V}\big(\worlds( \TI ) \big)$. Hence, there exists a set of instances $\II \in \worlds( \TI )$ and a view $V \in \classstyle{V}$ such that $V( \II ) = \worlds( \pdb )$. Thus, let $I_1, I_2 \in \II$ such that $V( I_1 ) = D_1$ and $V( I_2 ) = D_2$. By \cref{obs:inc+ti}, we also have $I_1 \cup I_2 \in \II$ and, thus,
    $V( I_1 \cup I_2 ) \in \worlds( \pdb )$. Since $V$ is monotone, it holds that
    \[
        V\big( I_1 \cup I_2 \big) \supseteq V( I_1 ) \cup V( I_2 ) = D_1 \cup D_2 \supseteq \set[\big]{ f_1, f_2 }\text,
    \]
    in contradiction to $f_1$ and $f_2$ being mutual exclusive.
\end{proof}

\Cref{pro:mon+ti+disjoint} immediately yields that $\UCQ(\TI)$ does not contain any $\BID$-PDBs that are not already tuple-independent.

\subsection{First-Order Views}\label{ssec:fo-logical}

\Cref{pro:mon+ti+disjoint} demonstrates that it is possible to determine non-representability of an infinite PDB based on purely logical reasons. However, in the following, we show that there are no logical reasons to show that a PDB is not in $\FOTI$. More precisely, for any collection of possible worlds (over the same schema), we can attach probabilities to the worlds in such a way that we end up with a PDB in $\FOTI$.

\begin{lemma}\label{lem:no-logical-reason}
    Let $\DD$ be any set of database instances over the same schema. Then there exists a PDB $\pdb \in \FOTI$ such that $\worlds(\pdb) = \DD$.
\end{lemma}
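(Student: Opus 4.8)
The plan is to reduce the statement to the sufficient condition for membership in $\FOTI$ established in \cref{thm:condition}: it suffices to equip $\DD$ with a probability distribution $P$ that assigns \emph{positive} probability to every instance and for which the convergence condition~\eqref{eq:nameless} holds for some constant $c \in \NN_+$. Indeed, \cref{thm:condition} then gives $\pdb \coloneqq (\DD, P) \in \FOTI$, and $\worlds(\pdb) = \DD$ holds because no instance receives probability $0$. I would use $c = 1$ and choose $P$ so that the probability of each instance decays fast enough relative to its size.

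We may assume $\DD \neq \emptyset$. Since $\UU$ is countably infinite and the schema is finite, there are only countably many facts and hence only countably many finite sets of facts, so $\DD$ is at most countable; fix an enumeration $\DD = \set{D_1, D_2, \dots}$ (finite or infinite) and set $s_i \coloneqq \size{D_i}$. Define unnormalized weights by $w_i \coloneqq 2^{-i}$ if $s_i = 0$ and $w_i \coloneqq (2^{-i}/s_i)^{s_i}$ if $s_i \geq 1$. Then $w_i > 0$ and $w_i \leq 2^{-i}$ for all $i$, so $W \coloneqq \sum_i w_i \in (0,1]$ is finite. Put $p_i \coloneqq w_i / W$ and $P\big(\set{D_i}\big) \coloneqq p_i$; this is a probability distribution that gives every instance positive probability, so $\pdb \coloneqq (\DD, P)$ satisfies $\worlds(\pdb) = \DD$. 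It remains to check~\eqref{eq:nameless} for $c = 1$. For $s_i \geq 1$ we have $w_i^{1/s_i} = 2^{-i}/s_i$, hence $s_i \cdot w_i^{1/s_i} = 2^{-i}$, and moreover $W^{-1/s_i} \leq W^{-1}$ because $0 < W \leq 1$ and $1/s_i \leq 1$. Therefore
\begin{align*}
    \sum_{D \in \DD \setminus \set{\emptyset}} \size{D} \cdot P\big(\set{D}\big)^{1/\size{D}}
        &= \sum_{i\,:\,s_i \geq 1} s_i \cdot \big(w_i/W\big)^{1/s_i}
        = \sum_{i\,:\,s_i \geq 1} s_i \cdot W^{-1/s_i} \cdot w_i^{1/s_i}\\
        &\leq W^{-1} \cdot \sum_{i\,:\,s_i \geq 1} 2^{-i}
        \leq W^{-1} < \infty,
\end{align*}
so \cref{thm:condition} applies and $\pdb \in \FOTI$, which proves the lemma.

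I do not expect a genuine obstacle here, since the substantive work is done by \cref{thm:condition} and what remains is merely a choice of probabilities. The only point that requires a little care is that normalizing the weights into an honest probability distribution must not spoil the convergence of~\eqref{eq:nameless}; this is why I first pick the weights $w_i$ and only afterwards rescale by $W$, noting that the rescaling factor $W^{-1/\size{D}}$ applied to $P(\set{D})^{1/\size{D}}$ stays bounded by the constant $W^{-1}$ as $\size{D}$ ranges over the positive integers. The reason the series converges no matter how quickly the instance sizes grow along the enumeration is that putting the factor $1/s_i$ \emph{inside} the exponent makes $s_i \cdot w_i^{1/s_i}$ collapse to the summable $2^{-i}$, cancelling the $\size{D}$ in front of the summand in~\eqref{eq:nameless}.
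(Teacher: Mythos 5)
Your proof is correct and follows essentially the same route as the paper: the paper also assigns the weights $\big(2^{-i}\cdot\size{D_i}^{-1}\big)^{\size{D_i}}$ to nonempty instances, normalizes, observes that the normalization factor raised to the power $1/\size{D_i}$ stays bounded by a constant, and then invokes \cref{thm:condition} with $c=1$. The only (immaterial) difference is the weight given to the empty instance and the exact bound on the normalizing constant.
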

\begin{proof}
    Let $\DD = \set[\big]{ D_1, D_2, \dots }$ and define
    \[
        z_i \coloneqq 
        \begin{cases}
            \big( 2^{-i} \cdot \size{D_i}^{-1} \big)^{ \size{ D_i } }   & \text{if }D_i \neq \emptyset\text{ and}\\
            1   &\text{otherwise.}
        \end{cases}
    \]
    Note that $z_i \in (0,1]$. It holds that
    \[
        0 
            < 
        \sum_{ i = 1 }^{ \infty } z_i 
            \leq 
        1 + \sum_{ i, D_i \neq \emptyset } \Big(\frac{ 1 }{ 2^i \cdot \size{ D_i } }\Big)^{ \size{D_i} }
            \leq
        1 + \sum_{ i, D_i \neq \emptyset } \frac{ 1 }{ 2^i \cdot \size{ D_i } }
            \leq
        1 + \sum_{ i, D_i \neq \emptyset } \frac{ 1 }{ 2^i }
            \leq 
        2 \text.
    \]
    Thus, $\frac{1}{Z} \in \big[ \frac12, \infty \big)$ for $Z \coloneqq \sum_{ i = 1 }^{ \infty } z_i$. We now define probability distribution $P \from \DD \to [0,1]$ by letting $P\big( \set{ D_i } \big) \coloneqq \frac{ z_i }{ Z }$ for all $i \in \NN_+$. Then it holds that
    \[
        \sum_{ D \in \DD \setminus \emptyset } \size{ D } \cdot P\big( \set{ D } \big)^{ 1 / \size{ D } }
            = 
        \sum_{ i, D_i \neq \emptyset } \size{ D_i }\cdot \Big( \frac{ z_i }{ Z } \Big)^{ 1 / \size{D_i} }
            =
        \sum_{ i, D_i \neq \emptyset } \frac{1}{2^i} \cdot \Big(\frac{1}{Z}\Big)^{1/\size{D_i} } 
            \leq
        \sum_{ i = 1 }^{ \infty } \frac{1}{2^i} \cdot \max\set[\big]{ 1, \tfrac{1}{Z} }
            < 
        \infty\text.
    \]
    This means that $\pdb = (\DD,P)$ satisfies the condition from \cref{thm:condition}, and thus, $\pdb \in \FOTI$. 
\end{proof}

\Cref{lem:no-logical-reason} shows that in order to prove that a PDB is not in $\FOTI$, it is not enough to only consider the set of its possible worlds. Any such proof \emph{must} take probabilities of the possible worlds into account.

Similar to the terminology for PDBs, we call a set $\DD$ of database instances \emph{of unbounded instance size}, if for all $n \in \NN$ there exists some $D \in \DD$ with $\size{D} \geq n$. Recall that we already know from \cref{cor:bounded-size}, that if the underlying set of possible worlds of a PDB is of bounded instance size, then the PDB is in $\FOTI$, regardless of the probabilities. \Cref{lem:no-logical-reason} says that for any tentative set of possible worlds of unbounded instance size, we can always find a probability distribution such that the resulting PDB is in $\FOTI$. The following result, however, states that in this setting we can also always find probability distributions such that the resulting PDB will \emph{not} be in $\FOTI.$

\begin{lemma}\label{lem:unbounded-to-expected}
    Let\/ $\DD$ be a set of instances over the same schema such that\/ $\DD$ is of unbounded instance size. Then there exists a PDB $\pdb$ with $\Expectation\big( \size{ \? } \big) = \infty$ such that $\worlds( \pdb ) = \DD$.
\end{lemma}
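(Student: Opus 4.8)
The plan, in deliberate contrast to \cref{lem:no-logical-reason} (where the probabilities were chosen to decay fast enough to control the size), is to do the opposite: concentrate enough probability mass on a rapidly growing subsequence of instance sizes so that the expected instance size provably diverges, while still giving every instance of $\DD$ a strictly positive probability so that $\worlds(\pdb) = \DD$.

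First I would record two easy structural facts. Since every $D \in \DD$ is a finite set of $\tau$-facts, there are only countably many $\tau$-facts over the countably infinite universe $\UU$, and the set of finite subsets of a countable set is countable, $\DD$ is countable; and it is infinite because it is of unbounded instance size. Fix an enumeration $\DD = \set{ D_1, D_2, \dots }$. Using unboundedness, I would recursively extract pairwise distinct instances $D^{(1)}, D^{(2)}, \dots \in \DD$ with $\size{ D^{(k)} } \geq 2^k$: having chosen $D^{(1)}, \dots, D^{(k-1)}$, pick any $D^{(k)} \in \DD$ with $\size{ D^{(k)} } > \max\set{ 2^k, \size{D^{(1)}}, \dots, \size{D^{(k-1)}} }$, which exists since $\DD$ is of unbounded instance size, and whose strict size increase forces $D^{(k)}$ to differ from all previously chosen instances. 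Then I would define $P$ by normalizing the sum of a \emph{base part} (ensuring positivity) and an \emph{inflating part} (ensuring divergence): for $i \in \NN_+$ set $\gamma_i \coloneqq 2^{-k}$ if $D_i = D^{(k)}$ for the unique such $k$, and $\gamma_i \coloneqq 0$ otherwise, and put $P\big( \set{ D_i } \big) \coloneqq \tfrac12 \big( 2^{-i} + \gamma_i \big)$.

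It then remains to verify the three required properties. Positivity $P(\set{D_i}) \geq \tfrac12 \cdot 2^{-i} > 0$ is immediate, so $\worlds(\pdb) = \DD$; and $\sum_i P(\set{D_i}) = \tfrac12\big( \sum_{i} 2^{-i} + \sum_k 2^{-k} \big) = \tfrac12(1+1) = 1$, so $\pdb = (\DD, P)$ is a (well-defined, discrete) PDB with $\worlds(\pdb) = \DD$. Finally, since $\size{\?}$ is non-negative I may discard all but the subsequence terms:
\[
    \Expectation_{\pdb}\big( \size{\?} \big)
        =
    \sum_{ i = 1 }^{ \infty } \size{ D_i } \cdot P\big( \set{ D_i } \big)
        \geq
    \sum_{ k = 1 }^{ \infty } \size{ D^{(k)} } \cdot P\big( \set{ D^{(k)} } \big)
        \geq
    \sum_{ k = 1 }^{ \infty } 2^k \cdot \tfrac{1}{2} \cdot 2^{-k}
        =
    \sum_{ k = 1 }^{ \infty } \tfrac12
        =
    \infty
    \text.
\]

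There is no genuinely hard step here: the only points requiring care are that the recursive extraction really yields \emph{distinct} instances (guaranteed by making each new instance strictly larger than all previous ones) and that inflating the mass of the subsequence by a summable amount does not spoil normalizability (it does not, as $\sum_k 2^{-k}$ converges). Combined with \cref{pro:finite-moments} — a PDB with infinite expected instance size cannot lie in $\FOTI$ — this shows that for any $\DD$ of unbounded instance size there is a choice of probabilities placing the resulting PDB outside $\FOTI$, which together with \cref{lem:no-logical-reason} is exactly the asymmetry we want to highlight.
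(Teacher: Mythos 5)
Your proof is correct and follows essentially the same strategy as the paper: extract a subsequence of pairwise distinct instances whose sizes grow fast enough (the paper uses sizes $\geq k$ with probabilities proportional to $1/k^2$, you use sizes $\geq 2^k$ with probabilities about $2^{-k}$), give that subsequence a summable share of the mass so that the expected size diverges, and spread the remaining mass positively over all of $\DD$ to ensure $\worlds(\pdb) = \DD$. The only difference is cosmetic parameterization; your explicit base term $\tfrac12 \cdot 2^{-i}$ just makes concrete the paper's remark that the leftover mass can be distributed positively.
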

\begin{proof}
    Since the size of instances in $\DD$ is unbounded, there exists an infinite sequence of non-empty instances $\big( D_{i_k} \big)_{k \in \NN_+}$ such that $\size[\big]{ D_{i_k} }$ is strictly increasing and, in particular, $\size[\big]{ D_{i_k} } \geq k$. For all $k \in \NN_+$, we define 
    \[
        P\big( \set[\big]{ D_{i_k} } \big) 
            \coloneqq 
        \frac{3}{\pi^2 \cdot k^2}\text.
    \]
    Then $\sum_{ k = 1 }^{ \infty } P\big( \set[\big]{ D_{i_k} } \big) = \frac12$. We assign positive probabilities to the instances in $\DD \setminus \set[\big]{ D_{ i_k } \with k \in \NN_+ }$ such that they also add up to $\frac12$. Then $P \from \DD \to [0,1]$ is a probability distribution and $\pdb = ( \DD, P )$ is a probabilistic database. However, the expected instance size of $\pdb$ is infinite:
    \[
        \Expectation_{\pdb}\big( \size{ \? } \big)
            =
        \sum_{ D \in \DD } \size{ D } \cdot P\big( \set{ D } \big)
            \geq
        \sum_{ k = 1 }^{ \infty } \size[\big]{ D_{i_k} } \cdot P\big( \set[\big]{ D_{i_k} } \big) 
            \geq
        \sum_{ k = 1 }^{ \infty } k \cdot \frac{3}{\pi^2\cdot k^2}
            =
        \frac{3}{\pi^2} \cdot \sum_{ k = 1 }^{ \infty } \frac{1}{k}
            = \infty\text.\qedhere
    \]
\end{proof}

The following theorem summarizes our results regarding the class $\FOTI$ that emerged from the inspection of the underlying sets of possible worlds.

\begin{theorem}\label{thm:logical-fo}
    Let\/ $\DD$ be a set of database instances over the same schema. 
    \begin{enumerate}
        \item If\/ $\DD$ has bounded instance size, then $\pdb \in \FOTI$ for every $\pdb$ with $\worlds(\pdb) = \DD$.
        \item If\/ $\DD$ has unbounded instance size, then there exist PDBs $\pdb_1 \in \FOTI$ and $\pdb_2 \notin \FOTI$ such that $\worlds( \pdb_1 ) = \worlds( \pdb_2 ) = \DD$.
    \end{enumerate}
\end{theorem}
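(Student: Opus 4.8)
The plan is to derive both parts directly from results already established, so that the theorem is essentially a bookkeeping summary rather than a new argument.

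For part~(1), I would first observe that the notion of \enquote{bounded instance size} for a PDB, as defined just before \cref{cor:bounded-size}, only refers to its possible worlds. Hence, if $\worlds(\pdb) = \DD$ and $\DD$ has bounded instance size, then $\pdb$ itself is a PDB of bounded instance size, and \cref{cor:bounded-size} immediately gives $\pdb \in \FOTI$. Since $\pdb$ was an arbitrary PDB with $\worlds(\pdb) = \DD$, this settles the first claim.

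For part~(2), the existence of $\pdb_1 \in \FOTI$ with $\worlds(\pdb_1) = \DD$ is exactly the content of \cref{lem:no-logical-reason} (and does not even use the unboundedness hypothesis). For the existence of $\pdb_2 \notin \FOTI$ with $\worlds(\pdb_2) = \DD$, I would invoke \cref{lem:unbounded-to-expected}: since $\DD$ has unbounded instance size, there is a PDB $\pdb_2$ with $\worlds(\pdb_2) = \DD$ and $\Expectation_{\pdb_2}\big(\size{\?}\big) = \infty$. By \cref{pro:finite-moments}, every PDB in $\FOTI$ has the finite moments property, so in particular its expected instance size (the first moment) is finite. Therefore $\pdb_2 \notin \FOTI$, which completes the proof.

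There is no real obstacle here: the work has all been done in \cref{cor:bounded-size}, \cref{lem:no-logical-reason}, \cref{lem:unbounded-to-expected}, and \cref{pro:finite-moments}. The only point that deserves an explicit sentence is the remark that \enquote{bounded/unbounded instance size of $\DD$} transfers verbatim to the PDBs built on $\DD$, since instances outside $\worlds(\pdb)$ carry probability zero and are irrelevant to both the definition of bounded instance size and to the expectation $\Expectation\big(\size{\?}\big)$.
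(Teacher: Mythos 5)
Your proposal is correct and matches the paper's own proof essentially verbatim: part~(1) is \cref{cor:bounded-size}, and part~(2) combines \cref{lem:no-logical-reason} with \cref{lem:unbounded-to-expected} and \cref{pro:finite-moments}. The extra remark about the transfer of bounded/unbounded instance size to the PDBs is a harmless clarification that does not change the argument.
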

\begin{proof}
    The first statement directly follows from \cref{cor:bounded-size}. For unbounded instance size, \cref{lem:no-logical-reason} shows the existence of $\pdb_1$ and \cref{lem:unbounded-to-expected} shows the existence of $\pdb_2$, where the latter is not in $\FOTI$ due to \cref{pro:finite-moments}.
\end{proof}

In short, assume we are given a PDB $\pdb$ and want to determine whether $\pdb \in \FOTI$. If $\worlds( \pdb )$ has bounded instance size, then $\pdb \in \FOTI$. Otherwise, we have to investigate the instance probabilities to settle the question.
\section{Views Using Fragments of First Order-Logic}\label{sec:fragments}

\subsection{The Situation for Finite Probabilistic Databases}

In this section, we review the relative expressive power of classes of views over finite PDBs. Recall that $\FO(\TIfin) = \PDBfin$, as shown in \cite[Proposition 2.16]{Suciu+2011}, but $\UCQ(\TIfin) \subsetneq \PDBfin$, as shown in \cite[Proposition 2.17]{Suciu+2011}. The following two examples show that the classes $\UCQ(\TIfin)$, $\CQ(\TIfin)$ and $\sjfCQ(\TIfin)$ are incomparable to $\BIDfin$ (with respect to $\subseteq$).

\begin{example}\label{exa:finmutex}
    Consider the $\BID$-PDB $\pdb[I]$ that consists of only a single block containing two facts $f$ and $f'$, each with marginal probability $\frac{1}{2}$. Then $\worlds( \pdb[I] ) = \set[\big]{ \set{ f }, \set{ f' } }$ and both worlds $\set{ f }$ and $\set{ f' }$ are maximal. It follows from \cref{pro:monotone-max-world} that $\pdb[I] \notin \classstyle{V}\big(\TIfin\big)$ for any class $\classstyle{V}$ of monotone views. In particular, $\pdb[I] \notin \TIfin$.
\end{example}

\begin{example}\label{exa:RSexample}
    Consider the $\TI$-PDB $\pdb[I] = (\II,P)$ with $\facts( \pdb[I] ) = \set[\big]{ R(1,1), R(1,2), R(2,2), S(1), S(2) }$, where the marginal probability of all $R$-facts is $1$ and the marginal probability of both $S$-facts is $\frac12$. Now consider the $\sjfCQ$-view
    $\Phi(x) \coloneqq \exists y \with R(x,y) \wedge S(y)$. Then $\Phi( \pdb[I] ) \in \sjfCQ(\TIfin)$. Note that the PDB $\pdb[I]$ has possible worlds and images according to $\Phi$ as shown in \cref{fig:RSexample}.
    \begin{table}[h!]
        \centering
        \caption{Possible worlds of the PDB $\pdb[I]$.}\label{fig:RSexample}
        \begin{tabular}{c c}\toprule
            $I$                                                 & $\Phi(I)$\\\midrule
            $\set[\big]{ R(1,1), R(1,2), R(2,2) }$              & $\emptyset$\\
            $\set[\big]{ R(1,1), R(1,2), R(2,2), S(1) }$        & $\set[\big]{ R_{\Phi}(1) }$\\
            $\set[\big]{ R(1,1), R(1,2), R(2,2), S(2) }$        & $\set[\big]{ R_{\Phi}(1), R_{\Phi}(2) }$\\
            $\set[\big]{ R(1,1), R(1,2), R(2,2), S(1), S(2)) }$ & $\set[\big]{ R_{\Phi}(1), R_{\Phi}(2) }$\\\bottomrule
        \end{tabular}
    \end{table}
    Thus, $\Phi( \pdb[I] )$ has exactly three possible worlds: the empty instance $\emptyset$ with probability $\frac14$, the instance $\set[\big]{ R_{\Phi}( 1 ) }$, also with probability $\frac14$, and the instance $\set[\big]{ R_{\Phi}(1), R_{\Phi}(2) }$ with probability $\frac12$. Since the facts $R_{\Phi}(1)$ and $R_{\Phi}(2)$ are neither independent nor mutually exclusive in $\Phi(\pdb[I])$, it holds that $\Phi(\pdb[I]) \notin \BIDfin$.
\end{example}

Note that the latter example also shows that $\sjfCQ(\TIfin) \supsetneq \TIfin$. Remarkably, it turns out that the classes $\CQ(\TIfin)$ and $\UCQ(\TIfin)$ collapse to $\sjfCQ(\TIfin)$. This is a consequence of the following, more general insight.

\begin{proposition}\label{pro:sjfCQTI}
    Let $\pdb = V( \pdb[I] )$ where $\pdb[I] \in \TIfin$ and\/ $V$ is a monotone view. Then $\pdb \in \sjfCQ(\TIfin)$.
\end{proposition}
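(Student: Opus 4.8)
The plan is to construct, from the finite tuple\nobreakdash-independent $\pdb[I]$ and the monotone view $V$, a finite $\TI$-PDB $\pdb[J]$ and a self-join-free $\CQ$ view $\Psi$ with $\Psi(\pdb[J]) = V(\pdb[I]) = \pdb$. First I would recall, via \Cref{obs:inc+ti} (in the finite case $\Fs$ is finite, so all of its subsets are admissible), that if $\Fs = \set{f_1,\dots,f_n}$ with marginal probabilities $p_1,\dots,p_n\in(0,1)$, then the possible worlds of $\pdb[I]$ are exactly the instances $W_S \coloneqq \Fa \cup \set{f_j\with j\in S}$ for $S\subseteq\set{1,\dots,n}$, and $P_{\pdb[I]}\bigl(\set{W_S}\bigr) = \prod_{j\in S}p_j\prod_{j\notin S}(1-p_j)$. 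Put $D_S \coloneqq V(W_S)$; since $V$ is monotone and $W_S\subseteq W_{S'}$ whenever $S\subseteq S'$, we get $D_S\subseteq D_{S'}$.

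The crux is that for every target relation symbol $R$ and every fact $R(\tup a)$ occurring in some $D_S$, the set family $\set{S\with R(\tup a)\in D_S}$ is upward closed under inclusion, hence is the union of the principal up-sets of its minimal members; I will call these minimal sets the \emph{prime implicants} of $R(\tup a)$. Equivalently, $R(\tup a)\in D_S$ holds iff $M\subseteq S$ for some prime implicant $M$ of $R(\tup a)$, which is the monotone\nobreakdash-DNF normal form of a monotone Boolean function. The obstacle to exploiting this directly is that such a DNF reuses the indices $1,\dots,n$ across its terms, whereas the facts of a $\TI$-PDB must be independent. The device that resolves it is to give $\pdb[J]$ one dedicated relation per original uncertain fact, an always-true \enquote{dummy} element for padding, and a deterministic routing relation that stores the prime implicants.

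Concretely, $\pdb[J]$ has unary relations $T_1,\dots,T_n$, where $T_j$ carries a fact $T_j(\mathrm r_j)$ of marginal probability $p_j$ and a fact $T_j(\mathrm d_j)$ of marginal probability $1$, together with, for each target relation $R$ of arity $r$, a relation $M_R$ of arity $r+n$ all of whose facts have probability $1$, consisting precisely of those tuples $(\tup a, c_1,\dots,c_n)$ for which $R(\tup a)$ occurs in some $D_S$, the set $\set{j\with c_j=\mathrm r_j}$ is a prime implicant of $R(\tup a)$, and $c_j=\mathrm d_j$ for every other $j$. As $\pdb[J]$ is finite with pairwise independent facts, it is a $\TI$-PDB by \Cref{thm:well-def-ti}; its possible worlds are the instances $I_S \coloneqq \set{T_j(\mathrm r_j)\with j\in S}\cup(\text{all probability-}1\text{ facts})$ for $S\subseteq\set{1,\dots,n}$, with $P_{\pdb[J]}\bigl(\set{I_S}\bigr) = \prod_{j\in S}p_j\prod_{j\notin S}(1-p_j)$. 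The view $\Psi = \set{\Psi_R}_R$ is defined by
\[
    \Psi_R(\tup x) \coloneqq \exists y_1\cdots\exists y_n\;\Bigl( M_R(\tup x,y_1,\dots,y_n) \wedge \bigwedge_{j=1}^n T_j(y_j) \Bigr)\text,
\]
which is a self-join-free $\CQ$ because $M_R,T_1,\dots,T_n$ are pairwise distinct and each occurs once (the relations $T_j$ recur across the different $\Psi_R$, which self-join-freeness of a \emph{view} allows).

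It then remains to check that $\Psi_R(I_S) = \set{R(\tup a)\with R(\tup a)\in D_S}$ for every $S$. In $I_S$ the present $T_j$-facts are $T_j(\mathrm d_j)$ (always) and $T_j(\mathrm r_j)$ (iff $j\in S$), and the elements $\mathrm r_j,\mathrm d_j$ always lie in the active domain since they appear in the always-present $M_R$-facts, so the active-domain semantics causes no trouble. Consequently a tuple $(\tup a,c_1,\dots,c_n)\in M_R$ contributes $R(\tup a)$ to $\Psi_R(I_S)$ exactly when $\set{j\with c_j=\mathrm r_j}\subseteq S$, that is, exactly when some prime implicant of $R(\tup a)$ is contained in $S$, that is, exactly when $R(\tup a)\in D_S$. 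Hence $\Psi(I_S)=D_S$, and so for every instance $D$ of the target schema, $P_{\Psi(\pdb[J])}\bigl(\set{D}\bigr) = \sum_{S\with D_S=D}\prod_{j\in S}p_j\prod_{j\notin S}(1-p_j) = \sum_{S\with V(W_S)=D}P_{\pdb[I]}\bigl(\set{W_S}\bigr) = P_{V(\pdb[I])}\bigl(\set{D}\bigr) = P_{\pdb}\bigl(\set{D}\bigr)$. Therefore $\Psi(\pdb[J])=\pdb$ and $\pdb\in\sjfCQ(\TIfin)$. The one genuine difficulty is the tension between the independence built into $\TI$-PDBs and the variable reuse inherent in monotone functions; it is handled by the \enquote{one relation per fact $+$ dummy padding $+$ deterministic routing through the monotone DNF} construction, after which the verification is routine bookkeeping.
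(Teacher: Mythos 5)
Your proposal is correct and is essentially the construction in the paper's proof: one unary relation per uncertain fact with a probability-$p_j$ element plus an always-present dummy element, a wide probability-one ``lookup'' relation of arity $r+n$ tabulating which output facts arise from which subsets, and a single self-join-free join $\exists \tup y\,(M_R(\tup x,\tup y)\wedge\bigwedge_j T_j(y_j))$, with monotonicity guaranteeing the per-world outputs match. The only differences are cosmetic — you store just the minimal (prime-implicant) subsets instead of all subsets, and you spell out the multi-relation case by sharing the $T_j$'s across the queries, which the paper leaves as ``generalizes directly.''
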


\begin{proof}
    We show this for the case where $V = \set{ Q }$ is a single query with output relation symbol $R_Q = R$. This generalizes directly to general monotone views. As earlier, for $\pdb[I] = ( \II, P_{\pdb[I]} )$, we let $\Fa(\pdb[I])$ and $\Fs(\pdb[I])$ denote the sets of facts that appear with probability $1$ and $>0$, respectively. Suppose that $\Fs(\pdb[I]) = \set{ f_1, \dots, f_n }$ (with $f_i$ pairwise different). We construct a $\TIfin$-PDB $\pdb[J] = (\JJ,P_{\pdb[J]})$ over the same universe as $\pdb[I]$, augmented (disjointly) by $\set{0,1}$. The schema of $\pdb[J]$ consists of $n$ new unary relation symbols $S_1,\dots,S_n$, and a separate new $n+r$-ary relation symbol $S$, where $r$ is the arity of the query $Q$. The marginal probabilities of the facts in $\pdb[J]$ are given as follows:
    \begin{itemize}
        \item The fact $S_i(0)$ has marginal probability $1$ for all $i = 1, \dots, n$.
        \item The fact $S_i(1)$ has marginal probability $P_{\pdb[I]}( f_i )$ for all $i = 1,\dots, n$.
        \item For all $a_1,\dots,a_n \in \set{0,1}$ and all facts $R(b_1,\dots,b_r) \in Q\big( \Fa(\pdb[I]) \cup \set{ f_i \with a_i \neq 0 }\big)$, the fact $S( a_1, \dots, a_n, b_1, \dots, b_r )$ has marginal probability $1$.
        \item All other facts have probability $0$.
    \end{itemize}
    Consider the following $\sjfCQ$-query $\Phi$ with output relation symbol $R_\Phi = R$ defined by
    \[
        \Phi( y_1, \dots, y_r )
            =
        \exists x_1 \dotsc \exists x_n \with S_1(x_1) \wedge \dots \wedge S_n(x_n) \wedge S( x_1, \dots, x_n, y_1, \dots, y_r )\text.
    \]
    We claim that $\Phi( \pdb[J] ) = Q( \pdb[I] )$. We show this by constructing a bijection $\beta$ between the worlds of $\pdb[I]$ and the worlds of $\pdb[J]$ with the properties that \begin{enumerate*} \item $P_{\pdb[I]}( I ) = P_{\pdb[J]}( \beta(I) )$; and \item $Q( I ) = \Phi( J )$. \end{enumerate*}
    
    Let $I$ be any possible world of $\pdb[I]$. Then $I = \Fa( \pdb[I] ) \cup \set{ f_{i_1}, \dots, f_{i_k} }$ for some distinct facts $f_{i_1},\dots,f_{i_k} \in \Fs( \pdb[I] )$. We let $\beta( I )$ be the instance of $\pdb[J]$ that contains all facts of marginal probability $1$, along with (precisely) the facts $S_{i_1}(1), \dots, S_{i_k}(1)$. Clearly, $\beta$ is both injective and surjective, i.\,e., a bijection. Moreover, we have that
    \[
        P_{\pdb[I]}\big( \set{I} \big) 
            =
        \prod_{ j = 1 }^k P_{\pdb[I]}\big( f_{i_j} \big)
            =
        \prod_{ j = 1 }^k P_{\pdb[J]}\big( S_{i_j}(1) \big)
            =
        P_{\pdb[J]}\big( \set{J} \big)\text.
    \]
    Therefore, it only remains to show that for all $I \in \worlds(\pdb[I])$, it holds that $Q(I) = \Phi( \beta(I) )$.
    
    Let $I = \Fa( \pdb[I] ) \cup \set{ f_{i_1}, \dots, f_{i_k} }$ as before and suppose that $R(b_1,\dots,b_r) \in Q(I)$. By construction, the instance $\beta(I)$ contains the facts $S_{i_1}(1), \dots, S_{i_k}(1)$. Moreover, as $R(b_1,\dots,b_r) \in Q( I )$, by definition, the fact $S( a_1,\dots,a_n,b_1,\dots,b_r )$ where
    \[
        a_i = 
        \begin{cases}
            1 & \text{if }i \in \set{i_1,\dots,i_k}\\
            0 & \text{otherwise.}
        \end{cases}
    \]
    for all $i=1,\dots,n$ is one of the facts with marginal probability $1$ in $\pdb[J]$. Together this shows that $J \models \Phi(b_1,\dots,b_r)$, i.\,e. $R(b_1,\dots,b_r) \in \Phi(J)$.
    
    For the other direction, suppose that $R(b_1,\dots,b_r) \in \Phi\big(\beta(I)\big)$. This means that there exist $a_1,\dots,a_n$ such that $S_1(a_1),\dots,S_n(a_n) \in \beta(I)$ and $R(b_1,\dots,b_r) \in Q\big( \Fa( \pdb[I] ) \cup \set{ f_j \with a_j\neq 0} \big)$. Note that 
    \[
        \set[\big]{ f_j \with a_j \neq 0 } 
        \subseteq
        \set[\big]{ f_j \with S_j(1) \in \beta(I)}\text.
    \]
    As $Q$ is monotonous, we have that
    \[
    Q\Big( \Fa( \pdb[I] ) \cup \set[\big]{ f_j \with a_j\neq 0} \Big) 
    \subseteq 
    Q\Big( \Fa( \pdb[I] ) \cup \set[\big]{ f_j \with a_j\neq 0} \Big) 
            = 
        Q( I )\text,
    \] 
    which finally yields
    \(
        R(b_1,\dots,b_r) \in 
        Q( I )\text.
    \)
\end{proof}

\begin{corollary}\label{cor:sjftifin}
    $\sjfCQ(\TIfin) = \CQ(\TIfin) = \UCQ(\TIfin)$.
\end{corollary}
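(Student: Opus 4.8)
The plan is to obtain the corollary as an essentially immediate consequence of \cref{pro:sjfCQTI}, together with the observation that every $\UCQ$-view is monotone. First I would record the trivial chain of inclusions $\sjfCQ(\TIfin) \subseteq \CQ(\TIfin) \subseteq \UCQ(\TIfin)$: every self-join-free conjunctive query is in particular a conjunctive query, and every conjunctive query is a (one-disjunct) union of conjunctive queries, so enlarging the class of views applied to the fixed class $\TIfin$ can only enlarge the resulting class of PDBs.

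For the reverse inclusion I would argue as follows. Let $\pdb \in \UCQ(\TIfin)$, so $\pdb = V(\pdb[I])$ for some $\pdb[I] \in \TIfin$ and some $\UCQ$-view $V$. As noted earlier in the paper, $\UCQ$-views are monotone (a conjunctive query witnessed on $D$ is still witnessed on any $D' \supseteq D$, and this survives taking unions). Hence $V$ is a monotone view over a finite $\TI$-PDB, and \cref{pro:sjfCQTI} applies directly, giving $\pdb \in \sjfCQ(\TIfin)$. Therefore $\UCQ(\TIfin) \subseteq \sjfCQ(\TIfin)$, and combined with the inclusions of the previous paragraph this forces all three classes to coincide.

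The only real content of the argument is already packaged inside \cref{pro:sjfCQTI} — the construction there encodes, for each subset of the uncertain facts of $\pdb[I]$, the corresponding image of the monotone view into a single auxiliary relation $S$ guarded by independent indicator facts $S_i(0)/S_i(1)$, so that a single self-join-free conjunctive query recovers the view. Given that proposition, there is no further obstacle: monotonicity of $\UCQ$ is standard and the inclusion chain is syntactic, so the proof of the corollary is genuinely a one-line invocation.
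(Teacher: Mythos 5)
Your proposal is correct and matches the paper's intended argument exactly: the corollary is stated as an immediate consequence of \cref{pro:sjfCQTI}, using the trivial inclusions $\sjfCQ(\TIfin) \subseteq \CQ(\TIfin) \subseteq \UCQ(\TIfin)$ together with the monotonicity of $\UCQ$-views noted earlier in the paper. Nothing is missing.
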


For finite $\BID$-PDBs, the classes $\sjfCQ(\BIDfin)$, $\CQ(\BIDfin)$ and $\UCQ(\BIDfin)$: in \cite[Proposition 2.18]{Suciu+2011} it is shown that $\PDBfin = \CQ(\BIDfin)$. In fact, the conjunctive query used in the proof is self-join free, showing that $\PDBfin = \sjfCQ(\BIDfin)$. 

\begin{proposition}
    The classes of views over finite probabilistic databases have the relative expressive power shown in the Hasse diagram of \cref{fig:hassefin}.
\end{proposition}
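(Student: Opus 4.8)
The plan is to verify the Hasse diagram of \cref{fig:hassefin} edge by edge, since essentially all of the required containments, strictness statements, and incomparabilities have already been established in the preceding examples and cited results; the proof amounts to assembling them. First I would record the equalities that collapse several nodes: $\sjfCQ(\TIfin) = \CQ(\TIfin) = \UCQ(\TIfin)$ is \cref{cor:sjftifin}, and $\FO(\TIfin) = \PDBfin$ as well as $\sjfCQ(\BIDfin) = \PDBfin$ come from \cite[Propositions~2.16 and~2.18]{Suciu+2011} (using that the conjunctive query in the latter proof is self-join free). Since any $\FO$-view applied to a finite PDB yields a finite PDB, we get $\sjfCQ(\BIDfin) \subseteq \CQ(\BIDfin) \subseteq \UCQ(\BIDfin) \subseteq \FO(\BIDfin) \subseteq \PDBfin$, so all of these likewise coincide with $\PDBfin$. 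Thus, up to these equalities, the diagram involves only four classes: $\TIfin$, $\BIDfin$, $\sjfCQ(\TIfin)$, and $\PDBfin$.

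Next I would dispatch the containments. The non-strict ones are immediate: $\TIfin \subseteq \BIDfin$ by viewing a $\TI$-PDB as a $\BID$-PDB with singleton blocks, $\TIfin \subseteq \sjfCQ(\TIfin)$ via the identity view, and $\BIDfin, \sjfCQ(\TIfin) \subseteq \PDBfin$ because images of finite PDBs under $\FO$-views are finite. For strictness: $\TIfin \subsetneq \BIDfin$ is witnessed by \cref{exa:finmutex} and $\TIfin \subsetneq \sjfCQ(\TIfin)$ by \cref{exa:RSexample}; $\sjfCQ(\TIfin) \subsetneq \PDBfin$ follows from $\UCQ(\TIfin) \subsetneq \PDBfin$ (\cite[Proposition~2.17]{Suciu+2011}) together with $\sjfCQ(\TIfin) = \UCQ(\TIfin)$; and $\BIDfin \subsetneq \PDBfin$ is witnessed by the PDB $\Phi(\pdb[I])$ of \cref{exa:RSexample}, which is finite but not a $\BID$-PDB.

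Finally I would establish that $\BIDfin$ and $\sjfCQ(\TIfin)$ are $\subseteq$-incomparable, so that the diagram contains no edge between them. For $\BIDfin \not\subseteq \sjfCQ(\TIfin)$, note that $\sjfCQ$-views are monotone, so \cref{pro:monotone-max-world} shows every PDB in $\sjfCQ(\TIfin)$ has a unique maximal possible world; the single-block $\BID$-PDB of \cref{exa:finmutex} has two distinct maximal worlds, hence is not in $\sjfCQ(\TIfin)$. For $\sjfCQ(\TIfin) \not\subseteq \BIDfin$, the PDB of \cref{exa:RSexample} again serves, since its image under $\Phi$ has two facts that are neither independent nor mutually exclusive. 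Once these four strict edges and the one incomparability are in place, the partial order on the four nodes is completely determined; in particular the relation $\TIfin \subsetneq \PDBfin$ is forced by transitivity and therefore is not drawn. I do not expect a genuine conceptual obstacle: the only real care needed is bookkeeping — checking that each witnessing PDB simultaneously belongs to the claimed class and genuinely lies outside the target class, and keeping track of which inclusions among the many aliases of $\PDBfin$ are equalities versus proper containments.
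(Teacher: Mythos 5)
Your proposal is correct and follows essentially the same route as the paper: it assembles the previously established facts (\cref{cor:sjftifin}, \cref{exa:finmutex}, \cref{exa:RSexample}, \cref{pro:monotone-max-world}, and the cited results of Suciu et al.) to verify the collapses, the proper inclusions, and the incomparability of $\BIDfin$ with $\sjfCQ(\TIfin)$. The only difference is cosmetic bookkeeping, such as your extra remark that $\CQ(\BIDfin)$, $\UCQ(\BIDfin)$ and $\FO(\BIDfin)$ also equal $\PDBfin$, which the paper leaves implicit.
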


\begin{figure}[h]\centering
    \tikzset{ucqtifin/.style={fill=lightgray!25!white}}
    \tikzset{fotifin/.style= {fill=lightgray!25!white}}
    \tikzset{bid/.style=     {fill=lightgray!25!white}}
    \tikzset{ti/.style=      {fill=lightgray!25!white}}
    \begin{tikzpicture}
        \newcommand*{\texteq}[1]{\mathrel{\overset{\smash{\scriptscriptstyle{\text{#1}}}}{=}}}
        \tikzset{class/.style={rectangle,rounded corners,font={\small}}}
        \tikzset{line/.style={draw,thick,shorten <= 2pt, shorten >=2pt}}
        \matrix[rectangle,anchor=south] (REPfin) at (0,5) [column sep={1.5cm,between origins}, row sep=.75cm] {
            \node[missing] {}; &
            \node[class,align=center,fotifin,label={[font=\footnotesize]left:(5)}] (PDBfin) {%
                $\PDBfin = \FO\big(\TIfin\big) = \sjfCQ\big(\BIDfin\big)$
            }; &
            \node[missing] {};\\
            \node[class,align=center,ucqtifin,label={[font=\footnotesize]left:(6)}] (UCQTIfin) {$\sjfCQ\big(\TIfin\big)= \UCQ\big(\TIfin\big)$}; & 
            \node[missing] {}; & 
            \node[class,bid] (BIDfin) {$\BIDfin$};\\
            \node[missing] {}; & 
            \node[class,ti] (TIfin) {$\TIfin$}; &
            \node[missing] {};\\
        };
        \draw[line] (TIfin) to      node[circle,right,font=\footnotesize]{(1)} (BIDfin);
        \draw[line] (TIfin) to      node[circle,left, font=\footnotesize]{(2)}(UCQTIfin);
        \draw[line] (UCQTIfin) to   node[circle,left, font=\footnotesize]{(3)} (PDBfin);
        \draw[line] (BIDfin) to     node[circle,right,font=\footnotesize]{(4)} (PDBfin);
    \end{tikzpicture}
    \caption{Hasse diagram for the relative expressive power of views over finite probabilistic databases.}\label{fig:hassefin}
\end{figure}

\begin{proof}
    Consider \cref{fig:hassefin}. The inclusions depicted therein (shown as solid lines) are trivial. We now give arguments for (1) throughout (4) that these inclusions are proper and that the remaining classes (5) and (6) collapse.
    \begin{enumerate}
        \item This is witnessed by \cref{exa:finmutex}.
        \item This is witnessed by \cref{exa:RSexample}.
        \item This is shown in \cite[Proposition 2.17]{Suciu+2011}.
        \item This is shown in \cite[Chapter 2.7]{Suciu+2011}, another example is given by \cref{exa:RSexample}.
        \item This is shown in \cite[Propositions 2.16 and 2.18]{Suciu+2011}.
        \item This is the statement of \cref{cor:sjftifin}.\qedhere
    \end{enumerate}
\end{proof}

\subsection{The Situation for Countably Infinite Probabilistic Databases}

In the previous section, we discussed finite $\TI$- and $\BID$-PDBs and views in the $\UCQ$-, $\CQ$- and $\sjfCQ$-fragment of first-order logic. We now want to investigate the corresponding classes in the countably infinite setting. Interestingly, the different classes that collapse in the finite setting do no longer collapse in the infinite setting, yielding pairwise different classes of probabilistic databases. A reason for is intuitively that in a representation, we can no longer hard-code the complete structure of possible worlds using facts of probability one. This was possible in the finite setting, and is essentially what is done in the proofs of the respective statements \cite[Propositions 2.16 and 2.18]{Suciu+2011}.

The tools we have seen and developed so far are ineffective for a possible separation of the classes $\UCQ(\TI)$, $\CQ(\TI)$ and $\sjfCQ(\TI)$ and the corresponding classes for $\BID$-PDBs as they only concern the class $\FOTI$ and we already know from \cref{pro:mon+ti+disjoint,thm:bid} that $\UCQ(\TI)$ is strictly weaker than $\FOTI$. Thus, we have to develop new methods.

The separations we find in this section rely on the varying level of capability for the respective classes to represent symmetries. For our arguments, we thus focus on a special class of probabilistic databases that facilitates such discussions. A \emph{graph database} $D$ is a database instance over a single binary relation $E$ with universe $\UU^2$. It is called 
\begin{itemize}
    \item \emph{simple}, if it contains no fact $E(a,a)$ and
    \item \emph{undirected}, if $E(a,b) \in D$ implies $E(b,a) \in D$.
\end{itemize}
A \emph{probabilistic graph} is a PDB whose possible worlds are graph databases.

\begin{definition}\label{def:edgeindepgraphs}
    Consider the class $\edgeindepgraphs$ of probabilistic databases $\pdb[D]$ over ordered, countably infinite universes $(\UU,<)$ satisfying all of the following properties:
    \begin{itemize}
        \item every possible world of $\pdb[D]$ is a simple and undirected graph database,
        \item it is \emph{edge-independent}, that is, for any sequence of pairwise distinct two-element subsets $\set{a_1,b_1}, \dots, \set{ a_k,b_k }$ of $\UU$, the facts $E(a_1,b_1),\dots, E(a_k,b_k)$ are stochastically independent in $\pdb[D]$, that is,
        \[
            \Pr_{ D \sim \pdb } \big( E(a_i,b_i) \in D \in D \text{ for all } i = 1, \dots, k \big)
                =
            \prod_{ i = 1 }^{ k } p_{a_i,b_i}
                \text.
        \]
        and 
        \item it is \emph{unbounded}, that is, for every $n \in \NN$ there exists $D \in \worlds(\pdb[D])$ with $\size{D} > n$.
    \end{itemize}
\end{definition}

If $\pdb[D]$ is a PDB in $\edgeindepgraphs$, we let $p_{a,b}$ denote the marginal probability of the fact $E(a,b)$, that is, $p_{a,b} = \Pr_{D \sim \pdb[D]}\big(E(a,b) \in D\big)$. Since every possible world of $\pdb[D]$ is an undirected (and simple) graph database, we have $p_{a,b} = p_{b,a}$ for all $a$ and $b$. Note that $\pdb[D]$ is uniquely determined by $(p_{a,b})_{a \neq b}$.

\begin{remark}
    Intuitively, every PDB $\pdb[D]$ in $\edgeindepgraphs$ represents a random graph with vertex set $\UU$ and edges $\set{a,b} \subseteq \binom{\UU}{2}$ where all edges are independent, and the probability of the edge $\set{a,b}$ being present is precisely $p_{a,b}$. Thus, these PDBs are tightly connected to a particular model that arises in the study of random graphs \cite[Chapter 10.1]{FriezeKaronski2016}.
\end{remark}

Naturally, despite not being tuple-independent, the PDBs of $\edgeindepgraphs$ are strongly related to $\TI$-PDBs.

\begin{proposition}\label{pro:edgeindepgraphsti}
   Let $\big(p_{a,b}\big)_{a<b}$ be a family of numbers in $[0,1]$ and let $p_{b,a} = p_{a,b}$ for all $a,b$.
   Then $\big( p_{a,b} \big)_{ a \neq b }$ defines a probabilistic graph in $\edgeindepgraphs$ if and only if $\big( p_{a,b} \big)_{ a < b }$, taken as marginal probabilities for the facts $\set[\big]{E(a,b)}_{a<b}$, defines a $\TI$-PDB.
\end{proposition}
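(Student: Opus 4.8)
The plan is to exhibit an explicit, structure-preserving bijection between $\TI$-PDBs with fact set $\set[\big]{ E(a,b) \with a<b }$ and the probabilistic graphs in $\edgeindepgraphs$, obtained simply by symmetrizing edges. In one direction, map an instance $I$ over $\set{ E(a,b) \with a<b }$ to its symmetric closure $\sigma(I) \coloneqq I \cup \set[\big]{ E(b,a) \with E(a,b) \in I }$; in the other, map a simple undirected graph database $D$ to $\rho(D) \coloneqq \set[\big]{ E(a,b) \in D \with a<b }$. Both maps are total, $\rho \after \sigma$ is the identity on instances over $\set{ E(a,b) \with a<b }$, and $\sigma \after \rho$ is the identity on simple undirected graph databases, so $\sigma$ and $\rho$ are mutually inverse bijections; in fact $\sigma$ is realised by the $\FO$-view $\Phi(x,y) \coloneqq E(x,y) \vee E(y,x)$, though only the set-level bijection is needed here. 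All verifications then amount to chasing marginals and mutual independence through these maps, with \cref{thm:well-def-ti} supplying the summability criterion on the $\TI$-side.

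First I would treat the direction from right to left. Assume $(p_{a,b})_{a<b}$ defines a $\TI$-PDB; by \cref{thm:well-def-ti} this means $\sum_{a<b} p_{a,b} < \infty$, and it yields a $\TI$-PDB $\pdb[I]$ with facts $\set{ E(a,b) \with a<b }$ and these marginals. Since $\sigma$ is injective on $\worlds(\pdb[I])$ (one recovers $I$ from $\sigma(I)$ by discarding the facts $E(a,b)$ with $a>b$), pushing the distribution of $\pdb[I]$ forward along $\sigma$ yields a PDB $\pdb[D]$ whose possible worlds are the symmetric closures of those of $\pdb[I]$; these are simple (each fact uses two distinct elements) and undirected graph databases by construction. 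For $a<b$ we have $E(a,b) \in \sigma(I)$ iff $E(a,b) \in I$, and for $a>b$ we have $E(a,b) \in \sigma(I)$ iff $E(b,a) \in I$, so the marginal of $E(a,b)$ in $\pdb[D]$ equals $p_{\min(a,b),\max(a,b)}$, which is $p_{a,b}$ by the assumed symmetry $p_{b,a} = p_{a,b}$. For edge-independence, take pairwise distinct two-element sets $\set{a_1,b_1},\dots,\set{a_k,b_k}$: the event $E(a_i,b_i) \in \sigma(I)$ coincides with the event $E(\min(a_i,b_i),\max(a_i,b_i)) \in I$, and these concern pairwise distinct facts of $\pdb[I]$, hence are mutually independent because $\pdb[I] \in \TI$. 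The one remaining membership requirement for $\edgeindepgraphs$ is unboundedness, and since $\size{\sigma(I)} = 2\size{I}$, the PDBs $\pdb[I]$ and $\pdb[D]$ share the same (un)boundedness status, so this property transfers across the correspondence.

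For the other direction, let $\pdb[D] \in \edgeindepgraphs$ have marginals $(p_{a,b})_{a\neq b}$. Because each possible world of $\pdb[D]$ is a simple undirected graph database, $\rho$ is injective on $\worlds(\pdb[D])$, and the pushforward $\pdb[I] \coloneqq \rho(\pdb[D])$ is a PDB over a subset of $\set{ E(a,b) \with a<b }$. For $a<b$ we have $E(a,b) \in \rho(D)$ iff $E(a,b) \in D$, so the marginal of $E(a,b)$ in $\pdb[I]$ is $p_{a,b}$; and distinct facts $E(a_i,b_i)$ with $a_i<b_i$ correspond to distinct two-element sets $\set{a_i,b_i}$, so the events $E(a_i,b_i) \in \rho(D)$ inherit mutual independence from the edge-independence of $\pdb[D]$. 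Hence $\pdb[I] \in \TI$, and \cref{thm:well-def-ti} gives $\sum_{a<b} p_{a,b} < \infty$, i.e.\ $(p_{a,b})_{a<b}$ defines a $\TI$-PDB. Since $\sigma$ and $\rho$ are mutually inverse, the two constructions are compatible, and combining the two implications gives the claimed equivalence.

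I do not expect a deep obstacle: the argument is careful bookkeeping through a bijection that changes instance sizes only by the constant factor $2$. The subtlest points are the precise handling of active-domain semantics if one insists on presenting $\sigma$ as an $\FO$-view, and matching the unboundedness clause in the definition of $\edgeindepgraphs$ on the $\TI$-side; both are dealt with by the size-doubling identity together with the observation that mutual independence of a family is equivalent to independence of all of its finite subfamilies, which is exactly the form in which both $\TI$ and $\edgeindepgraphs$ are defined.
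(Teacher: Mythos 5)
Your proposal is correct and follows essentially the same route as the paper: a restriction/symmetric-closure correspondence between possible worlds (keep only the facts $E(a,b)$ with $a<b$ in one direction, add $E(b,a)$ back in the other), under which marginals and mutual independence transfer, with \cref{thm:well-def-ti} supplying summability. You merely spell out the bookkeeping (and the unboundedness clause) more explicitly than the paper's terse proof does, which is harmless.
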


In particular, $\big( p_{a,b} \big)_{ a \neq b }$ defines a probabilistic graph in $\edgeindepgraphs$ if and only if $\sum_{ a \neq b } p_{a,b}$ is finite.

\begin{proof}
    First, let $\pdb[D] \in \edgeindepgraphs$. We construct a new PDB $\pdb[D]'$ by restricting $\pdb[D]$ to the facts $\set[\big]{E(a,b)}_{a<b}$. In detail, for every world $D \in \worlds( \pdb[D] )$ we only keep the facts $E(a,b)$ with $a<b$. Note that this transformation is a one-to-one-correspondence between $\worlds( \pdb[D] )$ and $\worlds( \pdb[D]' )$. Therefore, $\pdb[D]'$ is well-defined. Moreover, all the facts in $\pdb[D]'$ are independent and have the same marginal probabilities as before, rendering $\pdb[D]'$ a $\TI$-PDB with marginal probabilities $\big( p_{a,b} \big)_{ a < b }$.
    
    Second, let $\I$ be a $\TI$-PDB over $\set[\big]{E(a,b)}_{a<b}$ with marginal probabilities $\big( p_{a,b} \big)_{ a < b }$. Inverting the construction from the first step, we define $\pdb[D]$ by adding $E(b,a)$ to every possible world that contains $E(a,b)$ for every $a<b$.
\end{proof}

It is quite easy to see that every PDB in $\edgeindepgraphs$ admits a $\UCQ(\TI)$-representation.

\begin{lemma}\label{lem:symgraphucq}
    If $\pdb[D] \in \edgeindepgraphs$, then $\pdb[D] \in \UCQ(\TI)$.
\end{lemma}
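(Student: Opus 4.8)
The plan is to reconstruct an edge-independent probabilistic graph from a tuple-independent representation by symmetrizing it with one fixed $\UCQ$-view.

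First I would invoke \cref{pro:edgeindepgraphsti}. Given $\pdb[D] \in \edgeindepgraphs$ with marginals $(p_{a,b})_{a \neq b}$, that proposition yields a $\TI$-PDB $\pdb[I]$ over the facts $\set[\big]{ E(a,b) \with a < b }$ with the same marginal probabilities, and its proof exhibits a probability-preserving bijection between $\worlds(\pdb[I])$ and $\worlds(\pdb[D])$: the one sending a world $I$ to $I \cup \set[\big]{ E(b,a) \with E(a,b) \in I }$, i.e.\ adding the reverse of every kept edge.

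Next I would take $\Phi$ to be the $\UCQ$-view consisting of the single query
\[
    \Phi(x,y) \coloneqq E(x,y) \vee E(y,x)
\]
over the schema of $\pdb[I]$, with output relation symbol $E$. This is a union of two conjunctive queries, each a single atom, hence a $\UCQ$; note that it does not refer to the order $<$, which is used only in passing between $\pdb[D]$ and $\pdb[I]$. I then verify that $\Phi$ implements exactly the symmetrization above: for every possible world $I$ of $\pdb[I]$, the active-domain semantics — together with the fact that possible worlds of $\pdb[I]$ contain no fact of the form $E(a,a)$ — give $\Phi(I) = I \cup \set[\big]{ E(b,a) \with E(a,b) \in I }$, which is precisely the world of $\pdb[D]$ that corresponds to $I$ under the above bijection. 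Since applying a view pushes the distribution forward along $I \mapsto \Phi(I)$, and on possible worlds this map is the probability-preserving bijection from \cref{pro:edgeindepgraphsti}, it follows that $\Phi(\pdb[I]) = \pdb[D]$, so $\pdb[D] \in \UCQ(\TI)$.

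There is essentially no obstacle here; the only steps needing a line of care are confirming that $\Phi$ is syntactically a $\UCQ$ rather than something stronger, and checking the identity $\Phi(I) = I \cup \set[\big]{ E(b,a) \with E(a,b) \in I }$ on the nose — in particular that $\Phi$ creates no self-loops (which would require $E(c,c) \in I$) and misses no fact, since both of its endpoints already lie in $\adom(I)$.
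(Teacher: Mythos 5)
Your proposal is correct and follows essentially the same route as the paper: the same $\TI$-PDB $\pdb[I]$ obtained via \cref{pro:edgeindepgraphsti} and the same single-query $\UCQ$-view $\Phi(x,y) = E(x,y) \vee E(y,x)$, with the observation that $\Phi$ is a probability-preserving one-to-one correspondence between the worlds of $\pdb[I]$ and those of $\pdb[D]$. The extra care you take about active-domain semantics and the absence of self-loops matches what the paper's construction implicitly relies on.
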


\begin{proof}
    Let $\pdb[D] = (\DD, P_{\D})$ be a probabilistic graph from $\edgeindepgraphs$ over $\UU$, and let $\pdb[I]$ be the $\TI$-PDB constructed from $\pdb[D]$ as in \cref{pro:edgeindepgraphsti}.
    We show that $\pdb[D]$ can be reconstructed from $\pdb[I]$ using the $\UCQ$-view $\Phi(x,y) \coloneqq E(x,y) \vee E(y,x)$. That is, we claim that $\Phi( \pdb[I] ) = \pdb[D]$. To see this, consider any possible world $D \in \worlds( \pdb[D] )$. Note that 
    \begin{equation}
        P_{\pdb[D]}(D) = \prod_{ \substack{ a < b \text{ and}\\E(a,b) \in D } } p_{a,b} \cdot \prod_{ \substack{ a < b \text{ and}\\E(a,b) \notin D } } \big(1-p_{a,b}\big)\text.
        \label{eq:probgraphucqprob}
    \end{equation}
    The preimage of $D$ under $\Phi$ consists of only the single instance
    \[
        I = \set[\big]{ E( a,b ) \with a < b \text{ and } E(a,b) \in D ) } \in \II\text.
    \]
    By construction, the probability of this instance in $\I$ is equal to \eqref{eq:probgraphucqprob} as well. That is, $\Phi$ is a probability preserving one-to-one correspondence between the worlds of $\I$ and the worlds of $\D$, so $\Phi(\I) = \D$.
\end{proof}

The next lemma gives an example of the other extreme: \emph{no} PDB in $\edgeindepgraphs$ admits a $\sjfCQ(\BID)$-presentation.

\begin{lemma}\label{lem:symgraphNOTsjfcqbid}
    If $\pdb[D] \in \edgeindepgraphs{}$, then $\pdb[D] \notin \sjfCQ(\BID)$.
\end{lemma}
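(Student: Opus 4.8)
The plan is to argue by contradiction: suppose $\pdb[D] \in \edgeindepgraphs$ admits a representation $\pdb[D] = Q(\pdb[I])$ where $\pdb[I]$ is a $\BID$-PDB and $Q$ is a self-join-free conjunctive query (for a view, one handles each output relation separately, but since graph databases use a single binary relation $E$, a single query suffices). Since $Q$ is self-join free, it has the form $Q(x,y) = \exists \tup z\, \bigwedge_{i=1}^m R_i(\tup u_i)$ where the $R_i$ are pairwise distinct relation symbols of $\pdb[I]$. The key structural feature to exploit is that $\edgeindepgraphs$-PDBs are \emph{unbounded} and have arbitrarily large independent substructures, whereas a monotone query over a $\BID$-PDB is quite constrained: by \cref{pro:mon+ti+disjoint}, $\UCQ(\TI)$ contains no PDB with two mutually exclusive facts of positive probability, and $\sjfCQ \subseteq \UCQ$ is monotone, so the obstruction must come from the block structure of the $\BID$-PDB together with self-join-freeness.

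\textbf{Key steps.} First I would record the relevant consequences of $\pdb[D] \in \edgeindepgraphs$: there are infinitely many edges $\set{a,b}$ with $p_{a,b} > 0$ (otherwise $\pdb[D]$ would have bounded instance size, contradicting unboundedness), and any finite set of these edge-facts is mutually independent, so for any such edges the ``all present'' and ``all absent'' configurations both have positive probability; moreover one can realize arbitrarily large matchings (pairwise vertex-disjoint edges) whose presence/absence is independent. Second, I would analyze how a self-join-free CQ $Q$ over a $\BID$-PDB $\pdb[I]$ produces its output facts. Each output fact $E(a,b)$ in a world $D = Q(I)$ is ``witnessed'' by a homomorphism from $Q$ into $I$; self-join-freeness means the $m$ atoms of $Q$ are matched to facts from $m$ distinct relations. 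Third, the crucial combinatorial lemma: in a $\BID$-PDB, facts in the same block are mutually exclusive, and facts in different blocks are independent. I would use a pigeonhole/Ramsey-type argument on a large matching $E(a_1,b_1),\dots,E(a_n,b_n)$ of positive-probability edges in $\pdb[D]$: each edge-fact, when produced, requires a witness tuple of facts from $\pdb[I]$; by taking $n$ large one forces either (i) two distinct edge-facts to share a block (so some configuration of the output becomes impossible, but all $2^n$ on/off configurations of the matching have positive probability in $\pdb[D]$ — contradiction), or (ii) the witnesses are ``spread out'' across independent blocks in a way that forces unwanted correlations or forces the appearance of extra edges (violating independence or simpleness), again a contradiction. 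A clean way to get the contradiction is to exhibit two worlds $D_1, D_2 \in \worlds(\pdb[D])$ whose ``union-like'' combination is forced into $\worlds(\pdb[D])$ by monotonicity and block-compatibility, yet is not a graph database of $\pdb[D]$ because independence would demand a world containing edges that the witnesses cannot simultaneously support.

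\textbf{Main obstacle.} The hard part will be making precise the interaction between self-join-freeness of $Q$ and the block constraints of $\pdb[I]$ so as to derive a genuine contradiction with edge-independence and unboundedness — in particular ruling out the possibility that the $\BID$-PDB cleverly uses residual (empty-block) probabilities and high-arity relations to simulate the exponentially many independent edge configurations. I expect the argument to hinge on a counting bound: a self-join-free query has a \emph{bounded} number of atoms $m$, so the number of blocks that can be ``touched'' by the witnesses of any single output edge is bounded, and therefore the family of achievable output worlds has a bounded ``VC-like'' or rank parameter, which cannot accommodate the full independent Boolean cube on $n$ edge-facts once $n$ exceeds that bound. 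Formalizing this — likely via choosing a sufficiently large matching and applying the fact that $\BID$-PDBs have the finite moments property plus an inclusion-exclusion / Möbius argument over block choices — is where the real work lies; everything else (the reductions, the use of \cref{pro:edgeindepgraphsti}, monotonicity) is routine.
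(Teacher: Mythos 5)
Your plan diverges from the paper's proof, and the central idea you propose to carry the argument does not work. You want a capacity bound: since a self-join-free $\CQ$ has a bounded number of atoms, the achievable output worlds should have a bounded ``VC-like'' rank that cannot accommodate the independent Boolean cube on $n$ edge-facts for large $n$. But no such bound can exist: the underlying $\BID$-PDB may have infinitely many blocks, and already the identity view (a single-atom self-join-free query per relation) over a $\TI$-PDB $\subseteq \BID$ realizes an unbounded family of fully independent facts, i.e.\ the full independent cube, inside $\sjfCQ(\BID)$; moreover \cref{lem:symgraphucq} and \cref{lem:symgraphcqti} show that two-atom $\UCQ$- and $\CQ$-views over $\TI$ do represent members of $\edgeindepgraphs$. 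A sharper way to see the gap: your sketch nowhere uses that the graphs are \emph{undirected}, yet the directed analogue of $\edgeindepgraphs$ (independent facts $E(a,b)$ over ordered pairs) is literally a $\TI$-PDB and hence trivially in $\sjfCQ(\BID)$. So any proof that ignores the forced perfect correlation between $E(a,b)$ and $E(b,a)$ must fail; neither your pigeonhole alternative (i)--(ii) nor the ``union-like combination'' step is anchored to this symmetry constraint, and \cref{pro:mon+ti+disjoint} is of no help since edge-independent graphs contain no mutually exclusive facts. A smaller inaccuracy: unboundedness does not give arbitrarily large \emph{matchings} (all positive-probability edges could share one vertex, as in an infinite star); the paper only needs infinitely many pairs with pairwise distinct first coordinates.

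For comparison, the paper's proof exploits exactly the undirectedness. Writing $\Phi(x,y)=\exists\tup z\,\bigwedge_j R_j(\tup t_j)$ and fixing an atom $R_i(\tup t_i)$ containing $x$, it calls a block \emph{full} if its marginals sum to $1$ and $(a,b)$-\emph{saturated} if it is full and consists only of groundings $f_i[a;b;\tup c]$. There are only finitely many full blocks, and no block can be saturated for two pairs with distinct first coordinates, so unboundedness yields a pair $(a^*,b^*)$ with no $(a^*,b^*)$-saturated block. Taking a preimage $I$ of a world containing $E(a^*,b^*)$ and $E(b^*,a^*)$, one deletes the non-full-block witnesses of $\Phi(a^*,b^*)$ and swaps each full-block witness for a same-block fact outside $F(a^*,b^*)$; the result $J$ is still a possible world of the $\BID$-PDB, satisfies $\Phi(b^*,a^*)$ by monotonicity (self-join-freeness gives $F(a^*,b^*)\cap F(b^*,a^*)=\emptyset$) but not $\Phi(a^*,b^*)$, so its image is a non-symmetric graph, a contradiction. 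If you want to salvage your approach, the block-swapping argument that breaks the $E(a,b)\leftrightarrow E(b,a)$ correlation is the missing ingredient; a counting argument alone cannot substitute for it.
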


The idea behind the proof of this lemma is as follows. We start by assuming that we have a $\sjfCQ(\BID)$-representation of some $\pdb \in \edgeindepgraphs$. The property we exploit is that facts in the underlying $\BID$-PDB can only be either independent or mutually exclusive. We pick a particularly chosen instance from the $\BID$-PDB and show that we can modify it in such a way that the resulting instance still has positive probability, but its image necessarily contains exactly one of the edges $E(a,b)$ or $E(b,a)$ for some $(a,b)$. This then contradicts the assumption that we started with a representation of a PDB from $\edgeindepgraphs$. 

\begin{proof}[{Proof of \cref{lem:symgraphNOTsjfcqbid}}]
    Let $\pdb[D] = ( \DD, P_{\pdb[D]} )$ be a probabilistic graph in $\edgeindepgraphs$ over $\UU$. 
    Assume that there exists a $\BID$-PDB $\pdb[I] = ( \II, P_{\pdb[I]} )$ and a self-join free $\CQ$-view of the following shape
    \[
        \Phi(x,y) \coloneqq \exists z_1\dotsc \exists z_m \with R_1( \tup t_1 ) \wedge \dotsm \wedge R_n( \tup t_n )
    \]
    such that $\pdb[D] = \Phi( \pdb[I] )$. This can be assumed since equality atoms can be removed using substitution and, as the PDBs in $\edgeindepgraphs$ are simple and undirected, $\Phi$ can not be of the shape $\Phi(x,x)$ or $\Phi(x,a)$. Without loss of generality, we can assume that all the variables $z_1,\dots,z_m$ appear among $R_1( \tup t_1 ) \wedge \dotsm \wedge R_n( \tup t_n )$. In the following, we let
    \[
        f_i[a; b; \tup c] \coloneqq R_i\big( \tup t_i[ x/a; y/b; z_1/c_1; \dotsc; z_m/c_m ] \big)
    \]
    for all $a,b \in \UU$ and all $\tup c = (c_1,\dots,c_m) \in \UU^m$.
 
    Observe that for all $I \in \II$ it holds that 
        \begin{align}
        I \models \Phi(a,b) &\iff \text{there exists }\tup c\text{ such that for all } i = 1,\dots,n \text{ it holds that } f_i[a;b;\tup c] \in I\text{;}\label{eq:symmetrycq1}
        \shortintertext{and}
        I \models \Phi(b,a) &\iff \text{there exists }\tup c'\text{ such that for all } i =1,\dots, n\text{ it holds that } f_i[b;a;\tup c'] \in I\label{eq:symmetrycq2}
        \text.
    \end{align}
    From now on, let $i \in \set{1,\dots,n}$ be fixed such that $R_i( \tup t_i )$ contains at least one of the variables $x$ or $y$. Without loss of generality, we assume that $R_i( \tup t_i )$ contains $x$ (the argument for $y$ follows analogously).
    
    Let $(a,b), (a',b') \in \UU^2$. We say that a block $B$ from $\pdb[I]$ is \emph{full}, if the marginal probabilities of the facts in $B$ add up to $1$. We say that $B$ is $(a,b)$-\emph{saturated} if $B$ is a full block such that every fact of $B$ is of the shape $f_i[a;b;\tup c]$ for some $\tup c \in \UU^m$. Note that if $(a,b)$ and $(a',b')$ have $a \neq a'$, then no block $B$ is \emph{both} $(a,b)$- and $(a',b')$-saturated, since $f_i$ contains the variable $x$ which yields $f_i[a;b;\tup c] \neq f_i[a';b';\tup c']$ for all $\tup c, \tup c' \in \UU^m$.
    
    Since $\pdb[D]$ is unbounded, there exists a family $T$ of infinitely many pairs $(a,b)$ with pairwise different values of $a$ such that $E(a,b)$ appears among the instances of $\worlds( \pdb[D] )$. Since $\pdb[I]$ contains at most finitely many full blocks, by the pigeonhole-principle, there exists a pair $(a^*,b^*) \in T$ with $a^*\neq b^*$ such that no block of $B$ is $(a^*, b^*)$-saturated.
    
    Let $D$ be one of the instances in $\worlds(\pdb[D])$ that contain both $E(a^*, b^*)$ and $E(b^*, a^*)$, and let $I \in \worlds( \pdb[I] )$ be an arbitrary preimage of $D$ under $\Phi$. Then both $I \models \Phi( a^*, b^* )$ and $I \models \Phi( b^*, a^* )$. In particular, $I$ contains facts $f_i[ a^*; b^*; \tup c ]$ and $f_i[ b^*; a^*; \tup c' ]$ for some (possibly multiple) $\tup c,\tup c' \in \UU^m$.
    
    Let $F(a^*,b^*) \coloneqq \set{ f_i[ a^*; b^*, \tup c ] \with \tup c \in \UU^m }$. We can partition $I$ into
    \begin{equation}\label{eq:Ia*b*}
        I = I' \cup I_{=1}(a^*,b^*) \cup I_{<1}(a^*,b^*)
    \end{equation}
    such that
    \begin{itemize}
        \item $I' \coloneqq I \setminus F(a^*,b^*)$,
        \item $I_{=1}(a^*,b^*) \coloneqq I \cap F(a^*,b^*) \cap \bigcap_{B \text{ full block}} B$, and
        \item $I_{<1}(a^*,b^*) \coloneqq I \cap F(a^*,b^*) \cap \bigcap_{B \text{ not full}} B$.
    \end{itemize}
    Recall that no block of $I$ is $(a^*,b^*)$-saturated. Hence, if $f \in I_{=1}(a^*,b^*)$, then there exists $f'$ in the same block as $f$ such that $f' \notin F(a^*,b^*)$. We simply drop all facts from $I_{<1}(a^*,b^*)$ and transform $I_{=1}( a^*, b^* )$ into a new set of facts $I''$ by replacing every fact $f$ in $I_{=1}( a^*,b^* )$ with such an $f'$. This yields a new set of facts
    \[
        J \coloneqq I' \cup I'' \in \worlds( \pdb[I] ),
    \]
    Now, $J \not\models \Phi( a^*,b^* )$, since $\Phi$ is self-join free and $J$ does not contain any of the facts $F(a^*,b^*)$. However, $I \cap F( b^*, a^* ) \subseteq J \cap F( b^*, a^* )$ (where $F( b^*,a^* )$ is defined analogously to $F( a^*,b^* )$), since $F( a^*,b^* ) \cap F( b^*,a^* ) = \emptyset$. This latter disjointness stems from the fact that $f_i$ contains the variable $x$ and the property that $\Phi$ is self-join free. Therefore, using the monotonicity of $\Phi$, it follows from \eqref{eq:symmetrycq2} that $J \models \Phi( b^*, a^* )$.
    
    Together we have that $(b^*,a^*) \in \Phi(J)$ but $(a^*,b^*) \notin \Phi(J)$, contradicting $\Phi(J) \in \worlds( \pdb[D] )$.
\end{proof}

\begin{remark}
    The above proof is similar to the discussions of \cref{sec:incomplete} in the sense that it only concerns the sets of possible worlds and is, apart from this, oblivious to probabilities.
\end{remark}

Note that as $\sjfCQ(\BID)$ contains $\sjfCQ(\TI)$, \cref{lem:symgraphNOTsjfcqbid} in particular shows that $\sjfCQ(\TI)$ contains none of the PDBs from $\edgeindepgraphs$. Recall that, on the contrary, $\UCQ(\TI)$ contains all the PDBs from $\edgeindepgraphs$. Naturally, the class $\CQ(\TI)$ should be investigated next. In fact, when it comes to the representation of PDBs from $\edgeindepgraphs$, the class $\CQ(\TI)$ proves to constitute a middle ground between the previous two extreme cases.

\begin{lemma}\label{lem:symgraphcqti}
    Let $\pdb[D]_1$ be a probabilistic graph in $\edgeindepgraphs$ with edge set $E = \set{a_i,b_i}_{i \in \NN}$ where the marginal probabilities of the edges $E(a_i,b_i)$ are given by $p_{a_i,b_i} = p_{b_i,a_i} = \frac{1}{i^4}$. Then $\pdb[D]_1 \in \CQ(\TI)$.
\end{lemma}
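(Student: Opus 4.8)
The plan is to exhibit an explicit $\TI$-PDB $\pdb[I]$ and a conjunctive query $\Phi$ with $\Phi(\pdb[I]) = \pdb[D]_1$. Write the edge set of $\pdb[D]_1$ as $E = \set{a_i,b_i}_{i\in\NN}$, where the $\set{a_i,b_i}$ are pairwise distinct two-element subsets of $\UU$ (so $a_i\neq b_i$). I would let $\pdb[I]$ have a single binary relation symbol $R$ and fact set $\facts(\pdb[I]) \coloneqq \set[\big]{R(a_i,b_i),\, R(b_i,a_i) \with i\in\NN}$ --- i.e.\ one fact per \emph{directed} edge --- and assign each of the two facts of edge $i$ the marginal probability $1/i^2$. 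Since $a_i\neq b_i$ and the edges are pairwise distinct, all these facts are pairwise distinct (a fact $R(u,v)$ of edge $i$ determines the ordered pair $(u,v)$, hence the unordered pair, hence $i$), and $\sum_f p_f = 2\sum_{i\geq 1} i^{-2} < \infty$, so $\pdb[I]$ is a well-defined $\TI$-PDB by \cref{thm:well-def-ti}. As the view I would take the \emph{self-join} $\CQ$
\[
    \Phi(x,y) \coloneqq R(x,y) \wedge R(y,x)
\]
with output relation $R_\Phi = E$.

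The verification would proceed in two steps. \emph{Support:} since $\Phi$ has no constants, for $u,v\in\adom(I)$ we have $E(u,v)\in\Phi(I)$ iff $\set{R(u,v),R(v,u)}\subseteq I$; as $\pdb[I]$ contains no fact of the form $R(u,u)$ and $\set{R(u,v),R(v,u)}\subseteq\facts(\pdb[I])$ forces $\set{u,v}=\set{a_i,b_i}$ for some $i$ (and then $\set{R(u,v),R(v,u)}$ is exactly the pair of facts of edge $i$), one gets that $\Phi(I)$ is always a simple undirected graph database, namely the graph with edge set $S(I)\coloneqq\set{i \with \set{R(a_i,b_i),R(b_i,a_i)}\subseteq I}$. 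Writing $D_S$ for the world of $\pdb[D]_1$ with edge set $S$, this means $\worlds(\Phi(\pdb[I])) = \set{D_S \with S\subseteq\NN\text{ finite}} = \worlds(\pdb[D]_1)$ and $\Phi(I)=D_{S(I)}$. \emph{Probabilities:} facts with different edge indices are independent in $\pdb[I]$, and for each $i$ the event $\set{R(a_i,b_i),R(b_i,a_i)}\subseteq I$ has probability $(1/i^2)^2 = 1/i^4$; multiplying out yields, for every finite $S\subseteq\NN$,
\[
    \Pr_{I\sim\pdb[I]}\big(\Phi(I)=D_S\big) = \prod_{i\in S}\frac{1}{i^4}\cdot\prod_{i\notin S}\Big(1-\frac{1}{i^4}\Big),
\]
which is precisely $P_{\pdb[D]_1}(\set{D_S})$ for the edge-independent PDB $\pdb[D]_1$ (the product formula for edge-independent PDBs being the obvious analogue of the $\TI$-case, cf.\ \cref{pro:edgeindepgraphsti}). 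Hence $\Phi(\pdb[I])=\pdb[D]_1$, so $\pdb[D]_1\in\CQ(\TI)$.

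The one genuinely delicate point --- and the reason this is not a routine variant of the $\UCQ$ argument of \cref{lem:symgraphucq} --- is that three requirements must be met at once: \emph{(i)} every world is undirected, so $E(a_i,b_i)$ and $E(b_i,a_i)$ must be \emph{perfectly} correlated in $\Phi(\pdb[I])$, which in a $\TI$-PDB means they must be governed by the very same fact-presence events; \emph{(ii)} every world is simple, so $\Phi$ must never produce a self-loop $E(v,v)$; and \emph{(iii)} we cannot hard-code structure with infinitely many probability-one facts (that would violate \cref{thm:well-def-ti}), and, unlike in the $\UCQ$ case, we no longer have the disjunction $E(x,y)\vee E(y,x)$. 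The query $\Phi(x,y)=R(x,y)\wedge R(y,x)$ settles all three simultaneously: it is symmetric under swapping $x$ and $y$, so the conditions ``$E(a_i,b_i)\in\Phi(I)$'' and ``$E(b_i,a_i)\in\Phi(I)$'' are literally the same condition ((i)); it can only fire on a pair $\set{u,v}$ for which both directed facts exist, hence never on the diagonal ((ii)); and the self-join means a directed edge requires \emph{both} of the $1/i^2$-facts of its undirected edge, so the induced output marginal is the square, $1/i^4$. This last point is exactly why the statement fixes the marginals at $1/i^4$: it makes $1/i^2$ summable, so the base $\TI$-PDB is well-defined. I expect the write-up to be entirely routine once this construction is in place; there is no real remaining obstacle.
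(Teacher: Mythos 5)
Your construction is exactly the one in the paper: the same base $\TI$-PDB with one fact per directed edge at marginal probability $1/i^2$, and the same self-join $\CQ$ $\Phi(x,y)=R(x,y)\wedge R(y,x)$, with the same well-definedness check via summability. The only (immaterial) difference is that you verify $\Phi(\pdb[I])=\pdb[D]_1$ by computing all world probabilities explicitly, while the paper checks the marginals and edge-independence and invokes that a member of $\edgeindepgraphs$ is determined by its marginals; both verifications are correct.
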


\begin{proof}
    First, note that $\sum_{ i = 1 }^{ \infty } \frac1{i^4} < \infty$, so $\pdb_1$ is well-defined. We will now construct a $\TI$-PDB $\pdb[I]$ and a $\CQ$ $\Phi$ such that $\pdb_1 = \Phi(\pdb[I])$. For this, we define $\pdb[I]$ to be the $\TI$-PDB over the fact set $\set[\big]{ E(a_i,b_i) \with i = 1, 2, \dots } \cup \set[\big]{ E(b_i,a_i) \with i=1,2,\dots }$ such that the marginal probability of $E(a_i,b_i)$ and $E(b_i,a_i)$, respectively, is given as $\frac{1}{i^2}$. As $\sum_{ i = 1 }^{ \infty }\frac1{i^2} < \infty$, this makes $\pdb[I]$ a well-defined PDB. Consider the $\CQ$ 
    \[
        \Phi( x,y ) = E( x,y ) \wedge E(y,x)\text.
    \]
    We claim that $\Phi( \pdb[I] ) = \pdb_1$. For this, it suffices to establish that $\Phi(\pdb[I])$ has the correct marginal probabilities and that $\Phi(\pdb[I]) \in \edgeindepgraphs$. For all $I \in \worlds(\pdb[I])$ and all $i = 1,2, \dots$, it holds that $I \models \Phi(a_i,b_i)$ if and only if $I$ contains both $E(a_i,b_i)$ and $E(b_i,a_i)$. The probability of this event is precisely $\frac{1}{i^2} \cdot \frac{1}{i^2} = \frac{1}{i^4}$, which is the marginal probability of $E(a_i,b_i)$ in $\pdb_1$.
    
    It remains to show that $\Phi(\pdb[I]) \in \edgeindepgraphs$. It is clear that every possible world of $\Phi(\pdb[I])$ is an undirected and simple graph database. Because the presence of any double edge $\set{E(a_i,b_i), E(b_i,a_i)}$ in $\Phi(\pdb[I])$ is determined solely by the presence of the facts $E(a_i,b_i)$ and $E(b_i,a_i)$ in $\pdb[I]$, the PDB $\Phi(\pdb[I])$ is also edge-independent. Moreover, $\Phi(\pdb[I])$ is unbounded, since the there are infinitely many facts of positive marginal probability and the edges are independent. Together, $\Phi(\pdb[I]) = \pdb_1$.
\end{proof}

Our current status is that $\UCQ(\TI)$ contains \emph{all} PDBs from $\edgeindepgraphs$, that $\sjfCQ(\TI)$ contains \emph{no} PDB from $\edgeindepgraphs$, and that $\CQ(\TI)$ contains \emph{some} PDBs of $\edgeindepgraphs$. Our next lemma yields, in particular, that $\CQ(\TI)$ is strictly weaker than $\UCQ(\TI)$ by giving an example of a PDB in $\edgeindepgraphs$ that is not contained in $\CQ(\BID)$.

\begin{lemma}\label{lem:symgraphNOTcqbid}
    Let $\pdb[D]_2$ be the probabilistic graph representing a countably infinite set of pairwise disjoint edges, where the $i$-th edge $\set{a_i,b_i}$ has probability $p_{a_i,b_i} = p_{b_i,a_i} = \frac{1}{i^2}$. Then $\pdb[D]_2 \notin \CQ(\BID)$.
\end{lemma}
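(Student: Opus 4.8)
The plan is to argue by contradiction. Suppose $\pdb[D]_2 = \Phi(\pdb[I])$ for some $\BID$-PDB $\pdb[I] = (\II, P_{\pdb[I]})$ and some $\CQ$-view $\Phi$; since the output schema has only the single binary relation $E$, the view $\Phi$ is a single $\CQ$. As in the proof of \cref{lem:symgraphNOTsjfcqbid} I would first normalize $\Phi$: because the worlds of $\pdb[D]_2$ are simple undirected graph databases of maximum degree one and $\pdb[D]_2$ is unbounded, $\Phi$ cannot compute loops, a fixed-endpoint star, or a single edge, so after eliminating equality atoms by substitution we may take $\Phi(x,y) = \exists z_1 \cdots z_m\, \bigwedge_{j=1}^{n} R_j(\tup t_j)$ with $x\ne y$ two genuinely free variables, and the same reasoning forces $x$ to occur in at least one atom, say in $R_{j_0}(\tup t_{j_0})$ at some fixed position $\pi$. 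Writing $f_j[a;b;\tup c] \coloneqq R_j(\tup t_j[x/a;\,y/b;\,z_1/c_1;\dotsc;z_m/c_m])$, a database $I$ satisfies $\Phi(a,b)$ iff for some $\tup c$ it contains all of $f_1[a;b;\tup c],\dots,f_n[a;b;\tup c]$; in particular it then contains the fact $f_{j_0}[a;b;\tup c]$, which lies in $G_{a,b} \coloneqq \set[\big]{ f_{j_0}[a;b;\tup c] \with \tup c \in \UU^m }$ and whose $\pi$-th coordinate equals $a$.

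The key step I would establish is that $\tfrac{1}{i^2} \le \Gamma_i^+ \cdot \Gamma_i^-$ for all $i$, where $\Gamma_i^+ \coloneqq \sum_{g \in G_{a_i,b_i}} p_g$ and $\Gamma_i^- \coloneqq \sum_{g \in G_{b_i,a_i}} p_g$, and $p_g$ is the marginal probability of $g$ in $\pdb[I]$ (taken to be $0$ when $g \notin \facts(\pdb[I])$). Indeed: since $\Phi(I)$ is always an undirected graph, the event $\set{I \models \Phi(a_i,b_i)}$ equals $\set{I \models \Phi(b_i,a_i)}$, so by the previous paragraph it is contained both in $\bigcup_{g \in G_{a_i,b_i}}\set{g \in I}$ and in $\bigcup_{g \in G_{b_i,a_i}}\set{g \in I}$; moreover $G_{a_i,b_i} \cap G_{b_i,a_i} = \emptyset$, since facts in the first set carry $a_i$ at position $\pi$ while facts in the second carry $b_i$ there and $a_i \ne b_i$. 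Hence the event lies in $\bigcup_{g \in G_{a_i,b_i},\, g' \in G_{b_i,a_i}} \set{g \in I \text{ and } g' \in I}$, where $g$ and $g'$ are always distinct facts; since distinct facts of a $\BID$-PDB are independent or mutually exclusive, $\Pr_{I}(g \in I \text{ and } g' \in I) \le p_g p_{g'}$. As $\Pr_{D\sim\pdb[D]_2}(E(a_i,b_i)\in D) = \tfrac1{i^2}$ and this equals $\Pr_{I\sim\pdb[I]}(I\models\Phi(a_i,b_i))$, a union bound then gives
\[
    \tfrac{1}{i^2} \;=\; \Pr_{I \sim \pdb[I]}\big( I \models \Phi(a_i,b_i) \big) \;\le\; \sum_{g \in G_{a_i,b_i}} \sum_{g' \in G_{b_i,a_i}} p_g\, p_{g'} \;=\; \Gamma_i^+ \Gamma_i^-\text.
\]

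From here the contradiction is immediate. By the inequality of arithmetic and geometric means, $\Gamma_i^+ + \Gamma_i^- \ge 2\sqrt{\Gamma_i^+ \Gamma_i^-} \ge \tfrac{2}{i}$, so $\sum_{i\ge 1}(\Gamma_i^+ + \Gamma_i^-) \ge \sum_{i\ge 1}\tfrac2i = \infty$. On the other hand, since the edges $\set{a_i,b_i}$ are pairwise disjoint, the elements $a_1,b_1,a_2,b_2,\dots$ are pairwise distinct, so a fact's $\pi$-th coordinate determines at most one of the sets $G_{a_i,b_i}, G_{b_i,a_i}$ it can belong to; these sets are therefore pairwise disjoint subsets of $\facts(\pdb[I])$, whence $\sum_{i\ge 1}(\Gamma_i^+ + \Gamma_i^-) \le \sum_{f \in \facts(\pdb[I])} p_f < \infty$ by \cref{thm:well-def-bid}. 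This contradiction gives $\pdb[D]_2 \notin \CQ(\BID)$.

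The step I expect to be the real obstacle is identifying the right quantities to bound. Once one decides to track, for each edge $\set{a_i,b_i}$, only the total marginal mass $\Gamma_i^{\pm}$ of the facts that can place $a_i$ (respectively $b_i$) into one fixed variable slot of one fixed atom of $\Phi$, the union bound factors into the product $\Gamma_i^+\Gamma_i^-$, and the probabilities $\tfrac{1}{i^2}$ sit exactly on the threshold where a square root converts the convergent series $\sum i^{-2}$ into the divergent $\sum i^{-1}$ — which the finite total marginal mass of a $\BID$-PDB cannot sustain. Notably, the self-joins in $\Phi$ that defeated the surgery argument of \cref{lem:symgraphNOTsjfcqbid} are harmless here, because the estimate uses only a single atom; and the pairwise disjointness of the edges is precisely what keeps the relevant fact sets from overlapping.
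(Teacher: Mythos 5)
Your proposal is correct, and it follows the same overall skeleton as the paper's proof: normalize $\Phi$, fix an atom containing $x$ (the paper allows "$x$ or $y$"), attach to each oriented edge the set of facts obtainable from that atom (your $G_{a_i,b_i}$ is the paper's $F(a_i,b_i)$), derive $\tfrac{1}{i^2}\le \Gamma_i^+\Gamma_i^-$ — which is exactly the paper's inequality \eqref{eq:pseudoroot} with $\Gamma_i^+=\sum_k x_k$, $\Gamma_i^-=\sum_k y_k$ — take a square root, and contradict the convergence of $\sum_f p_f$ required by \cref{thm:well-def-bid} using the pairwise disjointness of all these fact sets. Where you genuinely diverge is in how the central inequality is obtained. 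The paper works block-by-block: it writes the relevant probabilities as infinite products over blocks, applies inclusion–exclusion, the estimate $\prod_k(1-x_k-y_k)\le\prod_k(1-x_k)(1-y_k)$, and then a Weierstrass inequality, with a separate case when one of the sums exceeds $1$. You instead intersect the two covering unions and bound each pairwise event by $\Pr(g\in I,\,g'\in I)\le p_g p_{g'}$, which holds for any two \emph{distinct} facts of a $\BID$-PDB (independent across blocks, mutually exclusive within a block), and finish with a countable union bound; the disjointness $G_{a_i,b_i}\cap G_{b_i,a_i}=\emptyset$ guarantees $g\neq g'$ throughout. This is shorter, avoids the product manipulations and the case distinction entirely, and in fact uses only pairwise non-positive correlation of facts, so it would apply verbatim to any class of PDBs with that property; the paper's exact block decomposition is sharper in principle but buys nothing here, since only an upper bound is needed. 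The only point to state a touch more carefully is the normalization step — that after removing equality atoms the head variables $x\ne y$ really do occur in relational atoms (which follows from simplicity and unboundedness of $\pdb_2$, as both you and the paper implicitly use) — but this is no less rigorous than the paper's own treatment.
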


The structure of this counterexample is strikingly similar to the $\edgeindepgraphs$-PDB we investigated in \cref{lem:symgraphcqti}. We comment on this after the proof. 

\begin{proof}[{Proof of \cref{lem:symgraphNOTcqbid}}]
    We assume otherwise and let $\pdb[I]$ be a $\BID$-PDB and $\Phi$ a $\CQ$-view 
    \[
        \Phi( x,y ) = \exists z_1 \dotsc \exists z_m \with R_1( \tup t_1 ) \wedge \dotsm \wedge R_n( \tup t_n )
    \]
    such that $\pdb[D]_2 = \Phi(\pdb[I])$. That we can assume $\Phi$ to be of the above shape has the same reason as in the proof of \cref{lem:symgraphNOTsjfcqbid}. We also reuse the notation from there and let $f_j[a;b;\tup c]$ denote the fact originating from $R_j( \tup t_j )$ by replacing $x,y, \tup z$ with $a,b$ and $\tup c$, respectively, for $j = 1,\dots, n$. Note that the equivalences from \eqref{eq:symmetrycq1} and \eqref{eq:symmetrycq2} still hold. Again, we pick an index $j \in \set{1,\dots,n}$ such that $x$ or $y$ or both appear in $R_j( \tup t_j )$.
    
    Let
    \[
        F(a,b) \coloneqq \set[\big]{ f_j[ a; b; \tup c ] \with \tup c \in \UU^m }
    \]
    for all $a,b \in \UU$. Because $\pdb[I]$ is a $\BID$-PDB, for all sets $F$ of facts (in particular, for the sets $F(a,b)$), it holds that
    \[
        \Pr_{ I \sim \pdb[I] }\big( \, I \cap F = \emptyset \, \big)
        =
        \prod_{ B\text{ block of }\pdb[I] } \bigg( 1 - \sum_{ f \in B \cap F } p_f \bigg)
        \text.
    \]
    
    Let $i \in \NN_+$ be arbitrary. Since $f_j$ contains $x$ or $y$, we have $F(a_i,b_i) \cap F(b_i,a_i) = \emptyset$. Moreover, for any distinct $i_1, i_2 \in \NN_+$, it holds that
    \begin{equation}\label{eq:disjointedges}
        \Big( F\big( a_{i_1}, b_{i_1} \big) \cup F\big( b_{i_1},a_{i_1} \big) \Big) 
        \cap
        \Big( F\big( a_{i_2}, b_{i_2} \big) \cup F\big( b_{i_2},a_{i_2} \big) \Big)
        = 
        \emptyset
        \text,
    \end{equation}
    since the edges of $\pdb[D]_2$ are pairwise disjoint. We now want to show that $\sum_{ f \in F(a_i,b_i) \cup F(b_i,a_i) } p_f \geq \frac1i$ which, together with \eqref{eq:disjointedges}, leads to 
    \[
        \sum_{ f \in \facts(\pdb[I]) } p_f 
            \geq
        \sum_{ i = 1 }^{ \infty } 
            \sum_{ f \in F(a_i,b_i) \cup F(b_i,a_i) }
                p_f
            \geq
        \sum_{ i = 1 }^{ \infty } \frac1i
            = 
        \infty
        \text.
    \]
    This is a contradiction to the assumption that $\pdb[I]$ is a well-defined $\BID$-PDB via \cref{thm:well-def-bid}. 
    
    To prove $\sum_{ f \in F(a_i,b_i) \cup F(b_i,a_i) } p_f \geq \frac1i$, we first observe that
    \begin{align*}
        \frac1{i^2}
            =
        \Pr_{ D \sim \pdb[D] }\big( \, E(a_i,b_i) \in D \text{ and } E(b_i,a_i) \in D \, \big)
            &=
        \Pr_{ I \sim \pdb[I] }\big( \, I \models \Phi( a_i,b_i ) \text{ and } I \models \Phi( b_i,a_i ) \, \big)\\
            &\leq
        \Pr_{ I \sim \pdb[I] }\big( \, I \cap F(a_i,b_i) \neq \emptyset \text{ and } I \cap F(b_i,a_i) \neq \emptyset \, \big)
    \end{align*}
    Suppose that the blocks of $\pdb[I]$ are $B_1,B_2,\dots$. For $k = 1, 2, \dots$, we define events $X_k,Y_k$ as follows:
    \[
        X_k \coloneqq \set[\big]{ I \in \worlds( \pdb[I] ) \with I \cap B_k \cap F(a_i,b_i) \neq \emptyset }
            \quad\text{and}\quad
        Y_k \coloneqq \set[\big]{ I \in \worlds( \pdb[I] ) \with I \cap B_k \cap F(b_i,a_i) \neq \emptyset }
            \text.
    \]
    Because $\pdb[I]$ is $\BID$, all events $X_k$ are independent, and so are all $Y_k$. Moreover, the two events $X_k$ and $Y_k$ are mutually exclusive for all $k$. We let $x_k$ and $y_k$ denote the probabilities of $X_k$ and $Y_k$ in $\pdb[I]$, respectively. As the events $X_k$, respectively $Y_k$, only concern facts in the block $B_k$, we have
    \[
        x_k = \sum_{ f \in F(a_i,b_i) \cap B_k } p_f
            \quad\text{and}\quad
        y_k = \sum_{ f \in F(b_i,a_i) \cap B_k } p_f\text.
    \]
    Now, since 
    \[
        \sum_{ f \in F(a_i,b_i) \cup F(b_i,a_i) } p_f = \sum_{k = 1}^{\infty} x_k + \sum_{k = 1}^{\infty} y_k\text,
    \]
    it remains to show that $\sum_{k = 1}^{\infty} x_k + \sum_{k = 1}^{\infty} y_k \geq \frac1i$. For this, we proceed using the inclusion-exclusion-formula to get
    \begin{align*}
        &\Pr_{ I \sim \pdb[I] } \big( \, I \cap F(a_i,b_i ) \neq \emptyset \text{ and }I \cap F(b_i,a_i) \neq \emptyset \, \big)\\
            &=
        1 - \Pr_{ I \sim \pdb[I] } 
            \big( \, I \cap F(a_i,b_i ) = \emptyset 
                \text{ or }
            I \cap F(b_i,a_i) = \emptyset \, \big)\\
            &=
        1 
            - 
                \Pr_{ I \sim \pdb[I] }  
                    \big( \, I \cap F(a_i,b_i ) = \emptyset \, \big)
            -
                \Pr_{ I \sim \pdb[I] } 
                    \big( \, I \cap F(b_i,a_i) = \emptyset \, \big)
            +
                \Pr_{ I \sim \pdb[I] } 
                    \big( \, I \cap F(a_i,b_i ) 
                        =
                    I \cap F(b_i,a_i) = \emptyset \, \big)\\
        &= 1 
        - \prod_{k=1}^\infty (1-x_k) - \prod_{k=1}^\infty (1-y_k) + \prod_{k=1}^\infty (1-x_k-y_k)\\
        &\leq 
        1 - 
        \prod_{k=1}^\infty (1-x_k) - \prod_{k=1}^\infty (1-y_k) + \prod_{k=1}^\infty (1-x_k-y_k+x_k y_k)\\
        &=
        1 
        - \prod_{k=1}^\infty (1-x_k) 
        - \prod_{k=1}^\infty (1-y_k)
        + \prod_{k=1}^\infty (1-x_k) \prod_{k=1}^\infty (1-y_k)\\
        &=
        \bigg(1- \prod_{k=1}^\infty (1-x_k)\bigg) \cdot \bigg( 1 - \prod_{k=1}^\infty (1-y_k) \bigg)\text.
    \end{align*}
    There are two cases. If one of $\sum_{ k = 1 }^{ \infty } x_k$ or $\sum_{ k = 1 }^{ \infty } y_k$ is larger than or equal to $1$, then we are done. In the other case, when both sums are smaller than $1$, we can apply a Weierstrass inequality (cf. \cite[Chapter VI]{Bromwich1926}) to get
    \[
        \bigg(1- \prod_{ k = 1 }^{ \infty } (1-x_k)\bigg) \cdot \bigg( 1 - \prod_{ k = 1 }^{ \infty } (1-y_k) \bigg)
            \leq 
        \bigg( \sum_{ k = 1 }^{ \infty } x_k \bigg) \cdot \bigg(\sum_{ k = 1 }^{ \infty } y_k \bigg)\text.
    \]
    Together with the inequalities from before, we get
    \begin{equation}\label{eq:pseudoroot}
        \frac{1}{i^2} 
        \leq 
       \bigg(\sum_{ k = 1 }^{ \infty } x_k \bigg) \cdot \bigg(\sum_{ k = 1 }^{ \infty } y_k \bigg)\text.
    \end{equation}
    Therefore, one of $\sum_{ k = 1 }^{ \infty } x_k$ and $\sum_{ k = 1 }^{ \infty } y_k$ is at least $\frac{1}{i}$, so in particular
    \[
        \sum_{ k = 1 }^{ \infty } x_k + \sum_{ k = 1 }^{ \infty } y_k \geq \frac{1}{i}\text.
        \qedhere
    \]
\end{proof}

Let us now reconsider \cref{lem:symgraphcqti,lem:symgraphNOTcqbid} in direct comparison. First, we note that, the view from the proof of \cref{lem:symgraphcqti}, given \cref{lem:symgraphNOTsjfcqbid} \emph{had} to make use of a self-join. Looking deeper into the proof, we can see that this self-join essentially brought about a squaring of the marginal probabilities when applying the view to the underlying $\TI$-PDB. This is why the marginal probabilities in \cref{lem:symgraphcqti} were chosen to be $\frac{1}{i^4}$, since taking the root leaves us with $\frac{1}{i^2}$, leading to a convergent series. Note that instead, the marginal probabilities of the edges in the example from \cref{lem:symgraphNOTcqbid} were chosen to be $\frac{1}{i^2}$, where taking the root leaves us with $\frac{1}{i}$, leading to a divergent series. At the beginning of the proof, it is not clear why we would end up with the consideration of this root. However, our arguments led us to the related situation of \eqref{eq:pseudoroot}. All in all, the salient point is that in \cref{lem:symgraphcqti} the series of marginal probabilities converges fast enough for our argument to go through, whereas the series from \cref{lem:symgraphNOTcqbid} converges too slowly to admit a representation in the desired class.

\medskip

To complete the picture, we still need to determine whether the classes $\UCQ(\BID)$ and $\FO(\BID)$ are separated or equal. This is done in the next proposition.

\begin{proposition}\label{pro:ucqbidNOTfobid}
    $\UCQ(\BID) \subsetneq \FO(\BID)$.
\end{proposition}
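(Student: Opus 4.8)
The plan is to exhibit one PDB $\pdb$ that lies in $\FO(\BID)=\FOTI$ but not in $\UCQ(\BID)$. The example should carry two ``orthogonal'' disjointness constraints, only one of which a $\BID$-PDB can encode natively, so that a monotone ($\UCQ$) view cannot reinstate the second one. A convenient candidate is a random finite injective partial function: take the $\TI$-PDB $\pdb[J]$ over the facts $\set{E(i,j)\with i,j\in\NN_+}$ with small marginals (say $p_{i,j}=2^{-i-j}$, so $\sum_{i,j}p_{i,j}<\infty$), and set $\pdb\coloneqq\pdb[J]\under\phi$, where $\phi$ is the $\FO$-sentence expressing that $E$ is a partial injection (functional in both directions). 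Since the empty instance witnesses $P_{\pdb[J]}(\phi)>0$, the conditioning is well defined, and $\pdb\in\FOTIFO=\FOTI=\FO(\BID)$ by \cref{thm:conditional-views} (together with \cref{thm:bid}); its possible worlds are exactly the finite injective partial functions, so it has unbounded instance size. A symmetric variant such as a random finite matching with a suitably correlated measure would serve equally well. The whole content of the proposition is then the claim $\pdb\notin\UCQ(\BID)$, which should be seen as the $\BID$-counterpart of \cref{pro:mon+ti+disjoint}.

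To prove $\pdb\notin\UCQ(\BID)$, assume for contradiction $\pdb=\Phi(\pdb[I])$ with $\pdb[I]=(\II,P_{\pdb[I]})$ a $\BID$-PDB and $\Phi$ a $\UCQ$-view. Passing to possible worlds via \cref{obs:commutes,pro:prob_to_incomplete}, $\Phi$ is monotone and $\Phi(\worlds(\pdb[I]))=\worlds(\pdb)$. First I would record the $\BID$-analogue of \cref{obs:inc+ti}: a union $I_1\cup I_2$ of two worlds of $\pdb[I]$ is again a world of $\pdb[I]$ unless $I_1$ and $I_2$ select two distinct facts from a common block, and $\pdb[I]$ has only finitely many \emph{full} blocks (those whose fact probabilities sum to $1$) — exactly the structure exploited in \cref{lem:symgraphNOTsjfcqbid}. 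In $\worlds(\pdb)$, the facts $E(i,1)$ ($i\in\NN_+$) are pairwise mutually exclusive, the facts $E(1,j)$ ($j\in\NN_+$) are pairwise mutually exclusive, while $E(i,1)$ and $E(1,j)$ coexist for $i,j\geq 2$, and no world contains two facts with a common source or a common target. Monotonicity then forces that whenever two worlds of $\pdb$ would ``union'' to a forbidden instance, all their preimages in $\pdb[I]$ are incompatible, i.e., differ on some block.

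The crux — and the main obstacle — is to turn this into a divergence $\sum_{f}p_f=\infty$, contradicting \cref{thm:well-def-bid}. Following the proof of \cref{lem:symgraphNOTcqbid}, I would write $\Phi$'s $E$-formula as $\bigvee_{\ell}\Phi_\ell$ with each $\Phi_\ell$ a $\CQ$, and for every $\ell$ single out an atom mentioning the source variable and one mentioning the target variable (these must exist, else some disjunct would produce two edges with a common source or target). Instantiating the distinguished atoms yields, for each $j$, a set $F(j)$ of facts of $\pdb[I]$ that any instance producing an edge into target $j$ must meet; the mutual exclusivity of the $E(\cdot,j)$ across $j$, combined with an inclusion--exclusion/Weierstrass estimate on the block-wise masses (in the style of \eqref{eq:pseudoroot}) applied to the coexisting pairs $\set{E(i,1),E(1,i)}$, should give $\sum_j(\text{mass of }F(j))=\infty$ and hence $\sum_f p_f=\infty$. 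What is genuinely new compared with \cref{lem:symgraphNOTsjfcqbid,lem:symgraphNOTcqbid} is that $\UCQ$ permits self-joins \emph{and} unions, so the witness-fact sets for the two directions need no longer be disjoint; the fix is to work with the source-atom and target-atom witnesses \emph{simultaneously} (a target $j$ together with a source $i$ pins down the single fact $E(i,j)$), to absorb the bounded overlaps into constants, and to use a pigeonhole over the finitely many full blocks to reduce to a single block on which the conflicting choices are pairwise distinct. Once the divergence is established, \cref{thm:well-def-bid} delivers the contradiction, so $\pdb\notin\UCQ(\BID)$ and $\UCQ(\BID)\subsetneq\FO(\BID)$.
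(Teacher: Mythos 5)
The part of your argument that places the candidate PDB inside $\FO(\BID)$ is fine (conditioning a well-defined $\TI$-PDB on the partial-injection sentence and invoking \cref{thm:conditional-views} and \cref{rem:fobid}), but the heart of the proposition --- $\pdb\notin\UCQ(\BID)$ --- is only sketched, and the sketch has a concrete flaw. First, the guiding intuition (\enquote{only one of the two exclusivity constraints can be encoded natively, and a monotone view cannot reinstate the other}) already fails at the level of possible worlds: take a $\BID$-PDB with two relations, facts $F(i,j)$ blocked by the source $i$ and facts $G(i,j)$ blocked by the target $j$, and the self-join-free view $\Phi(x,y)=F(x,y)\wedge G(x,y)$; its possible worlds are exactly the finite partial injections. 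So there is no logical obstruction, and any proof of non-membership must be arithmetical, i.e.\ must exploit your specific probabilities. But your probabilities are chosen badly for that purpose: the conditioned marginals are of order $2^{-i-j}$, so every estimate of the kind used in \cref{lem:symgraphNOTcqbid} (where the marginals $1/i^2$ are chosen precisely so that the \enquote{roots} $1/i$ fail to be summable) yields only convergent series, and no contradiction with \cref{thm:well-def-bid} can be reached that way. The step \enquote{should give $\sum_j(\text{mass of }F(j))=\infty$} is therefore not merely unproven but implausible; it is not even clear that your conditioned partial-injection PDB lies outside $\UCQ(\BID)$ at all.

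For comparison, the paper's proof avoids quantitative estimates entirely. It takes a domain-disjoint PDB over a unary relation whose worlds $D_n$ have size at least $n$, obtains membership in $\FO(\TI)=\FO(\BID)$ from \cref{thm:logical-fo}, and argues structurally: a $\BID$-PDB has only finitely many full blocks, so every positive-probability world contains a \emph{minimal instance} (the facts chosen from the full blocks) of one fixed size $m$, and every minimal instance has positive probability; since a $\UCQ$-view is an $\FO$-view, images of minimal instances have size at most $q(m)$ for a fixed polynomial $q$ (as in \cref{lem:fo-preserves-finite-moments}); by monotonicity, nonemptiness and domain-disjointness, the image of the minimal instance inside any preimage of $D_n$ must equal $D_n$, which is impossible once $n>q(m)$. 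If you want to keep your example you would need a genuinely new arithmetical argument for it; otherwise, switch to domain-disjoint worlds of unbounded size and run this full-block/size argument.
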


\begin{proof}
    We show that there exists a $\FO(\TI)$-PDB $\pdb$ such that $\pdb \notin \UCQ(\BID)$. Consider the database schema with a single unary relation $R$ over a countably infinite domain $\UU$. We let $\DD = \set{ D_1, D_2, \dots }$ be a set of pairwise disjoint fact sets over this schema such that for all $n \in \NN_+$, the set $D_n$ has size $\size{D_n} \geq n$. An easy example is $D_n = \set{R(a_i)\colon i = n^2, n^2 + 1, \dots, (n+1)^2-1}$. We know from \cref{thm:logical-fo}, that there exists a PDB $\pdb \in \FO(\TI) = \FO(\BID)$ with $\worlds(\pdb) = \DD$.
    
    We now show that $\pdb \notin \UCQ(\BID)$. Assume otherwise and let $\pdb[I]$ be a $\BID$-PDB and $\Phi$ be a $\UCQ$-view such that $\pdb = \Phi(\pdb[I])$. Recall that a full block of $\pdb[I]$ is a block in which all marginal probabilities sum to $1$. We call an instance $I$ from $\pdb[I]$ \emph{minimal}, if it only contains facts from full blocks. Note that if $I$ is an instance of positive probability in $\pdb[I]$, then it is a superset of a minimal instance, and that all minimal instances themselves have positive probability. Moreover, all minimal instances have the same cardinality, say $m$, which is equal to the number of full blocks in $\pdb[I]$. As $\Phi$ is (in particular) first-order, $\size[\big]{ \Phi(I) } \leq q\big(\size{I}\big)$ for some fixed polynomial $q$ (we saw that in the proof of \cref{lem:fo-preserves-finite-moments}). Therefore, $\size[\big]{ \Phi(I) } \leq q(m)$ for all minimal instances $I$. Let $n > q(m)$ and pick an arbitrary instance $I$ from the preimage of $D_n$ under $\Phi$. Let $I'$ be the restriction of $I$ to the facts from the full blocks of $\pdb[I]$. Then $I'$ is a minimal instance with $I' \subseteq I$. Since $\Phi$ is monotone, it holds that $\Phi(I') \subseteq \Phi(I) = D_n$. Note that $\Phi(I') \neq \emptyset$, since $I'$ has positive probability, but $\pdb[D]$ does not contain the empty instance. Since $\pdb$ is domain-disjoint, this implies $\Phi(I') = \Phi(I) = D_n$. Then, however, $\size[\big]{ \Phi(I') }\leq q(m) < n \leq \size{D_n}$ contradicts $\Phi(I') = D_n$.
\end{proof}

Combining all the previous results from this subsection, we are now finally ready to state the main result of our investigation of the relative expressive power of views over PDBs.

\begin{figure}[hbt]
    \mbox{}\hfill%
    \tikzset{ucqtifin/.style={fill=lightgray!25!white}}
    \tikzset{fotifin/.style= {fill=lightgray!25!white}}
    \tikzset{bid/.style=     {fill=lightgray!25!white}}
    \tikzset{ti/.style=      {fill=lightgray!25!white}}
    \begin{subfigure}[t]{.45\textwidth}\centering%
        \begin{tikzpicture}
            \newcommand*{\texteq}[1]{\mathrel{\overset{\smash{\scriptscriptstyle{\text{#1}}}}{=}}}
            \tikzset{class/.style={rectangle,rounded corners,font={\small}}}
            \tikzset{line/.style={draw,thick,shorten <= 2pt, shorten >=2pt}}
            \matrix[rectangle,anchor=south] (REPinfin) [column sep={1.5cm,between origins}, row sep=.75cm] at (0,5) {
                \node[missing] {}; &
                \node[class,fotifin] (PDB) {$\PDB$}; &
                \node[missing] {};\\
                \node[missing] {}; &
                \node[class,align=center,fotifin] (FOTI) {$\FOTI = \FOBID = \FOTIFO$}; &
                \node[missing] {};\\ 
                \node[missing] {}; &
                \node[class,align=center,fotifin] (UCQBID) {$\UCQ(\BID)$}; &
                \node[missing] {};\\
                \node[class,ucqtifin] (UCQTI) {$\UCQ(\TI)$}; & 
                \node[missing] {}; & 
                \node[class,fotifin] (CQBID) {$\CQ(\BID)$};\\
                \node[class,ucqtifin] (CQTI) {$\CQ(\TI)$}; & 
                \node[missing] {}; & 
                \node[class,fotifin] (sjfCQBID) {$\sjfCQ(\BID)$};\\
                \node[class,ucqtifin] (sjfCQTI) {$\sjfCQ(\TI)$}; & 
                \node[missing] {}; & 
                \node[class,bid] (BID) {$\BID$};\\
                \node[missing] {}; & 
                \node[class,ti] (TI) {$\TI$}; &
                \node[missing] {};\\
            };
            \begin{scope}[on background layer]
                \draw[line] (TI.center)        to (BID.center);
                \draw[line] (TI.center)        to (sjfCQTI.center);
                \draw[line] (sjfCQTI.center)   to (CQTI.center);
                \draw[line] (CQTI.center)      to (UCQTI.center);
                \draw[line] (BID.center)       to (sjfCQBID.center);
                \draw[line] (sjfCQBID.center)  to (CQBID.center);
                \draw[line] (CQBID.center)     to (UCQBID.center);
                \draw[line] (sjfCQTI.center)   to (sjfCQBID.center);
                \draw[line] (CQTI.center)      to (CQBID.center);
                \draw[line] (UCQTI.center)     to (UCQBID.center);
                \draw[line] (UCQBID.center)    to (FOTI.center);
                \draw[line] (FOTI.center)      to (PDB.center);
            \end{scope}
        \end{tikzpicture}
        \caption{Hasse diagram for the relative expressive power of views over countable PDBs.}
        \label{fig:hasseinfin}
    \end{subfigure}
    \hfill%
    \begin{subfigure}[t]{.45\textwidth}\centering%
        \begin{tikzpicture}
            \tikzset{class/.style={fill=ShadingLight,rectangle,rounded corners,font={\small}}}
            \tikzset{line/.style={draw,lightgray!50!white,line width=.15cm,shorten <= 2pt, shorten >=2pt}}
            \tikzset{enil/.style={draw,thick,dotted,shorten <= 2pt, shorten >=2pt,-stealth'}}
            \matrix[rectangle,anchor=south] (REPinfin) [column sep={1.5cm,between origins}, row sep=.75cm] at (0,5) {
                \node[missing] {}; &
                \node[class,fotifin] (PDB) {$\PDB$}; &
                \node[missing] {};\\
                \node[missing] {}; &
                \node[class,align=center,fotifin] (FOTI) {$\FOTI = \FOBID = \FOTIFO$}; &
                \node[missing] {};\\ 
                \node[missing] {}; &
                \node[class,align=center,fotifin] (UCQBID) {$\UCQ(\BID)$}; &
                \node[missing] {};\\
                \node[class,ucqtifin] (UCQTI) {$\UCQ(\TI)$}; & 
                \node[missing] {}; & 
                \node[class,fotifin] (CQBID) {$\CQ(\BID)$};\\
                \node[class,ucqtifin] (CQTI) {$\CQ(\TI)$}; & 
                \node[missing] {}; & 
                \node[class,fotifin] (sjfCQBID) {$\sjfCQ(\BID)$};\\
                \node[class,ucqtifin] (sjfCQTI) {$\sjfCQ(\TI)$}; & 
                \node[missing] {}; & 
                \node[class,bid] (BID) {$\BID$};\\
                \node[missing] {}; & 
                \node[class,ti] (TI) {$\TI$}; &
                \node[missing] {};\\
            };
            
            \begin{scope}[on background layer]
                \draw[line] (TI.center)        to (BID.center);
                \draw[line] (TI.center)        to (sjfCQTI.center);
                \draw[line] (sjfCQTI.center)   to (CQTI.center);
                \draw[line] (CQTI.center)      to (UCQTI.center);
                \draw[line] (BID.center)       to (sjfCQBID.center);
                \draw[line] (sjfCQBID.center)  to (CQBID.center);
                \draw[line] (CQBID.center)     to (UCQBID.center);
                \draw[line] (sjfCQTI.center)   to (sjfCQBID.center);
                \draw[line] (CQTI.center)      to (CQBID.center);
                \draw[line] (UCQTI.center)     to (UCQBID.center);
                \draw[line] (UCQBID.center)    to (FOTI.center);
                \draw[line] (FOTI.center)      to (PDB.center);
            \end{scope}
            
            \draw[enil] (BID)       to node[circle,below left,pos=.25,font={\footnotesize}] {(1)} (UCQTI);
            \draw[enil] (sjfCQTI)   to node[circle,below,font={\footnotesize}]              {(2)} (BID);
            \draw[enil] (CQTI)      to node[circle,below,pos=.2,font={\footnotesize}]       {(3)} (sjfCQBID);
            \draw[enil] (UCQTI)     to node[circle,below,font={\footnotesize}]              {(4)} (CQBID);
            \draw[enil] (FOTI)      to node[circle,right,font={\footnotesize}]              {(5)} (UCQBID);
            \draw[enil] (PDB)       to node[circle,right,font={\footnotesize}]              {(6)} (FOTI);
        \end{tikzpicture}
        \caption{The central relationships in the proof of \cref{thm:ultimate}. Gray lines indicate trivial inclusions. A dotted arrow from $\classstyle{D}$ to $\classstyle{D}'$ means $\classstyle{D} \not\subseteq \classstyle{D}'$.}
        \label{fig:hasseinfinproof}
    \end{subfigure}
    \hfill\mbox{}%
    \caption{Diagrams for the proof of \cref{thm:ultimate}.}
    \label{fig:hasse}
\end{figure}

\begin{theorem}\label{thm:ultimate}
    The classes of views of probabilistic databases have exactly the relative expressive power as shown in the Hasse diagram of \cref{fig:hasseinfin}.
\end{theorem}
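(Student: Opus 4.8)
The plan is to verify, first, that every inclusion depicted in \cref{fig:hasseinfin} holds, and then to show that each depicted edge is a \emph{strict} inclusion and that each pair of classes not joined by an upward path is \emph{incomparable}. The inclusions are routine: for $\classstyle{D}\in\set{\TI,\BID}$ the chain $\sjfCQ(\classstyle{D})\subseteq\CQ(\classstyle{D})\subseteq\UCQ(\classstyle{D})$ and the inclusions $\classstyle{V}(\TI)\subseteq\classstyle{V}(\BID)$ for $\classstyle{V}\in\set{\sjfCQ,\CQ,\UCQ}$ follow from $\TI\subseteq\BID$ and from the fact that each of these fragments contains the identity view; the node label $\FOTI=\FOBID=\FOTIFO$ is justified by \cref{thm:conditional-views} and \cref{rem:fobid}; $\UCQ(\BID)\subseteq\FOTI$ then follows from $\FO(\BID)=\FOTI$ together with $\UCQ\subseteq\FO$; and $\FOTI\subseteq\PDB$ is trivial.

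Everything remaining follows from the six ``anchor'' non-inclusions drawn as dotted arrows in \cref{fig:hasseinfinproof}:
\begin{enumerate}
    \item $\BID\not\subseteq\UCQ(\TI)$: by \cref{exa:finmutex} and \cref{pro:mon+ti+disjoint}, the single-block $\BID$-PDB with two mutually exclusive facts of positive probability is not in $\UCQ(\TI)$.
    \item $\sjfCQ(\TI)\not\subseteq\BID$: by \cref{exa:RSexample}, which exhibits a $\sjfCQ(\TI)$-PDB that is not in $\BID$.
    \item $\CQ(\TI)\not\subseteq\sjfCQ(\BID)$: the PDB $\pdb[D]_1$ of \cref{lem:symgraphcqti} lies in $\CQ(\TI)$ and in $\edgeindepgraphs$, hence is not in $\sjfCQ(\BID)$ by \cref{lem:symgraphNOTsjfcqbid}.
    \item $\UCQ(\TI)\not\subseteq\CQ(\BID)$: the PDB $\pdb[D]_2$ of \cref{lem:symgraphNOTcqbid} lies in $\edgeindepgraphs$, hence in $\UCQ(\TI)$ by \cref{lem:symgraphucq}, but is not in $\CQ(\BID)$.
    \item $\FOTI\not\subseteq\UCQ(\BID)$: this is \cref{pro:ucqbidNOTfobid}.
    \item $\PDB\not\subseteq\FOTI$: this is \cref{pro:foti-not-everything}.
\end{enumerate}

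Given these, the rest is a short diagram chase. Strictness of $\FOTI\subsetneq\PDB$ and $\UCQ(\BID)\subsetneq\FOTI$ is exactly (6) and (5). For each remaining edge $A<B$ one exhibits a class $C$ with $C\subseteq B$ and $C\not\subseteq A$: $\TI\subsetneq\BID$ and $\TI\subsetneq\sjfCQ(\TI)$ come from (1) and (2); each edge $\classstyle{V}(\TI)\subsetneq\classstyle{V}(\BID)$ uses $\BID\subseteq\classstyle{V}(\BID)$ and (1) (so $\classstyle{V}(\BID)\not\subseteq\UCQ(\TI)\supseteq\classstyle{V}(\TI)$); $\sjfCQ(\TI)\subsetneq\CQ(\TI)$ and $\sjfCQ(\BID)\subsetneq\CQ(\BID)$ use (3) together with $\CQ(\TI)\subseteq\CQ(\BID)$ and $\sjfCQ(\TI)\subseteq\sjfCQ(\BID)$; $\CQ(\TI)\subsetneq\UCQ(\TI)$ and $\CQ(\BID)\subsetneq\UCQ(\BID)$ use (4) analogously; and $\BID\subsetneq\sjfCQ(\BID)$ uses (2). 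For incomparability, the only pairs in the diagram with no connecting path are one $\TI$-branch class against one $\BID$-branch class at a ``crossing'' height, namely $(\BID,\sjfCQ(\TI))$, $(\BID,\CQ(\TI))$, $(\BID,\UCQ(\TI))$, $(\sjfCQ(\BID),\CQ(\TI))$, $(\sjfCQ(\BID),\UCQ(\TI))$ and $(\CQ(\BID),\UCQ(\TI))$; each is separated in both directions by combining one of (1)--(4) with a fragment or base-class inclusion — for instance $\UCQ(\TI)\not\subseteq\CQ(\BID)$ is (4), while $\CQ(\BID)\supseteq\BID\not\subseteq\UCQ(\TI)$ is (1).

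This argument introduces no new mathematics beyond the lemmas and examples already established. The one point requiring care is to confirm that the six anchor non-inclusions, together with the trivial inclusions, really do pin down \emph{every} relation in the diagram — that no ``diagonal'' inclusion between the two branches has been overlooked. I expect this bookkeeping, rather than any single estimate, to be the sole obstacle.
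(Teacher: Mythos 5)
Your proposal is correct and follows essentially the same route as the paper: the trivial inclusions plus the same six anchor non-inclusions (citing the same propositions, lemmas and examples), with all remaining strictness and incomparability statements derived by the same transitivity bookkeeping that the paper states abstractly and you carry out explicitly.
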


\begin{proof}
    Consider the diagram shown in \cref{fig:hasseinfinproof}. The thick lines indicate the trivial inclusions that arise from $\TI \subseteq \BID \subseteq \PDB$ and $\set{\mathrm{Id}} \subseteq \sjfCQ \subseteq \CQ \subseteq \UCQ \subseteq \FO$ where $\mathrm{Id}$ denotes the identity view. It remains to show that these are proper inclusions and that no other inclusions hold. For this purpose, consider the dotted arrows. Once they are shown, all remaining relationships are entailed by transitivity as follows: 
    \begin{itemize}
        \item If $\classstyle{D}_1 \not\subseteq \classstyle{D}_2$ and $\classstyle{D}'_2 \subseteq \classstyle{D}_2$, then $\classstyle{D}_1 \not\subseteq \classstyle{D}'_2$.
        \item If $\classstyle{D}_1 \not\subseteq \classstyle{D}_2$ and $\classstyle{D}_1 \subseteq \classstyle{D}'_1$, then $\classstyle{D}'_1 \not\subseteq \classstyle{D}_2$.
    \end{itemize}
    The relationships of the dotted arrows follow from previous results as follows.
    \begin{enumerate}
        \item This follows from \cref{pro:mon+ti+disjoint}.
        \item This is witnessed by \cref{exa:RSexample}.
        \item \Cref{lem:symgraphNOTsjfcqbid} states that no PDB from $\edgeindepgraphs$ are in $\sjfCQ(\BID)$. However, \cref{lem:symgraphcqti} presents an example of a PDB from $\edgeindepgraphs$ that is in $\CQ(\TI)$.
        \item \Cref{lem:symgraphucq} states that all PDBs from $\edgeindepgraphs$ are in $\UCQ(\TI)$, yet \cref{lem:symgraphNOTcqbid} gives an example of a PDB from $\edgeindepgraphs$ that is not in $\CQ(\BID)$.
        \item See \cref{pro:ucqbidNOTfobid}.
        \item See \cref{pro:foti-not-everything} (which is a restatement of \cite[Proposition 4.9]{Grohe+2019}).\qedhere
    \end{enumerate}
\end{proof}

Concluding this section, we point out that apart from \cref{lem:symgraphcqti,lem:symgraphNOTcqbid} the arguments we found for the separations are of a purely logical nature. In \cref{lem:symgraphcqti,lem:symgraphNOTcqbid}, however, the argument depended on our choice of probabilities that allowed us to find a well-defined representation in the first case, and enabled us to rule out the existence of such in the second case. It is not clear whether this separation can also be obtained using an argument that only involves the structure of the possible worlds rather than their probabilities. %

\section{Concluding Remarks}\label{sec:conclusion}
This article initiates research on representations of infinite probabilistic databases. Our focus is on studying representability by first-order views over tuple-independent PDBs. Although it is known that $\FOTI \subsetneq \PDB$, we show that $\FOTI$ is quite robust, as $\FOTI = \FOBID = \FOTIFO$, and it can represent any bounded size PDB. In addition, while the finite moments property is necessary for membership in $\FOTI$, we show that it is not sufficient.
We also study how restricting the expressive power of the views (using fragments of first-order logic) affects the PDBs that can be represented. We show that the different classes obtained using $\sjfCQ$-, $\CQ$- and $\UCQ$-views over $\TI$- and $\BID$-PDBs are distinct (even though many of them collapse for finite PDBs).
We establish the concrete relationships between these classes, as shown in \cref{fig:finalhasse}.
In our analysis, we pay attention to the causes for (non-)rep\-re\-sentabil\-i\-ty.
These can be twofold in general: arithmetical or purely logical. The class $\FOTI$, however, (unlike some of its fragments) eludes purely logical arguments for non-rep\-re\-sentabil\-i\-ty.
When considering logical reasons, the mentioned relationships to just sets of possible worlds (in the context of incomplete databases) extend existing work to infinite PDBs \cite{Benjelloun+2006,Green+2006}.

\begin{figure}[htb]
    \mbox{}\hfill%
    \tikzset{ucqtifin/.style={fill=ShadingDark}}
    \tikzset{fotifin/.style={fill=ShadingVeryDark}}
    \tikzset{bid/.style={fill=ShadingMedium}}
    \tikzset{ti/.style={fill=ShadingLight}}
    \begin{subfigure}[b]{.45\textwidth}\centering%
    \begin{tikzpicture}
        \tikzset{class/.style={rectangle,rounded corners,font={\small}}}
        \tikzset{line/.style={draw,thick}}
        \matrix[rectangle,anchor=south] (REPfin) at (0,5) [column sep={1.5cm,between origins}, row sep=.75cm] {
            \node[missing] {}; &
            \node[class,align=center,fotifin] (PDBfin) {%
                $\PDBfin = \FO\big(\TIfin\big) = \sjfCQ\big(\BIDfin\big)$
            }; &
            \node[missing] {};\\
            \node[class,align=center,ucqtifin] (UCQTIfin) {$\sjfCQ\big(\TIfin\big)= \UCQ\big(\TIfin\big)$}; & 
            \node[missing] {}; & 
            \node[class,bid] (BIDfin) {$\BIDfin$};\\
            \node[missing] {}; & 
            \node[class,ti] (TIfin) {$\TIfin$}; &
            \node[missing] {};\\
        };
        \draw[line] (TIfin) to     (BIDfin);
        \draw[line] (TIfin) to    (UCQTIfin);
        \draw[line] (UCQTIfin) to  (PDBfin);
        \draw[line] (BIDfin) to    (PDBfin);
        
        \begin{scope}[on background layer]
            \draw[line] (TIfin.center)        to (BIDfin.center);
            \draw[line] (TIfin.center)        to (UCQTIfin.center);
            \draw[line] (BIDfin.center)       to (PDBfin.center);
            \draw[line] (UCQTIfin.center)     to (PDBfin.center);
        \end{scope}
    \end{tikzpicture}
        \caption{Hasse diagram for the relative expressive power of views over finite PDBs.}
    \end{subfigure}
    \hfill%
    \begin{subfigure}[b]{.45\textwidth}\centering%
        \begin{tikzpicture}
            \newcommand*{\texteq}[1]{\mathrel{\overset{\smash{\scriptscriptstyle{\text{#1}}}}{=}}}
            \tikzset{class/.style={rectangle,rounded corners,font={\small}}}
            \tikzset{line/.style={draw,thick,shorten <= 2pt, shorten >=2pt}}
            \matrix[rectangle,anchor=south] (REPinfin) [column sep={1.5cm,between origins}, row sep=.75cm] at (0,5) {
                \node[missing] {}; &
                \node[class,fotifin] (PDB) {$\PDB$}; &
                \node[missing] {};\\
                \node[missing] {}; &
                \node[class,align=center,fotifin] (FOTI) {$\FOTI = \FOBID = \FOTIFO$}; &
                \node[missing] {};\\ 
                \node[missing] {}; &
                \node[class,align=center,fotifin] (UCQBID) {$\UCQ(\BID)$}; &
                \node[missing] {};\\
                \node[class,ucqtifin] (UCQTI) {$\UCQ(\TI)$}; & 
                \node[missing] {}; & 
                \node[class,fotifin] (CQBID) {$\CQ(\BID)$};\\
                \node[class,ucqtifin] (CQTI) {$\CQ(\TI)$}; & 
                \node[missing] {}; & 
                \node[class,fotifin] (sjfCQBID) {$\sjfCQ(\BID)$};\\
                \node[class,ucqtifin] (sjfCQTI) {$\sjfCQ(\TI)$}; & 
                \node[missing] {}; & 
                \node[class,bid] (BID) {$\BID$};\\
                \node[missing] {}; & 
                \node[class,ti] (TI) {$\TI$}; &
                \node[missing] {};\\
            };
            \begin{scope}[on background layer]
                \draw[line] (TI.center)        to (BID.center);
                \draw[line] (TI.center)        to (sjfCQTI.center);
                \draw[line] (sjfCQTI.center)   to (CQTI.center);
                \draw[line] (CQTI.center)      to (UCQTI.center);
                \draw[line] (BID.center)       to (sjfCQBID.center);
                \draw[line] (sjfCQBID.center)  to (CQBID.center);
                \draw[line] (CQBID.center)     to (UCQBID.center);
                \draw[line] (sjfCQTI.center)   to (sjfCQBID.center);
                \draw[line] (CQTI.center)      to (CQBID.center);
                \draw[line] (UCQTI.center)     to (UCQBID.center);
                \draw[line] (UCQBID.center)    to (FOTI.center);
                \draw[line] (FOTI.center)      to (PDB.center);
            \end{scope}
            
        \end{tikzpicture}
        \caption{Hasse diagram for the relative expressive power of views over countable PDBs.}
    \end{subfigure}
    \hfill\mbox{}%
    \caption{Hasse diagrams of PDB classes with independence assumptions in the finite and the countable setting. Edges and their upwards closure mean proper inclusion. All remaining pairs of classes are incomparable. Areas of the same shade indicate which classes from the countable setting collapse in the finite setting.}
    \label{fig:finalhasse}
\end{figure}

Our characterization of $\FOTI$ is only partial, as there is a gap between our necessary and sufficient conditions for membership. Thus, a natural direction for future work is obtaining a complete characterization of this class or the other classes of PDBs that appeared in this paper.
This is especially challenging since, even in the finite setting, we know of no algorithm to decide whether a given finite PDB is a $\UCQ(\TIfin)$-PDB.
Regarding the causes for (non-)rep\-re\-sentabil\-i\-ty, most of our separations have been purely logical. It remains open whether such a separation of $\CQ(\TI)$ and $\UCQ(\TI)$ exists.
Further possible extensions of our work include conditioning with respect to fragments of $\FO$ or using first-order views with inbuilt relations (e.g., over ordered universes). Further research could also investigate which of our results can be extended to uncountable PDBs. Other interesting insights might be obtained by relating our work to recent notions of measuring uncertainty in incomplete databases~\cite{Libkin2018,Console+2020}. Regarding query evaluation in infinite PDBs, beyond preliminary results on TI-PDBs~\cite{Grohe+2019}, an abstract investigation remains open.

\begin{acks}
    This work was supported by the German Research Foundation (DFG) under grants GR 1492/16-1 and KI 2348/1-1 \enquote{Quantitative Reasoning About Database Queries} and GRK 2236 (UnRAVeL). Nofar Carmeli was supported by the Google PhD Fellowship.
\end{acks}

\bibliographystyle{plainurl}

\end{document}